\title{Reachability in Vector Addition System with States Parameterized by Geometric Dimension} 
\titlerunning{Reachability in VASS Parameterized by Geometric Dimension} 
\author{Yangluo Zheng}
    {BASICS, Shanghai Jiao Tong University, China}
    {wunschunreif@sjtu.edu.cn}
    {https://orcid.org/0009-0000-1028-5458}
    {}
\authorrunning{Y. Zheng}
\keywords{vector addition system, reachability, geometric dimension, pumping} 
\begin{document}

\maketitle

\begin{abstract}
    The geometric dimension of a Vector Addition System with States (VASS), emerged in Leroux and Schmitz (2019) and
    formalized by Fu, Yang, and Zheng (2024), quantifies the dimension of the vector space spanned by cycle effects in
    the system. This paper explores the VASS reachability problem through the lens of geometric dimension, revealing key
    differences from the traditional dimensional parameterization. Notably, we establish that the reachability problem
    for both geometrically 1-dimensional and 2-dimensional VASS is \textsf{PSPACE}-complete, achieved by extending the
    pumping technique originally proposed by Czerwi\'nski et al.\ (2019).
\end{abstract}

\newcommand{\defproblem}[3]{
    \;\par
    \noindent\fbox{
        \begin{minipage}{0.96\textwidth}
            \underline{#1}
            \begin{description}
                \item[Input:] {#2}
                \item[Question:] {#3}
            \end{description}
        \end{minipage}
    }
    \par\;
}

\renewcommand{\Vec}[1]{\bm{#1}}
\newcommand{\norm}[1]{\left\Vert{#1}\right\Vert}
\newcommand{\Supp}{\mathop{\operatorname{supp}}\nolimits}
\newcommand{\CycleSpace}{\operatorname{Cyc}\nolimits}
\newcommand{\Span}{\operatorname{span}\nolimits}
\newcommand{\ActWord}[1]{\llbracket{#1}\rrbracket}
\newcommand{\Cone}{\operatorname{Cone}\nolimits}
\newcommand{\SeqCone}{\operatorname{SeqCone}\nolimits}
\newcommand{\Next}{\operatorname{next}\nolimits}
\newcommand{\Rank}{\operatorname{rank}\nolimits}
\newcommand{\RankFull}{\operatorname{rank}\nolimits_{\text{full}}}
\newcommand{\ExpF}[1]{{\normalfont\textsf{exp}({#1})}}
\newcommand{\PolyF}[1]{{\normalfont\textsf{poly}({#1})}}
\newcommand{\Clean}[1]{{\normalfont\textrm{clean}({#1})}}
\newcommand{\Dec}[1]{{\normalfont\textrm{dec}({#1})}}
\newcommand{\Facc}{\textsc{Facc}}
\newcommand{\Bacc}{\textsc{Bacc}}
\newcommand{\gdim}{\mathop{\operatorname{gdim}}\nolimits}
\newcommand{\Reach}{\mathop{\operatorname{Reach}}\nolimits}
\newcommand{\Source}{\mathop{\operatorname{src}}\nolimits}
\newcommand{\Target}{\mathop{\operatorname{trg}}\nolimits}
\newcommand{\Argmin}{\mathop{\operatorname{arg\, min}}}
\newcommand{\Argmax}{\mathop{\operatorname{arg\, max}}}

\newcommand{\PSPACE}{{\normalfont\textsf{PSPACE}}}
\newcommand{\NP}{{\normalfont\textsf{NP}}}

\newcommand{\bracket}[1]{\left\langle{#1}\right\rangle}
\renewcommand{\Bar}[1]{\overline{#1}}
\newcommand{\lelex}{\le_{\text{lex}}}
\newcommand{\ltlex}{<_{\text{lex}}}

\newcommand{\LPSSystem}[1]{\mathcal{E}_{\normalfont\text{LPS}}({#1})}
\newcommand{\LPSSystemHomo}[1]{\mathcal{E}^0_{\normalfont\text{LPS}}({#1})}

\newcommand{\KLMSystem}[1]{\mathcal{E}({#1})}
\newcommand{\KLMSystemHomo}[1]{\mathcal{E}^0({#1})}

\newcommand{\RotTo}{\mathbin{\curvearrowright}}
\newcommand{\RotToEq}{\mathbin{\underline{\curvearrowright}}}
\newcommand{\NotRotToEq}{\mathbin{\not\kern -0.3em \RotToEq}}
\newcommand{\Inner}[1]{\langle {#1} \rangle}

\vspace{-0.5em}
\section{Introduction}

Vector Addition Systems with States (VASSes), equivalent to Petri nets, serve as a fundamental model for concurrency
\cite{DBLP:journals/eik/EsparzaN94}. A VASS extends finite automata with integer counters that cannot be zero-tested but
must be kept non-negative. Central to the algorithmic theory of VASS is the \emph{reachability problem}: determining if
a run exists from one configuration to another. Due to its generic nature, numerous practical problems can be modeled
via the reachability problem \cite{DBLP:journals/tie/ZurawskiZ94}. After decades of study, the computational complexity
of the VASS reachability problem was settled to be Ackermann-complete
\cite{DBLP:conf/lics/LerouxS19,DBLP:conf/focs/CzerwinskiO21,DBLP:conf/focs/Leroux21}. However, when fixing the
\emph{dimension}---the number of counters, a gap remains in complexity bounds. For $d$-dimensional VASS where $d > 2$,
reachability lies in $\mathsf{F}_d$ \cite{DBLP:conf/icalp/FuYZ24}, the $d$th level of the Grzegorczyk hierarchy of
complexity classes \cite{DBLP:journals/corr/Schmitz13}, while $\mathsf{F}_d$-hardness is achieved with
$(2d+3)$-dimensional VASS \cite{DBLP:conf/fsttcs/CzerwinskiJ0LO23}. This gap is known to be closed only in low
dimensions. \PSPACE{}-completeness holds for 2-dimensional VASS under binary encoding
\cite{DBLP:conf/lics/BlondinFGHM15,DBLP:conf/mfcs/CzerwinskiLLP19}, and \textsf{NL}-completeness under unary encoding
\cite{DBLP:conf/lics/EnglertLT16}. \NP{}-completeness holds for 1-dimensional VASS under binary encoding
\cite{DBLP:conf/concur/HaaseKOW09}, and again \textsf{NL}-completeness under unary encoding
\cite{DBLP:conf/lics/EnglertLT16}. Dimension has traditionally served as the standard parameterization for reachability.

On the other hand, the structure of \emph{cycles} was identified as a pivotal factor controlling the
complexity of the VASS reachability problem \cite{DBLP:conf/lics/LerouxS19}. The notion of \emph{geometric dimension},
formalized in \cite{DBLP:conf/icalp/FuYZ24}, measures the dimension of the \emph{cycle space}---vector space spanned by all cycle effects. Insights from \cite{DBLP:conf/lics/LerouxS19,DBLP:conf/icalp/FuYZ24} suggest that the $\mathsf{F}_d$
upper bound of the famous KMLST algorithm applies to geometric dimension $d$, not only to dimension $d$. In this work we
propose geometric dimension as an alternative parameterization for the VASS reachability problem. In theory the
geometric dimension is closer to the nature of the reachability problem, as suggested by \cite{DBLP:conf/lics/LerouxS19}
and \cite{DBLP:conf/icalp/FuYZ24}, and the cycles provides a rich structure for analysis. In practice, as the system
parameters may not necessarily be independent, fixing geometric dimension rather than dimension allows one to introduce
certain types of interconnections in system parameters for free, which could possibly make the model more expressive.

\paragraph*{Our contribution}

In this paper, we study the reachability problem in VASSes with fixed geometric dimensions. As mentioned before, the
$\mathsf{F}_d$ upper bound in \cite{DBLP:conf/icalp/FuYZ24} directly applies to geometric dimension $d \ge 3$. Thus, our
main focus is on VASSes with geometric dimension $\le 2$. Our main contributions are the following theorems.

\begin{restatable}{theorem}{GeoOneTwpPSPACEComp}
    \label{thm:geo-1-2-pspace-comp}
    Reachability in VASS of geometric dimension 2 is \PSPACE{}-complete under binary encoding.
\end{restatable}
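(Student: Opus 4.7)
The lower bound is immediate: every $2$-dimensional VASS has geometric dimension at most $2$, so the known \PSPACE{}-hardness of reachability in $2$-VASS under binary encoding~\cite{DBLP:conf/mfcs/CzerwinskiLLP19,DBLP:conf/lics/BlondinFGHM15} applies verbatim. The plan for \PSPACE{} membership is to lift the pumping technique of Czerwi\'nski et al.~\cite{DBLP:conf/mfcs/CzerwinskiLLP19}, originally tailored to genuine $2$-VASS, to the geometric-dimension-$2$ setting. Let $V$ be a VASS with $n$ counters and $\dim \CycleSpace(V) = 2$. After a standard decomposition of the control graph into strongly connected components (with connecting short paths contributing only polynomial overhead), I would fix a skeleton path inside each SCC and parameterise the remaining behaviour by nonnegative multiplicities on a set of elementary cycles. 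Choosing a basis $v_1, v_2$ of $\CycleSpace(V)$, every closed-walk effect lies in $\Span\{v_1,v_2\}$, and this two-dimensional structure is the main object on which pumping is applied.

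The next step is to isolate a genuinely two-dimensional subproblem. Split the counters into \emph{rigid} ones (those whose coordinates vanish in both $v_1$ and $v_2$) and \emph{active} ones (the rest). Along any candidate run, rigid counters are affected only by the skeleton and not by the cycle multiplicities, so checking their nonnegativity reduces to inspecting a polynomial-length skeleton. The active counters form a subsystem whose net effect is controlled by a cycle space of true dimension $2$, making the planar pumping machinery of~\cite{DBLP:conf/mfcs/CzerwinskiLLP19} applicable in principle: decompose a witnessing run into a sequence of loops whose nonnegative multiplicities are sought, and argue that if any feasible choice exists, then there is one whose binary encoding fits in polynomial space. A \PSPACE{} procedure then guesses the skeleton together with the multiplicities and verifies reachability on the fly.

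I expect the hardest part to be enforcing simultaneous nonnegativity of \emph{all} active counters throughout the pumped run, not merely at its endpoints. In a genuine $2$-VASS, the two counters \emph{are} the whole state, and the planar arguments of~\cite{DBLP:conf/mfcs/CzerwinskiLLP19} directly describe safe interleavings; here, by contrast, the $2$-dimensional cycle space only controls net effects, while an individual active counter is a linear combination of the two ``cycle coordinates'' whose coefficients may differ in sign across counters. A pumping order that keeps one active counter nonnegative can therefore destroy another. Overcoming this will likely require (i) a carefully chosen basis of $\CycleSpace(V)$, e.g.\ via a Hermite-like normal form aligned with the active coordinates, so that pumping directions match the boundaries of the nonnegative orthant as cleanly as possible, and (ii) a refinement of the planar pumping argument that respects the full polyhedral safe region obtained by intersecting the nonnegative orthant with the affine span of the run, rather than just a two-dimensional quadrant. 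Once these ingredients are in place, the overall \PSPACE{} algorithm should follow by guessing a skeleton and binary-encoded loop multiplicities and verifying feasibility of the resulting polyhedral system.
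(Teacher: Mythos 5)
Your outline correctly pins down the two halves (lower bound trivial by specialization from $2$-VASS; upper bound by lifting the pumping machinery of Czerwi\'nski et al.), and you correctly identify the central obstruction: the planar argument controls only the two ``cycle coordinates,'' while each of the $d$ counters is a different linear functional of those coordinates and all must stay nonnegative along the run. But the proposal stops at naming the obstruction and does not supply the tool that resolves it, and the candidate fixes you float are either too vague or would not suffice. The paper's key device is the \emph{sign-reflecting projection} from \cite{DBLP:conf/icalp/FuYZ24}: when $\CycleSpace(G)\cap\mathbb{Q}_{\ge 0}^d$ contains two independent vectors (the ``proper'' case), there exist two distinguished coordinates $i_1,i_2$ such that for any $\Vec{c}\in\CycleSpace(G)$, $\Vec{c}|_{\{i_1,i_2\}}\ge\Vec{0}$ already implies $\Vec{c}\ge\Vec{0}$; moreover this projection is injective on the cycle space. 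This reduces full-dimensional nonnegativity to a two-coordinate condition exactly, so the $2$-VASS thin/thick dichotomy and non-negative-cycle lemma apply essentially verbatim to the projected coordinates, with the other counters controlled through the canonical horizontal/vertical vectors of \autoref{lem:canonical-vecs} and \autoref{lem:spp-component-bound}. Neither a ``Hermite-like normal form'' choice of basis nor a generic ``refinement to the full polyhedral safe region'' obviously yields this reduction, and without it the planar argument does not carry over.

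Two further points. First, your treatment of ``rigid'' counters (those in $[d]\setminus\Supp(\CycleSpace(G))$) is not quite right: cycles have zero \emph{net} effect on them, but they still oscillate within a cycle, so nonnegativity is not just a skeleton check; the paper handles this via the \emph{support projection}, which absorbs these bounded-range coordinates into the control state while preserving $\varsigma$ and $\chi$ (\autoref{prop:sp-preservation}). Second, when the cycle space is \emph{degenerate} (it does not meet the nonnegative orthant in two independent directions) no sign-reflecting projection of size $2$ exists, and the paper shows all $\Vec{0}$-runs are then automatically thin (\autoref{lem:degenerate-thin}); your proposal does not distinguish this case. Finally, note that your loops-with-multiplicities framing is closer to the linear-path-scheme method, which the paper explicitly reports gives only \textsf{EXPSPACE} in geometric dimension $2$; the paper instead proves an exponential \emph{length} bound on witnesses (\autoref{thm:exp-bound}) via the thin/thick classification, from which \PSPACE{} membership follows by bounded search.
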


The linear-path-schemes-based method used in \cite{DBLP:conf/icalp/FuYZ24} gives only an \textsf{EXPSPACE} upper bound
(see also \cite{DBLP:journals/corr/abs-2306-05710}). In contrast, our proof relies on a pumping technique for
2-dimensional VASSes \cite{DBLP:conf/mfcs/CzerwinskiLLP19}. To apply this technique, we make use of the
\emph{sign-reflecting projection} proposed in \cite{DBLP:conf/icalp/FuYZ24}, with some further properties developed in
\autoref{sec:srp}. Another tool called the \emph{support projection} is introduced in \autoref{sec:supp-proj}. Combining
these projection tools we establish a suitable coordinate system (\autoref{lem:canonical-vecs}) within the 2-dimensinoal
cycle space of the VASS. This enables us to apply the arguments in \cite{DBLP:conf/mfcs/CzerwinskiLLP19} to obtain the
\PSPACE{} upper bound. Together with the \PSPACE{}-hardness inherited from 2-dimensinoal VASS
\cite{DBLP:conf/lics/BlondinFGHM15,DBLP:journals/iandc/FearnleyJ15}, we conclude \PSPACE{}-completeness.

It's important to note that our projection tools do not provide a straightforward reduction from $d$-dimensional VASS of
geometric dimension $2$ to 2-dimensinoal VASS. While such a reduction works for $d = 3$ (\autoref{sec:geo-2d-3vass}),
generalizing it to $d > 3$ faces issues, which makes our approach essential.

\begin{restatable}{theorem}{GeoZeroNPComp}
    \label{thm:geo-0-np-comp}
    Reachability in VASS of geometric dimension 1 is \PSPACE{}-complete, and that of geometric dimension 0 is
    \NP{}-complete under binary encoding.
\end{restatable}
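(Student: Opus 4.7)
The statement combines two results, which I would treat separately.

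For geometric dimension 1, the \PSPACE{} upper bound follows from the geometric dimension 2 case: a geo-dim-1 VASS is in particular geo-dim 2, so the pumping argument based on the sign-reflecting projection (\autoref{sec:srp}), the support projection (\autoref{sec:supp-proj}), and the canonical coordinate system of \autoref{lem:canonical-vecs} applies verbatim and in fact simplifies, since the 2D cycle analysis collapses to its 1D counterpart. For matching \PSPACE{}-hardness, one cannot simply inherit from 2-dim VASS, because a 2-dim VASS generically has geometric dimension 2. The plan is to give a direct reduction simulating a 2-dim VASS $V$ by a higher-dimensional VASS $V'$ whose cycle effects all lie on one fixed line: one adds auxiliary counters that force every cycle of $V'$ to fire the two ``directions'' of $V$ in a tightly coupled proportion, while one-shot acyclic gadgets replicate the individual moves of the two original counters without entering any cycle.

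For geometric dimension 0, every cycle has zero effect, so the Parikh image of any run fully determines its effect. The \NP{} upper bound reduces reachability to deciding the existence of a transition-count vector $\vec{x}\in\mathbb{N}^T$ satisfying flow conservation in the state graph and the effect equation $\Target - \Source = \sum_t x_t \cdot \Delta(t)$, together with a connectivity condition on $\Supp(\vec{x})$. Because zero-effect cycles can be rotated and inserted freely without changing the intermediate configurations beyond a bounded offset, any feasible, connected Parikh image is realizable by an actual run. This system is an integer program of size polynomial in $V$, placing the problem in \NP{} by standard small-solution bounds. For hardness, I would reduce from \textsc{Subset Sum} via a VASS containing only acyclic transitions; such a VASS trivially has geometric dimension~0.

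The main obstacle is the \PSPACE{}-hardness for geometric dimension 1: the reduction must inflate the dimension enough to crush all cycle effects onto a single line while still faithfully simulating a 2-dim VASS, and verifying that the auxiliary bookkeeping is correct on the level of runs (and not merely Parikh equations) requires some care, especially ensuring that the inserted coupling does not enable spurious ``short-circuit'' paths. The \NP{} membership for geometric dimension~0 is conceptually routine but does require a clean realizability lemma translating the Parikh solution into an actual run via a decomposition of the active subgraph into strongly connected components traversed in an Eulerian manner.
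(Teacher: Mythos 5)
For the \NP-completeness of geometric dimension~0, your lower bound (acyclic VASS from \textsc{Subset-Sum}) matches the paper exactly. Your upper bound is correct in spirit but heavier than it needs to be: the paper simply observes that since every cycle in a geometrically 0-dimensional VASS has effect $\Vec{0}$, you can delete all cycles from any run and it remains a valid run with the same endpoints, giving a cycle-free witness of length at most $|Q|$ that can be guessed directly. There is no need for the Parikh-image / Euler-decomposition machinery, and in fact the ``realizability lemma'' you gesture at is exactly the kind of thing that becomes trivial once you make the cycle-deletion observation.

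The geometric dimension~1 lower bound is where your plan has a genuine gap. You correctly note that one cannot invoke \PSPACE-hardness of 2-VASS as a black box, and you propose building a fresh reduction that inflates dimension while crushing the cycle space to a line; you yourself flag this as the main obstacle, and rightly so --- ensuring that the auxiliary coupling counters admit no spurious short-circuit cycles is delicate, and the proposal does not contain a construction one could verify. But the new reduction is unnecessary. The \PSPACE-hardness of 2-VASS in \cite{DBLP:conf/lics/BlondinFGHM15} is proved via a reduction from bounded one-counter automata, and that reduction simulates a single bounded counter $x\in[0,B]$ with the pair $(x, B-x)$: every transition of the constructed 2-VASS has effect of the form $(z,-z)$. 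Hence every cycle effect lies in $\Span\{(1,-1)\}$, i.e.\ the \emph{existing} hard instance is already geometrically 1-dimensional, and you inherit \PSPACE-hardness for geometric dimension~1 with no additional construction. This is the observation your plan is missing; once you have it, the rest of your argument (upper bound inherited from the geometric dimension~2 case) goes through as you describe.
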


Results in VASS of geometric dimension 1 and 0 are obtained by an re-examination of known results for VASS of dimension
1 and 2. These results show an interesting distinction in complexity of VASS reachability parameterized by dimension and
by geometric dimension, as compared in the following table.

\begin{center}
    \begin{tabular}{ccc}
        \hline
        & dimension $d$ & geometric dimension $d$\\
        \hline
        $d = 0$ & \textsf{NL}-complete (folklore) & \bf \NP{}-complete \\
        $d = 1$ & \NP{}-complete \cite{DBLP:conf/concur/HaaseKOW09} & \bf \PSPACE{}-complete \\
        $d = 2$ & \PSPACE{}-complete \cite{DBLP:conf/lics/BlondinFGHM15} & \bf \PSPACE{}-complete  \\
        $d \ge 3$ & \multicolumn{2}{c}{no known distinctions} \\ 
        \hline
    \end{tabular}
\end{center}

In addition, we also give an efficient (polynomial time) algorithm computing the geometric dimension of a VASS in
\autoref{sec:algo-geo-dim}.

\paragraph*{Organization}

\autoref{sec:prelim} fixes notations and definitions. \autoref{sec:geo-dim} introduces the geometric dimension of VASS
and discusses some useful properties. \autoref{sec:geo-2d} proves the \PSPACE{}-completeness of geometrically
2-dimensional VASS. \autoref{sec:geo-2d-3vass} gives a straightforward reduction from geometrically 2-dimensional 3-VASS
to 2-VASS. \autoref{sec:geo-1d} considers geometric dimensions lower than 2. \autoref{sec:conclude} concludes the paper.
Most proofs are placed in the appendices.

\section{Preliminaries}

\label{sec:prelim}


We use $\mathbb{N}, \mathbb{Z}, \mathbb{Q}$ to denote the set of natural numbers (non-negative integers), integers, and
rational numbers respectively. Let $m\le n$ be integers, we use $[m, n]$ to denote the set $\{m, m + 1, \ldots, n\}$.
And we abbreviate $[n]$ for $[1, n]$. For a $d$-dimensional vector $\Vec{v} \in \mathbb{Q}^d$, we write $\Vec{v}(i)$ for
its $i$th component, and we use its maximum norm $\norm{\Vec{v}} := \max_{i \in [d]}|\Vec{v}(i)|$. The order $\le$ is
extended component-wise to vectors: we write $\Vec{u} \le \Vec{v}$ if $\Vec{u}(i) \le \Vec{v}(i)$ for all $i \in [d]$.
Similarly we define the component-wise $<$ order for vectors. We write $\Inner{\Vec{u}, \Vec{v}} = \sum_{i \in
[d]}\Vec{u}(i)\Vec{v}(i)$ for their inner product. The \emph{support} of a vector $\Vec{v}$ is $\Supp(\Vec{v}) := \{i
\in [d], \Vec{v}(i) \ne 0\}$. The support of a set $S$ of vectors is $\Supp(S) := \bigcup_{\Vec{v} \in S}
\Supp(\Vec{v})$. A vector $\Vec{v}$ is \emph{positive} if $\Vec{v} > \Vec{0}$, it is \emph{semi-positive} if $\Vec{v}
\ge \Vec{0}$ and $\Vec{v} \ne \Vec{0}$. For a string $s = a_1a_2\ldots a_n \in \Sigma^*$ over an alphabet $\Sigma$, we
write $s[i..j]$ for the substring $a_ia_{i+1}\ldots a_j$ of $s$.

\subsection{Vector Addition System with States}

Let $d \ge 0$ be an integer. A \emph{$d$-dimensional vector addition system with states ($d$-VASS)} is a pair $G = (Q,
T)$ where $Q$ is a finite set of \emph{states} and $T\subseteq Q\times \mathbb{Z}^d \times Q$ is a finite set of
\emph{transitions}. Clearly a VASS can also be viewed as a directed graph with edges labelled by integer vectors. Given
a word $\pi = (p_1, \Vec{a}_1, q_1) (p_2, \Vec{a}_2, q_2) \ldots (p_n, \Vec{a}_n, q_n) \in T^*$ over transitions, we say
that $\pi$ is a \emph{path from $p_1$ to $q_n$} if $q_{i} = p_{i + 1}$ for all $i = 1, \ldots, n - 1$. It is a
\emph{cycle} if we further have $p_1 = q_n$. Such a path is usually presented in the following form: 
\begin{equation}
    \pi = p_1 \xrightarrow{t_1} q_1 \xrightarrow{t_2} q_2 \xrightarrow{t_3} \cdots \xrightarrow{t_n} q_n
\end{equation}
where $t_i = (p_i, \Vec{a}_i, q_i)$. The \emph{effect} of $\pi$ is defined to be $\Delta(\pi) := \sum_{i = 1}^n
\Vec{a}_i$. 

\paragraph*{Size, traversal number and characteristic}

The norm of a transition $t = (p, \Vec{a}, q)$ is defined by $\norm{t} := \norm{\Vec{a}}$. For a $d$-VASS $G = (Q, T)$
we write $\norm{T} := \max\{\norm{t} :t \in T\}$. We shall mainly consider VASS under binary encoding, so the
\emph{size} of $G$ is given by $|G| := |Q| + d\cdot |T| \cdot \lceil\log(\norm{T} + 1)\rceil + 1$.

We define the {\emph{traversal number}} of $G$ to be the maximal number of distinct states that can be visited
(traversed) by a path in $G$, denoted by $\varsigma(G)$. We remark that $\varsigma(G)$ is an upper bound of (i) length
of any simple path/cycle, and (ii) number of connected components visited by any path. Note also the trivial fact
$\varsigma(G) \le |Q|$.

The {\emph{characteristic}} of $G$, denoted by $\chi(G)$, is defined to be $\chi(G) := \varsigma(G) \cdot \norm{T}$. So
$\chi(G)$ upper bounds the norm of effect any simple path/cycle in $G$.

\paragraph*{Semantics of VASSes}

Let $G = (Q, T)$ be a $d$-VASS. A \emph{configuration} $c$ of $G$ is a pair of a state $p \in Q$ and a vector
$\Vec{v}\in \mathbb{N}^d$, written as $c = p(\Vec{v})$. We will often confuse a configuration with its vector. So we
shall write, for example, $\norm{c}$ for $\norm{\Vec{v}}$, and $c(i)$ for $\Vec{v}(i)$. The semantics of $G$ is defined
as follows. For each transition $t = (p, \Vec{a}, q) \in T$, the one-step transition relation $\xrightarrow{t}$ relates
all pairs of configurations of the form $(p(\Vec{u}),q(\Vec{v}))$ where $\Vec{u}, \Vec{v}\in\mathbb{N}^d$ and $\Vec{v} =
\Vec{u} + \Vec{a}$. Then for a word $\pi = t_1 t_2 \ldots t_n \in T^*$, the relation $\xrightarrow{\pi}$ is the
composition $\xrightarrow{\pi} {}:={} \xrightarrow{t_1}\circ \cdots \circ \xrightarrow{t_n}$. So $p(\Vec{u})
\xrightarrow{\pi} q(\Vec{v})$ if and only if there are configurations $p_0(\Vec{u}_0), \ldots, p_n(\Vec{u}_n) \in
Q\times \mathbb{N}^d$ such that
\begin{equation}
    p(\Vec{u}) = p_0(\Vec{u}_0) \xrightarrow{t_1} p_1(\Vec{u}_1) \xrightarrow{t_2} \cdots \xrightarrow{t_n} p_n(\Vec{u}_n) = q(\Vec{v}).
\end{equation}
Also, when $\pi = \epsilon$ is the empty word, the relation $\xrightarrow{\epsilon}$ is the identity relation over
$Q\times \mathbb{N}^d$. Note that $\xrightarrow{\pi}$ is non-empty only if $\pi$ is a path. When $p(\Vec{u})
\xrightarrow{\pi} q(\Vec{v})$ we also say that $\pi$ induces (or is) a \emph{run} from $p(\Vec{u})$ to $q(\Vec{v})$. We
emphasize that all configurations on this run lie in $\mathbb{N}^d$, and that they are uniquely determined by
$p(\Vec{u})$ and $\pi$. Finally, the \emph{reachability relation} of $G$ is defined to be $\xrightarrow{*} {}:={}
\bigcup_{\pi \in T^*}\xrightarrow{\pi}$.

\paragraph*{Reachability problem}

The reachability problem in vector addition systems with states is formulated as follows:

\defproblem{\textsc{Reachability in Vector Addition System with States}} {A VASS $G =
    (Q, T)$, two configurations $p(\Vec{u}), q(\Vec{v})$ of $G$.} {Does
    $p(\Vec{u})\xrightarrow{*} q(\Vec{v})$ hold in $G$?}

It is a folklore that this problem can be reduced to the following one in polynomial time:

\defproblem{\textsc{$\Vec{0}$-Reachability in Vector Addition System with States}} {A $G = (Q, T)$, two states $p, q \in
    Q$.} {Does $p(\Vec{0})\xrightarrow{*} q(\Vec{0})$ hold in $G$?}

Thus, in this paper we mainly focus on the runs starting from $\Vec{0}$ and ending at $\Vec{0}$. Such a run is called a
\emph{$\Vec{0}$-run} in the following.

\paragraph*{Reverse of VASS}

For a VASS $G = (Q, T)$ we define its \emph{reverse} as the VASS $G^{\text{rev}} = (Q, T^{\text{rev}})$ where
$T^{\text{rev}} := \{(q, -\Vec{a}, p) : (p, \Vec{a}, q) \in T\}$. The \emph{reverse} $\text{rev}(\pi)$ of a path (or a
run) $\pi$ is defined naturally by reversing the order of transitions in $\pi$ and negating their effects. We note that
$p(\Vec{u}) \xrightarrow{\pi} q(\Vec{v})$ in $G$ if and only if $q(\Vec{v}) \xrightarrow{\text{rev}(\pi)} p(\Vec{u})$ in
$G^{\text{rev}}$.

\section{Geometric Dimension}
\label{sec:geo-dim}

\begin{definition}
    Let $G$ be a $d$-VASS. The \emph{cycle space} of $G$ is the vector space $\CycleSpace(G)\subseteq \mathbb{Q}^d$
    spanned by the effects of all cycles in $G$, that is: $\CycleSpace(G) := \Span\{\Delta(\theta): \theta \text{ is a
    cycle in } G\}$.

    The dimension of the cycle space of $G$ is called the \emph{geometric dimension} of $G$, denoted by $\gdim(G) :=
    \dim(\CycleSpace(G))$. We say $G$ is \emph{geometrically $k$-dimensional} if $\gdim(G) \le k$.
\end{definition}

It should be noticed that the cycle space of a VASS is indeed spanned by the effects of all simple cycles in it.

\begin{lemma}
    \label{lem:cyc-space-spanned-by-simple-cycles}
    Let $G$ be a $d$-VASS, then $\CycleSpace(G)$ equals to the vector space spanned by the effects of all simple cycles
    in $G$, that is, $\CycleSpace(G) = \Span\{\Delta(\beta): \beta \text{ is a simple cycle in } G\}$.
\end{lemma}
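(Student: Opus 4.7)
The plan is to prove the two inclusions between $\CycleSpace(G)$ and the span $S := \Span\{\Delta(\beta) : \beta \text{ simple cycle in } G\}$. The inclusion $S \subseteq \CycleSpace(G)$ is immediate, since every simple cycle is a cycle and $\CycleSpace(G)$ contains the effects of all cycles by definition.

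For the reverse inclusion, it suffices to show $\Delta(\theta) \in S$ for every cycle $\theta$ in $G$. I would do this by strong induction on the length (number of transitions) of $\theta$. The base case is trivial: a cycle of length $0$ has effect $\Vec{0} \in S$, and a simple cycle lies in $S$ by definition.

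For the inductive step, suppose $\theta = q_0 \xrightarrow{t_1} q_1 \xrightarrow{t_2} \cdots \xrightarrow{t_n} q_n$ with $q_0 = q_n$ is not simple. Then some state is repeated in the interior, i.e.\ there exist indices $0 \le i < j \le n$ with $(i,j) \ne (0,n)$ and $q_i = q_j$. Split $\theta$ into the two subwalks $\theta_1 = q_0 \to \cdots \to q_i \to q_j \to \cdots \to q_n$ (obtained by excising the middle portion) and $\theta_2 = q_i \to \cdots \to q_j$. Both $\theta_1$ and $\theta_2$ are cycles (based at $q_0$ and $q_i$ respectively), each is strictly shorter than $\theta$ because $(i,j) \ne (0,n)$, and their effects satisfy $\Delta(\theta) = \Delta(\theta_1) + \Delta(\theta_2)$. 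By the induction hypothesis, $\Delta(\theta_1), \Delta(\theta_2) \in S$, hence $\Delta(\theta) \in S$.

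This covers $\Span\{\Delta(\theta) : \theta \text{ cycle}\} \subseteq S$, completing the proof. No step appears to be a real obstacle; the only small subtlety is being careful that the decomposition strictly decreases length (handled by excluding the trivial split $(i,j) = (0,n)$) and that both pieces are genuine cycles, which follows from $q_i = q_j$ together with $q_0 = q_n$.
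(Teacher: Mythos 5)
Your proof is correct and is essentially the paper's argument made explicit: the paper's one-line proof ("the effect of every cycle is a finite sum of effects of simple cycles") is exactly the decomposition fact you establish by strong induction on length, with the careful bookkeeping around the excluded split $(i,j)=(0,n)$ being the only subtlety.
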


\begin{proof}
    Just note that the effect of every cycle is a finite sum of effects of simple cycles.
\end{proof}

A naive algorithm computes the geometric dimension of a VASS by enumerating all simple cycles in it, which requires \PSPACE{}. Indeed, we show here a more efficient algorithm which computes $\gdim(G)$ in polynomial time.

\subsection{Computing Geometric Dimension}
\label{sec:algo-geo-dim}

We shall actually design a stronger algorithm that given a VASS $G$ as input, computes a basis for $\CycleSpace(G)$.
Observe that every cycle lies within some maximal strongly connected component (SCC) of $G$. Once we have an algorithm
that computes a basis for the cycle spaces of every SCC of $G$, a basis for $\CycleSpace(G)$ is just a maximal linearly
independent subset of the union of these bases, which can be computed by Gaussian elimination in polynomial time. So the
problem reduces to computing a basis for $\CycleSpace(G)$ in case $G$ is strongly connected.

Fix a strongly connected $d$-VASS $G = (Q, T)$. We introduce an operation called \emph{cycle shrinking}. Let $\theta$ be
a simple cycle in $G$ that is not a self-loop and has the form 
\begin{equation}
    \theta = p_0 \xrightarrow{t_1} p_1 \xrightarrow{t_2} \cdots \xrightarrow{t_n} p_n = p_0.
\end{equation}

First we define a ``shift function'' $s : Q \to \mathbb{Z}^d$ as follows. If $q = p_k$ for some $k \in [n]$, we set
$s(q) := \Delta(\theta[1..k]) = \Delta(t_1) + \ldots + \Delta(t_k)$; otherwise we set $s(q) := \Vec{0}$. Note that
$s(p_0) = s(p_n) = \Delta(\theta)$. Let $P = \{p_1, \ldots, p_n\}$ be the set of states that occurs on $\theta$. Now we
define a new VASS $G/\theta = (Q^{\theta}, T^{\theta})$ that ``shrinks'' $\theta$ into a single states as follows.
\begin{itemize}
    \item $Q^{\theta} := (Q \setminus P) \cup \{\theta\}$.
    \item Let $h : Q \to Q^{\theta}$ be defined as $h(q) = \theta$ if $q \in P$ and $h(q) = q$ otherwise. The
    transitions are given by $T^{\theta} := \{(h(p), s(p) + \Vec{a} - s(q), h(q)) : (p, \Vec{a}, q) \in T\}$.
\end{itemize}

It should be clear that $|Q^{\theta}| < |Q|$ as $\theta$ is not a self-loop, and that $G/\theta$ can be constructed in
polynomial time given $G$ and $\theta$. Observe that $h$ is a graph homomorphism from $G$ to $G/\theta$. Thus
$G/\theta$ is strongly connected as long as $G$ is. We can show that $G/\theta$ preserves the cycle space of $G$.

\begin{proposition}
    \label{prop:cyc-shrink-preserves-cyc-space}
    $\CycleSpace(G/\theta) = \CycleSpace(G)$.
\end{proposition}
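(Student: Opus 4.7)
The plan is to prove the two inclusions separately, with both relying on a telescoping identity coming from the shift function $s$.

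For the inclusion $\CycleSpace(G) \subseteq \CycleSpace(G/\theta)$, I would take any cycle $\beta = q_0 \xrightarrow{t'_1} q_1 \xrightarrow{} \cdots \xrightarrow{t'_m} q_m = q_0$ in $G$ and consider its image under the graph homomorphism $h$. Since $h$ sends transitions to transitions of $G/\theta$, the image $h(\beta)$ is a closed walk in $G/\theta$. Its effect telescopes neatly: writing $t'_j = (q_{j-1}, \Vec{b}_j, q_j)$, we get $\Delta_{G/\theta}(h(\beta)) = \sum_{j} \big(s(q_{j-1}) + \Vec{b}_j - s(q_j)\big) = \Delta(\beta) + s(q_0) - s(q_m) = \Delta(\beta)$. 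A closed walk decomposes into (edge-disjoint) simple cycles and its effect is the corresponding sum of simple cycle effects, so $\Delta(\beta) \in \CycleSpace(G/\theta)$ by \autoref{lem:cyc-space-spanned-by-simple-cycles}.

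For the reverse inclusion $\CycleSpace(G/\theta) \subseteq \CycleSpace(G)$, I would take a simple cycle $\beta'$ in $G/\theta$ and lift it back to a closed walk in $G$. Each transition of $\beta'$ has the form $(h(p), s(p) + \Vec{a} - s(q), h(q))$ arising from some $t = (p, \Vec{a}, q) \in T$; pick such a preimage for every transition of $\beta'$. Consecutive preimages $(p_j, \Vec{a}_j, q_j)$ and $(p_{j+1}, \Vec{a}_{j+1}, q_{j+1})$ satisfy $h(q_j) = h(p_{j+1})$: if this common image is not $\theta$ then $q_j = p_{j+1}$ directly, while if both lie in $P$ then I glue them by taking the arc of $\theta$ from $q_j$ to $p_{j+1}$ (going around $\theta$ at most once). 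Call these connectors $\sigma_j$ and concatenate to obtain a genuine cycle $\pi$ in $G$.

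The key computation is then to compare effects. By construction, the effect of each $\sigma_j$ equals $s(p_{j+1}) - s(q_j)$ plus some $k_j \cdot \Delta(\theta)$ with $k_j \in \{0,1\}$ (the extra $\Delta(\theta)$ appears exactly when the arc wraps around $p_0$). Summing, and noting that $\sum_j s(p_{j+1}) - \sum_j s(q_j) = \sum_j s(p_j) - \sum_j s(q_j)$ by cyclic reindexing, I get $\Delta_G(\pi) = \Delta_{G/\theta}(\beta') + K\,\Delta(\theta)$ for $K = \sum_j k_j \in \mathbb{Z}$. Rearranging, $\Delta_{G/\theta}(\beta') = \Delta_G(\pi) - K\,\Delta(\theta)$, and both terms on the right are effects of cycles in $G$, so the left-hand side lies in $\CycleSpace(G)$.

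The main obstacle I foresee is purely bookkeeping: formalizing the case split in the gluing step (states in $P$ versus outside) cleanly, and keeping track of the integer multiples of $\Delta(\theta)$ that accumulate without introducing sign or indexing errors. Everything else is a direct telescoping argument driven by the defining formula $s(p) + \Vec{a} - s(q)$ for transitions of $G/\theta$.
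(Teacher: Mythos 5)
Your proposal is correct and follows essentially the same route as the paper: push cycles forward along the graph homomorphism $h$ (the effect telescopes exactly), and lift simple cycles of $G/\theta$ back to $G$ by splicing in arcs of $\theta$, then observe that the resulting effect differs from $\Delta_{G/\theta}(\beta')$ by an integer multiple of $\Delta(\theta) \in \CycleSpace(G)$. The paper fixes the wrap-around so that the correction is exactly $+\Delta(\theta)$ and splits into the self-loop and non-self-loop cases, whereas you carry a generic coefficient $K = \sum_j k_j$ and handle both uniformly; this is a minor presentational difference, not a different argument.
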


Now the algorithm for computing a basis of $\CycleSpace(G)$ where $G$ is strongly connected should be clear. As listed
in \autoref{algo:cyc-space-basis}, we repeatedly shrink a cycle in $G$ until there remains only one state. Then its cycle
space is the span of effects of self-loops in it. A basis of $\CycleSpace(G)$ can be computed using Gaussian
elimination. Since a cycle shrinking reduces the number of states by at least one, after at most $|Q|$ rounds we must
stop with a single state remained. Thus the algorithm runs in polynomial time.

\begin{algorithm}[H]
    \SetKwInOut{Input}{input} \SetKwInOut{Output}{output} \DontPrintSemicolon \LinesNumbered

    \Input{a VASS $G$ which is strongly connected} \Output{basis of $\CycleSpace(G)$}

    \BlankLine

    \While{$G$ contains more than one state}{ %
        $\theta$ $\leftarrow$ a simple cycle in $G$ that is not a self-loop\; %
        $G \leftarrow G/\theta$\; %
    } %
    
    $U \leftarrow$ the set of effects of all self-loops in $G$\; \Return{a basis of $U$ found by Gaussian elimination}

    \caption{\textsc{CycleSpaceBasis}\label{algo:cyc-space-basis}}
\end{algorithm}






\subsection{Geometry of Reachability Sets and Runs}

Given a VASS $G = (Q, T)$ and a configuration $p(\Vec{u}) \in Q\times \mathbb{N}^d$, we write $\Reach_G(p(\Vec{u}))$ for
all configurations that is reachable from $p(\Vec{u})$: $\Reach_G(p(\Vec{u})) = \{q(\Vec{v}) \in Q\times \mathbb{N}^d :
p(\Vec{u}) \xrightarrow{*} q(\Vec{v})\}$. The next lemma shows that the ``dimension'' of any reachable set is bounded by
$\gdim(G)$, in the sense that it is contained in a finite union of affine copies of $\CycleSpace(G)$. Here the sum of a
vector $\Vec{v} \in \mathbb{Q}^d$ and a set $S \subseteq \mathbb{Q}^d$ is defined as $\Vec{v} + S := \{\Vec{v} + \Vec{s}
: \Vec{s} \in S\}$.

\begin{lemma}
    \label{thm:reach-eq-union-affine-cycspace}
    Let $G = (Q, T)$ be a $d$-VASS, $p(\Vec{u}) \in Q\times \mathbb{N}^d$ be a configuration of $G$. Then 
    \begin{equation}
        \Reach_G(p(\Vec{u})) \subseteq Q\times \bigcup_{\substack{\Vec{z} \in \mathbb{Z}^d \\ \norm{\Vec{z}} \le \chi(G)}} \Vec{u} + \CycleSpace(G) + \Vec{z}.
    \end{equation}
    In other words, for any configuration $q(\Vec{v}) \in Q\times \mathbb{N}^d$ with $p(\Vec{u}) \xrightarrow{*}
    q(\Vec{v})$, we have $\Vec{v} = \Vec{u} + \Vec{c} + \Vec{z}$ for some $\Vec{c} \in \CycleSpace(G)$ and $\Vec{z} \in
    \mathbb{Z}^d$, where $\norm{\Vec{z}} \le \chi(G)$.
\end{lemma}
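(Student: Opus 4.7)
Fix a configuration $q(\Vec{v})$ with $p(\Vec{u}) \xrightarrow{\pi} q(\Vec{v})$ via some path $\pi \in T^*$. Since the effect along $\pi$ satisfies $\Vec{v} = \Vec{u} + \Delta(\pi)$, it suffices to exhibit a decomposition $\Delta(\pi) = \Vec{c} + \Vec{z}$ with $\Vec{c} \in \CycleSpace(G)$ and $\norm{\Vec{z}} \le \chi(G)$. The idea is the standard graph-theoretic decomposition of a walk into a simple path plus a collection of closed walks.

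\textbf{Decomposition step.} I would argue by induction on the length of $\pi$ viewed as a path from $p$ to $q$. If $\pi$ is already a simple path (i.e., visits each state at most once), set $\Vec{z} := \Delta(\pi)$ and $\Vec{c} := \Vec{0}$, and note that $\pi$ uses at most $\varsigma(G) - 1$ transitions, so
\begin{equation}
    \norm{\Vec{z}} \le (\varsigma(G) - 1)\cdot \norm{T} \le \chi(G).
\end{equation}
Otherwise, some state occurs twice in $\pi$, say at positions $i < j$; then $\pi$ factors as $\pi = \alpha\,\theta\,\beta$ where $\theta$ is a cycle and $\alpha\beta$ is still a path from $p$ to $q$ of strictly smaller length. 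By the induction hypothesis, $\Delta(\alpha\beta) = \Vec{c}' + \Vec{z}$ with $\Vec{c}' \in \CycleSpace(G)$ and $\norm{\Vec{z}} \le \chi(G)$. Since $\theta$ is a cycle, $\Delta(\theta) \in \CycleSpace(G)$ by definition (here one uses that $\CycleSpace(G)$ is spanned by effects of \emph{all} cycles, not just simple ones—though \autoref{lem:cyc-space-spanned-by-simple-cycles} tells us these coincide). Setting $\Vec{c} := \Vec{c}' + \Delta(\theta)$ gives $\Delta(\pi) = \Vec{c} + \Vec{z}$ with the required properties.

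\textbf{Main difficulty.} There is no real obstacle here: the argument is essentially a cycle-peeling induction, and all the bookkeeping is handled by the definition of $\varsigma(G)$ as an upper bound on the number of distinct states on any path, which in turn bounds the length of a simple path by $\varsigma(G)-1$. The only point requiring a little care is ensuring that the ``residual'' walk $\alpha\beta$ after removing $\theta$ is still a valid path from $p$ to $q$ in $G$—but this is immediate because the starting state of $\theta$ equals its ending state. The bound $\norm{\Vec{z}} \le \chi(G) = \varsigma(G)\cdot \norm{T}$ then follows from applying the triangle inequality to the at most $\varsigma(G) - 1$ transitions of the remaining simple path, each of norm at most $\norm{T}$. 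Finally, membership of $q(\Vec{v})$ in $Q \times \mathbb{N}^d$ is inherited from the semantics of runs, so the inclusion into $Q$ in the statement is automatic.
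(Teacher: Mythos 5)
Your proposal is correct and uses essentially the same approach as the paper: decompose a walk from $p$ to $q$ into a collection of cycles (whose effects land in $\CycleSpace(G)$) plus a simple path (whose effect has norm at most $\chi(G)$ by definition of the traversal number and characteristic). The paper constructs this decomposition in one shot by jumping to the last occurrence of each visited state, whereas you peel off one cycle at a time via induction on $|\pi|$; both produce the same split and the same bound.
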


Note that one may need exponentially many (roughly $O(\chi(G)^d)$) affine copies of $\CycleSpace(G)$ to cover
$\Reach_G(p(\Vec{u}))$. The next lemma shows that any fixed run from $p(\Vec{u})$ is confined in, however, at most $|Q|$
affine copies of $\CycleSpace(G)$.

\begin{lemma}
    \label{thm:geometry-of-runs}
    Let $G = (Q, T)$ be a $d$-VASS. For any run $\pi$ in $G$ with source $p(\Vec{u}) \in Q\times \mathbb{N}^d$, there is
    a function $f_\pi: Q\to \mathbb{Z}^d$ such that for every configuration $q(\Vec{v})$ occurring on $\pi$, we have
    $\Vec{v} \in \Vec{u} + \CycleSpace(G) + f_\pi(q)$. Moreover, $\norm{f_\pi(q)} \le \chi(G)$ for every $q\in Q$.
\end{lemma}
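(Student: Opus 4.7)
The plan is to assign to each state $q$ an offset $f_\pi(q)$ determined by the \emph{first} time $q$ appears on $\pi$, and then show that later visits to $q$ differ from the first visit only by a cycle effect.

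First I would handle the states not visited by $\pi$ by simply setting $f_\pi(q) := \Vec{0}$; the claim is vacuous for such states and the norm bound is trivial. For a state $q$ that is visited, let $\pi_q$ denote the prefix of $\pi$ ending at the first occurrence of $q$, so that $\pi_q$ is a path from $p$ to $q$. The key technical step is to apply a standard cycle-removal argument: iteratively identify a repeated state in $\pi_q$ and excise the closed sub-walk between its two occurrences. Each excised sub-walk is a cycle, so its effect lies in $\CycleSpace(G)$; after finitely many excisions we obtain a simple path $\sigma_q$ from $p$ to $q$ in $G$, and
\begin{equation}
  \Delta(\pi_q) \;=\; \Delta(\sigma_q) + \Vec{c}_q
\end{equation}
for some $\Vec{c}_q \in \CycleSpace(G)$. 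I would then set $f_\pi(q) := \Delta(\sigma_q)$. The norm bound is immediate from the fact that $\sigma_q$ is simple: its length is at most $\varsigma(G)$ and each transition has norm at most $\norm{T}$, so $\norm{f_\pi(q)} \le \varsigma(G)\cdot \norm{T} = \chi(G)$.

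To verify the containment for every visit, consider an arbitrary configuration $q(\Vec{v})$ appearing on $\pi$, and let $\pi'$ be the prefix of $\pi$ ending at this occurrence, so $\Vec{v} = \Vec{u} + \Delta(\pi')$. Since $q$ was already visited by the prefix $\pi_q$, the portion of $\pi$ strictly between the end of $\pi_q$ and the end of $\pi'$ is a closed walk at $q$, hence a cycle whose effect $\Vec{c}'$ lies in $\CycleSpace(G)$. Thus $\Delta(\pi') = \Delta(\pi_q) + \Vec{c}' = f_\pi(q) + \Vec{c}_q + \Vec{c}'$, and $\Vec{c}_q + \Vec{c}' \in \CycleSpace(G)$, which gives $\Vec{v} \in \Vec{u} + \CycleSpace(G) + f_\pi(q)$ as required.

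There is no serious obstacle here; the only point requiring a little care is justifying the cycle-removal decomposition. One could phrase it inductively on the number of repeated states in the prefix, or alternatively cite \autoref{lem:cyc-space-spanned-by-simple-cycles} after observing that every closed walk decomposes into simple cycles. Either way the argument is routine, and the norm bound falls out automatically from working with a simple path.
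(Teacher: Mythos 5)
Your proposal is correct and follows essentially the same strategy as the paper: define $f_\pi(q) = \Vec{0}$ for unvisited states, anchor $f_\pi(q)$ to the first occurrence of $q$ on $\pi$, and observe that any later visit differs from the first by the effect of a closed sub-walk, which lies in $\CycleSpace(G)$. The only difference is packaging: the paper defines $f_\pi(q)$ by invoking \autoref{thm:reach-eq-union-affine-cycspace} as a black box to obtain the offset $\Vec{z}_q$ with $\norm{\Vec{z}_q}\le\chi(G)$, whereas you re-derive that offset directly via cycle removal on the prefix $\pi_q$, setting $f_\pi(q) := \Delta(\sigma_q)$ for the resulting simple path $\sigma_q$. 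Since \autoref{thm:reach-eq-union-affine-cycspace} is itself proved by exactly this kind of decomposition, your version is just a more self-contained inlining of the same idea; both yield the same norm bound from simplicity of $\sigma_q$.
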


These two lemmas follows easily from the fact that we can view a run as a simple path interleaved with many cycles.
Their proofs are moved into the appendix.

\section{Geometrically 2-Dimensional VASS}
\label{sec:geo-2d}

In this section we focus exclusively on geometrically 2-dimensional VASSes. We prove that reachability in geometrically
2-dimensional VASSes is \PSPACE{}-complete. The lower bound is a simple corollary of \cite[Lemma
20]{DBLP:conf/lics/BlondinFGHM15}, so most effort will be devoted to the upper bound. Our proof is based on the pumping
technique proposed in \cite{DBLP:conf/mfcs/CzerwinskiLLP19} for 2-VASSes, where they showed that every run in a 2-VASS
is either \emph{thin} --- confined in some belt-shaped regions, or \emph{thick} --- enjoying good pumping properties
that can be exploited to shrink long runs. We extend this technique and prove a similar thin-thick classification for
geometrically 2-dimensional VASSes. This enables us to obtain an exponential length bound for reachability witnesses,
from which the \PSPACE{} upper bound follows easily.

At first, we introduce some projection tools in \autoref{sec:srp} and \autoref{sec:supp-proj}, which enable us to create
a suitable 2-dimensinoal coordinate system in the cycle space.

\subsection{Sign Reflecting Projection}
\label{sec:srp}

An \emph{orthant} is one of the $2^d$ regions in $\mathbb{Q}^d$ split by the $d$ axes. Formally, given a vector $t \in
\{+1, -1\}^d$, the orthant $Z_t$ defined by $t$ is the set $Z_t := \{ \Vec{u} \in \mathbb{Q}^d : \Vec{u}(i) \cdot t(i)
\ge 0 \text{ for all } i \in [d]\}$. The non-negative orthant $\mathbb{Q}_{\ge0}^d = Z_{(+1, +1, \ldots, +1)}$ is a
major concern in this paper.

Let $I\subseteq [d]$ be a subset of indices. For a vector $\Vec{u} \in \mathbb{Q}^d$, we define its \emph{projection
onto indices in $I$} as a function $\Vec{u}|_I \in \mathbb{Q}^{I}$ given by $(\Vec{u}|_I)(i) = \Vec{u}(i) \text{ for all
} i\in I$. We tacitly identify the function $\Vec{u}|_I \in \mathbb{Q}^I$ as a vector in $\mathbb{Q}^{|I|}$. For a set
of vectors $V \subseteq \mathbb{Q}^d$, we define $V|_I := \bigl\{ \Vec{v}|_I : \Vec{v}\in V \bigr\}$. It should be clear
that the projection of a vector space onto indices in $I$ is again a vector space in $\mathbb{Q}^{|I|}$, and the
projection of an orthant $Z_t$ onto $I$ is an orthant in $\mathbb{Q}^{|I|}$ defined by $t|_I$.

\begin{definition}[{\cite[Definition A.7]{DBLP:conf/icalp/FuYZ24}}] Let $P \subseteq \mathbb{Q}^d$ be a vector space and
    $Z$ be an orthant in $\mathbb{Q}^d$. A set of indices $I\subseteq [d]$ is called a \emph{sign-reflecting projection
    for $P$ with respect to $Z$} if for any $\Vec{v} \in P$, $\Vec{v}|_I \in Z|_I$ implies $\Vec{v} \in Z$.
\end{definition}

Sign-reflecting projection helps us project the vectors in a vector space to some of its components so that the
pre-image of a certain orthant still belongs to one orthant. Moreover, we know that such a projection is one-to-one when
restricted to that orthant. 

\begin{lemma}[{\cite[Lemma A.8]{DBLP:conf/icalp/FuYZ24}}]
    \label{lem:spp-inject}
    Let $P \subseteq \mathbb{Q}^d$ be a vector space and $Z$ be an orthant in $\mathbb{Q}^d$. Let $I\subseteq [d]$ be a
    sign-reflecting projection for $P$ w.r.t.\ $Z$. Then every vector $\Vec{v} \in P$ is uniquely determined by
    $\Vec{v}|_I$. In other words, for any $\Vec{v}, \Vec{v}' \in P$, $\Vec{v}|_I = \Vec{v}'|_I$ implies $\Vec{v} =
    \Vec{v}'$.
\end{lemma}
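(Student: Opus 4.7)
The plan is to reduce the statement to a single kernel-style claim: if $\Vec{w} \in P$ satisfies $\Vec{w}|_I = \Vec{0}$, then $\Vec{w} = \Vec{0}$. This is enough, because $\Vec{v}|_I = \Vec{v}'|_I$ with $\Vec{v}, \Vec{v}' \in P$ gives $\Vec{w} := \Vec{v} - \Vec{v}' \in P$ (using that $P$ is a vector space) and $\Vec{w}|_I = \Vec{0}$.

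To prove the reduced claim, I would apply the sign-reflecting hypothesis twice, exploiting that $P$ is closed under negation. Fix $t \in \{+1, -1\}^d$ with $Z = Z_t$. Since $\Vec{w}|_I = \Vec{0}$, the projection $\Vec{0}|_I$ trivially lies in $Z|_I$, so by the sign-reflecting property $\Vec{w} \in Z$, i.e.\ $\Vec{w}(i)\cdot t(i) \ge 0$ for every $i \in [d]$. The vector $-\Vec{w}$ is also in $P$ and likewise satisfies $(-\Vec{w})|_I = \Vec{0} \in Z|_I$, so the same hypothesis yields $-\Vec{w} \in Z$, i.e.\ $\Vec{w}(i) \cdot t(i) \le 0$ for every $i$. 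Combining the two inequalities gives $\Vec{w}(i) \cdot t(i) = 0$, and since $t(i) \in \{+1,-1\}$ is nonzero, we conclude $\Vec{w}(i) = 0$ for all $i$, hence $\Vec{w} = \Vec{0}$.

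The only step that requires any care is observing that the coordinates of $t$ are strictly $\pm 1$, which is what makes $Z \cap (-Z) = \{\Vec{0}\}$; if orthants were defined to permit $0$ entries in $t$, the argument would collapse only to the coordinates where $t(i) \ne 0$. Beyond that subtlety, the proof is essentially a one-line linear-algebra observation and I do not anticipate a genuine obstacle; the content of the lemma sits entirely in the definition of sign-reflecting projection and the closure of $P$ under taking negatives.
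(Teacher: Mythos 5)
Your proof is correct, and since the paper imports this lemma from Fu, Yang, and Zheng (2024) without reproving it, there is no in-paper proof to compare against. Your argument — reduce to the kernel claim, apply the sign-reflecting hypothesis to both $\Vec{w}$ and $-\Vec{w}$, and use $Z \cap (-Z) = \{\Vec{0}\}$ — is the natural and essentially unique route for this statement, and the care you take about $t(i) \in \{+1,-1\}$ being nonzero is exactly the point that makes the conclusion hold componentwise.
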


We mainly care about sign-reflecting projections for a plane, i.e.\ a $2$-dimensional subspace of $\mathbb{Q}^d$. In
this case a good sign-reflecting projection is given by the following lemma.

\begin{lemma}[{\cite[Theorem A.9]{DBLP:conf/icalp/FuYZ24}}]
    \label{thm:2d-proj}
    For $d \ge 2$, let $P \subseteq \mathbb{Q}^d$ be a plane (i.e.\ a $2$-dimensional subspace), and $Z$ be an orthant
    in $\mathbb{Q}^d$ such that $P\cap Z$ contains two linearly independent vectors. Then there is a sign-reflecting
    projection $I$ for $P$ w.r.t.\ $Z$ such that $|I| = 2$.
\end{lemma}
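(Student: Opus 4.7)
The plan is to reduce to the non-negative orthant and then extract two indices from the boundary structure of the cone $P \cap Z$ viewed inside the plane $P$. First I would normalize the orthant: given $Z = Z_t$ for some $t \in \{+1, -1\}^d$, the coordinate-wise sign flip $\sigma(\Vec{v})(k) := t(k)\Vec{v}(k)$ is a linear automorphism of $\mathbb{Q}^d$ taking $Z$ to $\mathbb{Q}_{\ge 0}^d$ and $P$ to another plane; it commutes with projection onto any index subset and preserves the sign-reflecting property. Hence it suffices to treat the case $Z = \mathbb{Q}_{\ge 0}^d$.

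Pick two linearly independent $\Vec{u}_1, \Vec{u}_2 \in P \cap Z$ and parameterize $P$ by the isomorphism $\phi(\alpha, \beta) := \alpha\Vec{u}_1 + \beta\Vec{u}_2$. For each coordinate $k \in [d]$ introduce the linear form $L_k(\alpha, \beta) := \alpha\Vec{u}_1(k) + \beta\Vec{u}_2(k)$; then $\phi(\alpha, \beta) \in Z$ iff $L_k(\alpha, \beta) \ge 0$ for every $k \in [d]$. Thus the cone $C := \bigcap_{k \in [d]}\{L_k \ge 0\} \subseteq \mathbb{Q}^2$ corresponds under $\phi$ exactly to $P \cap Z$. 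It contains the linearly independent pair $(1, 0), (0, 1)$, so $\dim C = 2$; moreover $C$ is pointed, since $(\alpha, \beta) \in C \cap (-C)$ would force both $\phi(\alpha, \beta)$ and its negative into $\mathbb{Q}_{\ge 0}^d$, whence $\phi(\alpha, \beta) = \Vec{0}$ and therefore $(\alpha, \beta) = (0, 0)$ by linear independence of $\Vec{u}_1, \Vec{u}_2$.

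A pointed, full-dimensional convex polyhedral cone in $\mathbb{Q}^2$ is bounded by exactly two distinct rays through the origin, and each such ray is contained in the zero line of some defining functional. I would choose $i, j \in [d]$ so that the two boundary rays of $C$ lie on $L_i = 0$ and $L_j = 0$ respectively; these two lines are then different (otherwise $C$ would be a half-plane, contradicting pointedness), so $i \ne j$. The inclusion $C \subseteq \{L_i \ge 0\} \cap \{L_j \ge 0\}$ is immediate, and the reverse inclusion holds because the two distinct lines $L_i = 0, L_j = 0$ partition $\mathbb{Q}^2$ into four sectors and the one equal to $\{L_i \ge 0\} \cap \{L_j \ge 0\}$ shares both boundary rays and orientation with $C$. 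Hence $C = \{L_i \ge 0\} \cap \{L_j \ge 0\}$.

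Taking $I := \{i, j\}$, for any $\Vec{v} = \phi(\alpha, \beta) \in P$ with $\Vec{v}|_I \in Z|_I$ we have $L_i(\alpha, \beta), L_j(\alpha, \beta) \ge 0$, so $(\alpha, \beta) \in C$ and hence $\Vec{v} \in Z$, making $I$ a sign-reflecting projection of size $2$. The step that requires the most care is the identification of the two boundary rays of $C$ with zero sets of specific defining functionals $L_i, L_j$, rather than only with intersections of several of them; this is a routine structural fact about polyhedral cones in the plane, obtainable by scanning the finitely many half-plane constraints $L_k \ge 0$ and retaining the tightest one on each side of $C$.
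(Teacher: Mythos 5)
Your proof is correct. The reduction to the non-negative orthant by coordinate sign-flips, the parameterization of $P$ by two linearly independent vectors $\Vec{u}_1,\Vec{u}_2 \in P \cap Z$, and the identification of $P \cap Z$ with the planar cone $C = \bigcap_k \{L_k \ge 0\}$ are all sound. You correctly verify that $C$ is full-dimensional (it contains $(1,0)$ and $(0,1)$) and pointed (via linear independence of $\Vec{u}_1, \Vec{u}_2$), and the key step — that each of the two extreme rays of $C$ lies on $\{L_i = 0\}$ for some explicit index $i$ — is indeed a routine fact: if a boundary ray $R$ avoided every $\{L_k = 0\}$, then every nontrivial $L_k$ would be strictly positive on $R \setminus \{\Vec{0}\}$, placing $R \setminus \{\Vec{0}\}$ in the interior of $C$, a contradiction. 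The final inclusion $\{L_i \ge 0\} \cap \{L_j \ge 0\} \subseteq C$ then follows because both cones are pointed and share the same pair of extreme rays. Note that the paper itself does not prove this lemma; it is imported verbatim as Theorem A.9 from Fu, Yang, and Zheng (2024), so there is no in-paper proof to compare against. Your argument is a natural and self-contained route to the result via the structure of two-dimensional polyhedral cones.
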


Intuitively \autoref{thm:2d-proj} projects a plane onto an axis plane. The preimages of those two axes will play an
important role in our work, as stated in the following lemma.

\begin{lemma}
    \label{lem:canonical-vecs}
    Let $P \subseteq \mathbb{Q}^d$ be a plane and $Z$ be an orthant in $\mathbb{Q}^d$. Let $I = \{i_1, i_2\}\subseteq
    [d]$ be a sign-reflecting projection for $P$ w.r.t.\ $Z$. Suppose $P = \Span\{\Vec{v}_1, \Vec{v}_2\}$ for linearly
    independent vectors $\Vec{v}_1, \Vec{v}_2 \in \mathbb{Z}^d$ with norm $N := \max\{\norm{\Vec{v}_1},
    \norm{\Vec{v}_2}\}$. Then there exists two non-zero vectors $\Vec{u}_1, \Vec{u}_2 \in P \cap Z \cap \mathbb{Z}^d$
    such that $\Vec{u}_1(i_2) = \Vec{u}_2(i_1) = 0$ and $\Vec{u}_1(i_1) = \Vec{u}_2(i_2) > 0$ and that
    $\norm{\Vec{u}_1}, \norm{\Vec{u}_2} \le 2N^2$.
\end{lemma}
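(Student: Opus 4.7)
The plan is to exploit \autoref{lem:spp-inject}: the projection $\phi : P \to \mathbb{Q}^2$ given by $\phi(\Vec{v}) := \Vec{v}|_I$ is injective on $P$, and since $\dim P = 2 = \dim \mathbb{Q}^2$, $\phi$ is in fact a linear bijection. Consequently, for every target $(\alpha,\beta) \in \mathbb{Q}^2$ there is a unique pre-image $\Vec{u} \in P$ with $\phi(\Vec{u}) = (\alpha,\beta)$, so the task reduces to choosing axis-aligned targets $(\alpha,0)$ and $(0,\beta)$ so that the pre-images land in $\mathbb{Z}^d \cap Z$ and have small norm.

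Writing a general element of $P$ as $a\Vec{v}_1 + b\Vec{v}_2$ and imposing the two axis-alignment conditions gives a $2\times 2$ linear system with matrix
\begin{equation}
    A \;=\; \begin{pmatrix} \Vec{v}_1(i_1) & \Vec{v}_2(i_1) \\ \Vec{v}_1(i_2) & \Vec{v}_2(i_2) \end{pmatrix};
\end{equation}
bijectivity of $\phi$ forces $\Delta := \det A \ne 0$. Cramer's rule then suggests the explicit candidates
\begin{equation}
    \Vec{u}_1 \;:=\; \Vec{v}_2(i_2)\,\Vec{v}_1 - \Vec{v}_1(i_2)\,\Vec{v}_2, \qquad \Vec{u}_2 \;:=\; \Vec{v}_1(i_1)\,\Vec{v}_2 - \Vec{v}_2(i_1)\,\Vec{v}_1,
\end{equation}
negated globally if $\Delta < 0$. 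Each is an integer $\mathbb{Z}$-linear combination of $\Vec{v}_1, \Vec{v}_2$, and a direct computation yields $\Vec{u}_1(i_2) = \Vec{u}_2(i_1) = 0$ and $\Vec{u}_1(i_1) = \Vec{u}_2(i_2) = |\Delta| > 0$ after the sign normalization, giving the claimed positivity and equality of diagonal projection values.

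For the orthant membership, observe that $\phi(\Vec{u}_1) = (|\Delta|, 0)$ and $\phi(\Vec{u}_2) = (0, |\Delta|)$ lie in $Z|_I$ (using that $Z$ is positive in coordinates $i_1, i_2$, as in the case of interest where $Z = \mathbb{Q}_{\ge 0}^d$); the sign-reflecting property of $I$ then lifts this to $\Vec{u}_1, \Vec{u}_2 \in Z$ as required. The norm bound follows from the triangle inequality,
\begin{equation}
    \norm{\Vec{u}_1} \;\le\; |\Vec{v}_2(i_2)| \cdot \norm{\Vec{v}_1} + |\Vec{v}_1(i_2)| \cdot \norm{\Vec{v}_2} \;\le\; 2N^2,
\end{equation}
and symmetrically for $\Vec{u}_2$. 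I do not anticipate any real obstacle here: the entire argument is essentially Cramer's rule applied to the linear bijection furnished by \autoref{lem:spp-inject}, with the sign-reflecting property doing the work of transporting the axis-aligned pre-images from $Z|_I$ back into $Z$; the one mild subtlety is the symmetric sign flip that makes $|\Delta|$ appear simultaneously as $\Vec{u}_1(i_1)$ and $\Vec{u}_2(i_2)$, which is transparent from the construction.
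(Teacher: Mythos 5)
Your proof is correct and uses essentially the same construction as the paper: the candidate vectors $\Vec{u}_1, \Vec{u}_2$ are exactly the Cramer's-rule linear combinations $\Vec{v}_2(i_2)\Vec{v}_1 - \Vec{v}_1(i_2)\Vec{v}_2$ and $\Vec{v}_1(i_1)\Vec{v}_2 - \Vec{v}_2(i_1)\Vec{v}_1$, with the sign-reflecting property (\autoref{lem:spp-inject}) used both to justify $\det A \ne 0$ and to lift $Z|_I$-membership of the projections to $Z$-membership. Your treatment is in fact slightly cleaner than the paper's on two minor points: you derive $\Delta \ne 0$ directly from bijectivity rather than via the paper's piecemeal row/column case analysis, and you correctly note that a single global sign flip (rather than independent flips on $\Vec{u}_1$ and $\Vec{u}_2$) is what preserves the equality $\Vec{u}_1(i_1) = \Vec{u}_2(i_2)$, which also explains why $Z$ must be positive on $I$ for the stated conclusion to be consistent.
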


We call vectors $\Vec{u}_1, \Vec{u}_2$ given by this lemma the {\emph{canonical horizontal / vertical vector}} derived
from $\Vec{v}_1, \Vec{v}_2$ for $P$ with respect to $Z$. Note that any vector in $P$ is uniquely represented as a linear
combination of $\Vec{u}_1$ and $\Vec{u}_2$. From this observation we can obtain a bound for components of vectors in $P$
in terms of their projections.

\begin{lemma}
    \label{lem:spp-component-bound}
    Let $P \subseteq \mathbb{Q}^d$ be a plane and $Z$ be an orthant in $\mathbb{Q}^d$. Let $I = \{i_1, i_2\}\subseteq
    [d]$ be a sign-reflecting projection for $P$ w.r.t.\ $Z$. Suppose $P = \Span\{\Vec{v}_1, \Vec{v}_2\}$ for linearly
    independent vectors $\Vec{v}_1, \Vec{v}_2 \in \mathbb{Z}^d$ with norm $N := \max\{\norm{\Vec{v}_1},
    \norm{\Vec{v}_2}\}$. Then for every $\Vec{w} \in P \cap Z \cap \mathbb{Z}^d$ and every $i \in \Supp(P)$, we have 
    \begin{equation}
        \frac{\min\{|\Vec{w}(i_1)|, |\Vec{w}(i_2)|\}}{2N^2} 
        \le |\Vec{w}(i)| 
        \le 2 N^2 \cdot (|\Vec{w}(i_1)| + |\Vec{w}(i_2)|).
    \end{equation}
\end{lemma}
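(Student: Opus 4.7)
The plan is to use the canonical horizontal and vertical vectors $\Vec{u}_1, \Vec{u}_2$ supplied by \autoref{lem:canonical-vecs} as a convenient basis of $P$ in which every vector of $P \cap Z$ has non-negative coordinates. First I would write $\Vec{w} = \alpha \Vec{u}_1 + \beta \Vec{u}_2$ and read off the $i_1$, $i_2$ coordinates (which are zero for $\Vec{u}_2$, $\Vec{u}_1$ respectively) to get $\alpha = \Vec{w}(i_1)/\Vec{u}_1(i_1)$ and $\beta = \Vec{w}(i_2)/\Vec{u}_2(i_2)$. Since $\Vec{u}_1(i_1), \Vec{u}_2(i_2) > 0$ and $\Vec{u}_1, \Vec{u}_2, \Vec{w} \in Z$, the orthant $Z$ must be non-negative at coordinates $i_1$ and $i_2$, so $\Vec{w}(i_1), \Vec{w}(i_2) \ge 0$ and hence $\alpha, \beta \ge 0$. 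Combined with $1 \le \Vec{u}_1(i_1), \Vec{u}_2(i_2) \le 2N^2$ from \autoref{lem:canonical-vecs}, this yields $|\Vec{w}(i_1)|/(2N^2) \le \alpha \le |\Vec{w}(i_1)|$ and similarly for $\beta$.

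The upper bound is then immediate from the triangle inequality and $\norm{\Vec{u}_1}, \norm{\Vec{u}_2} \le 2N^2$: $|\Vec{w}(i)| \le \alpha \norm{\Vec{u}_1} + \beta \norm{\Vec{u}_2} \le 2N^2(\alpha + \beta) \le 2N^2 (|\Vec{w}(i_1)| + |\Vec{w}(i_2)|)$.

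The lower bound is the main point, and this is where the orthant hypothesis pays off. I would observe that at any coordinate $i$, the two contributions $\alpha \Vec{u}_1(i)$ and $\beta \Vec{u}_2(i)$ cannot partially cancel: since $\Vec{u}_1$ and $\Vec{u}_2$ both lie in $Z$, their $i$-th entries share the sign dictated by $Z$ at index $i$ (or are zero), and the non-negative scalars $\alpha, \beta$ preserve that. Hence $|\Vec{w}(i)| = \alpha |\Vec{u}_1(i)| + \beta |\Vec{u}_2(i)|$ with no loss. Because $i \in \Supp(P)$ and $\{\Vec{u}_1, \Vec{u}_2\}$ is a basis of $P$, at least one of $\Vec{u}_1(i), \Vec{u}_2(i)$ must be nonzero; being an integer, its absolute value is at least $1$. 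Plugging this into the equality above bounds $|\Vec{w}(i)|$ below by either $\alpha$ or $\beta$, each of which is at least $\min\{|\Vec{w}(i_1)|, |\Vec{w}(i_2)|\}/(2N^2)$.

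I expect the non-cancellation step to be the only non-routine part; the rest is bookkeeping against the explicit basis, so once the orthant observation is stated cleanly the inequalities fall out.
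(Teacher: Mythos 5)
Your proof is correct and takes essentially the same route as the paper: both decompose $\Vec{w}$ in the canonical basis $\Vec{u}_1, \Vec{u}_2$ from \autoref{lem:canonical-vecs}, read off the coefficients via the $i_1, i_2$ coordinates, use membership in a common orthant to rule out cancellation, and combine the integrality of $\Vec{u}_1, \Vec{u}_2$ with the $2N^2$ norm bound. The only cosmetic difference is that the paper first normalizes to $Z = \mathbb{Q}_{\ge 0}^d$ whereas you argue directly about a general orthant.
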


\subsection{Support Projection}
\label{sec:supp-proj}

\autoref{lem:spp-component-bound} gives bounds on the components in the support of the plane. So we would like the cycle
space of a VASS $G$ to have full support $\Supp(\CycleSpace(G)) = [d]$. In this section we develop a technique called
\emph{support projection} to transform an arbitrary geometrically 2-dimensional VASS to one with such good property,
without increasing its traversal number and the characteristic.

Let $G = (Q, T)$ be a geometrically 2-dimensional $d$-VASS. Let $S = \Supp(\CycleSpace(G))$ and $\overline{S} = [d]
\setminus S$. For vectors $\Vec{v} \in \mathbb{Z}^S$ and $\overline{\Vec{v}} \in \mathbb{Z}^{\overline{S}}$, we define
their composition $\Vec{v} \circ_S \overline{\Vec{v}} \in \mathbb{Z}^d$ naturally by $\Vec{v} \circ_S
\overline{\Vec{v}}) (i) = \Vec{v}(i)$ if $i \in S$ and $\Vec{v} \circ_S \overline{\Vec{v}}) (i) = \overline{\Vec{v}}(i)$
if $i \in \overline{S}$. Since $S$ is always clear from the context, we will simply write $\Vec{v} \circ
\overline{\Vec{v}}$ for this composition.

The {\emph{support projection}} of $G$ is the $|S|$-dimensional VASS $G^S = (Q^S, T^S)$ where
\begin{align}
    Q^S &:= \{ (q, {\Vec{v}}) \in Q \times \mathbb{N}^{\overline{S}} : \norm{\Vec{v}} \le 2\chi(G) \},\\
    T^S &:= \{ ((p, {\Vec{u}}), \Vec{a}|_S, (q, {\Vec{v}})) \in Q^S \times \mathbb{N}^S \times Q^S
        : (p, \Vec{a}, q) \in T, \Vec{u} + \Vec{a}|_{\overline{S}} = \Vec{v} \}.
\end{align}
A state of the form $(q, \Vec{v})$ in $G^S$ is denoted $q^{\Vec{v}}$ for conciseness.

There is a huge expansion in the size of $G^S$, as $|Q^S| = |Q|\cdot (2\chi(G))^{|\overline{S}|}$. On the other hand, we
can show that support projection does not increase traversal number and characteristic, and the projected VASS has full
support as we expected.

\begin{proposition}
    \label{prop:sp-preservation}
    $\varsigma(G^S) \le \varsigma(G)$, $\chi(G^S) \le \chi(G)$, and $\Supp(\CycleSpace(G^S)) = S$.
\end{proposition}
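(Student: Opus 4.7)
The plan is to verify the three claims independently. For the traversal-number bound $\varsigma(G^S) \le \varsigma(G)$, I would consider the forgetful map $\rho : q^{\Vec{v}} \mapsto q$ from $Q^S$ to $Q$, which by construction is a graph homomorphism from $G^S$ to $G$. The key step is to show $\rho$ is injective on the set of states visited by any single path $\pi$ of $G^S$. Suppose $\pi$ visits both $q^{\Vec{v}_1}$ and $q^{\Vec{v}_2}$ with $\Vec{v}_1 \neq \Vec{v}_2$; the segment of $\pi$ between these occurrences projects under $\rho$ to a cycle $\theta$ of $G$ at $q$. By definition $\Delta(\theta) \in \CycleSpace(G)$, so its $\overline{S}$-component vanishes; but the transition rule of $G^S$ forces this component to equal $\Vec{v}_2 - \Vec{v}_1 \neq \Vec{0}$, a contradiction. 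Therefore the states visited by $\pi$ project bijectively to a set of distinct states of $G$ traversed by a path, giving the bound.

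The characteristic claim then follows immediately: each transition of $T^S$ has effect $\Vec{a}|_S$ for some $(p, \Vec{a}, q) \in T$, so $\norm{T^S} \le \norm{T}$, and combining with the traversal bound yields $\chi(G^S) \le \chi(G)$. The inclusion $\Supp(\CycleSpace(G^S)) \subseteq S$ is likewise immediate since all transition effects of $G^S$ live in $\mathbb{Z}^S$.

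For the reverse inclusion $S \subseteq \Supp(\CycleSpace(G^S))$ I plan to lift every simple cycle $\theta$ of $G$ at a state $q$ to a cycle in $G^S$ as follows. Choose $\Vec{v}_0 = \chi(G) \cdot \Vec{1} \in \mathbb{N}^{\overline{S}}$. Since each partial effect $\Delta(\theta[1..k])$ has norm at most $\chi(G)$, every intermediate $\overline{S}$-coordinate $\Vec{v}_0 + \Delta(\theta[1..k])|_{\overline{S}}$ stays in $[0, 2\chi(G)]^{\overline{S}}$, so every intermediate vertex is a valid state of $G^S$; and $\Delta(\theta)|_{\overline{S}} = \Vec{0}$ (because $\Delta(\theta) \in \CycleSpace(G)$) closes the path up at $q^{\Vec{v}_0}$, yielding a cycle of effect $\Delta(\theta)|_S$. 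By \autoref{lem:cyc-space-spanned-by-simple-cycles}, projecting all simple cycle effects of $G$ to $S$ spans a subspace of $\CycleSpace(G^S)$ with support exactly $S$.

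The main subtlety is the traversal-number argument, which crucially exploits the identity $S = \Supp(\CycleSpace(G))$: without it the $\overline{S}$-coordinates could drift along a path and $\rho$ would cease to be injective, letting the state-explosion factor $(2\chi(G))^{|\overline{S}|}$ leak into the traversal count. The cycle-lifting step for the support claim is routine once the buffer $\Vec{v}_0 = \chi(G) \cdot \Vec{1}$ is identified as the correct value, chosen precisely so that $[0, 2\chi(G)]$ absorbs any signed drift of norm $\chi(G)$.
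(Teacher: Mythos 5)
Your proposal is correct and follows essentially the same argument as the paper: the traversal bound via the observation that states $q^{\Vec{v}_1}, q^{\Vec{v}_2}$ on one path force $\Vec{v}_1 = \Vec{v}_2$ because the intermediate cycle has effect in $\CycleSpace(G)$ (hence zero $\overline{S}$-component), the characteristic bound as an immediate consequence of $\norm{T^S}\le\norm{T}$, and the support claim by lifting each simple cycle with the buffer $\Vec{v}_0 = \chi(G)\cdot\Vec{1}$ and invoking \autoref{lem:cyc-space-spanned-by-simple-cycles}. The only cosmetic difference is that you phrase the traversal argument via injectivity of the forgetful map on visited states, while the paper builds an explicit corresponding path in $G$ and counts its distinct states; these are the same argument.
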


\begin{proof}[Proof]
    Let's denote by $\varsigma(\pi)$ the number of distinct states in the path $\pi$. Consider any path $\pi^S$ in $G^S$
    of the form $\pi^S = p_0^{\Vec{v}_0} \xrightarrow{t_1^S} p_1^{\Vec{v_1}} \xrightarrow{t_2^S} \cdots
    \xrightarrow{t_n^S} p_n^{\Vec{v}_n}$ where $t_i^S \in T^S$. We define a corresponding path $\pi$ in $G$ as $\pi :=
    p_0 \xrightarrow{t_1} p_1 \xrightarrow{t_2} \cdots \xrightarrow{t_n} p_n$ where $t_i = (p_{i-1}, \Vec{a}_i, p_i)$
    and $\Vec{a}_i := \Delta(t_i^S) \circ (\Vec{v}_i - \Vec{v}_{i-1})$. Verify that $t_i$ is indeed a transition in $T$
    by the definition of $T^S$. We claim that for any $i, j \in [0, n]$, $p_i = p_j$ implies $\Vec{v}_i = \Vec{v}_j$,
    then it follows that $\varsigma(\pi^S) = \varsigma(\pi)$. Indeed, suppose $p_i = p_j$, then the sub path in $\pi$
    from $p_i$ to $p_j$ is a cycle with effect $\Delta(t_{i+1}t_{i+2}\ldots t_j) \in \CycleSpace(G)$. So $\Vec{v}_j =
    \Vec{v}_i + \Delta(t_{i+1}t_{i+2}\ldots t_j)|_{\overline{S}} = \Vec{v}_i$, which proves the claim. As the choice of
    $\pi^S$ is arbitrary, for any path in $G^S$, there exists a path in $G$ visits the same number of distinct states.
    This proves $\varsigma(G) \ge \varsigma(G^S)$. 
    
    Since it is clear that $\norm{T^S} \le \norm{T}$, we immediately have $\chi(G^S) = \varsigma(G^S) \cdot \norm{T^S}
    \le \varsigma(G)\cdot \norm{T} = \chi(G)$.
    
    Finally, we show $\Supp(\CycleSpace(G^S)) = S$. Observe that it suffices to prove $\CycleSpace(G)|_S \subseteq
    \CycleSpace(G^S)$. By \autoref{lem:cyc-space-spanned-by-simple-cycles}, $\CycleSpace(G)$ is spanned by effects of
    all simple cycles in $G$. So consider any cycle $\theta$ in $G$ of the form  $\theta := p_0 \xrightarrow{t_1} p_1
    \xrightarrow{t_2} \cdots \xrightarrow{t_n} p_n = p_0$ where $t_1, \ldots, t_n \in T$. We define vectors $\Vec{v}_0,
    \ldots, \Vec{v}_n \in \mathbb{N}^{\overline{S}}$ by 
    \begin{equation}
        \Vec{v}_0 := \chi(G) \cdot \Vec{1},\qquad 
        \Vec{v}_{i + 1} := \Vec{v}_i + \Delta(t_{i+1})|_{\overline{S}}.
    \end{equation}
    Since $\theta$ is simple, we have $n \le \varsigma(G)$. It follows that $\Vec{0} \le \Vec{v}_i \le
    2\chi(G)\cdot\Vec{1}$ for all $i \in [0, n]$. So $p_i^{\Vec{v}_i}$ is a state in $G^S$. Also note that $\Vec{v}_n =
    \Vec{v}_0 + \Delta(\theta)|_{\overline{S}} = \Vec{v}_0$. We can define a corresponding cycle $\theta^S :=
    p_0^{\Vec{v}_0} \xrightarrow{t_1^S} p_1^{\Vec{v_1}} \xrightarrow{t_2^S} \cdots \xrightarrow{t_n^S} p_n^{\Vec{v}_n}$,
    where $t_i = (p_{i-1}^{\Vec{v}_{i-1}}, \Delta(t_i)|_S, p_i^{\Vec{v}_i})$. Verify that each $t_i$ is a transition in
    $T^S$ by definition. So $\theta^S$ is a cycle in $G^S$. In particular, we have $\Delta(\theta)|_S = \Delta(\theta^S)
    \in \CycleSpace(G^S)$. As the choice of $\theta$ is arbitrary, we conclude that 
    \begin{equation}
        \CycleSpace(G)|_S = (\Span\{ \Delta(\theta) : 
            \theta \text{ is a simple cycle in }G\})|_S \subseteq \CycleSpace(G^S),
    \end{equation}
    which is the desired result.
\end{proof}

\subsection{Main results}

Our major goal is to exhibit an exponential length bound of reachability witnesses for geometrically 2-dimensinoal VASS,
from which the \PSPACE{}-completeness follows immediately.

\begin{theorem}
    \label{thm:exp-bound}
    For any $\Vec{0}$-run $\tau$ in a geometrically 2-dimensional $d$-VASS $G$, there is a run $\rho$ in $G$ with the
    same source and target as $\tau$ and $|\rho| \le \chi(G)^{O(\varsigma(G)\cdot d^4)}$.
\end{theorem}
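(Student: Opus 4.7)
The plan is to lift the thin--thick dichotomy of Czerwi\'nski, Lasota, Leroux and Pous~\cite{DBLP:conf/mfcs/CzerwinskiLLP19} from the 2-VASS case to the geometrically 2-dimensional setting, using the projection tools of \autoref{sec:srp} and \autoref{sec:supp-proj} to extract a genuine 2-dimensional counter view of an arbitrary run. The case $\gdim(G) \le 1$ is handled separately via \autoref{sec:geo-1d} and yields much smaller bounds, so I assume $\gdim(G) = 2$ throughout.

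First I would apply the support projection to reduce to the case where $\CycleSpace(G)$ has full support: by \autoref{prop:sp-preservation}, passing to $G^S$ preserves both $\varsigma$ and $\chi$, and any $\Vec{0}$-run in $G$ lifts to a $\Vec{0}$-run in $G^S$ of the same length and back. Henceforth I work in $G^S$ and, with mild abuse, rename it $G$. Next I apply \autoref{thm:2d-proj} to the plane $P = \CycleSpace(G)$ and the non-negative orthant to obtain a size-two sign-reflecting projection $I = \{i_1, i_2\}$, together with canonical vectors $\Vec{u}_1, \Vec{u}_2 \in P \cap \mathbb{Q}_{\ge 0}^d$ from \autoref{lem:canonical-vecs} whose norm is polynomial in $\chi(G)$; these play the role of the two coordinate axes.

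The key reduction to a 2-dimensional picture is as follows. By \autoref{thm:geometry-of-runs}, every configuration $q(\Vec{v})$ on a run starting from $\Vec{0}$ lies in a state-indexed affine copy $f_\tau(q) + P$ of the cycle space with $\|f_\tau(q)\| \le \chi(G)$. After fixing this per-state offset, \autoref{lem:spp-inject} guarantees that $\Vec{v}$ is uniquely encoded by its 2-dimensional shadow $\Vec{v}|_I$, while \autoref{lem:spp-component-bound} ensures that non-negativity of the remaining coordinates is controlled, up to a polynomial rescaling factor of $\chi(G)^{O(1)}$, by two linear inequalities on $\Vec{v}|_I$. Thus the projected run behaves like a run in a virtual 2-VASS whose counters are the $I$-coordinates and whose non-negativity thresholds depend linearly on the current state's offset, allowing me to invoke the thin--thick dichotomy of~\cite{DBLP:conf/mfcs/CzerwinskiLLP19}: thin runs are trapped in a belt of polynomial width around a rational half-line and admit a short witness, while thick runs admit pumping operations that shorten any overly long cycle whose projection is sufficiently aligned with $\Vec{u}_1$ or $\Vec{u}_2$. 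In either case the bound on the length of the shortened run is obtained by collating the bounds over all $|Q|$ affine strata and multiplying by the per-stratum witness length.

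The main obstacle I anticipate is faithful bookkeeping in this final step. The original Czerwi\'nski--Lasota--Leroux--Pous pumping argument is tuned to exactly two counters and two non-negativity constraints, whereas here each pumping operation in the 2-dimensional picture must simultaneously respect all $d$ non-negativity constraints of the ambient VASS. \autoref{lem:spp-component-bound} reduces those $d$ constraints to two on $\Vec{v}|_I$, but only up to a rescaling by $N^2 \le \chi(G)^{O(1)}$ coming from \autoref{lem:canonical-vecs}. Propagating this rescaling through the recursion underlying the pumping argument, together with the state blow-up arising from the $|Q|$ per-state affine strata and the dimension-dependent cost of re-establishing the coordinate system inside belts, produces the $d^4$ factor in the final exponent. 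I do not expect any single new idea beyond this careful accounting, but the bookkeeping itself is what will consume most of the proof.
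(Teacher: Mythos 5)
Your high-level plan — support projection to get full support, then a sign-reflecting projection onto a two-coordinate axis plane, then a thin--thick dichotomy in the spirit of Czerwi\'nski, Lasota, Leroux and Pous — matches the paper's strategy, but there are two genuine gaps.

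The first concerns the hypothesis of \autoref{thm:2d-proj}. You assume $\gdim(G) = 2$ and then ``apply \autoref{thm:2d-proj}'' to get a size-two sign-reflecting projection. But \autoref{thm:2d-proj} requires that $\CycleSpace(G) \cap \mathbb{Q}_{\ge 0}^d$ contain two linearly independent vectors, which need not hold even when $\gdim(G) = 2$ (this is exactly what the paper calls a \emph{degenerate} VASS). Your proposal never addresses this case. The paper treats it separately by showing that $\Vec{0}$-runs in a degenerate VASS are automatically thin (\autoref{lem:degenerate-thin}, using \autoref{lem:cyc-cap-nd-zero-dim-thin} and \autoref{lem:cyc-cap-nd-one-dim-thin}), and then applies the thin-run compression bound. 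Your plan also tries to offload $\gdim(G) = 1$ onto \autoref{sec:geo-1d}, but that section gets its upper bound \emph{from} \autoref{thm:geo-2d-vass-reach-pspace}, so this would be circular; in the paper, $\gdim(G) \le 1$ is absorbed into the same degenerate/thin analysis.

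The second gap is the idea of reducing to a ``virtual 2-VASS'' and then invoking the CLLP dichotomy as a black box. The paper explicitly notes that a reduction from geometrically 2-dimensional $d$-VASS to 2-VASS works only for $d = 3$ (\autoref{sec:geo-2d-3vass}) and does not generalize: after projecting onto $I$, the non-negativity of the remaining $d-2$ coordinates cannot be expressed as state-dependent thresholds in a bona fide 2-VASS. You partly acknowledge the obstacle, but the cure is stronger than ``careful bookkeeping'': one has to re-establish, natively in the geometrically 2-dimensional setting, the non-negative cycle lemma (\autoref{lem:non-negative-cycle-for-proper-geo-2d}), the sequential-cone structure (\autoref{lem:seq-cone-of-cycles-eq-fin-gen-cone}), the thin--thick classification (\autoref{lem:prop-vass-thin-thick}), and the thin/thick compression bounds (\autoref{lem:exp-bound-thin-runs}, \autoref{lem:thick-run-compress}), all with pumping operations that respect the full $d$-dimensional non-negativity at every step. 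That rederivation, not a bookkeeping propagation of a rescaling factor, is where the bulk of the work and the exponent $O(\varsigma(G)\cdot d^4)$ actually come from.
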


\begin{theorem}
    \label{thm:geo-2d-vass-reach-pspace}
    Reachability in geometrically 2-dimensional VASS is \PSPACE{}-complete.
\end{theorem}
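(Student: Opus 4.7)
The plan is to obtain the two matching bounds almost separately and combine them. For the lower bound, observe that every ordinary 2-VASS has geometric dimension at most $2$, because its cycle space is a subspace of $\mathbb{Q}^2$. Consequently, the class of geometrically 2-dimensional VASS subsumes 2-VASS, and the \PSPACE{}-hardness of reachability in 2-VASS under binary encoding, established in \cite{DBLP:conf/lics/BlondinFGHM15,DBLP:journals/iandc/FearnleyJ15}, transfers verbatim. No additional reduction is needed.

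For the upper bound I would invoke \autoref{thm:exp-bound}: whenever a $\Vec{0}$-run exists between two configurations in a geometrically 2-dimensional $d$-VASS $G$, there exists one of length at most $\chi(G)^{O(\varsigma(G)\cdot d^4)}$. Since $\varsigma(G) \le |Q|$, $\chi(G) \le |Q| \cdot \norm{T}$, and $\norm{T}$ is written in binary, this bound is at most $2^{p(|G|)}$ for a fixed polynomial $p$. Combined with the standard reduction from general reachability to $\Vec{0}$-reachability noted in \autoref{sec:prelim}, this gives an exponential length bound for reachability witnesses in the general problem as well.

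From here I would describe the routine \textsf{NPSPACE} simulation: starting from the source configuration, nondeterministically guess transitions one at a time, updating the current configuration in place and maintaining a binary step counter. Since every configuration visited by such a witness has each coordinate bounded by $L \cdot \norm{T} \le 2^{q(|G|)}$ for some polynomial $q$, both the configuration and the step counter fit in polynomial space. The procedure accepts as soon as the target configuration is reached and rejects if the step counter exceeds the length bound derived above. This is a nondeterministic polynomial-space algorithm, so by Savitch's theorem \textsf{NPSPACE}${}={}$\PSPACE{} we obtain the desired \PSPACE{} upper bound.

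The only substantive obstacle is \autoref{thm:exp-bound} itself; the reduction from an exponential witness bound to \PSPACE{} membership is completely standard. All the genuine difficulty of the theorem is packaged into that length bound, which in turn rests on the sign-reflecting projection machinery of \autoref{sec:srp}, the support projection of \autoref{sec:supp-proj}, the canonical coordinate system of \autoref{lem:canonical-vecs}, and a thin/thick pumping classification adapted from \cite{DBLP:conf/mfcs/CzerwinskiLLP19} to the cycle-space coordinates. Once that work is done, the present theorem follows in a few lines.
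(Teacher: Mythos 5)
Your proof is correct and matches the paper's own: both take the lower bound by inheritance from 2-VASS (every 2-VASS is geometrically at most 2-dimensional), and both reduce the upper bound to the exponential witness bound of \autoref{thm:exp-bound} plus a standard guess-and-verify \PSPACE{} search on $\Vec{0}$-reachability. The only difference is that you spell out the \textsf{NPSPACE}-to-\PSPACE{} bookkeeping explicitly, which the paper leaves implicit.
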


\begin{proof}
    The lower bound is inherited from the \PSPACE{}-hardness of reachability in 2-VASS \cite[Lemma
    20]{DBLP:conf/lics/BlondinFGHM15}. For the upper bound, an algorithm only need to search a run of length up to
    $\chi(G)^{O(\varsigma(G)\cdot d^4)}$ after reducing to the $\Vec{0}$-reachability problem, for which \PSPACE{} is
    enough.
\end{proof}

Indeed, we will focus on the support projection of $G$, and prove the following lemma in the remaining of this section.

\begin{lemma}
    \label{thm:exp-bound-full-supp}
    For any $\Vec{0}$-run $\tau$ in a geometrically 2-dimensional $d$-VASS $G$ {with the additional property that
    $\Supp(\CycleSpace(G)) = [d]$}, there is a run $\rho$ in $G$ with the same source and target as $\tau$ and $|\rho|
    \le \chi(G)^{O(\varsigma(G)\cdot d^4)}$.
\end{lemma}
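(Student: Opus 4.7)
The plan is to mimic, inside the two-dimensional cycle space, the thin/thick pumping argument that \cite{DBLP:conf/mfcs/CzerwinskiLLP19} carries out in the ambient plane $\mathbb{Q}^2$ of an honest 2-VASS. The first step is to install canonical coordinates: since $\Supp(\CycleSpace(G)) = [d]$ and $\CycleSpace(G)$ is a plane, positive integer combinations of simple-cycle effects yield two linearly independent vectors in $\CycleSpace(G) \cap \mathbb{Q}_{\ge 0}^d$ (using that $\CycleSpace(G)$ is symmetric around $\Vec{0}$ and that every axis direction is hit by some simple cycle). Applying \autoref{thm:2d-proj} with $P = \CycleSpace(G)$ and $Z = \mathbb{Q}_{\ge 0}^d$ then produces a sign-reflecting projection $I = \{i_1, i_2\}$, and \autoref{lem:canonical-vecs} delivers canonical vectors $\Vec{u}_1, \Vec{u}_2 \in \CycleSpace(G) \cap \mathbb{N}^d$, of norm $O(\chi(G)^2)$, aligned with the $i_1$- and $i_2$-axes.

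Next, I would reinterpret the run $\tau$ in this $(\Vec{u}_1, \Vec{u}_2)$-coordinate system. Combining \autoref{thm:geometry-of-runs} with the basis $\{\Vec{u}_1, \Vec{u}_2\}$ of $\CycleSpace(G)$, every configuration $q(\Vec{v})$ on $\tau$ is expressed uniquely (by \autoref{lem:spp-inject}) as $\Vec{v} = \alpha \Vec{u}_1 + \beta \Vec{u}_2 + f_\tau(q)$ with $\norm{f_\tau(q)} \le \chi(G)$. Crucially, \autoref{lem:spp-component-bound} ensures that every coordinate $\Vec{v}(i)$ is controlled, up to a polynomial-in-$\chi(G)$ slack, by the two-dimensional pair $(\alpha, \beta)$. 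Thus the geometrically 2-dimensional $d$-VASS behaves, along any single run, like a 2-VASS whose counter walls are blurred by a tube of width $\textsf{poly}(\chi(G))$.

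Third, I would adapt the thin/thick dichotomy of \cite{DBLP:conf/mfcs/CzerwinskiLLP19} to the $(\alpha, \beta)$-plane. Call $\tau$ \emph{thick} if its $(\alpha, \beta)$-trajectory revisits some state $q$ at two configurations whose coordinates differ in a direction that allows splicing in, or cutting out, a realizable cycle of effect a positive multiple of $\Vec{u}_1$ or $\Vec{u}_2$; such realizable pumps are obtained by writing $\Vec{u}_j$ as a non-negative integer combination of simple-cycle effects within an SCC, concatenating those simple cycles in an order that keeps a polynomially bounded intermediate offset, and certifying non-negativity using the slack from \autoref{lem:spp-component-bound}. Each pumping step strictly shortens $\tau$. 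Call $\tau$ \emph{thin} otherwise: then its $(\alpha, \beta)$-trajectory is confined to a belt of width $\textsf{poly}(\chi(G))$ around some line through the origin, so by \autoref{lem:spp-component-bound} the full $d$-dimensional trajectory is confined to a $d$-dimensional tube in which one of the canonical coordinates is essentially bounded. The thin case thus reduces, after discretising the belt into polynomially many affine slices, to a collection of geometrically 1-dimensional subproblems; iterating the \textsf{PSPACE} bound of \autoref{thm:geo-0-np-comp} (for the base geometrically $1$-dimensional case) then yields the overall $\chi(G)^{O(\varsigma(G) \cdot d^4)}$ bound, where the exponent $d^4$ absorbs the polynomial widening that each recursive step suffers from \autoref{lem:spp-component-bound}.

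The main obstacle I expect is the thick case: \autoref{lem:canonical-vecs} produces $\Vec{u}_1, \Vec{u}_2$ only as algebraic combinations of simple-cycle effects, not as single realizable cycles, so one must verify that short realizable cycles whose effects are positive multiples of these canonical vectors can be stitched together out of the simple cycles of a fixed SCC of $G$, and that the full $d$-dimensional configuration remains in $\mathbb{N}^d$ along the entire stitched cycle. This is precisely where the support-projection reduction of \autoref{sec:supp-proj} and the component control of \autoref{lem:spp-component-bound} are indispensable: full support guarantees that the simple-cycle effects span $\CycleSpace(G)$ on every coordinate, while the component bound turns $(\alpha, \beta)$-non-negativity into full-vector non-negativity at the cost of only a polynomial slack.
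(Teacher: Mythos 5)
Your step 1 contains a substantial error that undermines the whole plan: full support $\Supp(\CycleSpace(G)) = [d]$ does \emph{not} imply that $\CycleSpace(G) \cap \mathbb{Q}_{\ge 0}^d$ contains two linearly independent vectors. For instance, with $d = 3$ the plane $\Span\{(1,-1,0),(0,1,-1)\}$ has full support but meets the non-negative orthant only at $\Vec{0}$. Consequently \autoref{thm:2d-proj} and \autoref{lem:canonical-vecs} cannot be invoked unconditionally, and the canonical coordinates $(\Vec{u}_1, \Vec{u}_2)$ you rely on may simply not exist. The paper handles this by splitting into the \emph{proper} case (where the canonical vectors exist) and the \emph{degenerate} case (where they don't), and proving separately that every $\Vec{0}$-run in a degenerate VASS is thin (\autoref{lem:degenerate-thin}, backed by \autoref{lem:cyc-cap-nd-zero-dim-thin} and \autoref{lem:cyc-cap-nd-one-dim-thin}). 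Your proposal has no degenerate branch at all, so the argument does not go through as stated.

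A second gap is in your thin case. You propose to reduce a thin run to ``a collection of geometrically $1$-dimensional subproblems'' and then ``iterate the \PSPACE{} bound of \autoref{thm:geo-0-np-comp}.'' But in the paper the \PSPACE{} upper bound for geometric dimension $1$ is derived \emph{from} the geometric-dimension-$2$ result (and hence from the very lemma you are proving), so invoking it here is circular. What the paper actually does for thin runs is a direct geometric compression inside a single beam: inside a beam, configurations are constrained to one of $\le \varsigma(G)\cdot(4A^2+1)^d$ classes per ``slab'' $B_i$, so Pigeonhole yields a removable pair of opposite-effect cycles (\autoref{thm:thin-run-compress}, then \autoref{lem:exp-bound-thin-runs}). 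Finally, your thick-case sketch (``revisits a state at two configurations whose coordinates differ in a pumping direction'', then ``stitch together'' simple cycles realizing $\Vec{u}_j$) misses the actual structure the paper uses: sequentially enabled cycles, the non-negative cycle lemma (\autoref{lem:non-negative-cycle-for-proper-geo-2d}), and the pumping system $\mathcal{C}$ with its shift analysis (\autoref{prop:pump-sys-shift-bound}, \autoref{prop:thick-run-recovery}). Simply recombining simple cycles whose effects span $\CycleSpace(G)$ does not guarantee they can be inserted at the right place in the run while keeping all $d$ coordinates non-negative; the sequential-enabling discipline is precisely what certifies this, and your account does not replace it with anything of comparable force.
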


Once \autoref{thm:exp-bound-full-supp} is established, \autoref{thm:exp-bound} follows by plugging in the support
projection of $G$. See appendix for a detailed proof. The rest is devoted to \autoref{thm:exp-bound-full-supp}, so we
can always assume that the VASS $G$ has full support, i.e., $\Supp(\CycleSpace(G)) = [d]$.

\subsection{Degenerate VASS and Thin Runs}

Depending on whether the cycle space of a VASS can be sign-reflectively projected onto an axes-plane with respect to the
non-negative orthant $\mathbb{Q}_{\ge0}^d$, we classify geometrically 2-dimensional VASSes into the following two
classes.

\begin{definition}
    A geometrically 2-dimensional VASS $G$ is {\emph{proper}} if $\CycleSpace(G) \cap \mathbb{Q}^d_{\ge0}$ contains two
    linearly independent vectors; it is {\emph{degenerate}} otherwise.
\end{definition}

In this subsection we focus on degenerate VASSes. We show that every run from $\Vec{0}$ in a degenerate VASS is
\emph{thin} in the sense of the following definitions.

\begin{definition}
    Let $\Vec{v} \in \mathbb{N}^d$ and $W \in \mathbb{N}$. The \emph{beam} $\mathcal{B}_{\Vec{v}, W}$ is defined by 
    \begin{equation}
        \mathcal{B}_{\Vec{v}, W} := \left\{
            \Vec{u} \in \mathbb{N}^d : \exists \alpha \in \mathbb{Q}_{\ge0}, \norm{\Vec{u} - \alpha \Vec{v}} \le W
        \right\}.
    \end{equation}
    The beam $\mathcal{B}_{\Vec{v}, W}$ is said to be an \emph{$A$-beam} where $A \in \mathbb{N}$ if $\norm{\Vec{v}} \le
    A$ and $W \le A$.
\end{definition}

\begin{definition}
    \label{def:thin-run}
    Let $G$ be a $d$-VASS. A run $\pi$ in $G$ is said to be \emph{$A$-thin} if for every configuration $p(\Vec{u})$
    occurring in $\pi$, the vector $\Vec{u}$ belongs to some $A$-beam.
\end{definition}

Indeed, we can relax the definition of beams by letting the direction $\Vec{v}$ range over all integer vectors
$\mathbb{Z}^d$. Let $\Vec{v} \in \mathbb{Z}^d$ and $W \in \mathbb{N}$. The \emph{generalized beam}
$\mathcal{B}^{\mathbb{Z}}_{\Vec{v}, W}$ is defined by 
\begin{equation}
    \mathcal{B}^{\mathbb{Z}}_{\Vec{v}, W} := \left\{
        \Vec{u} \in \mathbb{N}^d : \exists \alpha \in \mathbb{Q}, \norm{\Vec{u} - \alpha \Vec{v}} \le W
    \right\}.
\end{equation}

\begin{lemma}
    \label{lem:gen-beam-eq-beam}
    For any $\Vec{v} \in \mathbb{Z}^d$ and $W \in \mathbb{N}$, there exist $\Vec{v}^+ \in \mathbb{N}^d$ and $\Vec{v}^-
    \in \mathbb{N}^d$ such that $\mathcal{B}^{\mathbb{Z}}_{\Vec{v}, W} \subseteq \mathcal{B}_{\Vec{v}^+, W} \cup
    \mathcal{B}_{\Vec{v}^-, W}$ and that $\norm{\Vec{v}^+}, \norm{\Vec{v}^-} \le \norm{\Vec{v}}$.
\end{lemma}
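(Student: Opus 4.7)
The plan is to decompose the direction vector $\Vec{v}$ into its positive and negative parts. Define $\Vec{v}^+, \Vec{v}^- \in \mathbb{N}^d$ coordinate-wise by $\Vec{v}^+(i) := \max(\Vec{v}(i), 0)$ and $\Vec{v}^-(i) := \max(-\Vec{v}(i), 0)$, so that $\Vec{v} = \Vec{v}^+ - \Vec{v}^-$ and trivially $\norm{\Vec{v}^+}, \norm{\Vec{v}^-} \le \norm{\Vec{v}}$. These are the two candidate non-negative directions.

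Next I would split the generalized beam according to the sign of the scaling coefficient: take any $\Vec{u} \in \mathcal{B}^{\mathbb{Z}}_{\Vec{v}, W}$ witnessed by $\alpha \in \mathbb{Q}$, and treat the cases $\alpha \ge 0$ and $\alpha \le 0$ separately. In the case $\alpha \ge 0$, I claim that $\Vec{u} \in \mathcal{B}_{\Vec{v}^+, W}$ with the same coefficient $\alpha$; in the case $\alpha \le 0$, I claim that $\Vec{u} \in \mathcal{B}_{\Vec{v}^-, W}$ with coefficient $-\alpha \ge 0$.

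The verification is coordinate-wise. Consider the first case. For an index $i$ with $\Vec{v}(i) \ge 0$, we have $\Vec{v}^+(i) = \Vec{v}(i)$ and hence $|\Vec{u}(i) - \alpha \Vec{v}^+(i)| = |\Vec{u}(i) - \alpha \Vec{v}(i)| \le W$. For an index $i$ with $\Vec{v}(i) < 0$, we have $\Vec{v}^+(i) = 0$, and since $\alpha \ge 0$ and $\Vec{v}(i) < 0$, the inequality $|\Vec{u}(i) - \alpha \Vec{v}(i)| \le W$ together with $\Vec{u}(i) \ge 0$ forces $\Vec{u}(i) \le \alpha \Vec{v}(i) + W \le W$, so $|\Vec{u}(i) - \alpha \Vec{v}^+(i)| = \Vec{u}(i) \le W$. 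The $\alpha \le 0$ case is symmetric using $\beta := -\alpha$ and $\Vec{v}^-$.

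I do not expect any real obstacle here; the argument is essentially bookkeeping on signs, and the only subtlety is the observation that on coordinates where the chosen non-negative direction is zero, the component of $\Vec{u}$ is automatically bounded by $W$ because $\Vec{u} \in \mathbb{N}^d$ and $\alpha \Vec{v}(i)$ has the wrong sign to push $\Vec{u}(i)$ large. Combining the two cases yields the stated covering $\mathcal{B}^{\mathbb{Z}}_{\Vec{v}, W} \subseteq \mathcal{B}_{\Vec{v}^+, W} \cup \mathcal{B}_{\Vec{v}^-, W}$.
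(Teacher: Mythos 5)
Your decomposition $\Vec{v}^+(i) := \max(\Vec{v}(i), 0)$, $\Vec{v}^-(i) := \max(-\Vec{v}(i), 0)$ is exactly the one the paper uses, and your sign-splitting on $\alpha$ with the coordinate-wise verification (including the key observation that non-negativity of $\Vec{u}$ handles the coordinates where the chosen direction is zero) correctly fills in what the paper leaves as "routine to verify." The proof is correct and matches the paper's approach.
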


With \autoref{lem:gen-beam-eq-beam}, \autoref{def:thin-run} is equivalent to say that each configuration is located in
some generalized $A$-beam. This makes it easy to argue if a run is thin. We remark that the following result do not
depend on whether the VASS has full support.

\begin{lemma}
    \label{lem:degenerate-thin}
    Let $G$ be a geometrically 2-dimensional VASS that is degenerate. Then every $\Vec{0}$-run in $G$ is
    $\chi(G)^{O(d)}$-thin.
\end{lemma}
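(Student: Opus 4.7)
The plan is to use \autoref{thm:geometry-of-runs} to confine every configuration on a $\Vec{0}$-run to a translate of $\CycleSpace(G)$, and then to exploit degeneracy to show that such a translate meets $\mathbb{N}^d$ only in a narrow strip along a single direction. Concretely, by \autoref{lem:cyc-space-spanned-by-simple-cycles} I pick an integer basis $\Vec{v}_1, \Vec{v}_2$ of $\CycleSpace(G)$ consisting of simple-cycle effects, so that $\norm{\Vec{v}_i} \le \chi(G)$ (if $\gdim(G) \le 1$ only $\Vec{v}_1$ is used, and $\gdim(G) = 0$ is trivial). For any configuration $q(\Vec{v})$ on a $\Vec{0}$-run $\pi$, \autoref{thm:geometry-of-runs} yields a displacement $f_\pi(q)$ with $\norm{f_\pi(q)} \le \chi(G)$ and rationals $\alpha_1, \alpha_2$ such that $\Vec{v} = \alpha_1 \Vec{v}_1 + \alpha_2 \Vec{v}_2 + f_\pi(q)$. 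Non-negativity of $\Vec{v}$ places $(\alpha_1, \alpha_2)$ into the 2-dimensional polyhedron
\begin{equation}
    P_q := \bigl\{(\alpha_1, \alpha_2) \in \mathbb{Q}^2 : \alpha_1 \Vec{v}_1 + \alpha_2 \Vec{v}_2 \ge -f_\pi(q)\bigr\},
\end{equation}
described by $d$ linear inequalities with integer coefficients and right-hand sides of magnitude at most $\chi(G)$.

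The key observation is that the recession cone of $P_q$ corresponds bijectively to $\CycleSpace(G) \cap \mathbb{Q}^d_{\ge 0}$ under $(\alpha_1, \alpha_2) \mapsto \alpha_1 \Vec{v}_1 + \alpha_2 \Vec{v}_2$, and degeneracy of $G$ makes this intersection contain no two linearly independent vectors, so this recession cone is at most one-dimensional. Hence $P_q$ is either bounded or contained in a strip whose unbounded direction spans a single ray; let $\Vec{w} \in \CycleSpace(G) \cap \mathbb{N}^d$ be an integer generator of the image of that ray (or $\Vec{w} := \Vec{0}$ if the intersection is trivial). A standard polyhedral analysis---Cramer's rule applied to $2 \times 2$ subdeterminants of the constraint matrix---bounds every vertex and every extreme ray of $P_q$, and hence $\Vec{w}$ itself, by $\chi(G)^{O(1)}$, so that every point of $P_q$ admits a decomposition into a scalar multiple of the recession direction plus a residual of $\ell_\infty$-norm $\chi(G)^{O(1)}$. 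Translating back via $\Vec{v} = \alpha_1 \Vec{v}_1 + \alpha_2 \Vec{v}_2 + f_\pi(q)$ yields $\Vec{v} = \beta \Vec{w} + \Vec{r}$ with $\norm{\Vec{r}} \le \chi(G)^{O(1)}$, i.e., $\Vec{v} \in \mathcal{B}^{\mathbb{Z}}_{\Vec{w}, \chi(G)^{O(1)}}$. Since $\Vec{w}$ and the strip width do not depend on $q$, all configurations on $\pi$ lie in this single generalized beam, and \autoref{lem:gen-beam-eq-beam} splits it into two ordinary $\chi(G)^{O(1)}$-beams, giving the desired $\chi(G)^{O(d)}$-thinness with room to spare.

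The main technical obstacle is the width bound on $P_q$: one needs to verify that a 2-dimensional polyhedron with $d$ integer constraints of magnitude $\chi(G)$ and a 1-dimensional recession cone fits inside a strip of polynomial width around its recession direction, and that this bound transfers cleanly back to $\mathbb{Q}^d$ through the basis $\Vec{v}_1, \Vec{v}_2$. The LP-theoretic vertex bound via Cramer's rule is the essential tool, but some care is needed to check that the recession direction is intrinsic to $\CycleSpace(G) \cap \mathbb{Q}^d_{\ge 0}$ (hence independent of $q$) and that the polynomial dependence on $\chi(G)$ does not degrade under the two changes of coordinates.
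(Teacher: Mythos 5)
Your proof is correct, and it takes a genuinely different (and in fact sharper) route than the paper's.

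The paper splits into three explicit cases according to $\dim\bigl(\CycleSpace(G)\cap\mathbb{Q}^d_{\ge0}\bigr) \in \{0,1\}$ together with $\gdim(G) < 2$, proves each separately (\autoref{lem:cyc-cap-nd-zero-dim-thin}, \autoref{lem:cyc-cap-nd-one-dim-thin}, and the sketch for $\gdim(G) < 2$), and in the two nontrivial cases invokes a Pottier-type integer-programming bound (\autoref{lem:pottier}) with $d$ constraints and $2$ unknowns, which yields the coarse $\chi(G)^{O(d)}$ estimate. In case (iii) it also needs a somewhat ad hoc argument showing that the complementary basis vector $\Vec{v}$ has both a strictly positive and a strictly negative component outside $\Supp(\Vec{u})$ in order to bound the second coefficient. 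You instead work uniformly in coefficient space: form the polyhedron $P_q \subseteq \mathbb{Q}^2$, observe that its recession cone is the preimage of $\CycleSpace(G)\cap\mathbb{Q}^d_{\ge0}$ under the basis isomorphism---hence pointed, $q$-independent, and at most one-dimensional by degeneracy---apply the Minkowski--Weyl decomposition $P_q = \operatorname{conv}(V) + \Cone\{\Vec{r}\}$, and bound the vertices and extreme ray by Cramer's rule on $2\times2$ minors. This handles all three of the paper's cases at once, and because you exploit the two-dimensional structure of the polyhedron rather than the generic $d$-constraint Pottier estimate, you obtain a width of $\chi(G)^{O(1)}$ rather than $\chi(G)^{O(d)}$, strictly improving the stated bound. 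One small remark: since the image of the extreme ray of the recession cone lies in $\CycleSpace(G)\cap\mathbb{Q}^d_{\ge0}$, your direction $\Vec{w}$ is already nonnegative, so the appeal to \autoref{lem:gen-beam-eq-beam} is unnecessary in cases (ii) and (iii); it is only needed for the $\gdim(G) < 2$ case (where the paper also uses it).
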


\begin{proof}[Proof sketch]
    A degenerate VASS $G$ falls into one of the following 3 cases: ({\romannumeral 1}) $\gdim(G) < 2$; ({\romannumeral
    2}) $\gdim(G) = 2$ and $\CycleSpace(G) \cap \mathbb{Q}_{\ge0}^d = \{\Vec{0}\}$, or ({\romannumeral 3}) $\gdim(G) =
    2$ and $\CycleSpace(G) \cap \mathbb{Q}_{\ge0}^d = \Span\{\Vec{u}\}$ for some $\Vec{u} \in \mathbb{N}^d \setminus
    \{\Vec{0}\}$. We consider only the first case here. For the other cases we refer the readers to
    \autoref{lem:cyc-cap-nd-zero-dim-thin} and \autoref{lem:cyc-cap-nd-one-dim-thin} in the appendix.

    Suppose $\gdim(G) < 2$, then $\CycleSpace(G) = \Span\{\Vec{c}\}$ where $\Vec{c}$ is the effect of a (possibly empty)
    simple cycle in $G$. So $\norm{\Vec{c}} \le \chi(G)$. By \autoref{thm:reach-eq-union-affine-cycspace} every
    configuration $q(\Vec{v})$ reachable from $p(\Vec{0})$ satisfy $\Vec{v} = \alpha \Vec{c} + \Vec{z}$ for some $\alpha
    \in \mathbb{Q}$ and $\norm{\Vec{z}} \le \chi(G)$. This shows that $\Vec{v} \in \mathcal{B}^{\mathbb{Z}}_{\Vec{c},
    \chi(G)}$. As the choice of $q(\Vec{v})$ is arbitrary, we deduce that every $\Vec{0}$-run in $G$ is confined in the
    generalized beam $\mathcal{B}^{\mathbb{Z}}_{\Vec{c}, \chi(G)}$, thus is $\chi(G)$-thin.
\end{proof}

\subsection{Proper VASS and the Thin-Thick Classification}
\label{sec:proper-vass}

In this subsection we fix a geometrically $2$-dimensional $d$-VASS $G = (Q, T)$ that is proper, and assume that
$\Supp(\CycleSpace(G)) = [d]$. So by \autoref{thm:2d-proj}, there exists $i_1 \ne i_2 \in [d]$ such that $I := \{i_1,
i_2\}$ is a sign-reflecting projection of $\CycleSpace(G)$ with respect to $\mathbb{Q}_{\ge0}^d$. Moreover, let
$\Vec{u}_1, \Vec{u}_2 \in \mathbb{N}^d$ be the canonical horizontal and vertical vectors given by
\autoref{lem:canonical-vecs}. We have $\Vec{u}_1(i_2) = 0$ and $\Vec{u}_2(i_1) = 0$. Observe that we can assume
$\norm{\Vec{u}_1}, \norm{\Vec{u}_2} \le 2\chi(G)^2$, as they can be derived from effects of simple cycles in $G$.

\subsubsection{Thick Runs}

We will show that every $\Vec{0}$-run in $G$ can be classified into thin runs and thick runs. Here we give the
definition of thick runs.

\paragraph*{Sequential cones}

Recall that the \emph{cone} generated by vectors $\Vec{v}_1, \ldots, \Vec{v}_k \in \mathbb{Z}^d$ is the set $\Cone\{
\Vec{v}_1, \ldots, \Vec{v}_k \} := \{ \sum_{j = 1}^{k} a_j\Vec{v}_j : a_1, \ldots, a_k \in \mathbb{Q}_{\ge 0} \}$. The
definition of cones is enhanced in \cite{DBLP:conf/mfcs/CzerwinskiLLP19} so that every prefix sum is also required to be
non-negative:
\begin{definition}[sequential cones]
    Let $\Vec{v}_1, \ldots, \Vec{v_k} \in \mathbb{Z}^d$ be a sequence of vectors, the \emph{sequential cone} generated
    by these vectors is the following set:
    \begin{equation}
        \SeqCone(\Vec{v}_1, \ldots, \Vec{v_k}) := \biggl\{
            \sum_{j = 1}^{k} a_j\Vec{v}_j : a_1, \ldots, a_k \in \mathbb{Q}_{\ge0}, \forall i. \sum_{j = 1}^{i}a_j\Vec{v}_j \ge \Vec{0}
        \biggr\}.
    \end{equation}
\end{definition}

In dimension 2 it was shown that a sequential cone is nothing but a cone generated by 2 vectors
\cite{DBLP:conf/mfcs/CzerwinskiLLP19}. We generalize this result to sequential cones generated by vectors from the cycle
space of a proper geometrically 2-dimensional VASS.

\begin{lemma}
    \label{lem:seq-cone-of-cycles-eq-fin-gen-cone}
    Let $\Vec{v}_1, \ldots, \Vec{v}_k \in \mathbb{Z}^d$ be vectors in $\CycleSpace(G)$ where $G$ is proper. Then
    $\SeqCone(\Vec{v}_1, \ldots, \Vec{v}_k) = \Cone\{\Vec{x}, \Vec{y}\}$ for some non-negative vectors $\Vec{x},
    \Vec{y}$ where each of them is either $\Vec{v}_j$ for some $j \in [k]$, or is the canonical horizontal / vertical
    vector $\Vec{u}_1$ or $\Vec{u}_2$.
\end{lemma}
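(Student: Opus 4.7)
The plan is to transport the sequential cone down to $\mathbb{Q}^2$ via the sign-reflecting projection $|_I$, invoke the analogous two-dimensional statement from \cite{DBLP:conf/mfcs/CzerwinskiLLP19}, and then lift the answer back through the inverse projection $\phi := (|_I)^{-1}$. Since $\gdim(G) = 2$ and the restriction $|_I : \CycleSpace(G) \to \mathbb{Q}^{I}$ is injective by \autoref{lem:spp-inject}, a dimension count shows that $|_I$ is in fact a linear isomorphism between $\CycleSpace(G)$ and $\mathbb{Q}^2$, so $\phi$ is well defined.

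The central step is to verify that the sequential-cone condition is transported faithfully:
\begin{equation}
|_I\bigl(\SeqCone(\Vec{v}_1,\ldots,\Vec{v}_k)\bigr) \;=\; \SeqCone(\Vec{v}_1|_I,\ldots,\Vec{v}_k|_I).
\end{equation}
The forward inclusion is routine because projection is linear and preserves non-negativity. For the reverse inclusion, given coefficients $a_1,\ldots,a_k \in \mathbb{Q}_{\ge 0}$ that realize a point on the right-hand side, every prefix sum $\sum_{j \le i} a_j \Vec{v}_j$ belongs to $\CycleSpace(G)$, and its projection $\sum_{j \le i} a_j \Vec{v}_j|_I$ is $\ge \Vec{0}$ in $\mathbb{Q}^{I}$; the sign-reflecting property then promotes this to $\sum_{j \le i} a_j \Vec{v}_j \ge \Vec{0}$ in $\mathbb{Q}^d$, which is exactly the prefix-sum condition defining the left-hand side.

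Once the above correspondence is in place, I would apply the two-dimensional case of the statement, already established in \cite{DBLP:conf/mfcs/CzerwinskiLLP19}, to write $\SeqCone(\Vec{v}_1|_I,\ldots,\Vec{v}_k|_I) = \Cone\{\Vec{x}', \Vec{y}'\}$ where each of $\Vec{x}', \Vec{y}'$ is either an input vector $\Vec{v}_j|_I$ or a positive multiple of a coordinate unit vector $e_{i_1}$ or $e_{i_2}$ in $\mathbb{Q}^I$. Pulling this back through $\phi$ and using that $\Cone$ is invariant under positive rescaling of its generators gives $\SeqCone(\Vec{v}_1,\ldots,\Vec{v}_k) = \Cone\{\Vec{x}, \Vec{y}\}$. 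The matching of generators goes as follows: $\phi(\Vec{v}_j|_I) = \Vec{v}_j$ by injectivity, while from $\Vec{u}_1|_I = (\Vec{u}_1(i_1), 0)$ and $\Vec{u}_2|_I = (0, \Vec{u}_2(i_2))$ with both scalars positive (see \autoref{lem:canonical-vecs}), the preimages $\phi(e_{i_1})$ and $\phi(e_{i_2})$ are positive scalar multiples of $\Vec{u}_1$ and $\Vec{u}_2$ respectively. Non-negativity of $\Vec{x}, \Vec{y}$ is then automatic because their projections lie in $\mathbb{Q}^I_{\ge 0}$ and sign-reflection lifts this to non-negativity in $\mathbb{Q}^d$.

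The main obstacle is the prefix-sum transfer in the displayed equation: one needs the non-negativity of \emph{every} prefix sum to be preserved under $|_I$, not merely that of the final sum. This is precisely where the sign-reflecting property does nontrivial work, and it is what enables a genuine reduction from the $d$-dimensional setting to the two-dimensional one without any distortion in the combinatorics of the sequential cone. After this, the remaining manipulations are routine linear-algebraic bookkeeping and an appeal to the known 2D result.
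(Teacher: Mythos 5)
Your proposal is correct and follows essentially the same approach as the paper: establishing that $|_I$ transports the sequential-cone condition faithfully (by applying the sign-reflecting property to every prefix sum), invoking \cite[Lemma 2]{DBLP:conf/mfcs/CzerwinskiLLP19} in $\mathbb{Q}^I$, and lifting the two generators back (matching $\Vec{v}_j|_I$ to $\Vec{v}_j$, and the coordinate axes to the canonical vectors $\Vec{u}_1, \Vec{u}_2$). The paper phrases the transport as an equality between $\SeqCone(\Vec{v}_1,\ldots,\Vec{v}_k)$ and the preimage $\{\Vec{v}\in\CycleSpace(G):\Vec{v}|_I\in\SeqCone(\Vec{v}_1|_I,\ldots,\Vec{v}_k|_I)\}$ rather than explicitly naming the inverse isomorphism $\phi$, but the content is identical.
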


Indeed, this lemma reduces to \cite[Lemma 2]{DBLP:conf/mfcs/CzerwinskiLLP19} easily by projecting the sequential cone
onto coordinates in $I$. See appendix for a detailed proof.

\paragraph*{Sequentially enabled cycles and thick runs}

A path $\pi$ is \emph{enabled} at configuration $c$ if there exists a configuration $c'$ such that $c \xrightarrow{\pi}
c'$ is a legal run. Let $S \subseteq [d]$, we say $\pi$ is \emph{$S$-enabled} at $c = p(\Vec{u})$ if there exists a
vector $\Vec{z} \in \mathbb{N}^d$ with $\Supp(\Vec{z}) \subseteq [d]\setminus S$ such that $\pi$ is enabled at
$p(\Vec{u} + \Vec{z})$.

\begin{definition}
    Let $A \in \mathbb{N}$, and let $\pi_1, \pi_2, \pi_3, \pi_4$ be four cycles in $G$, we say these cycles are
    \emph{$A$-sequentially enabled} in a run $\rho$ in $G$ if {their lengths are at most $A$}, and $\rho$ can be
    factored into five parts $\rho = \rho_1\rho_2\rho_3\rho_4\rho_5$ such that 
    \begin{itemize}
        \item $\Delta(\pi_1)|_I$ is semi-positive, and $\pi_1$ is enabled at $\Target(\rho_1)$. Moreover, both
        coordinates in $I$ is bounded by $A$ along $\rho_1$.
        \item If $\Delta(\pi_1)|_I$ is positive, then $\pi_2$ is $\emptyset$-enabled at $\Target(\rho_2)$. Otherwise,
        $\pi_2$ is $S$-enabled at $\Target(\rho_2)$ for $S = [d]\setminus\Supp(\Delta(\pi_1))$, and, if
        $\Delta(\pi_1)(i_b) = 0$, then the $i_b$-th coordinate is bounded by $A$ along $\rho_2$ where $b = 1, 2$.
        \item $\SeqCone(\Delta(\pi_1), \Delta(\pi_2))$ contains some positive vector. (We remark that this is possible
        only if $G$ has full support.)
        \item $\pi_3, \pi_4$ are $\emptyset$-enabled at $\Target(\rho_3), \Target(\rho_4)$ respectively.
    \end{itemize}
\end{definition}

\begin{definition}
    Let $A \in \mathbb{N}$. A $\Vec{0}$-run $\tau$ in $G$ is \emph{$A$-thick} if $\tau$ factors into $\tau = \rho\rho'$
    such that  
    \begin{itemize}
        \item some cycles $\pi_1, \pi_2, \pi_3, \pi_4$ in $G$ are $A$-sequentially enabled in $\rho$,
        \item some cycles $\pi_1', \pi_2', \pi_3', \pi_4'$ in $G^{\text{\rm rev}}$ are $A$-sequentially enabled in
        $\mbox{rev}(\rho)$,
        \item $\SeqCone(\Delta(\pi_1), \Delta(\pi_2), \Delta(\pi_3), \Delta(\pi_4)) \cap \SeqCone(\Delta(\pi_1'),
        \Delta(\pi_2'), \Delta(\pi_3'), \Delta(\pi_4'))$ is not trivial (i.e.\ it contains two linearly independent
        vectors).
    \end{itemize}
\end{definition}

\subsubsection{Thin-thick classification}

In spirit of \cite{DBLP:conf/mfcs/CzerwinskiLLP19}, the following classification lemma is of great significance.

\begin{restatable}{lemma}{properVASSThinThick}
    \label{lem:prop-vass-thin-thick}
    Let $G$ be a proper geometrically $2$-dimensional $d$-VASS with $\Supp(\CycleSpace(G)) = [d]$, then every
    $\Vec{0}$-run in $G$ is $A$-thick if it is not $A$-thin for some $A \le \chi(G)^{O(d\cdot \varsigma(G))}$.
\end{restatable}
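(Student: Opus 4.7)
The plan is to reduce the classification to the 2-dimensional case through the sign-reflecting projection and then adapt the CLLP19 thin-thick classification for 2-VASSes \cite{DBLP:conf/mfcs/CzerwinskiLLP19}. Let $I = \{i_1, i_2\}$ be a sign-reflecting projection of $\CycleSpace(G)$ w.r.t.\ $\mathbb{Q}_{\ge 0}^d$, which exists by \autoref{thm:2d-proj} because $G$ is proper. Form the 2-VASS $G|_I = (Q, T|_I)$ by keeping only the $I$-components of each transition effect. Since non-negativity in $G$ implies non-negativity of the $I$-projection, the $\Vec{0}$-run $\tau$ in $G$ projects to a $\Vec{0}$-run $\tau|_I$ in $G|_I$, with $\chi(G|_I) \le \chi(G)$ and $\varsigma(G|_I) \le \varsigma(G)$. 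Invoking the 2D classification on $\tau|_I$ yields a threshold $A_2 \le \chi(G)^{O(\varsigma(G))}$ with respect to which $\tau|_I$ is either $A_2$-thin or $A_2$-thick in $G|_I$.

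In the thin case I would lift 2D thinness to $d$ dimensions via the canonical vectors $\Vec{u}_1, \Vec{u}_2$. By \autoref{thm:geometry-of-runs}, each configuration $q(\Vec{v})$ on $\tau$ decomposes as $\Vec{v} = \gamma_1 \Vec{u}_1 + \gamma_2 \Vec{u}_2 + f_\tau(q)$ with $\norm{f_\tau(q)} \le \chi(G)$, and the coefficients $\gamma_b$ are uniquely determined by $\Vec{v}(i_b)$ because $\Vec{u}_1(i_2) = \Vec{u}_2(i_1) = 0$. The 2D beam constraint $\Vec{v}|_I = \alpha \Vec{d}' + \Vec{e}'$ (with $\norm{\Vec{d}'}, \norm{\Vec{e}'} \le A_2$) then pins down $\gamma_b$ as $\alpha \cdot \Vec{d}'(b)/\Vec{u}_b(i_b)$ plus a bounded remainder. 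Combined with the bound $\norm{\Vec{u}_b} \le 2\chi(G)^2$ from \autoref{lem:canonical-vecs}, this places $\Vec{v}$ in a generalized $d$-dimensional beam of parameters at most $A_2 \cdot \chi(G)^{O(1)}$, and \autoref{lem:gen-beam-eq-beam} converts generalized beams into ordinary ones. Hence $\tau$ is $A$-thin with $A \le \chi(G)^{O(\varsigma(G))}$.

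In the thick case, the 2D classification returns cycles $\pi_1^I, \ldots, \pi_4^I$ in $G|_I$ (and analogues in $G|_I^{\text{rev}}$) that are $A_2$-sequentially enabled along $\tau|_I$. Each $\pi_i^I$ lifts to a cycle $\pi_i$ in $G$ with the same sequence of transitions, and $\Delta(\pi_i)|_I = \Delta(\pi_i^I)$. Since $\Delta(\pi_i) \in \CycleSpace(G)$, the sign-reflecting property of $I$ transfers semi-positivity (respectively positivity) of $\Delta(\pi_i^I)$ to $\Delta(\pi_i)$, so the sign and support conditions in the definition of sequentially enabled cycles are met. The non-trivial intersection of the two sequential cones transfers from 2D to $d$-dim via \autoref{lem:seq-cone-of-cycles-eq-fin-gen-cone} and the bijection between $\CycleSpace(G) \cap \mathbb{Q}_{\ge 0}^d$ and its $I$-projection induced by sign-reflecting.

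The main obstacle will be transferring the enabling conditions themselves: a cycle enabled in $G|_I$ can fail to be enabled in $G$ when intermediate prefix sums dip below zero in coordinates outside $I$. I would handle this by combining \autoref{lem:spp-component-bound} with a careful choice of the factorization $\tau = \rho_1 \cdots \rho_5$. The thick hypothesis ensures $\tau$ visits configurations where both $I$-coordinates are large, and the component-bound lemma then forces \emph{all} $d$-dimensional coordinates to be comparably large (within a $\chi(G)^{O(1)}$ factor). Positioning each $\Target(\rho_i)$ near such a large configuration (adjusted by at most a bounded-length segment) gives the non-$I$ coordinates enough slack to absorb the intermediate effects of the $\pi_i$'s. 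For $\pi_2, \pi_3, \pi_4$, the relaxed $S$-enabled conditions additionally permit supplementing specific non-$I$ coordinates, matching exactly the freedom afforded by the projection. This yields $A$-sequentially enabled cycles in $G$ with $A \le \chi(G)^{O(d \cdot \varsigma(G))}$ as required.
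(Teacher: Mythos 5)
The student's proposal takes a genuinely different route: it attempts a black-box reduction to the 2-VASS classification of CLLP19 via the projection $G|_I$, then lifts the classification back to $d$ dimensions. The paper does not do this; instead it re-proves the CLLP19-style lemmas directly in $G$ (in particular the non-negative cycle lemma, \autoref{lem:non-negative-cycle-for-proper-geo-2d}, and \autoref{lem:dichotomy-main-claim}), using the sign-reflecting projection only locally when needed. The distinction matters: the paper explicitly remarks in the introduction that the projection tools do \emph{not} give a straightforward reduction to 2-VASS, and the student's approach essentially runs into this obstacle.

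The concrete gap is in the thick-case lift, specifically for the first cycle $\pi_1$. By the definition of $A$-sequentially enabled, $\pi_1$ must be \emph{fully} enabled at $\Target(\rho_1)$ — no $S$-relaxation is allowed — and moreover $\Target(\rho_1)$ sits in the region where both $I$-coordinates are bounded by $A$, i.e., it is a \emph{low} configuration. Your proposed fix (``the thick hypothesis ensures $\tau$ visits configurations where both $I$-coordinates are large, and the component-bound lemma then forces all $d$-dimensional coordinates to be comparably large'') cannot apply here: \autoref{lem:spp-component-bound} is vacuous at a low configuration, and you cannot ``position $\Target(\rho_1)$ near a large configuration'' without violating the boundedness requirement on $\rho_1$. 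The lifted cycle $\pi_1$ therefore might have intermediate prefix sums that dip below zero in coordinates outside $I$, even though $\pi_1^I$ is enabled in $G|_I$. The correct justification — which is what the paper implicitly relies on — is that $\pi_1$ arises from a sub-run of $\tau$, with any length-shortening done by deleting sub-cycles that have zero effect in $I$; by \autoref{prop:config-in-run-determined-by-srp} (injectivity of the sign-reflecting projection restricted to configurations of a single run), these sub-cycles have zero effect in \emph{all} $d$ coordinates, so the shortened cycle remains a legal run in $G$. Because CLLP19 treats $G|_I$ as an abstract 2-VASS, one has to verify after the fact that its internal cycle-shortening happens to be compatible with \autoref{prop:config-in-run-determined-by-srp}; your write-up never makes this argument, and the argument you do offer does not cover $\pi_1$. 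A secondary caveat is your claim that the $S$-relaxation ``matches exactly the freedom afforded by the projection'': it does not — $S = [d] \setminus \Supp(\Delta(\pi_1))$, which is unrelated to $[d]\setminus I$ — although the paper shows this mismatch can be reconciled by removing only sub-cycles parallel to $\Delta(\pi_1)$. Your thin-case lift via the canonical vectors, and the transfer of the sequential-cone condition via \autoref{lem:spp-inject} and \autoref{lem:seq-cone-of-cycles-eq-fin-gen-cone}, are both sound in outline.
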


An important technical lemma in \cite{DBLP:conf/mfcs/CzerwinskiLLP19} is the ``non-negative cycle lemma'' \cite[Lemma
3]{DBLP:conf/mfcs/CzerwinskiLLP19}, which states that a run in 2-VASS from $\Vec{0}$ visiting a high configuration must
contain a configuration enabling a semi-positive cycle. Here we need a similar lemma for geometrically 2-dimensional
VASS.

\begin{restatable}{lemma}{nonNegCycleProperGeoTwoD}
    \label{lem:non-negative-cycle-for-proper-geo-2d}
    There exists a polynomial $P$ (independent of $G$) such that every run $\rho$ in $G$ from $\Vec{0}$ to $\Vec{v}$
    with $\norm{\Vec{v}} \ge P(\chi(G))^{\varsigma(G)}$ contains a configuration enabling a cycle $\theta$ of length at
    most $P(\chi(G))$ such that $\Delta(\theta)|_I$ is semi-positive.
\end{restatable}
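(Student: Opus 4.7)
The plan is to adapt the inductive proof of the non-negative cycle lemma for 2-VASS of \cite[Lemma 3]{DBLP:conf/mfcs/CzerwinskiLLP19} to our setting, using the sign-reflecting projection $I=\{i_1,i_2\}$ to bridge between the ambient $d$-dimensional geometry of $G$ and the two-dimensional picture living on the $I$-coordinates.

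The first step records a clean equivalence: for any cycle $\theta$ in $G$, $\Delta(\theta)\in\CycleSpace(G)$, and since $I$ is sign-reflecting for $\CycleSpace(G)$ with respect to $\mathbb{Q}^d_{\ge 0}$ and injective on $\CycleSpace(G)$ by \autoref{lem:spp-inject}, $\Delta(\theta)|_I$ is semi-positive if and only if $\Delta(\theta)$ itself is. The second step translates the hypothesis ``$\norm{\Vec{v}}$ is large'' into ``$\norm{\Vec{v}|_I}$ is large''. Using the canonical vectors $\Vec{u}_1,\Vec{u}_2$ from \autoref{lem:canonical-vecs} as a basis of $\CycleSpace(G)$, together with the decomposition $\Vec{v}=\Vec{c}+f_\rho(q)$ from \autoref{thm:geometry-of-runs} with $\Vec{c}\in\CycleSpace(G)$ and $\norm{f_\rho(q)}\le\chi(G)$, a direct computation yields a bound of the form $\norm{\Vec{v}}\le O(\chi(G)^2)\cdot\norm{\Vec{v}|_I}+O(\chi(G)^3)$. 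Choosing $P$ with a sufficiently large leading constant therefore forces $\norm{\Vec{v}|_I}$ itself to be of order $P(\chi(G))^{\varsigma(G)}$, so the remaining task is essentially two-dimensional.

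The bulk of the proof is an induction on $\varsigma(G)$ paralleling \cite[Lemma 3]{DBLP:conf/mfcs/CzerwinskiLLP19}. The base case $\varsigma(G)=1$ is immediate: $\rho$ is a sequence of self-loops, each a cycle in $\CycleSpace(G)$; a non-semi-positive self-loop cannot be fired more than $\chi(G)$ times from $\Vec{0}$ without violating non-negativity, so a large $\norm{\Vec{v}|_I}$ must be generated by at least one semi-positive self-loop, which is trivially enabled at the unique state. For the inductive step, I would pick a state $r$ visited sufficiently often and decompose $\rho$ at its occurrences into an initial piece, a final piece, and intermediate cycles through $r$; the two outer pieces live in a sub-VASS of strictly smaller traversal number, so their $I$-effects are bounded by the induction hypothesis, while the intermediate cycles contribute the bulk of $\Vec{v}|_I$. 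A pigeonhole-style argument in the two-dimensional plane $\CycleSpace(G)|_I$, aided by \autoref{lem:seq-cone-of-cycles-eq-fin-gen-cone}, then extracts a short cycle with semi-positive effect among the cycles through $r$ (automatically enabled at the corresponding visit to $r$).

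The hard part will be the quantitative bookkeeping needed to keep $P$ uniform in $G$, and in particular the handling of transitions that affect only coordinates outside $I$: such transitions shift the $f_\rho$-offsets of \autoref{thm:geometry-of-runs} without appearing in the $I$-projection, so the pigeonhole bounds must absorb them as a polynomial overhead in $\chi(G)$. The full support hypothesis $\Supp(\CycleSpace(G))=[d]$ is essential here: it guarantees that no coordinate escapes sign-reflecting control, so that a cycle semi-positive on its $I$-projection is in fact semi-positive on all $d$ coordinates, and enablement in the ambient $d$-VASS is faithfully captured by the two-dimensional argument.
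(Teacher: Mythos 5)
Your proposal is a reasonable reduction-to-two-dimensions plan, and Steps~1--2 (sign-reflection equivalence for cycle effects, and the translation from $\norm{\Vec{v}}$ to $\norm{\Vec{v}|_I}$ via \autoref{lem:canonical-vecs} and \autoref{lem:spp-component-bound}) are essentially what the paper does. But the core of your argument---an induction on $\varsigma(G)$ that decomposes the run at a frequently-visited state $r$---is a genuinely different route from the paper's, and it has gaps that the paper's route is specifically engineered to avoid. The paper instead decomposes $\rho$ into at most $\varsigma(G)$ segments each confined to one {\rm\textsc{scc}}, picks by averaging a segment with a large multiplicative norm jump, and then, inside that {\rm\textsc{scc}}, applies a structural dichotomy (\autoref{lem:scc-in-gs-dichotomy}): either every state lies on a short cycle with positive $I$-projection, or the cone of simple-cycle $I$-effects contains no positive vector. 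In both cases it eventually reduces to a one-bounded-$I$-coordinate pumping argument (\autoref{lem:semi-pos-cycle-one-bounded-coord}).

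There are concrete obstacles in your plan. First, the recursion is not self-contained: the sub-VASS spanned by the states visited before $r$ (or after the last occurrence of $r$) need not be proper, need not have full support, and its cycle space may be a proper subspace of $\CycleSpace(G)$---so $I$ need not be a sign-reflecting projection for it, and the induction hypothesis in the form you've stated it does not apply. You would have to treat these degenerate sub-cases separately (essentially re-deriving something like \autoref{lem:degenerate-thin} in the recursion), which you do not mention. Second, the crucial requirement that the extracted cycle $\theta$ be \emph{short} and still \emph{enabled} is nontrivial: removing sub-cycles with non-positive effect from a long cycle can destroy enabledness, which is precisely why the paper maintains the ``every infix of $\theta$ is enabled on $\rho$'' invariant in \autoref{lem:semi-pos-cycle-one-bounded-coord} and remarks on it explicitly. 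Your pigeonhole sketch glosses over this. Third, the claim that full support alone makes ``enablement in the ambient $d$-VASS faithfully captured by the two-dimensional argument'' is an oversimplification: what actually makes the bookkeeping work is the stronger fact that on any fixed run, a configuration is determined by its control state and its $I$-projection (\autoref{prop:config-in-run-determined-by-srp}, a consequence of \autoref{thm:geometry-of-runs} and \autoref{lem:spp-inject}); full support by itself only gives the component bounds of \autoref{lem:spp-component-bound}. Neither of these points is fatal to the overall strategy, but as written the proposal would not go through without substantial repair, and the repair you'd need points you back toward the {\rm\textsc{scc}}-based argument the paper actually uses.
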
 


We also need some easy facts in geometry.

\begin{lemma}
    \label{lem:convex-cone-full-or-null}
    Let $\Vec{u}, \Vec{v} \in \mathbb{Q}^d$. Let $X \subseteq \Span\{\Vec{u}, \Vec{v}\}$ be a convex set such that $X
    \cap \mathbb{Q}_{\ge 0}\cdot \Vec{u} = \emptyset$ and $X \cap \mathbb{Q}_{\ge 0}\cdot \Vec{v} = \emptyset$. Then
    either $X \cap \Cone\{\Vec{u}, \Vec{v}\} = \emptyset$ or $X \subseteq \Cone\{\Vec{u}, \Vec{v}\}$.
\end{lemma}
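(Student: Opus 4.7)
The plan is to reduce the lemma to an elementary planar argument. The degenerate case comes first: if $\Vec{u}$ and $\Vec{v}$ are linearly dependent (including the cases where one or both are zero), a quick case analysis shows $\Cone\{\Vec{u}, \Vec{v}\} \subseteq \mathbb{Q}_{\ge 0}\cdot\Vec{u} \cup \mathbb{Q}_{\ge 0}\cdot\Vec{v}$, and the hypothesis that $X$ avoids both rays then directly forces $X \cap \Cone\{\Vec{u}, \Vec{v}\} = \emptyset$, which is the first alternative of the conclusion.

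From then on I would assume that $\Vec{u}, \Vec{v}$ are linearly independent, so that $\{\Vec{u}, \Vec{v}\}$ is a basis of $\Span\{\Vec{u}, \Vec{v}\}$. Using this basis I would identify $\Span\{\Vec{u}, \Vec{v}\}$ with $\mathbb{Q}^2$ via $\alpha\Vec{u} + \beta\Vec{v} \mapsto (\alpha, \beta)$. In these coordinates $\Cone\{\Vec{u}, \Vec{v}\}$ becomes the closed first quadrant, while the two forbidden rays become the two non-negative coordinate axes. In particular, any point of $X$ that happens to lie in the cone must sit in the open first quadrant $\{(\alpha, \beta) : \alpha, \beta > 0\}$, since it is forbidden to touch either axis.

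The core of the proof is then a convexity-plus-intermediate-value step. Suppose towards contradiction that $X$ meets the cone at some point $\Vec{x}$ (necessarily with $\alpha(\Vec{x}), \beta(\Vec{x}) > 0$) and also contains a point $\Vec{y}$ outside the cone, so that $\alpha(\Vec{y}) < 0$ or $\beta(\Vec{y}) < 0$. By convexity the whole segment $\{(1-t)\Vec{x} + t\Vec{y} : t \in [0,1]\}$ lies in $X$, and its two coordinates are affine functions of $t$, both strictly positive at $t = 0$ and at least one strictly negative at $t = 1$. Let $t^\ast \in (0,1]$ be the smallest $t$ at which one of the coordinates first hits $0$. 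At this $t^\ast$ both coordinates are still $\ge 0$ while one of them equals $0$, so the point lies on a non-negative coordinate axis, i.e., inside $\mathbb{Q}_{\ge 0}\cdot\Vec{u}$ or $\mathbb{Q}_{\ge 0}\cdot\Vec{v}$. Since the point also lies in $X$, this contradicts the hypothesis.

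No step here is a serious obstacle; the only detail to verify carefully is that the minimal crossing time $t^\ast$ is well defined with $0 < t^\ast \le 1$, which is a routine consequence of applying the intermediate value theorem to the two affine coordinate functions, together with the observation that $\Vec{x}$ lies in the open first quadrant while $\Vec{y}$ does not.
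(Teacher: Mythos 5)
Your proof is correct and takes a genuinely different (and arguably cleaner) route than the paper's. The paper also first disposes of the degenerate case and then passes to the $\{\Vec{u},\Vec{v}\}$-basis, but from there it proceeds by a dual argument: given $\Vec{s} \in X \cap \Cone\{\Vec{u},\Vec{v}\}$ and $\Vec{t} \in X \setminus \Cone\{\Vec{u},\Vec{v}\}$, it performs a case analysis on the signs of the $\{\Vec{u},\Vec{v}\}$-coordinates of $\Vec{t}$ to show that $\Vec{u}$ or $\Vec{v}$ can be written as a non-negative combination of $\Vec{s}$ and $\Vec{t}$, and then invokes convexity of $X$ to conclude that a scalar multiple of $\Vec{u}$ (or $\Vec{v}$) lies in $X$, contradicting the ray-avoidance hypothesis. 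You instead run a primal intermediate-value argument directly on the segment from $\Vec{x}$ to $\Vec{y}$: in the basis coordinates, the segment starts in the open first quadrant and must exit the closed first quadrant, so the first boundary-crossing time $t^\ast$ gives a point of $X$ lying on one of the two non-negative coordinate axes. Your version avoids the four-way sign case analysis and is geometrically more transparent; the paper's version has the minor virtue of staying purely algebraic (solving a $2\times 2$ linear system) rather than appealing to a continuity/ordering argument. One small point worth making explicit in a write-up: the IVT step is legitimate over $\mathbb{Q}$ here precisely because $\alpha(t)$ and $\beta(t)$ are affine with rational coefficients, so their zero crossings occur at rational $t$; for arbitrary continuous functions over $\mathbb{Q}$ the intermediate value theorem would fail.
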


Given a 2-vector $\Vec{v} \in \mathbb{Q}^2$, we define its \emph{right rotation} $\Vec{v}^R := (\Vec{v}(2),
-\Vec{v}(1))$. For another vector $\Vec{u} \in \mathbb{Q}^2$, we write $\Vec{v} \RotTo \Vec{u}$ if $\Inner{\Vec{u},
\Vec{v}^R} > 0$, and write $\Vec{v} \RotToEq \Vec{u}$ if $\Inner{\Vec{u}, \Vec{v}^R} \ge 0$. We generalize this notation
to the 2-dimensional subspace $\CycleSpace(G)$: given two vectors $\Vec{u}, \Vec{v} \in \CycleSpace(G)$, we write
$\Vec{v} \RotTo \Vec{u}$ if $\Vec{v}|_I \RotTo \Vec{u}|_I$, and $\Vec{v} \RotToEq \Vec{u}$ if $\Vec{v}|_I \RotToEq
\Vec{u}|_I$.

\begin{proposition}
    \label{prop:cone-as-rot-rot}
    Let $\Vec{u}, \Vec{v} \in \CycleSpace(G)$ be such that $\Vec{u} \RotTo \Vec{v}$. Then for any $\Vec{w} \in
    \CycleSpace(G)$, $\Vec{w} \in \Cone\{\Vec{u}, \Vec{v}\}$ if and only if $\Vec{u} \RotToEq \Vec{w} \RotToEq \Vec{v}$.
\end{proposition}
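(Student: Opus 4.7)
The plan is to reduce the statement to planar geometry via the sign-reflecting projection onto $I$, then verify it by a short bilinear computation.

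First I would exploit the fact that $|I| = 2 = \dim(\CycleSpace(G))$ and that $(\cdot)|_I$ is injective on $\CycleSpace(G)$ by \autoref{lem:spp-inject}; hence $(\cdot)|_I \colon \CycleSpace(G) \to \mathbb{Q}^I$ is a linear bijection. Under such a bijection, both membership in a cone and linear (in)dependence transfer in both directions: the image of $\Cone\{\Vec{u}, \Vec{v}\}$ under a linear map is always $\Cone\{\Vec{u}|_I, \Vec{v}|_I\}$, and for the less routine direction, given a decomposition $\Vec{w}|_I = a\Vec{u}|_I + b\Vec{v}|_I$ with $a, b \ge 0$, injectivity of $(\cdot)|_I$ on $\CycleSpace(G)$ forces $\Vec{w} = a\Vec{u} + b\Vec{v}$ in $\CycleSpace(G)$, whence $\Vec{w} \in \Cone\{\Vec{u}, \Vec{v}\}$. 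The relations $\RotTo$ and $\RotToEq$ are defined so that $\Vec{u} \RotTo \Vec{v} \iff \Vec{u}|_I \RotTo \Vec{v}|_I$ and similarly for $\RotToEq$, so it suffices to prove the proposition in $\mathbb{Q}^I \cong \mathbb{Q}^2$.

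Now I would work in $\mathbb{Q}^2$ with $\Vec{u}, \Vec{v}, \Vec{w}$ such that $\Inner{\Vec{v}, \Vec{u}^R} > 0$. The map $\Vec{x} \mapsto \Vec{x}^R$ is linear, and $\Inner{\Vec{x}, \Vec{x}^R} = 0$ for every $\Vec{x} \in \mathbb{Q}^2$. Since the hypothesis $\Vec{u} \RotTo \Vec{v}$ forces $\Vec{u}, \Vec{v}$ linearly independent, I can write $\Vec{w} = a\Vec{u} + b\Vec{v}$ uniquely with $a, b \in \mathbb{Q}$. Bilinear expansion, together with $\Inner{\Vec{u}, \Vec{u}^R} = \Inner{\Vec{v}, \Vec{v}^R} = 0$, gives
\begin{equation}
    \Inner{\Vec{w}, \Vec{u}^R} = b \Inner{\Vec{v}, \Vec{u}^R},
    \qquad
    \Inner{\Vec{v}, \Vec{w}^R} = a \Inner{\Vec{v}, \Vec{u}^R}.
\end{equation}
Since $\Inner{\Vec{v}, \Vec{u}^R} > 0$, each inner product on the left is non-negative if and only if the corresponding coefficient is non-negative. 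Thus $a, b \ge 0$ (i.e., $\Vec{w} \in \Cone\{\Vec{u}, \Vec{v}\}$) is equivalent to $\Vec{u} \RotToEq \Vec{w} \RotToEq \Vec{v}$.

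There is essentially no hard obstacle. The only subtle point is using injectivity of $(\cdot)|_I$ on $\CycleSpace(G)$ in the direction that lifts planar membership back to $\CycleSpace(G)$; without this, only the forward direction of the cone characterization would transfer. Everything else is an exercise in bilinear expansion and the vanishing identity $\Inner{\Vec{x}, \Vec{x}^R} = 0$.
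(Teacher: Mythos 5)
Your proposal is correct and takes essentially the same route as the paper: reduce to $\mathbb{Q}^2$ via the sign-reflecting projection and \autoref{lem:spp-inject}, then write $\Vec{w} = a\Vec{u} + b\Vec{v}$ and use bilinearity together with $\Inner{\Vec{x}, \Vec{x}^R} = 0$. Your presentation is slightly tighter—the two identities $\Inner{\Vec{w}, \Vec{u}^R} = b\Inner{\Vec{v}, \Vec{u}^R}$ and $\Inner{\Vec{v}, \Vec{w}^R} = a\Inner{\Vec{v}, \Vec{u}^R}$ settle both directions at once as an equivalence, whereas the paper proves necessity by closure of the $\RotToEq$ conditions under non-negative combinations and proves sufficiency separately by contrapositive using the same inner-product computations.
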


\paragraph*{The threshold $A$}

Let $P$ be the polynomial in \autoref{lem:non-negative-cycle-for-proper-geo-2d}. We define $p$ to be the polynomial
satisfying: 
\begin{equation}
    \label{eq:polynomial-in-dichotomy}
    p(x) \ge 4x^4 \cdot (P(x) + (x + 1)^3 + x) + x.
\end{equation}
Define $A := p(\chi(G))^{d\cdot \varsigma(G)} \le \chi(G)^{O(d\cdot \varsigma(G))}$. Note that in particular we have 
\begin{equation}
    A \ge p(\chi(G))^{\varsigma(G)} \ge 4\chi(G)^4 \cdot (B + \chi(G)) + \chi(G)
\end{equation}
where $B := P(\chi(G))^{\varsigma(G)} + (\chi(G) + 1)^3$. Let $\overline{B} := 4\chi(G)^4 \cdot (B + \chi(G)) +
\chi(G)$.

If a $\Vec{0}$-run is not $A$-thin, then we can find a configuration that lies out of all $A$-beams. The property that
$G$ has full support helps up to show further that each component of this configuration is high.

\begin{lemma}
    \label{lem:high-conf-if-not-thin}
    Let $\rho$ be a $\Vec{0}$-run in $G$ that is not $A$-thin. Then $\rho$ contains a configuration $s(\Vec{w})$ where
    $\Vec{w}$ lies outside all $A$-beams, and such that $\Vec{w}(i) \ge B$ for all $i \in [d]$.
\end{lemma}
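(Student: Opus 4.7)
The plan is to prove the stronger statement that \emph{every} configuration $s(\Vec{w})$ on $\rho$ with $\Vec{w}$ lying outside all $A$-beams already satisfies $\Vec{w}(i)\ge B$ for all $i\in[d]$; since $\rho$ is not $A$-thin, at least one such configuration exists and this suffices. To analyze a given such $\Vec{w}$, I would invoke \autoref{thm:geometry-of-runs} on the $\Vec{0}$-run $\rho$ to write $\Vec{w}=\Vec{c}+\Vec{z}$ with $\Vec{c}\in\CycleSpace(G)$ and $\norm{\Vec{z}}\le\chi(G)$, and then, using that the canonical vectors $\Vec{u}_1,\Vec{u}_2$ from \autoref{lem:canonical-vecs} form a basis of the 2-dimensional cycle space, decompose $\Vec{c}=\alpha_1\Vec{u}_1+\alpha_2\Vec{u}_2$ for unique $\alpha_1,\alpha_2\in\mathbb{Q}$. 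Recall that $\Vec{u}_1,\Vec{u}_2\in\mathbb{N}^d$ with $\Vec{u}_1(i_1),\Vec{u}_2(i_2)\ge 1$, $\Vec{u}_1(i_2)=\Vec{u}_2(i_1)=0$, and $\norm{\Vec{u}_1},\norm{\Vec{u}_2}\le 2\chi(G)^2$.

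The heart of the proof will be a case analysis forcing $\alpha_1,\alpha_2>M:=(A-\chi(G))/(2\chi(G)^2)$. Suppose $\alpha_1<0$. The non-negativity constraint $\Vec{w}(i_1)=\alpha_1\Vec{u}_1(i_1)+\Vec{z}(i_1)\ge 0$, together with $\Vec{u}_1(i_1)\ge 1$, forces $|\alpha_1|\le\chi(G)$, so $\norm{\alpha_1\Vec{u}_1+\Vec{z}}\le 2\chi(G)^3+\chi(G)\le A$. If in addition $\alpha_2\ge 0$, this lands $\Vec{w}=\alpha_2\Vec{u}_2+(\alpha_1\Vec{u}_1+\Vec{z})$ in the $A$-beam $\mathcal{B}_{\Vec{u}_2,\,2\chi(G)^3+\chi(G)}$; if instead $\alpha_2<0$ as well, then applying sign-reflection to the orthant $-\mathbb{Q}_{\ge 0}^d$ (valid because $\CycleSpace(G)$ is a subspace) gives $\Vec{c}\le\Vec{0}$, whence $\Vec{w}\le\Vec{z}$, $\norm{\Vec{w}}\le\chi(G)$, and $\Vec{w}\in\mathcal{B}_{\Vec{0},A}$. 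Both contradict the assumption that $\Vec{w}$ is outside every $A$-beam, so $\alpha_1\ge 0$, and symmetrically $\alpha_2\ge 0$. Finally, if $\alpha_1\le M$, then $\norm{\alpha_1\Vec{u}_1+\Vec{z}}\le 2\chi(G)^2\cdot M+\chi(G)\le A$, so (using $\alpha_2\ge 0$) $\Vec{w}\in\mathcal{B}_{\Vec{u}_2,A}$, again a contradiction. Hence $\alpha_1>M$, and symmetrically $\alpha_2>M$.

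For the concluding step I would invoke the full-support assumption $\Supp(\CycleSpace(G))=[d]$: since $\Vec{u}_1,\Vec{u}_2\in\mathbb{N}^d$ together span $\CycleSpace(G)$, for every $i\in[d]$ at least one of $\Vec{u}_1(i),\Vec{u}_2(i)$ is a positive integer and thus $\ge 1$. Combined with $\alpha_1,\alpha_2>M>0$ and $\Vec{u}_1,\Vec{u}_2\ge\Vec{0}$, this yields $\Vec{w}(i)\ge\min(\alpha_1,\alpha_2)\cdot(\Vec{u}_1(i)+\Vec{u}_2(i))-\chi(G)>M-\chi(G)\ge B$, where the final inequality comes directly from the threshold relation $A\ge\overline{B}=4\chi(G)^4(B+\chi(G))+\chi(G)$ recorded just before the lemma. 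The main obstacle will be the bookkeeping across the sign cases of $\alpha_1,\alpha_2$ and the verification of the numerical thresholds; once the sign-reflecting projection machinery of \autoref{lem:spp-inject} and \autoref{lem:canonical-vecs} pins down the canonical basis, the whole argument reduces to the single observation that whenever any of the coefficients $\alpha_1,\alpha_2$ is too negative or too small, $\Vec{w}$ can be trapped inside one of the standard beams $\mathcal{B}_{\Vec{0},A}$, $\mathcal{B}_{\Vec{u}_1,A}$, or $\mathcal{B}_{\Vec{u}_2,A}$.
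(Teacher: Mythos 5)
Your proposal is correct and follows essentially the same strategy as the paper: decompose $\Vec{w}=\Vec{c}+\Vec{z}$ via \autoref{thm:reach-eq-union-affine-cycspace}, express $\Vec{c}$ in the canonical basis $\{\Vec{u}_1,\Vec{u}_2\}$, rule out small or negative coefficients by trapping $\Vec{w}$ in an $A$-beam, and then use full support to lower-bound every coordinate. The only cosmetic differences are that the paper works with generalized beams (so the sign cases collapse) and invokes \autoref{lem:spp-component-bound} for the final coordinate bound rather than arguing it directly.
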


\paragraph*{Main Lemma}

\begin{lemma}
    \label{lem:dichotomy-main-claim}
    Let $\tau$ be a $\Vec{0}$-run in $G$ that is not $A$-thin. Let $s(\Vec{w})$ be the configuration on $\tau$ given by
    \autoref{lem:high-conf-if-not-thin}. Then $\tau$ can be factored into two parts $\tau = \rho\rho'$ where
    $\Target(\rho) = s(\Vec{w}) = \Source(\rho')$ such that 
    \begin{itemize}
        \item There are 4 cycles $\pi_1, \pi_2, \pi_3, \pi_4$ in $G$ that are $B$-sequentially enabled in $\rho$, such
        that the set $\SeqCone(\Delta(\pi_1), \Delta(\pi_2), \Delta(\pi_3), \Delta(\pi_4))$ contains a point $\Vec{x}$
        with $\norm{\Vec{x} - \Vec{w}} \le \chi(G)$.
        \item There are 4 cycles $\pi_1', \pi_2', \pi_3', \pi_4'$ in $G^{\text{\rm rev}}$ that are $B$-sequentially
        enabled in $\mbox{rev}(\rho')$, such that $\SeqCone(\Delta(\pi_1'), \Delta(\pi_2'), \Delta(\pi_3'),
        \Delta(\pi_4'))$ contains a point $\Vec{x}'$ with $\norm{\Vec{x}' - \Vec{w}} \le \chi(G)$.
    \end{itemize}
\end{lemma}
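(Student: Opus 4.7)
The plan is to construct the four cycles witnessing the first bullet from $\rho$; the cycles for the second bullet will arise by running the identical construction on $\text{rev}(\rho')$ viewed as a $\Vec{0}$-start run in $G^{\text{rev}}$ ending at $s(\Vec{w})$, since $\CycleSpace(G^{\text{rev}}) = \CycleSpace(G)$ and the sign-reflecting projection $I$ and canonical vectors $\Vec{u}_1, \Vec{u}_2$ carry over verbatim. The factorization $\rho = \rho_1\rho_2\rho_3\rho_4\rho_5$ will be built in three stages.

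For $\pi_1$, I would take the shortest prefix $\rho''$ of $\rho$ whose endpoint has norm $\ge P(\chi(G))^{\varsigma(G)}$---it exists because every $\Vec{w}(i) \ge B \ge P(\chi(G))^{\varsigma(G)}$---and apply \autoref{lem:non-negative-cycle-for-proper-geo-2d} to extract a cycle $\pi_1$ of length $\le P(\chi(G))$ with $\Delta(\pi_1)|_I$ semi-positive, enabled at some configuration of $\rho''$; letting $\rho_1$ end there, all configurations along $\rho_1$ have norm $< P(\chi(G))^{\varsigma(G)} \le A$, so the $I$-coordinates stay $\le A$ along $\rho_1$ as required. For $\pi_2$, if $\Delta(\pi_1)|_I > \Vec{0}$ then by \autoref{lem:spp-component-bound} and the full-support hypothesis $\Delta(\pi_1) > \Vec{0}$ coordinate-wise, so I would set $\rho_2 := \rho_1$ and take $\pi_2$ to be any short cycle enabled at $\Target(\rho_1)$. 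Otherwise, WLOG $\Delta(\pi_1)(i_1) = 0$; expanding $\Delta(\pi_1) = a\Vec{u}_1 + b\Vec{u}_2$ and using $\Vec{u}_2(i_1) = 0$ forces $\Delta(\pi_1) = b\Vec{u}_2$ with $b > 0$, and since $\Vec{w}(i_1) \ge B \gg \chi(G)$ while non-cycle transitions contribute at most $\chi(G)$ to $i_1$, some simple-cycle occurrence in $\rho$ after $\rho_1$ must have positive $i_1$-effect. Taking the earliest such occurrence as $\pi_2$ and $\rho_2$ ending at its start yields $\pi_2$ $\emptyset$-enabled at $\Target(\rho_2)$ and $i_1 \le \chi(G) \le A$ along $\rho_2$. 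Expanding $\Delta(\pi_2) = a'\Vec{u}_1 + b'\Vec{u}_2$ gives $a' > 0$, and since $\Vec{u}_1 + \Vec{u}_2 > \Vec{0}$ the combination $n\Delta(\pi_1) + \Delta(\pi_2)$ is strictly positive for $n$ large, giving the required positive element of $\SeqCone(\Delta(\pi_1), \Delta(\pi_2))$.

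For $\pi_3, \pi_4$, \autoref{thm:reach-eq-union-affine-cycspace} yields $\Vec{c} \in \CycleSpace(G)$ with $\norm{\Vec{c}-\Vec{w}} \le \chi(G)$, and since $\Vec{w}(i) \ge B > \chi(G)$, $\Vec{c}$ is non-negative and decomposes as $\alpha_1\Vec{u}_1 + \alpha_2\Vec{u}_2$ with both $\alpha_j > 0$---the anti-beam condition on $\Vec{w}$ in fact forces both $\alpha_j$ to be much larger than $\chi(G)^2$. I would then locate two cycles $\pi_3, \pi_4$ of $G$, of length $\le A$ and enabled at configurations $\Target(\rho_3), \Target(\rho_4)$ of $\rho$ lying after $\rho_2$, whose effects rotationally bracket $\Vec{c}$ in the $I$-plane, via a discrete intermediate-value argument over the simple-cycle effects contributing to the last leg of $\rho$ in the $\RotToEq$ order. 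By \autoref{prop:cone-as-rot-rot} this gives $\Vec{c} \in \Cone\{\Delta(\pi_3), \Delta(\pi_4)\}$, and absorbing a small multiple of the Stage 2 positive vector into the non-negative combination enforces the $\SeqCone$ prefix-sum non-negativity at the cost of an $O(\chi(G))$ perturbation of the target; this delivers $\Vec{x} \in \SeqCone(\Delta(\pi_1), \Delta(\pi_2), \Delta(\pi_3), \Delta(\pi_4))$ with $\norm{\Vec{x} - \Vec{w}} \le \chi(G)$. Letting $\rho_3, \rho_4$ end at the enable positions of $\pi_3, \pi_4$ and $\rho_5$ be the remaining tail completes the factorization.

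The hard part will be Stage 3: producing short cycles $\pi_3, \pi_4$ of $G$ enabled at post-$\rho_2$ configurations of $\rho$ whose effects rotationally bracket $\Vec{c}$, and reconciling this bracketing with the prefix-sum requirement of $\SeqCone$ without displacing $\Vec{x}$ too far from $\Vec{w}$. The bracketing is enabled by the anti-beam condition, which prevents the cycle effects contributing to $\Vec{w}$'s direction from piling up on one rotational side of $\Vec{c}$ (that would trap the run's endpoint in an $A$-beam), and the prefix-sum reconciliation is afforded by $\Vec{w}$'s coordinate-wise lower bound $B$ being asymptotically much larger than $\chi(G)^2$ and the norms of $\Vec{u}_1, \Vec{u}_2$. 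The full-support hypothesis is essential throughout because \autoref{lem:spp-component-bound} is what upgrades $I$-plane conclusions to full-dimension ones.
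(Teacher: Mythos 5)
Your Stage 1 and Stage 3 outlines align with the paper's argument, but Stage 2 contains a genuine gap. You claim that since $\Vec{w}(i_1) \ge B$ and non-cycle transitions contribute at most $\chi(G)$ to the $i_1$-coordinate, \emph{some simple-cycle occurrence in $\rho$ after $\rho_1$ must have positive $i_1$-effect}, and you take $\pi_2$ to be such an occurrence (which would automatically be $\emptyset$-enabled). This inference is false. When the $i_1$-coordinate of a run grows unboundedly, it does \emph{not} follow that any simple cyclic \emph{sub-run} has positive $i_1$-effect: removing an inner simple sub-cycle shifts all subsequent configurations, so the ``remaining'' cycle is not itself a sub-run, and one cannot recurse. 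A concrete obstruction: with transitions $A \xrightarrow{-1} A$, $A\xrightarrow{+5}B$, $B\xrightarrow{-1}B$, $B\xrightarrow{-1}A$, the run $A(10)\to A(9)\to B(14)\to B(13)\to A(12)\to A(11)\to B(16)\to\cdots$ climbs indefinitely in $i_1$, yet its only simple cyclic sub-runs are the self-loops of effect $-1$. The paper sidesteps this by a different route: a Pigeonhole argument on ``record'' positions $d_1, d_2, \ldots$ produces an $\emptyset$-enabled (but possibly very long) cyclic sub-run $\overline{\pi_2}$ with positive $i_1$-effect, and $\pi_2$ is then obtained by recursively deleting sub-cycles whose effect is a multiple of $\Vec{u}_2$. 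This shortens $\overline{\pi_2}$ to length $\le B$ but destroys $\emptyset$-enabledness: $\pi_2$ is only $([d]\setminus\Supp(\Delta(\pi_1)))$-enabled, which is exactly the weaker guarantee built into the definition of $A$-sequentially enabled cycles for the non-positive case of $\Delta(\pi_1)|_I$. Your scheme skips this shortening step and hence has no way to bound $|\pi_2|$ by $B$.

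Two smaller points. First, the coordinate bound you assert along $\rho_2$, namely $i_1 \le \chi(G)$, is wrong: $\Target(\rho_1)$ already has norm up to $P(\chi(G))^{\varsigma(G)} + \chi(G)$, so the correct bound along $\rho_2$ is roughly $P(\chi(G))^{\varsigma(G)} + 2\chi(G)$ (which is still $\le A$, so the conclusion survives). Second, your Stage 3 ``absorbing a small multiple of the Stage 2 positive vector'' is unnecessary and in fact risks violating the stated bound $\norm{\Vec{x}-\Vec{w}}\le\chi(G)$, since any such perturbation displaces $\Vec{x}$ away from $\Vec{c}$. The paper avoids this entirely: once $\Delta(\pi_2) \RotTo \Delta(\pi_3) \RotToEq \Vec{c} \RotToEq \Delta(\pi_4)$ is arranged with $\Delta(\pi_2),\Vec{c}\ge\Vec{0}$, \autoref{prop:cone-as-rot-rot} gives $\Delta(\pi_3)\in\Cone\{\Delta(\pi_2),\Vec{c}\}$, hence $\Delta(\pi_3)\ge\Vec{0}$, and then $\Vec{c}\in\SeqCone(\Delta(\pi_3),\Delta(\pi_4))$ has automatically non-negative prefix sums with $a_1 = a_2 = 0$. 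So the target $\Vec{x}$ can be taken to be exactly $\Vec{c}$, with $\norm{\Vec{c}-\Vec{w}}\le\chi(G)$.
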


We show that \autoref{lem:prop-vass-thin-thick} follows immediately from \autoref{lem:dichotomy-main-claim}.

\begin{proof}[Proof of \autoref{lem:prop-vass-thin-thick}]
    If the run $\tau$ is not $A$-thin, then \autoref{lem:dichotomy-main-claim} applies. Regarding the definition of
    thick runs, we are only left to show $\SeqCone(\Delta(\pi_1), \Delta(\pi_2), \Delta(\pi_3), \Delta(\pi_4)) \cap
    \SeqCone(\Delta(\pi_1'), \Delta(\pi_2'), \Delta(\pi_3'), \Delta(\pi_4'))$ contains a non-trivial region. Define $U
    := \{\Vec{u} \in \CycleSpace(G) : \norm{\Vec{u} - \Vec{w}} \le 2\cdot \chi(G)\}$. By
    \autoref{lem:seq-cone-of-cycles-eq-fin-gen-cone}, $\SeqCone(\Delta(\pi_1), \Delta(\pi_2), \Delta(\pi_3),
    \Delta(\pi_4))$ is equal to some $\Cone\{\Vec{v}_1, \Vec{v}_2\}$ where $\norm{\Vec{v}_1}, \norm{\Vec{v}_2} \le B
    \cdot \chi(G) \le A$. As $\Vec{w}$ lies out of all $A$-beams, we must have $U \cap \mathbb{Q}_{\ge0}\Vec{v}_1 =
    \emptyset$ and $U \cap \mathbb{Q}_{\ge0}\Vec{v}_2 = \emptyset$. On the other hand,
    \autoref{lem:dichotomy-main-claim} guarantees that $\SeqCone(\Delta(\pi_1), \Delta(\pi_2), \Delta(\pi_3),
    \Delta(\pi_4)) \cap U \ne \emptyset$. So by \autoref{lem:convex-cone-full-or-null}, we have $U \subseteq
    \SeqCone(\Delta(\pi_1), \Delta(\pi_2), \Delta(\pi_3), \Delta(\pi_4))$. A similar argument shows also that $U
    \subseteq \SeqCone(\Delta(\pi_1'), \Delta(\pi_2'), \Delta(\pi_3'), \Delta(\pi_4'))$. Finally, one easily verifies
    that $U$ is non-trivial.
\end{proof}

\begin{proof}[Proof sketch of \autoref{lem:dichotomy-main-claim}]
    By symmetry we only need to prove the first item. Since $\rho$ reaches the configuration $s(\Vec{w})$ which is high
    enough, \autoref{lem:non-negative-cycle-for-proper-geo-2d} shows that $\rho$ contains a configuration $c_1$ enabling
    a semi-positive cycle $\pi_1$. For the second cycle, if $\pi_1$ is positive, then we simply let $\pi_2 := \pi_1$.
    Otherwise, $\Delta(\pi_1)$ is parallel to either $\Vec{u}_1$ or $\Vec{u}_2$. Say it is the latter case, so
    $\Delta(\pi_1)(i_1) = 0$. We need to find a cycle $\pi_2$ with $\Delta(\pi_2)(i_1) > 0$. First observe that
    \autoref{lem:non-negative-cycle-for-proper-geo-2d} allows us to assume $\norm{c_1} \le P(\chi(G))^{\varsigma(G)} +
    \chi(G)$. As $\Vec{w}(i_1) \ge B > c_i(i_1) + \chi(G)$, the run from $c_1$ to $s(\Vec{w})$ must contain a cycle with
    positive effect in $i_1$-th coordinate. We would like to choose this cycle to be $\pi_2$, but its length may be
    unbounded. So we remove all sub-cycles whose effect is in parallel with $\Delta(\pi_1)$ from it. This makes $\pi_2$
    only $([d] \setminus \Supp(\Delta(\pi_1)))$-enabled, but that's enough. Finally, we exhibit cycles $\pi_3, \pi_4$ so
    that the sequential cone contains some point $\Vec{x}$ in the neighbor of $\Vec{w}$. We would like this point
    $\Vec{x}$ to be exactly $\Vec{w}$, but this may be impossible as $\Vec{w}$ might not even belong to
    $\CycleSpace(G)$. So we write $\Vec{w} = \Vec{c} + \Vec{z}$ for $\Vec{c} \in \CycleSpace(G)$ and $\norm{\Vec{z}} \le
    \chi(G)$. The goal is now to contain $\Vec{x} = \Vec{c}$ in the sequential cone. Consider all simple cycles enabled
    at some configuration on the run $\rho$ after the source of $\pi_2$. List them in the order they are enabled, and
    extract a subsequence $\pi_3, \pi_4, \ldots$ such that $\Delta(\pi_2) \RotTo \Delta(\pi_3) \RotTo \Delta(\pi_4)
    \RotTo \cdots$. So the effects of them sweep the cycle space clockwise. Then we are able to argue, as in
    \cite{DBLP:conf/mfcs/CzerwinskiLLP19}, that there are two cycles $\pi_i, \pi_{i+1}$ in this sequence such that
    $\Delta(\pi_i) \RotTo \Vec{c} \RotTo \Delta(\pi_{i+1})$. We are done by choosing them as $\pi_3$ and $\pi_4$.
\end{proof}

\subsection{Exponential Bounds of Reachability Witnesses}

\autoref{lem:prop-vass-thin-thick}, together with \autoref{lem:degenerate-thin}, shows that every run in a geometrically
2-dimensional VASS $G$ is either $A$-thin or $A$-thick for $A = \chi(G)^{O(d\cdot \varsigma(G))}$. For both cases we are
able to exhibit a reachability witness of length exponential in $A$. The proofs (in appendix) are similar to that in
\cite{DBLP:conf/mfcs/CzerwinskiLLP19}. So here we only state the results.

\begin{restatable}{lemma}{expBoundThinRuns}
    \label{lem:exp-bound-thin-runs}
    For any $\Vec{0}$-run $\tau$ in a $d$-VASS $G$ that is $A$-thin where $A \ge 2\chi(G)$, there exists a run $\rho$
    with the same source and target as $\tau$ such that $|\rho| \le A^{O(d^2)}$.
\end{restatable}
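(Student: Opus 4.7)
The plan is to reduce an $A$-thin $\Vec{0}$-run to a reachability question in a $1$-dimensional VASS whose state space absorbs the bounded ``residual'' part of each configuration, then invoke the polynomial-length witness bound for $1$-VASS reachability \cite{DBLP:conf/concur/HaaseKOW09}. This generalizes to $d$ dimensions the approach taken for $2$-VASS in \cite{DBLP:conf/mfcs/CzerwinskiLLP19}, where the ``long'' direction of a belt is quotiented out into a single counter and the orthogonal ``width'' is folded into the control state.

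First I would apply \autoref{lem:gen-beam-eq-beam} so that, without loss of generality, every configuration on $\tau$ lies in some beam $\mathcal{B}_{\Vec{v}, A}$ with $\Vec{v} \in \mathbb{N}^d$ and $\norm{\Vec{v}} \le A$; there are at most $(A+1)^d = A^{O(d)}$ such direction vectors in total. I would then decompose $\tau$ into maximal sub-runs each staying inside a single beam. The transit configurations between consecutive single-beam segments must lie in the bounded overlap region of two beams, hence have norm $O(A)$, and so form a set of $A^{O(d)}$ anchor points on which the shortened segments can later be stitched back together.

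For each single-beam segment in $\mathcal{B}_{\Vec{v}, A}$ I would construct an auxiliary $1$-VASS $G_{\Vec{v}}$ whose states are pairs $(q, \Vec{r})$ with $q \in Q$ and $\Vec{r} \in \mathbb{Z}^d$ satisfying $\norm{\Vec{r}} \le A$, giving $|Q|\cdot (2A+1)^d$ states. Its single counter represents a scalar $\alpha$ such that the original configuration is $\alpha \Vec{v} + \Vec{r}$. Each $G$-transition $(p, \Vec{a}, q)$ is simulated by a family of $G_{\Vec{v}}$-transitions that non-deterministically pick an integer shift $\delta$ of the counter, update the residual to $\Vec{r} + \Vec{a} - \delta \Vec{v}$, and require this residual to stay in range; the effect magnitudes of these transitions are $O(A)$. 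An $A$-thin sub-run inside this beam corresponds bijectively to a run of $G_{\Vec{v}}$. Invoking the polynomial-length witness bound for $1$-VASS reachability, each segment admits a replacement of length polynomial in $|G_{\Vec{v}}|$, i.e.\ $A^{O(d)}$. Concatenating across the $A^{O(d)}$ beam segments yields the claimed $A^{O(d^2)}$ total length.

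The main obstacle I expect is the beam-segment decomposition itself, because $A$-thinness does not canonically assign a single beam to each configuration---multiple direction vectors $\Vec{v}$ may simultaneously witness membership in some beam, and high-norm configurations can drift continuously between ``compatible'' beams. I would handle this by fixing a canonical witnessing direction (e.g., the lexicographically smallest valid $\Vec{v}$) for each configuration and bounding the number of direction changes along $\tau$ by routing each change through a low-norm anchor configuration, where the set of possible anchors has size $A^{O(d)}$. A secondary subtlety is that when $\Vec{v}$ has zero coordinates the decomposition $c = \alpha \Vec{v} + \Vec{r}$ is not unique; the non-determinism in the choice of $\delta$ during the simulation absorbs this freedom without breaking the residual bound, so the reduction still faithfully captures thin runs in $\mathcal{B}_{\Vec{v}, A}$.
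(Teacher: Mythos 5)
Your approach takes a genuinely different route from the paper's. The paper proves this lemma by a direct pumping argument (\autoref{thm:thin-run-compress}): it shows that any $A$-thin $\Vec{0}$-run visiting a sufficiently high configuration must lie, for a long stretch, inside a \emph{single} $A$-beam (\autoref{lem:bound-beam-cap} and \autoref{cor:far-thin-run-unique-beam}), slices that beam into translated cross-sections $B_0, B_1, \ldots$ of cardinality $(4A^2+1)^d$ each, and uses pigeonhole to find two cycles with opposite effects that can be cut out; iterating yields a run confined to norm $A^{O(d)}$, hence of length $A^{O(d^2)}$. Your plan instead folds each beam segment into a $1$-VASS and appeals to the $1$-VASS short-witness bound. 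This is a sensible alternative strategy (indeed, if it went through it would give the tighter bound $A^{O(d)}$ rather than $A^{O(d^2)}$), and your segment-stitching idea through low-norm anchors is morally the same observation the paper makes via \autoref{cor:far-thin-run-unique-beam}.

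However, there is a concrete soundness gap in the $1$-VASS construction as written. You define $G_{\Vec{v}}$ with states $(q,\Vec{r})$, $\Vec{r}\in\mathbb{Z}^d$, $\norm{\Vec{r}}\le A$, counter value $\alpha$, invariant $c = \alpha\Vec{v}+\Vec{r}$. The $1$-VASS only enforces $\alpha\ge 0$, but the constraint you actually need to preserve is $c = \alpha\Vec{v}+\Vec{r}\ge\Vec{0}$, and these are not equivalent once $\Vec{r}$ is allowed negative coordinates. Concretely, with $d=2$, $\Vec{v}=(1,1)$, $A=1$: the $1$-VASS state $(q,(1,-1))$ with counter $0$ satisfies $\alpha\ge 0$ and $\norm{\Vec{r}}\le A$, yet the configuration it encodes is $(1,-1)\notin\mathbb{N}^2$. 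Hence $G_{\Vec{v}}$ has spurious runs whose translations back into $G$ leave $\mathbb{N}^d$, and the short witness you obtain from the $1$-VASS bound need not translate to a legal run of $G$. Your remark about non-determinism in $\delta$ only addresses the forward (completeness) direction of the simulation, not this backward (soundness) direction. The issue is fixable — e.g., normalize the decomposition by taking $\alpha := \min_{i:\Vec{v}(i)>0}\lfloor c(i)/\Vec{v}(i)\rfloor$, which forces $\Vec{r}\ge\Vec{0}$ at the cost of relaxing the residual bound to $O(A^2)$, after which $\alpha\ge 0$ and $\Vec{r}\ge\Vec{0}$ jointly imply $c\ge\Vec{0}$; alternatively absorb the regime $\alpha < A$ entirely into the control state, since there $\norm{c} = O(A^2)$ — but some such device is essential, and without it the reduction is unsound. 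A secondary point: to make the beam-segment decomposition rigorous you will need essentially \autoref{lem:bound-beam-cap} and \autoref{cor:far-thin-run-unique-beam} from the paper, which show that away from a radius-$O(A^3)$ ball the witnessing $A$-beam is unique, so that direction changes really do occur only at low-norm anchors.
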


For thick runs, recall that they only occur in proper VASSes.

\begin{restatable}{lemma}{expBoundThickRuns}
    \label{lem:thick-run-compress}
    Let $G$ be a proper geometrically 2-dimensional $d$-VASS with $\Supp(\CycleSpace(G)) = [d]$. For any $\Vec{0}$-run
    $\tau$ in $G$ that is $A$-thick for some $A \ge \chi(G)$, there is a run $\rho$ with the same source and target as
    $\tau$ such that $|\rho| \le A^{O(d^3)}$.
\end{restatable}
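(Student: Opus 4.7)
My plan is to follow the pumping-and-rebuild strategy that Czerwi\'nski, Lasota, and Lazi\'c used for 2-VASS in \cite{DBLP:conf/mfcs/CzerwinskiLLP19}, adapted to the two-dimensional cycle space of $G$ via the canonical coordinate system from \autoref{lem:canonical-vecs}. The $A$-thick structure provides a factorization $\tau = \rho\rho'$ together with four cycles $\pi_1, \pi_2, \pi_3, \pi_4$ of length at most $A$ $B$-sequentially enabled in $\rho$ and four cycles $\pi_1', \pi_2', \pi_3', \pi_4'$ $B$-sequentially enabled in $\mbox{rev}(\rho')$, such that the two sequential cones $C := \SeqCone(\Delta(\pi_1), \ldots, \Delta(\pi_4))$ and $C' := \SeqCone(\Delta(\pi_1'), \ldots, \Delta(\pi_4'))$ intersect in a non-trivial region. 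By \autoref{lem:seq-cone-of-cycles-eq-fin-gen-cone}, each of these cones is a two-generator cone whose generators have norm at most $\chi(G)^{O(1)}\cdot A$.

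The compressed run $\rho$ will be obtained by first extracting a \emph{skeleton} of $\tau$: in each piece of the factorizations $\rho = \rho_1\rho_2\rho_3\rho_4\rho_5$ and of $\rho'$, I strip away all simple sub-cycles, leaving only acyclic backbones whose combined length is at most $O(\varsigma(G))$ transitions. Into this skeleton, at the respective insertion points prescribed by the $A$-sequential enabling, we plug $\alpha_j$ copies of each $\pi_j$ and $\alpha_j'$ copies of each $\pi_j'$. Since the overall effect of $\tau$ is $\Vec{0}$ and the backbone has known effect, choosing the multiplicities amounts to solving a linear Diophantine system in $\Vec{\alpha}, \Vec{\alpha}'$ requiring $\sum_j \alpha_j \Delta(\pi_j)$ and $-\sum_j \alpha_j' \Delta(\pi_j')$ (with reverse effects negated) to meet at a common point $\Vec{x}$ in $C \cap C'$ that balances the skeleton effect. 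The non-triviality of $C\cap C'$ and standard Pottier-type bounds on integer solutions to such systems allow us to choose all $\alpha_j, \alpha_j' \le A^{O(d^2)}$.

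The main obstacle will be proving that this pumped template actually fires as a legal run, i.e., that every intermediate counter value stays non-negative. For the two coordinates in the sign-reflecting projection $I = \{i_1, i_2\}$, non-negativity is a direct consequence of the sequential cone condition, as in \cite{DBLP:conf/mfcs/CzerwinskiLLP19}. For the remaining $d - 2$ coordinates I will invoke \autoref{lem:spp-component-bound}: any vector $\Vec{v} \in \CycleSpace(G) \cap \mathbb{Q}_{\ge 0}^d$ has every component bounded by $O(\chi(G)^4)$ times $|\Vec{v}(i_1)| + |\Vec{v}(i_2)|$, so bounded displacement in $I$ translates to bounded displacement everywhere, and the $S$-enabled conditions together with the per-coordinate $A$-bounds along $\rho_2$ cover the case where $\Delta(\pi_1)$ is only semi-positive but not positive. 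Combining the $O(\varsigma(G))$ backbone transitions, eight cycles of length at most $A$ pumped at most $A^{O(d^2)}$ times each, and the $A^{O(d)}$ factor from translating $I$-bounds to full-support bounds, yields a run of length $A^{O(d^3)}$, as required.
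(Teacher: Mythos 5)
The high-level plan is the same as the paper's (compress the run around the eight pumping cycles, then reconstruct with multiplicities chosen to close a Diophantine system), but two of the steps you sketch are not only under-detailed but would actually fail as stated.

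\textbf{Stripping all cycles is too aggressive.} You propose to remove \emph{every} simple sub-cycle from $\rho_1,\dots,\rho_5$, producing acyclic backbones of length $O(\varsigma(G))$. The paper does something considerably more restrictive: it only removes cycles whose effect lies on the grid $m\mathbb{Z}\Vec{u}_1 + m\mathbb{Z}\Vec{u}_2$, and for $\rho_2$ (when $\Delta(\pi_1)|_I$ is not positive, say $\Delta(\pi_1)(i_1)=0$) this restriction together with the bounded-$i_1$ condition forces all removed cycles to have effect parallel to $\Vec{u}_2$. The restriction is load-bearing. If you strip an arbitrary cycle $\theta$ from $\rho_2$ with $\Delta(\theta)(i_1)\neq 0$, then every later configuration in the backbone is shifted by $-\Delta(\theta)$, which changes the $i_1$-coordinate. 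But the only pumping tool you have before $\rho_2$ is $\pi_1$, whose effect has $i_1$-component $0$, so no amount of pumping $\pi_1^{a_1}$ can restore the lost $i_1$-coordinate. Concretely, the minimal vector $\Vec{f}_2$ needed to re-enable your backbone $\tilde\rho_2$ may have a positive $i_1$-component, and then the first inequality $a_1 M \Vec{v}_1 \ge \max(\Vec{e}_2,\Vec{f}_2)$ of the modified pumping system (which the paper sets up precisely for this purpose, and which you do not mention) is unsatisfiable for any $a_1$. The paper's grid-restricted removal guarantees $\Supp(\Vec{f}_2)\subseteq\Supp(\Delta(\pi_1))$, which is what makes the system feasible. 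As a side effect, the paper's compressed segments have length $\varsigma(G)\cdot M^2 = A^{O(d)}$, not $O(\varsigma(G))$; the extra length buys solvability.

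\textbf{Solvability of the Diophantine system is not argued.} You appeal to ``non-triviality of $C\cap C'$ and standard Pottier-type bounds'' to produce multiplicities with $\alpha_j,\alpha_j'\le A^{O(d^2)}$, but Pottier-type bounds only bound \emph{a} solution \emph{if one exists}; they do not produce one. Existence is the crux. The paper establishes it via \autoref{prop:pump-sys-shift-bound}: using a positive vector in $\SeqCone(\Delta(\pi_1),\Delta(\pi_2))$, the non-triviality of $C\cap C'$, and the canonical basis $\Vec{u}_1,\Vec{u}_2$, it shows the pumping system admits shifts $(\pm m,0)$ and $(0,\pm m)$ for a single $m\le A^{O(d)}$. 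Only \emph{because} the compression step leaves a deficit that is an integer multiple of $m$ in each of the two $I$-coordinates can that deficit be written as a non-negative combination of the four shifts and absorbed into the multiplicities. With your unrestricted cycle removal the deficit need not be on the $m$-grid at all, and the argument has no reason to produce an integral solution. Your invocation of \autoref{lem:spp-component-bound} is also slightly off target: the relevant fact for reducing non-negativity to the two $I$-coordinates is that $I$ is sign-reflecting (so $\Vec{w}\in\CycleSpace(G)$ and $\Vec{w}|_I\ge\Vec{0}$ forces $\Vec{w}\ge\Vec{0}$), not the component \emph{bound} lemma, which presupposes $\Vec{w}\ge\Vec{0}$; and it controls only the displacement contributed by cycle effects, not the dips inside backbone segments, which is exactly what the $\Vec{f}_i$-vectors and the modified systems $\tilde{\mathcal{U}},\tilde{\mathcal{U}}'$ are there to handle.
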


With these bounds we can finish the proof of our major goal \autoref{thm:exp-bound-full-supp}.

\begin{proof}[Proof of \autoref{thm:exp-bound-full-supp}]
    By \autoref{lem:degenerate-thin} and \autoref{lem:prop-vass-thin-thick}, any $\Vec{0}$-run $\tau$ in $G$ is either
    $A$-thin or $A$-thick for some $A = \chi(G)^{O(d\cdot \varsigma(G))}$. We can apply
    \autoref{lem:exp-bound-thin-runs} or \autoref{lem:thick-run-compress} to transform $\tau$ into a run $\rho$ with the
    same source and target so that $|\rho|\le A^{O(d^3)} \le \chi(G)^{O(d^4\cdot \varsigma(G))}$.
\end{proof}

\section{A Note on Geometrically 2-Dimensional 3-VASS}
\label{sec:geo-2d-3vass}

The projection techniques used in \autoref{sec:geo-2d} did not provide a straightforward reduction from geometrically
2-dimensional $d$-VASS to 2-VASS. In this section we will show that such a reduction is indeed possible for $d = 3$. We
shall also mention some issues one will face when trying to generalize this reduction for $d > 3$.

In this section we care about the unary-encoding size of VASSes. Let $G = (Q, T)$ be a $d$-VASS, its
\emph{unary-encoding size} is defined as $|G|_1 := |Q| + d\cdot |T| \cdot \norm{T} + 1$. The reduction is stated as the
following lemma, from which one immediately gets the \PSPACE{} upper bound for reachability in geometrically
2-dimensinoal 3-VASS even under binary encoding.

\begin{lemma}
    \label{lem:geo-2d-3-vass-reduction}
    Given a geometrically $2$-dimensinoal 3-VASS $G$ with 2 states $p, q$, one can compute in time polynomial in $|G|_1$
    a 2-VASS $\overline{G}$ with 2 states $\overline{p}, \overline{q}$ satisfying $|\overline{G}|_1 \le |G|_1^{O(1)}$,
    such that the following statements are equivalent:
    \begin{itemize}
        \item there exists a run from $p(\Vec{0})$ to $q(\Vec{0})$ of length $\ell$ in $G$;
        \item there exists a run from $\overline{p}(\Vec{0})$ to $\overline{q}(\Vec{0})$ of length $3\ell$ in
        $\overline{G}$.
    \end{itemize}
\end{lemma}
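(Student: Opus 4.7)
The plan is to construct $\overline{G}$ by chaining the support projection of \autoref{sec:supp-proj} with the sign-reflecting projection of \autoref{sec:srp}, and then encoding the single ``hidden'' coordinate of $G$ into the finite state of a 2-VASS.

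First I would apply the support projection to $G$, obtaining $G^S$ of dimension $|S| = |\Supp(\CycleSpace(G))|$. Under unary encoding one has $\chi(G) \le |G|_1$, so $G^S$ has size polynomial in $|G|_1$ and preserves $\Vec{0}$-reachability by an argument parallel to \autoref{prop:sp-preservation} combined with the lifting used for \autoref{thm:exp-bound}. If $|S| \le 2$ then $G^S$ is already a 2-VASS, and padding each original transition with two idle bookkeeping steps gives length $3\ell$, finishing the lemma. So assume $|S| = 3$. If $G$ is moreover degenerate, \autoref{lem:degenerate-thin} confines every $\Vec{0}$-run to a $\chi(G)^{O(1)}$-wide beam around a single integer direction; configurations then live inside a polynomial-width slab, and $\overline{G}$ is built routinely by tracking scalar progress along the beam on one counter and the bounded orthogonal drift in the state.

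The main case is when $G$ is proper. I would invoke \autoref{thm:2d-proj} to obtain a sign-reflecting projection $I = \{i_1, i_2\}$ with hidden coordinate $j \in [3] \setminus I$, and \autoref{lem:canonical-vecs} to produce canonical vectors $\Vec{u}_1 = k\Vec{e}_{i_1} + m_1 \Vec{e}_j$ and $\Vec{u}_2 = k\Vec{e}_{i_2} + m_2 \Vec{e}_j$ with positive $k$ and non-negative $m_1, m_2$, all bounded by $2\chi(G)^2$ (at least one of $m_1, m_2$ being positive since the support is full). This yields the single identity $k \Vec{c}(j) = m_1 \Vec{c}(i_1) + m_2 \Vec{c}(i_2)$ for every $\Vec{c} \in \CycleSpace(G)$; combined with \autoref{thm:geometry-of-runs}, the residue $\phi := k z - m_1 x - m_2 y$ stays within a range $[-R, R]$ of radius $R = O(\chi(G)^3)$ along every $\Vec{0}$-run in $G$. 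The 2-VASS $\overline{G}$ then uses its two counters to store the $i_1$- and $i_2$-coordinates directly, and records a $G$-state together with $\phi \in [-R, R]$ in its finite state. Each transition $t$ of $G$ is simulated by three $\overline{G}$-transitions: the first applies the $I$-effect of $t$ on the counters and updates $\phi$ in the state; the other two form a charge/discharge loop that certifies $m_1 x' + m_2 y' \ge \max(0, -\phi')$ by subtracting and then re-adding a nondeterministically chosen $(a, b) \in \mathbb{N}^2$ with $m_1 a + m_2 b = \max(0, -\phi')$.

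The hard part will be showing that this charge/discharge loop exactly captures non-negativity of the hidden $z$-coordinate. The forward direction reduces to a Frobenius-style representability question: whenever $z'$ genuinely remains non-negative in $G$, one must exhibit $(a, b) \in \mathbb{N}^2$ with $m_1 a + m_2 b$ equal to the target and $a \le x'$, $b \le y'$. Since $R$ is polynomial in $|G|_1$ and the Frobenius number of $m_1, m_2$ is at most $m_1 m_2 = O(\chi(G)^4)$ (also polynomial), the finitely many non-representable targets would be tracked as an additional residue tag in the finite state, with a specialized direct simulation for those exceptional cases. The backward direction is immediate: any successful charge certifies $m_1 x' + m_2 y' \ge \max(0, -\phi')$, forcing $z' \ge 0$ in $G$. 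The factor of $3$ in the length is built in by construction, and the state blowup is polynomial in $|G|_1$ as promised. This approach breaks for $d > 3$ because one would then need $d - 2$ independent charge/discharge loops for several hidden coordinates whose combined non-negativity constraints cannot in general be compressed onto just two counters.
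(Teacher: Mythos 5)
Your overall blueprint---track which affine slice of $\CycleSpace(G)$ the configuration lies in via a bounded residue stored in the control state, keep two coordinates on the counters, and verify non-negativity of the hidden third coordinate with a charge/discharge gadget---is the same as the paper's. The detour through support projection and the proper/degenerate split is unnecessary (the paper works directly with the integer normal $\Vec{n}$ of the cycle plane, bounds $|\Inner{\Vec{n},\Vec{w}}|$ on the reachable set, and splits only on the sign pattern of $\Vec{n}$), but that is a matter of economy, not correctness.

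There is, however, a genuine gap in your charge/discharge mechanism, and your Frobenius-number caveat does not cover it. You nondeterministically pick $(a,b)\in\mathbb{N}^2$ with $m_1 a + m_2 b = D$ (where $D := \max(0,-\phi')$) and certify $(x',y') \ge (a,b)$ by subtracting and re-adding. This is sound but incomplete: $m_1 x' + m_2 y' \ge D$ does not imply the existence of an \emph{exact} representation of $D$ that is coordinatewise dominated by $(x',y')$, even when $D$ is representable. Take $m_1 = 2$, $m_2 = 3$, $D = 4$, $(x',y') = (0,2)$: then $m_1 x' + m_2 y' = 6 \ge 4$, so $z' \ge 0$ in $G$, yet the only non-negative representation $2a + 3b = 4$ is $(a,b) = (2,0)$, and $a = 2 > x' = 0$, so your loop is stuck and $\overline{G}$ wrongly rejects a reachable configuration. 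Since $D = 4$ is representable, this is not one of the ``non-representable targets'' you propose to patch with a residue tag, and it is unclear what specialized simulation would repair it in general. The paper avoids this by enumerating the polynomially many \emph{minimal} non-negative solutions $\mathcal{M}^d$ of the \emph{inequality} $ax + by \ge d$ and testing $(x',y') \ge \Vec{m}$ for some $\Vec{m} \in \mathcal{M}^d$; by Dickson's lemma this is an exact characterization of $ax' + by' \ge d$, and in the example $\mathcal{M}^4 = \{(2,0),(1,1),(0,2)\}$ contains $(0,2)$, so the test passes. Replacing your equality-based certificate with the paper's minimal-solution certificate repairs the construction; the rest of what you wrote then goes through.
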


\begin{corollary}
    \label{cor:pspace-geo-2d-3-vass}
    Reachability in geometrically $2$-dimensinoal 3-VASS is in \PSPACE{}.
\end{corollary}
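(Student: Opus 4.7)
My plan is to combine \autoref{lem:geo-2d-3-vass-reduction} with the known \PSPACE{} upper bound for reachability in 2-VASS under binary encoding \cite{DBLP:conf/lics/BlondinFGHM15}. First, I would invoke the folklore polynomial-time reduction from general reachability to the $\Vec{0}$-reachability problem, so it suffices to decide whether $p(\Vec{0}) \xrightarrow{*} q(\Vec{0})$ holds in $G$. By \autoref{lem:geo-2d-3-vass-reduction}, this is equivalent to deciding whether $\overline{p}(\Vec{0}) \xrightarrow{*} \overline{q}(\Vec{0})$ holds in the 2-VASS $\overline{G}$ provided by the reduction, whose unary size satisfies $|\overline{G}|_1 \le |G|_1^{O(1)}$.

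The subtlety is that the input $G$ is binary-encoded, so $|G|_1 \le 2^{O(|G|)}$, and hence $\overline{G}$ may be exponentially larger than $G$ in binary size. One therefore cannot construct $\overline{G}$ explicitly in polynomial space, nor invoke the \PSPACE{} algorithm for 2-VASS directly on the binary encoding of $\overline{G}$ (this would yield only \textsf{EXPSPACE} in $|G|$). Instead, I would exploit the known length bound: witnesses for reachability in 2-VASS have length polynomial in the unary size, so a witness in $\overline{G}$ has length at most $|\overline{G}|_1^{O(1)} \le |G|_1^{O(1)} \le 2^{O(|G|)}$, and counter values along the witness obey the same bound. In particular, any single configuration of $\overline{G}$ together with a step counter fits in $O(|G|)$ bits.

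The algorithm then nondeterministically guesses such a witness step by step, storing only the current configuration and the step counter. Each transition simulation requires computing a single edge of $\overline{G}$ on the fly from $G$, a local operation whose running time is polynomial in $|G|$ given the structural nature of the reduction (states of $\overline{G}$ are naturally indexed by states of $G$ together with bounded arithmetic data). This yields an \textsf{NPSPACE} procedure running in $O(|G|)$ space, which collapses to \PSPACE{} by Savitch's theorem.

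The main obstacle is not combinatorial but the complexity bookkeeping: one must justify that $\overline{G}$, although potentially exponentially large in the binary size of $G$, can be simulated edge-by-edge in polynomial space, and must combine this with the unary-polynomial length bound for 2-VASS witnesses in order to keep the simulation length tractable. Once these points are verified, the \PSPACE{} upper bound is immediate.
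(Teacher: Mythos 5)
Your proposal is correct and rests on the same key idea as the paper: transfer the unary-polynomial length bound of \cite{DBLP:journals/jacm/BlondinEFGHLMT21} for 2-VASS through \autoref{lem:geo-2d-3-vass-reduction} to obtain an exponential length bound on reachability witnesses, then search that space in \PSPACE{}. The paper takes a slightly more economical route than you do: instead of simulating $\overline{G}$ edge-by-edge (which requires opening up the reduction to argue that single transitions of $\overline{G}$ can be produced on-the-fly in polynomial space, something you assert rather than derive from the construction), the paper exploits the fact that \autoref{lem:geo-2d-3-vass-reduction} is length-preserving in both directions, so the $|\overline{G}|_1^{O(1)}$ bound for $\overline{G}$ immediately yields a $|G|_1^{O(1)} \le 2^{O(|G|)}$ length bound on witnesses in $G$ itself. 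One then just enumerates runs of $G$ directly, with configurations staying within $2^{O(|G|)}$ and hence representable in $O(|G|)$ bits, so polynomial space suffices without ever materializing $\overline{G}$. Your version would also go through once the edge-by-edge simulation claim is verified against the explicit construction of $\overline{G}$, but using the length correspondence to return to $G$ avoids that extra verification.
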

\begin{proof}
    By \cite[Theorem 3.2]{DBLP:journals/jacm/BlondinEFGHLMT21}, in a 2-VASS $\overline{G}$, length of the shortest
    reachability witness is bounded by $|\overline{G}|_1^{O(1)}$. Together with \autoref{lem:geo-2d-3-vass-reduction}
    this shows that length of the shortest reachability witness in a geometrically $2$-dimensinoal 3-VASS $G$ is also
    bounded by $|G|_1^{O(1)}$, which is exponential in the binary-encoding size $|G|$. Polynomial space is enough for
    enumerating every run of length at most $|G|_1^{O(1)}$.
\end{proof}

The following is devoted to \autoref{lem:geo-2d-3-vass-reduction}. Fix a geometrically $2$-dimensinoal 3-VASS $G = (Q,
T)$ with 2 states $p, q$. We can assume that $\gdim(G) = 2$, by adding isolated loops to $G$ when necessary. Then there
exists a normal vector $\Vec{n} \in \mathbb{Z}^3$ such that $\CycleSpace(G) = \{\Vec{c} \in \mathbb{Q}^3 :
\Inner{\Vec{n}, \Vec{c}} = 0\}$. As $\CycleSpace(G)$ is spanned by effects of two simple cycles, whose norms are bounded
by $\chi(G) < |G|_1^{O(1)}$, applying Cramer's rule we can assume $\norm{\Vec{n}} \le |G|_1^{O(1)}$. Let $d \in
\mathbb{Z}$, we define the set $C^d := \{\Vec{u} \in \mathbb{N}^3 : \Inner{\Vec{n}, \Vec{u}} = d\}$ as an affine copy of
$\CycleSpace(G)$. The following proposition shows that any run from $p(\Vec{0})$ moves within a limited number of
$C^d$'s.

\begin{proposition}
    \label{prop:run-in-Cd}
    There exists a number $B \le |G|_1^{O(1)}$ such that for any configuration $s(\Vec{w})$ reachable from $p(\Vec{0})$,
    $\Vec{w} \in C^d$ for some $d$ with $|d| \le B$.
\end{proposition}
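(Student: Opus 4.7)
The plan is to combine the geometric characterization of reachable sets from \autoref{thm:reach-eq-union-affine-cycspace} with the fact that $\CycleSpace(G)$ is exactly the kernel of the linear functional $\Vec{c} \mapsto \Inner{\Vec{n}, \Vec{c}}$. The quantity $\Inner{\Vec{n}, \Vec{w}}$ measures how far the reached vector $\Vec{w}$ departs from $\CycleSpace(G)$, and the lemma controls this departure up to an additive error of norm at most $\chi(G)$.

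First I would apply \autoref{thm:reach-eq-union-affine-cycspace} with source $p(\Vec{0})$: for any reachable $s(\Vec{w})$ we obtain a decomposition $\Vec{w} = \Vec{c} + \Vec{z}$ with $\Vec{c} \in \CycleSpace(G)$ and $\Vec{z} \in \mathbb{Z}^3$ satisfying $\norm{\Vec{z}} \le \chi(G)$. Second, I would compute
\begin{equation}
    \Inner{\Vec{n}, \Vec{w}} = \Inner{\Vec{n}, \Vec{c}} + \Inner{\Vec{n}, \Vec{z}} = \Inner{\Vec{n}, \Vec{z}},
\end{equation}
where the last equality is exactly the defining property $\Inner{\Vec{n}, \Vec{c}} = 0$ of vectors in the cycle space. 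By Cauchy--Schwarz (or just the crude component-wise bound), $|\Inner{\Vec{n}, \Vec{z}}| \le 3 \cdot \norm{\Vec{n}} \cdot \norm{\Vec{z}} \le 3 \cdot \norm{\Vec{n}} \cdot \chi(G)$, so $\Vec{w} \in C^d$ for $d = \Inner{\Vec{n}, \Vec{w}}$ with $|d| \le 3 \cdot \norm{\Vec{n}} \cdot \chi(G)$.

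Third, I would invoke the polynomial bounds on the two factors: the paragraph preceding the proposition already established $\norm{\Vec{n}} \le |G|_1^{O(1)}$ via Cramer's rule applied to two simple-cycle effects spanning $\CycleSpace(G)$, and $\chi(G) = \varsigma(G) \cdot \norm{T} \le |G|_1^{O(1)}$ directly from the unary-encoding size. Setting $B := 3 \cdot \norm{\Vec{n}} \cdot \chi(G)$ yields $B \le |G|_1^{O(1)}$ as required.

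There is no real obstacle here; the proposition is essentially a one-line consequence of \autoref{thm:reach-eq-union-affine-cycspace} once one observes that $\Vec{n}$ annihilates $\CycleSpace(G)$. The only care needed is to keep track of the polynomial dependence, which has already been arranged by the choice of $\Vec{n}$ with polynomially bounded norm.
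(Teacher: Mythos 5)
Your proof is correct and matches the paper's own argument essentially verbatim: both apply \autoref{thm:reach-eq-union-affine-cycspace} to decompose $\Vec{w} = \Vec{c} + \Vec{z}$, observe $\Inner{\Vec{n}, \Vec{c}} = 0$, bound $|\Inner{\Vec{n}, \Vec{z}}| \le 3\norm{\Vec{n}}\chi(G)$, and set $B := 3\chi(G)\norm{\Vec{n}}$.
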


\begin{proof}
    By \autoref{thm:reach-eq-union-affine-cycspace} we have $\Vec{w} = \Vec{c} + \Vec{z}$ for some $\Vec{c} \in
    \CycleSpace(G)$ and $\norm{\Vec{z}} \le \chi(G)$. So $|d| = |\Inner{\Vec{n}, \Vec{w}}| = |\Inner{\Vec{n}, \Vec{c}} +
    \Inner{\Vec{n}, \Vec{z}}| = |\Inner{\Vec{n}, \Vec{z}}|$. We can pick $B := 3\chi(G) \cdot \norm{\Vec{n}} \le
    |G|_1^{O(1)}$.
\end{proof}

\begin{proof}[Proof of \autoref{lem:geo-2d-3-vass-reduction}]
    Let $B$ be the number given in \autoref{prop:run-in-Cd}. We consider 2 cases according to the sign of $\Vec{n}$.

    \textbf{Case 1. $\Vec{n} \ge \Vec{0}$ (or $\Vec{n} \le \Vec{0}$).} By \autoref{prop:run-in-Cd}, every configuration
    $s(\Vec{w})$ reachable from $p(\Vec{0})$ satisfies $|\Inner{\Vec{n}, \Vec{w}}| \le B$. We claim that
    $\norm{\Vec{w}|_{\Supp(\Vec{n})}} \le B$. Otherwise, as $\Vec{n}$ is an integer vector, $\Vec{n}(i) \ge 1$ for all
    $i \in \Supp(\Vec{n})$, then $\norm{\Vec{w}|_{\Supp(\Vec{n})}} > B$ implies $|\Inner{\Vec{n}, \Vec{w}}| > B$. Then
    we can encode all possible values of the coordinates in $\Supp(\Vec{n})$ using the states of the VASS. As $\Vec{n}
    \ne \Vec{0}$, $\Supp(\Vec{n})$ contains at least one coordinate, so the resulting VASS is at most 2-dimensinoal. And
    the size is blown up by a factor of at most $(2B + 1)^3 \le |G|_1^{O(1)}$.

    \textbf{Case 2. $\Vec{n}$ contains both positive and negative components.} By negating $\Vec{n}$ we can assume there
    is only one negative component in $\Vec{n}$. Assume w.l.o.g.\ $\Vec{n}(3) < 0$ and $\Vec{n}(1), \Vec{n}(2) \ge 0$.
    Let $\Vec{n} =: (a, b, -c)$ for some $a, b, c \in \mathbb{N}$, then we have $C^d = \{(x, y, z) \in \mathbb{N}^3 : ax
    + by = cz + d\}$. We construct a 2-VASS $\overline{G}$ that projects $G$ onto the first 2 coordinates. Let
    $\overline{G} = (\overline{Q}, \overline{T})$ be defined as follows. The state set $\overline{Q}$ contains $\{s^d :
    s \in Q, d\in \mathbb{Z}, |d| \le B\} \cup \{\overline{s^d} : s \in Q, d\in \mathbb{Z}, |d| \le B\}$. Intuitively a
    configuration $s^d(x, y)$ in $\overline{G}$ represents the configuration $s(x, y, z)$ in $G$ such that $(x, y, z)
    \in C^d$, and the barred version marks an `unchecked' state, where $z = (ax + by - d)/c$ could be negative. So for
    each transition $t = (r, \Vec{a}, s)$ of $G$, we add the transitions $t^d = (r^d, \Vec{a}|_{\{1, 2\}},
    \overline{s^{d'}})$ to $\overline{G}$ for all $|d|, |d'| \le B$ and $d' = d + \Inner{\Vec{n}, \Vec{a}}$. Indeed, if
    $r(\Vec{u}) \xrightarrow{t} s(\Vec{v})$ in $G$ and $\Vec{u} \in C^d$, $\Vec{v} \in C^{d'}$, then $d' =
    \Inner{\Vec{n}, \Vec{v}} = \Inner{\Vec{n}, \Vec{u} + \Delta(t)} = d + \Inner{\Vec{n}, \Delta(t)}$. This justifies
    the correctness of the transitions we added to $\overline{G}$.

    Next, we add to $\overline{G}$ transitions that checks, at each state $\overline{s^d}$, if the configuration
    $\overline{s^d}(x, y)$ represents a legal configuration $s(x, y, z)$ in $G$, where $(x, y, z) \in C^d$, so $cz = ax
    + by - d$. As $c > 0$, this is equivalent to checking if $ax + by - d \ge 0$. Recall that $a, b \ge 0$, one can
    verify that for $(x, y) \ge \Vec{0}$, we have $ax + by - d \ge 0$ if and only if $(x, y) \ge \Vec{m} \text{ for some
    } \Vec{m} \in \mathcal{M}^d$, where $\mathcal{M}^d$ is the set of minimal solutions given by 
    \begin{equation}
        \mathcal{M}^d := \begin{cases}
            \{(0, 0)\} & \text{if } d \le 0;\\
            \{(0, \lceil d / b \rceil)\} & \text{if } a = 0, b > 0, \text{ and } d > 0;\\
            \{(\lceil d / a \rceil, 0)\} & \text{if } a > 0, b = 0, \text{ and } d > 0;\\
            \bigl\{(x, \lceil (d - ax) / b \rceil) : x = 0, 1, \ldots, \lceil d/a \rceil\bigr\} & \text{if } a > 0, b > 0, \text{ and } d > 0.
        \end{cases}
    \end{equation}
    Now for each number $d$ with $|d| \le B$, we add the path $\overline{s^d} \xrightarrow{-\Vec{m}} \bullet
    \xrightarrow{\Vec{m}} {s^d}$ to $\overline{G}$ for each $\Vec{m} \in \mathcal{M}^d$, where $\bullet$ is an arbitrary
    but unique state added to $\overline{G}$. One can easily verify that $\overline{s^d}(x, y) \xrightarrow{*} s^d(x,
    y)$ if and only if $z = (ax + by - d) / c \ge 0$. Therefore, $p(\Vec{0}) \xrightarrow{\pi} q(\Vec{0})$ in $G$ for a
    path $\pi$ of length $\ell$ if and only if $p^0(\Vec{0}) \xrightarrow{\overline{\pi}} q^0(\Vec{0})$ in
    $\overline{G}$ for a path $\overline{\pi}$ of length $3\ell$. Also note that $|\mathcal{M}^d| \le B \le
    |G|_1^{O(1)}$, so we have $|\overline{G}|_1 \le |G|_1^{O(1)}$. This completes the proof.
\end{proof}

Finally, we mention some issues when trying to generalize for higher dimensions.
\begin{enumerate}
    \item If $d$ is not fixed as a parameter, one can easily construct an acyclic (geometrically 0-dimensinoal) VASS so
    that each point in the Hamming cube $\{0, 1\}^d$ is reachable from $\Vec{0}$. Then the reduced VASS must handle an
    exponential number of affine copies of $\CycleSpace(G)$, which suggests an exponential blow-up in size.
    \item For $d = 3$ the reduced 2-VASS only need to check inequalities of the form $\Inner{\Vec{a}, \Vec{x}} \ge b$
    for $\Vec{a} \ge \Vec{0}$. For higher dimensions we no longer have the assumption $\Vec{a} \ge \Vec{0}$. And we have
    no idea how to make a VASS check inequalities such as $x - y \ge 3$.
\end{enumerate}

\section{Geometrically 1-Dimensional and 0-Dimensional VASS}
\label{sec:geo-1d}

We have shown that the reachability problem in geometrically 2-dimensional VASS is \PSPACE{}-complete, which is same as the
complexity of reachability in 2-VASS. The situation becomes different for lower geometric dimensions. In this section we
study the reachability problems for geometrically 1-dimensional and 0-dimensional VASSes. Indeed, these results turn out
to be just rephrase of existing results for 2-VASS and 1-VASS.

\begin{theorem}
    Reachability in geometrically 1-dimensional VASS is \PSPACE{}-complete.
\end{theorem}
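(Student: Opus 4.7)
The plan is to derive both bounds essentially for free from existing work. For the \PSPACE{} upper bound, I would observe that every geometrically 1-dimensional VASS is in particular geometrically 2-dimensional, since $\gdim(G)\le 1\le 2$ places it within the scope of \autoref{thm:geo-2d-vass-reach-pspace}. Thus the exponential witness-length bound of \autoref{thm:exp-bound} applies verbatim, and a nondeterministic polynomial-space algorithm that, after reducing to $\Vec{0}$-reachability, guesses and verifies such a short run establishes membership in \PSPACE{}. No adaptation of the thin-thick machinery is needed here, since the degenerate case of \autoref{lem:degenerate-thin} (where $\gdim<2$) already covers this sub-case directly.

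For the matching lower bound, I would re-examine the existing \PSPACE{}-hardness proofs for reachability in 2-VASS, in particular \cite[Lemma 20]{DBLP:conf/lics/BlondinFGHM15} (or equivalently \cite{DBLP:journals/iandc/FearnleyJ15}), the same result invoked inside the proof of \autoref{thm:geo-2d-vass-reach-pspace}. The crucial observation is that the 2-VASS instances produced by those reductions simulate a single exponentially bounded counter $c$ using two VASS counters $x,y$ subject to the invariant $x+y=N$: increments and decrements of $c$ move a unit between $x$ and $y$, and the remaining transitions used in the simulation likewise preserve this invariant. Consequently every transition effect, and hence every cycle effect, lies in $\Span\{(1,-1)\}$, so $\gdim\le 1$. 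The constructed hard instances are therefore already geometrically 1-dimensional and the \PSPACE{}-hardness transfers without modification.

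The main obstacle will be the careful gadget-by-gadget inspection of the cited hardness construction to confirm that no transition breaks the $x+y=N$ invariant, including the initialization phase that builds up $N$ and any acceptance gadgets. Should a sub-gadget in the reference construction fail this check, the fallback is to replace it by an equivalent gadget whose transitions route slack along the line $x+y=\mathrm{const}$; this should be routine because $N$ itself can be built up polynomially by a chain of doubling moves that already respect the invariant. This plan matches the paper's remark that the geometric-dimension-1 and -0 results are obtained by \emph{re-examination} of known results for VASS of dimension 1 and 2, rather than by developing new machinery.
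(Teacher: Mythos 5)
Your proposal is correct and matches the paper's proof essentially verbatim: the upper bound is inherited from \autoref{thm:geo-2d-vass-reach-pspace} since $\gdim \le 1$ implies $\gdim \le 2$, and the lower bound comes from observing that the 2-VASS produced by \cite[Lemma~20]{DBLP:conf/lics/BlondinFGHM15} has every transition effect of the form $(z,-z)$, so its cycle space lies in $\Span\{(1,-1)\}$ and the instance is already geometrically 1-dimensional. The paper likewise just remarks this invariant holds for the cited construction without a gadget-by-gadget re-derivation, so your fallback plan is unnecessary but harmless.
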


\begin{proof}
    The upper bound is implied by \autoref{thm:geo-2d-vass-reach-pspace}. For the lower bound, we refer the readers to
    the reduction from bounded one-counter automata to 2-VASS \cite[Lemma 20]{DBLP:conf/lics/BlondinFGHM15}. We remark
    that as the reduced VASS uses 2 counters to simulate a bounded counter, every transition have effect of the form
    $(z, -z)$ for some $z \in \mathbb{Z}$. Thus, the cycle space of the reduced VASS is contained in $\Span\{(1, -1)\}$.
    So this reduction indeed establishes the \PSPACE{}-hardness of reachability in geometrically 1-dimensional VASS.
\end{proof}

\begin{theorem}
    Reachability in geometrically 0-dimensional VASS is \NP{}-complete.
\end{theorem}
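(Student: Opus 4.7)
My plan is to establish both the \NP{} upper bound and \NP{}-hardness, exploiting the crucial structural consequence of $\gdim(G) = 0$, namely $\CycleSpace(G) = \{\Vec{0}\}$. Combined with \autoref{thm:reach-eq-union-affine-cycspace}, this implies that every reachable configuration from $p(\Vec{0})$ has norm at most $\chi(G)$, so the whole problem lives inside a bounded (though exponential) state space. This is the bridge to the known techniques for 1-VASS reachability, where \NP{}-completeness is already established \cite{DBLP:conf/concur/HaaseKOW09}.

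For the \NP{}-hardness, I would reduce from \textsc{SubsetSum}. Given $a_1, \dots, a_n \in \mathbb{N}$ and target $T \in \mathbb{N}$ in binary, build a $1$-dimensional VASS on states $s_0, s_1, \dots, s_{n+1}$ with transitions $s_{i-1} \xrightarrow{a_i} s_i$ (take) and $s_{i-1} \xrightarrow{0} s_i$ (skip) for each $i \in [n]$, together with a final transition $s_n \xrightarrow{-T} s_{n+1}$. The underlying graph is a DAG and hence contains no cycles at all, so $\CycleSpace = \{\Vec{0}\}$ and $\gdim = 0$ trivially. Then $s_0(\Vec{0}) \xrightarrow{*} s_{n+1}(\Vec{0})$ holds if and only if some subset of the $a_i$ sums exactly to $T$. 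Alternatively, one may inspect the \NP{}-hardness construction of \cite{DBLP:conf/concur/HaaseKOW09} and observe that it already yields an acyclic, hence geometrically 0-dimensional, VASS.

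For the \NP{} upper bound, the certificate consists of a Parikh image $\mu : T \to \mathbb{N}$ with entries written in binary (bounded by the number of configurations $|Q|\cdot(\chi(G)+1)^d$, hence polynomial in binary encoding), plus succinct ordering information. The verifier checks: (i) $\sum_t \mu(t)\Delta(t) = \Vec{0}$ by linear algebra; (ii) the support of $\mu$ admits an Eulerian $p$-to-$q$ walk via the classical flow/connectedness conditions on the transition multigraph; and (iii) that walk can be scheduled to stay non-negative throughout. Steps (i) and (ii) are routine polynomial-time checks. The main obstacle is (iii), because the underlying run may have exponential length, so the certificate must encode a valid ordering implicitly. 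Here the zero-cycle property is essential: cycles can be freely inserted or removed without affecting endpoints, so executability reduces to whether each strongly connected component provides enough ``boost cycles'' to transiently lift counters above the thresholds demanded by subsequent transitions. I would adapt the 1-VASS argument of \cite{DBLP:conf/concur/HaaseKOW09} to the multi-dimensional geometrically 0-dimensional setting, using the bound $\chi(G)$ from \autoref{thm:reach-eq-union-affine-cycspace} to control the size of these boost cycles, which collapses the exponentially many ordering choices into a polynomial-size succinct description.
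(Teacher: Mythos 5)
Your lower bound is essentially the same as the paper's: a Subset-Sum reduction to an acyclic $1$-VASS. Your variant (adding the $-T$ transition to land on $\Vec{0}$) works fine. No issue there.

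Your upper bound, however, has a genuine gap. You correctly observe that $\gdim(G)=0$ forces $\CycleSpace(G)=\{\Vec{0}\}$, so every cycle in $G$ has effect $\Vec{0}$, and you even state that ``cycles can be freely inserted or removed without affecting endpoints.'' But you then stop short of the decisive consequence and instead set up an elaborate Parikh-image/Eulerian-walk/scheduling certificate, admitting that step (iii) --- verifying non-negativity of some exponential-length run --- would require adapting the Haase et al.\ argument with ``boost cycles.'' This is both unnecessary and misguided: since every cycle has effect exactly $\Vec{0}$, deleting a cycle from a run does not merely preserve endpoints, it leaves \emph{every} configuration after the deletion point identical to what it was before. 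So non-negativity is preserved automatically, and there is nothing to ``boost'' --- indeed boost cycles would themselves have effect $\Vec{0}$ and could not lift any counter. Iterating this deletion yields a run of length at most $|Q|$ with the same source and target as the original. An \NP{} machine simply guesses such a short run and verifies it directly; no Parikh image, no Eulerian walk, no scheduling analysis. This is exactly the paper's proof, which is four lines long. Your detour through bounded reachable norms via \autoref{thm:reach-eq-union-affine-cycspace} is true but beside the point, and the certificate scheme you sketch is both overkill and incomplete as written.
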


\begin{proof}
    We first show the upper bound. Let $G = (Q, T)$ be a geometrically $0$-dimensional VASS. Then any cycle in $G$ has
    effect $\Vec{0}$. So for any run $\tau$ in $G$ we can safely remove all cycles from $\tau$ and obtain a run
    $\overline{\tau}$ whose length is bounded by $|Q|$. Now a nondeterministic algorithm can decide reachability in $G$
    by simply guessing a run of length at most $|Q|$.

    For the lower bound, we recall the folklore reduction from \textsc{Subset-Sum} to the reachability problem in
    1-VASS. Given an instance $\langle S = \{a_1, \ldots, a_n \}, s \rangle$ of \textsc{Subset-Sum}, where $S \subseteq
    \mathbb{N}$ and the goal is to find a subset of $S$ whose sum is $s$, we construct a 1-VASS $G = (Q, T)$ as follows.
    The states are $Q := \{q_0, \ldots, q_n\}$. For each $i = 1, \ldots, n$ we add two transitions $q_{i-1}
    \xrightarrow{a_i} q_i$ and $q_{i-1} \xrightarrow{0} q_i$. One can easily observe that there exists a subset of $S$
    with sum $s$ if and only if $q_0(0) \to^* q_n(s)$ in $G$. Note that there are no cycles in $G$, thus $G$ is indeed
    geometrically 0-dimensional.
\end{proof}

\section{Concluding Remarks}
\label{sec:conclude}

In this paper, we have proposed and studied the reachability problem in vector addition systems with states (VASS)
parameterized by geometric dimension. We introduced an efficient algorithm for computing the geometric dimension of a
VASS and demonstrated some simple geometrical properties of reachable sets and runs. The primary focus is on VASS with
low geometric dimensions, particularly those that are geometrically 2-dimensional. By generalizing existing pumping
techniques for 2-VASS, we have shown that the reachability problem in geometrically 2-dimensional VASS is
\PSPACE{}-complete. The techniques of sign-reflecting projection and support projection also provide insights into how
results for $d$-VASS can be adapted to geometrically $d$-dimensional VASS.

The results for geometrically 1- and 0-dimensional VASS are both interesting and technically straightforward. By
re-examining existing results, we have shown that reachability in geometrically 1-dimensional VASS is
\PSPACE{}-complete, while in geometrically 0-dimensional VASS it is \NP{}-complete. It is worth noting that reachability
is known to be \NP{}-complete in 1-VASS and {\textsf{NL}}-complete in 0-VASS. Our findings highlight a distinction in
expressiveness and computational power between geometrically $d$-dimensional VASS and $d$-VASS. We suggest that
comparing these two models could be a possible direction for future research.

\bibliography{ref}

\newpage
\appendix

\section{Omitted Proofs from \autoref{sec:geo-dim}}

\subsection{Proof of \autoref{prop:cyc-shrink-preserves-cyc-space}}

\begin{proof}[Proof of \autoref{prop:cyc-shrink-preserves-cyc-space}]
    For a transition $t = (p, \Vec{a}, q) \in T$, we write $h(t) = (h(p), s(p) + \Vec{a} - s(q), h(q))$ for the
    corresponding transition in $T^{\theta}$, slightly abusing notations. The containment $\supseteq$ is easy. Let $\pi
    = p_0' \xrightarrow{t_1'} p_1' \xrightarrow{t_2'} \cdots \xrightarrow{t_m'}p_m'$ be any cycle in $G$. Observe that
    the cycle 
    \begin{equation}
        h(p_0') \xrightarrow{h(t_1')} h(p_1') \xrightarrow{h(t_2')} \cdots \xrightarrow{h(t_m')} h(p_m')
    \end{equation}
    in $G/\theta$ has exactly the same effect as $\theta$. This proves $\CycleSpace(G) \subseteq \CycleSpace(G/\theta)$.

    For the containment $\subseteq$, consider any simple cycle $\pi^{\theta}$ in $G/{\theta}$ (which might be a
    self-loop), we will show that $\Delta(\pi^{\theta}) \in \CycleSpace(G)$. Write $\pi^{\theta}$ in the following form:
    \begin{equation}
        \pi^{\theta} = p_0^{\theta} \xrightarrow{t_1^{\theta}} p_1^{\theta} \xrightarrow{t_2^{\theta}} \cdots \xrightarrow{t_m^{\theta}} p_m^{\theta}
    \end{equation}
    Since $\pi^\theta$ is simple, it can visit the state $\theta$ at most once. If $\theta$ does not occur on
    $\pi^\theta$, then $\pi^\theta$ is also a cycle in $G$. Otherwise, by rearranging we may assume $p_0^{\theta} =
    \theta = p_m^{\theta}$, and $p_i^{\theta} \ne \theta$ for all $i \in [m-1]$. If $\pi^\theta = \theta
    \xrightarrow{t_1^\theta} \theta$ is a self-loop on $\theta$, then there exists $p_i, p_j \in P$ and $t = (p_i,
    \Vec{a}, p_j) \in T$ such that $t_1^{\theta} = (h(p_i), s(p_i) + \Vec{a} - s(p_j) , h(p_j))$. In this case consider
    the following cycle $\pi$ in $G$:
    \begin{equation}
        \pi = p_0 \xrightarrow{\theta[1..i]} p_i \xrightarrow{t} p_j \xrightarrow{\theta[(j+1)..n]} p_n = p_0.
    \end{equation}
    Note that $\Delta(\pi) = s(p_i) + \Vec{a} + \Delta(\theta) - s(p_j) = \Delta(\pi^\theta) + \Delta(\theta)$. As
    $\Delta(\theta) \in \CycleSpace(G)$, we have $\Delta(\pi^\theta) \in \CycleSpace(G)$ as well. 
    
    Now assume $\pi^{\theta}$ is not a self-loop, so $m > 1$. Now the path $p_1^{\theta} \xrightarrow{t_2^{\theta}}
    \cdots \xrightarrow{t_{m-1}^{\theta}} p_{m-1}^{\theta}$ is also a path in $G$. Suppose $t_1^{\theta} = h(t_i) =
    (h(p_i), s(p_i) + \Vec{a}, h(p_1^{\theta}))$ for some $p_i \in P$ and $t_m^{\theta} = h(t_j) = (h(p_{m-1}^{\theta}),
    \Vec{b} - s(p_j), h(p_j))$ for some $p_j \in P$. Consider the following cycle in $G$: 
    \begin{equation}
        \pi = p_0 \xrightarrow{\theta[1..i]} p_i \xrightarrow{t_i} p_1^{\theta} \xrightarrow{t_2^{\theta}} \cdots \xrightarrow{t_{m-1}^{\theta}} p_{m-1}^{\theta} \xrightarrow{t_j} p_j \xrightarrow{\theta[(j + 1)..n]} p_n.
    \end{equation}
    Again we have $\Delta(\pi) = \Delta(\pi^\theta) + \Delta(\theta)$, thus $\Delta(\pi^\theta) \in \CycleSpace(G)$.
\end{proof}

\subsection{Proof of \autoref{thm:reach-eq-union-affine-cycspace}}

\begin{proof}[Proof of \autoref{thm:reach-eq-union-affine-cycspace}]
    Let $q(\Vec{v}) \in Q\times \mathbb{N}^d$ be such that $p(\Vec{u})\xrightarrow{*} q(\Vec{v})$. Consider any run
    $\pi$ from $p(\Vec{u})$ to $q(\Vec{v})$ of the form 
    \begin{equation}
        \pi :
        p(\Vec{u}) = p_0(\Vec{u}_0) \xrightarrow{t_1} p_1(\Vec{u}_1)
        \xrightarrow{t_2} p_2(\Vec{u}_2)
        \xrightarrow{t_3} \cdots 
        \xrightarrow{t_n}p_n(\Vec{u}_n) = q(\Vec{v})
    \end{equation}
    where $t_i \in T$ for all $i \in [n]$. We define the function $\mathfrak{n}_\pi(i)$ for $i \in [n]$ by
    $\mathfrak{n}_\pi(i) := \max\{i' \in [n] : i' \ge i \text{ and } p_i = p_{i'}\}$. Now consider the sequence of
    indices $i_0, \ldots, i_m$ given inductively by $i_0 = 0$, $i_{k + 1} = \mathfrak{n}_\pi(i_k) + 1$ until
    $\mathfrak{n}_\pi(i_m) = n$. Observe that $m \le \varsigma(G)$ since $p_{i_0}, \ldots p_{i_m}$ are pairwise
    distinct. We can reform the run $\pi$ as 
    \begin{align}
    \begin{split}
        \pi: 
        p(\Vec{u}) = {}
        &   p_{i_0}(\Vec{u}_{i_0}) \xrightarrow{\theta_0} 
            p_{\mathfrak{n}_\pi(i_0)}(\Vec{u}_{\mathfrak{n}_\pi(i_0)}) \\
        \xrightarrow{\overline{t}_1} {}
        &   p_{i_1}(\Vec{u}_{i_1}) \xrightarrow{\theta_1} 
            p_{\mathfrak{n}_\pi(i_1)}(\Vec{u}_{\mathfrak{n}_\pi(i_1)})\\
        \xrightarrow{\overline{t}_2}{} & \cdots \\
        \xrightarrow{\overline{t}_m} {}
        &   p_{i_m}(\Vec{u}_{i_m}) \xrightarrow{\theta_m} 
            p_{\mathfrak{n}_\pi(i_m)}(\Vec{u}_{\mathfrak{n}_\pi(i_m)})
            = q(\Vec{v})
    \end{split}
    \end{align}
    where $\theta_0, \ldots, \theta_m$ are cycles and $\overline{t}_1, \ldots, \overline{t}_m \in T$. Therefore, we have 
    \begin{equation}
        \Delta(\pi) = \Vec{v} - \Vec{u} = \sum_{k = 0}^{m} \Delta(\theta_m) + \sum_{k = 1}^{m} \Delta(\overline{t}_i).
    \end{equation}
    Note that $\sum_{k = 0}^{m} \Delta(\theta_m) \in \CycleSpace(G)$. Let $\Vec{z} := \sum_{k = 1}^{m}
    \Delta(\overline{t}_i)$, we have $\Vec{v} \in \Vec{u} + \CycleSpace(G) + \Vec{z}$. To bound the norm of $\Vec{z}$,
    note that $\Vec{z}$ is the effect of the following simple path: 
    \begin{equation}
        p_{i_0} \xrightarrow{\overline{t}_1} p_{i_1} \xrightarrow{\overline{t}_2} \cdots \xrightarrow{\overline{t}_m}
        p_{i_m}.
    \end{equation} 
    By definition of the characteristic, we have $\norm{\Vec{z}} \le \chi(G)$.
\end{proof}

\subsection{Proof of \autoref{thm:geometry-of-runs}}

\begin{proof}[Proof of \autoref{thm:geometry-of-runs}]
    Write $\pi$ in the following form 
    \begin{equation}
        \pi :
        p(\Vec{u}) = p_0(\Vec{u}_0) \xrightarrow{t_1} p_1(\Vec{u}_1)
        \xrightarrow{t_2} p_2(\Vec{u}_2)
        \xrightarrow{t_3} \cdots 
        \xrightarrow{t_n}p_n(\Vec{u}_n)
    \end{equation}
    where $t_i \in T$ for all $i \in [n]$. For every $q \in Q$, if the state $q$ does not occur in $\pi$, we simply let
    $f_\pi(q) := \Vec{0}$. Otherwise, let $i_q$ be the minimal index in $0, \ldots, n$ such that $p_{i_q} = q$. By
    \autoref{thm:reach-eq-union-affine-cycspace}, there exists $\Vec{z}_q \in \mathbb{Z}^d$ such that $\Vec{u}_{i_q} \in
    \Vec{u} + \CycleSpace(G) + \Vec{z}_q$ and $\norm{\Vec{z}_q} \le \chi(G)$. In this case we let $f_\pi(q) :=
    \Vec{z}_q$. Now we verify that $f_\pi$ satisfies the desired property.

    Let $i \in \{0, \ldots, n\}$ and denote $s := p_i$. Recall that $i_s \in [0, n]$ is the minimal index such that
    $p_{i_s} = s$. Then the sub-run from $p_{i_s}(\Vec{u}_{i_s})$ to $p_i(\Vec{u}_i)$ is a cycle. Observe that 
    \begin{equation}
        \Vec{u}_i = \Vec{u}_{i_s} + (\Vec{u}_i - \Vec{u}_{i_s}) \in \Vec{u}_{i_s} + \CycleSpace(G).
    \end{equation}
    By definition of $f_\pi$, $\Vec{u}_{i_s} \in \Vec{u} + \CycleSpace(G) + f_\pi(s)$, so we also have $\Vec{u}_i \in
    \Vec{u} + \CycleSpace(G) + f_\pi(s) = \Vec{u} + \CycleSpace(G) + f_\pi(p_i)$.
\end{proof}

\section{Omitted Proofs from \autoref{sec:geo-2d}}

Proofs in this section frequently make use of the following bounds for integer programming.

\begin{lemma}[{\cite[Corollary C.2]{DBLP:conf/icalp/FuYZ24}}]
    \label{lem:pottier}
    Let $A \in \mathbb{Z}^{m\times n}$ be an integer matrix and $\Vec{b} \in \mathbb{Z}^m$ be an integer vector. Define
    the following sets:
    \begin{equation}
        \Vec{X} := \left\{ \Vec{x} \in \mathbb{N}^n : A\Vec{x} \le \Vec{b} \right\}
        \qquad 
        \Vec{X}_0 := \left\{ \Vec{x} \in \mathbb{N}^n : A\Vec{x} \le \Vec{0} \right\}.
    \end{equation}
    Then every vector in $\Vec{X}$ can be decomposed as the sum of a vector $\Vec{x} \in \Vec{X}$ and a finite sum of
    vectors $\Vec{x}_0 \in \Vec{X}_0$ such that 
    \begin{equation}
        \norm{\Vec{x}} \le \left( (n+m+1) \norm{A} + \norm{\Vec{b}} + 1 \right)^m, 
        \qquad 
        \norm{\Vec{x}_0} \le \left( (n+m) \norm{A} + 1 \right)^m.
    \end{equation}
\end{lemma}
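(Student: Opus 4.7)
The plan is to reduce \autoref{lem:pottier} to the classical Pottier-style bound on minimal non-negative integer solutions of linear Diophantine \emph{equations}. First I would convert the inequality system $A\Vec{x} \le \Vec{b}$ into an equivalent equality system by introducing slack variables: letting $\Vec{s} \in \mathbb{N}^m$ and writing $A\Vec{x} + I_m \Vec{s} = \Vec{b}$, which amounts to $[A \mid I_m]\, \Vec{y} = \Vec{b}$ for the concatenated unknown $\Vec{y} = (\Vec{x}, \Vec{s}) \in \mathbb{N}^{n+m}$. The augmented matrix $[A \mid I_m] \in \mathbb{Z}^{m \times (n+m)}$ has infinity-norm equal to $\max(\norm{A}, 1)$, which, for the purpose of matching the stated bounds, can be absorbed into $\norm{A}$.

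Next I would invoke the classical Pottier theorem in its equality-system form: every non-negative integer solution of $M\Vec{y} = \Vec{b}$ with $M \in \mathbb{Z}^{m \times k}$ and $\Vec{y} \in \mathbb{N}^k$ decomposes as a minimal inhomogeneous solution plus a finite non-negative integer combination of minimal solutions of $M\Vec{y} = \Vec{0}$, and both kinds of minimal solutions satisfy norm bounds of the shape $((k+1)\norm{M} + \norm{\Vec{b}} + 1)^m$ and $(k\,\norm{M} + 1)^m$, respectively. Applying this to the augmented system with $k = n+m$ and $M = [A \mid I_m]$, and then projecting each produced solution onto its first $n$ coordinates (the $\Vec{x}$-part), preserves non-negativity and the maximum-norm bound, since truncation never increases $\norm{\cdot}$. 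This yields exactly the two inequalities claimed in \autoref{lem:pottier}.

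The classical Pottier bound itself, which I would import as a black box, is proved by the standard recipe: use Carath\'eodory over $\mathbb{Q}$ to write a minimal non-negative rational solution as a convex combination of few vertices of the polyhedron $\{\Vec{y} \ge \Vec{0} : M\Vec{y} = \Vec{b}\}$; bound the coordinates of each vertex by Hadamard's inequality applied to the $m \times m$ sub-determinants of $[M \mid \Vec{b}]$; and finally pass from rationals to integers via a Graver-basis correction that stays within the same order of magnitude. The homogeneous case is obtained by specialising $\Vec{b} = \Vec{0}$.

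The hard part will be matching the precise constants so that the stated exponents, the factors $(n+m+1)$ versus $(n+m)$, and the additive $+1$ terms all come out exactly right. Pottier-type statements appear in the literature in several inequivalent normalisations (Graver basis, Hilbert basis of the lattice cone, minimal inhomogeneous representatives), and one has to pick the variant that produces the additive decomposition used here while keeping careful track of how the slack block $I_m$ perturbs the matrix norm. Once these bookkeeping issues are settled, the reduction above is routine.
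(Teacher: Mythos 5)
This lemma is imported verbatim as \cite[Corollary C.2]{DBLP:conf/icalp/FuYZ24}; the present paper does not prove it, so there is no in-paper argument to compare against. Your slack-variable reduction to an equality-system minimal-solution bound, followed by projection onto the $\Vec{x}$-block, is the standard derivation, and the factor $n+m$ (rather than $n$) in the stated bounds is precisely the fingerprint of having introduced $m$ slack variables. The projection step is sound: if $(\Vec{x}^*,\Vec{s}^*)$ solves $A\Vec{x}^*+\Vec{s}^*=\Vec{b}$ with $\Vec{s}^*\ge\Vec{0}$ then $A\Vec{x}^*\le\Vec{b}$, and likewise for the homogeneous parts, while dropping coordinates only shrinks the max-norm. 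As for the constants you worry about: with $k=n+m$ and $\norm{[A\mid I_m]}=\max(\norm{A},1)$, the quoted equality-system bounds $\left((k+1)\norm{M}+\norm{\Vec{b}}+1\right)^m$ and $\left(k\norm{M}+1\right)^m$ reproduce the stated inequalities exactly whenever $\norm{A}\ge 1$; the degenerate case $A=0$ is immediate since then $\Vec{X}_0=\mathbb{N}^n$ is generated by unit vectors of norm $1$ and one may take $\Vec{x}=\Vec{0}$. So the bookkeeping does close, modulo the black-boxed equality-system bound being available in precisely that normalisation.
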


\subsection{Projection Techniques}

\begin{proof}[Proof of \autoref{lem:canonical-vecs}]
    First note that we cannot have $\Vec{v}_1(i_1) = \Vec{v}_1(i_2) = 0$ (or $\Vec{v}_2(i_1) = \Vec{v}_2(i_2) = 0$).
    Otherwise, as $\{i_1, i_2\}$ is sign-reflecting, by \autoref{lem:spp-inject} this implies $\Vec{v}_1 = \Vec{0}$ (or
    $\Vec{v}_2 = \Vec{0}$), contradicting the linear independence of $\Vec{v}_1$ and $\Vec{v}_2$. Also note that we
    cannot have $\Vec{v}_1(i_1) = \Vec{v}_2(i_1) = 0$ (or $\Vec{v}_1(i_2) = \Vec{v}_2(i_2) = 0$). Otherwise, assuming
    $\Vec{v}_1(i_1) = \Vec{v}_2(i_1) = 0$, then $\Vec{v}_1(i_2) \ne 0$ and $\Vec{v}_2(i_2) \ne 0$. Let $\Vec{v} :=
    \Vec{v}_1 - \frac{\Vec{v}_1(i_2)}{\Vec{v}_2(i_2)}\Vec{v}_2$. Note that $\Vec{v}(i_1) = \Vec{v}(i_2) = 0$. By
    \autoref{lem:spp-inject} we have $\Vec{v} = \Vec{0}$, again contradicting the linear independence of $\Vec{v}_1$ and
    $\Vec{v}_2$.

    From the above observations, we can assume that $\Vec{v}_1(i_1) \ne 0$ and $\Vec{v}_2(i_2) \ne 0$ by properly
    swapping $\Vec{v}_1$ and $\Vec{v}_2$. We define 
    \begin{align}
        \overline{\Vec{u}_1} &:= \Vec{v}_2(i_2) \cdot \Vec{v}_1 - \Vec{v}_1(i_2) \cdot \Vec{v}_2,\\
        \overline{\Vec{u}_2} &:= \Vec{v}_1(i_1) \cdot \Vec{v}_2 - \Vec{v}_2(i_1) \cdot \Vec{v}_1.
    \end{align}

    Note that $\overline{\Vec{u}_1}, \overline{\Vec{u}_2} \in \mathbb{Z}^d$ and $\Vec{u}_1(i_2) = \Vec{u}_2(i_1) = 0$.
    By a similar argument we have $\Vec{u}_1(i_1) \ne 0$ and $\Vec{u}_2(i_2)\ne 0$. Let $\Vec{u}_1 =
    \overline{\Vec{u}_1}$ or $-\overline{\Vec{u}_1}$ such that $\Vec{u}_1 \in Z$, and $\Vec{u}_2 = \overline{\Vec{u}_2}$
    or $-\overline{\Vec{u}_2}$ such that $\Vec{u}_2 \in Z$. It should be clear that $\norm{\Vec{u}_1}, \norm{\Vec{u}_2}
    \le 2N^2$.
\end{proof}

\begin{proof}[Proof of \autoref{lem:spp-component-bound}]
    Without loss of generality we can assume $Z = \mathbb{Q}_{\ge0}^d$. Let $\Vec{u}_1, \Vec{u}_2 \in P \cap Z \cap
    \mathbb{Z}^d$ be the canonical horizontal and vertical vectors derived from $\Vec{v}_1, \Vec{v}_2$. Let $\Vec{w} \in
    P\cap Z \cap \mathbb{Z}^d$, then we have
    \begin{equation}
        \Vec{w} = \frac{\Vec{w}(i_1)}{\Vec{u}_1(i_1)}\Vec{u}_2 + \frac{\Vec{w}(i_2)}{\Vec{u}_2(i_2)}\Vec{u}_2.
    \end{equation}
    Consider any $i \in \Supp(P)$, for the upper bound, 
    \begin{align}
        |\Vec{w}(i)| \le |\Vec{w}(i_1)||\Vec{u}_1(i)| + |\Vec{w}(i_2)||\Vec{u}_2(i)| \le 2 N^2 \cdot (|\Vec{w}(i_1)| + |\Vec{w}(i_2)|).
    \end{align}
    For the lower bound, note that as $\Vec{w}, \Vec{u}_1, \Vec{u}_2$ belong to the same orthant, we have
    $\frac{\Vec{w}(i_1)}{\Vec{u}_1(i_1)} \ge 0$ and $\frac{\Vec{w}(i_2)}{\Vec{u}_2(i_2)} \ge 0$. Since $i \in \Supp(P)$,
    either $\Vec{u}_1(i) > 0$ or $\Vec{u}_2(i) > 0$. Then 
    \begin{align}
    \begin{cases}
        \Vec{w}(i) \ge \dfrac{\Vec{w}(i_1)}{\Vec{u}_1(i_1)}\Vec{u}_1(i) \ge \dfrac{\Vec{w}(i_1)}{2N^2} & \text{in case }\Vec{u}_1(i) > 0,\\
        \Vec{w}(i) \ge \dfrac{\Vec{w}(i_2)}{\Vec{u}_2(i_2)}\Vec{u}_2(i) \ge \dfrac{\Vec{w}(i_2)}{2N^2} & \text{in case }\Vec{u}_2(i) > 0.
    \end{cases}
    \end{align}
    This shows that $|\Vec{w}(i)| \ge \min\{|\Vec{w}(i_1)|, |\Vec{w}(i_2)|\} / 2N^2$.
\end{proof}

\begin{proof}[Proof of \autoref{thm:exp-bound} using \autoref{thm:exp-bound-full-supp}]
    Let $G = (Q, T)$ be a geometrically $2$-dimensional $d$-VASS with $\Supp(\CycleSpace(G)) =: S$. Let $G^S$ be the
    support projection of $G$. Consider a $\Vec{0}$-run $\pi$ in $G$ of the form 
    \begin{equation}
        \pi = p_0(\Vec{u}_0) \xrightarrow{t_1} p_1(\Vec{u}_1) \xrightarrow{t_2} \cdots \xrightarrow{t_n} p_n(\Vec{u}_n)
    \end{equation}
    where $\Vec{u}_0 = \Vec{u}_n = \Vec{0}$ and $t_1, \ldots, t_n \in T$. Let $\Vec{v}_i := \Vec{u}_i|_{\overline{S}}$
    and $\Vec{w}_i := \Vec{u}_i|_S$ for $i = 0, \ldots, n$.
    \begin{claim}
        For each $i = 0, \ldots, n$, $\norm{\Vec{v}_i} \le \chi(G)$.
    \end{claim}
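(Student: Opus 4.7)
The plan is to invoke \autoref{thm:reach-eq-union-affine-cycspace} on each intermediate configuration $p_i(\Vec{u}_i)$ of $\pi$ and then observe that the coordinates outside the support of $\CycleSpace(G)$ must come entirely from the bounded ``offset'' part of the decomposition.

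More precisely, since $\pi$ starts at $p_0(\Vec{0})$, the configuration $p_i(\Vec{u}_i)$ is reachable from $p_0(\Vec{0})$ in $G$. Applying \autoref{thm:reach-eq-union-affine-cycspace} with source $p_0(\Vec{0})$, we obtain a decomposition $\Vec{u}_i = \Vec{c}_i + \Vec{z}_i$ where $\Vec{c}_i \in \CycleSpace(G)$ and $\norm{\Vec{z}_i} \le \chi(G)$. Now for any index $j \in \overline{S} = [d]\setminus\Supp(\CycleSpace(G))$, every vector in $\CycleSpace(G)$ has a zero at coordinate $j$; in particular $\Vec{c}_i(j) = 0$. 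Hence $\Vec{u}_i(j) = \Vec{z}_i(j)$, and so $|\Vec{u}_i(j)| = |\Vec{z}_i(j)| \le \norm{\Vec{z}_i} \le \chi(G)$.

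Taking the maximum over all $j \in \overline{S}$ yields $\norm{\Vec{v}_i} = \norm{\Vec{u}_i|_{\overline{S}}} \le \chi(G)$, which is the claim. There is no real obstacle here: the entire argument is a one-line consequence of \autoref{thm:reach-eq-union-affine-cycspace} combined with the definitional fact that the cycle space vanishes on coordinates outside its support. The only thing worth remarking is that this bound is what later justifies using $G^S$ as an ``equivalent'' model for $G$: the coordinates of $\overline{S}$ stay in a range of size $O(\chi(G))$ and can thus be safely absorbed into the state space of the support projection.
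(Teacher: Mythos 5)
Your argument is correct and is essentially identical to the paper's: both apply \autoref{thm:reach-eq-union-affine-cycspace} to decompose $\Vec{u}_i = \Vec{c}_i + \Vec{z}_i$ with $\Vec{c}_i \in \CycleSpace(G)$ and $\norm{\Vec{z}_i} \le \chi(G)$, and then observe that $\Vec{c}_i$ vanishes on $\overline{S}$ so that $\Vec{v}_i = \Vec{u}_i|_{\overline{S}} = \Vec{z}_i|_{\overline{S}}$.
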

    \begin{claimproof}
        Fix $i \in [0, n]$. By \autoref{thm:reach-eq-union-affine-cycspace}, there exists vectors $\Vec{c} \in
        \CycleSpace(G)$ and $\Vec{z} \in \mathbb{Z}^d$ with $\norm{\Vec{z}} \le \chi(G)$ such that $\Vec{v_i} = (\Vec{c}
        + \Vec{z})|_{\overline{S}} = \Vec{c}|_{\overline{S}} + \Vec{z}|_{\overline{S}} = \Vec{z}|_{\overline{S}}$. Thus,
        $\norm{\Vec{v}_i} \le \norm{\Vec{z}} \le \chi(G)$.
    \end{claimproof}
    So $p_i^{\Vec{v}_i}$ is a state in $G^S$. Now for each $i = 1, \ldots, n$, we define the transition $t_i^S :=
    (p_{i-1}^{\Vec{v}_{i-1}}, \Vec{w}_i - \Vec{w}_{i-1}, p_i^{\Vec{v}_i})$. Clearly $t_i$ is also a transition in $G^S$.
    Thus, the following run $\pi^S$ is a $\Vec{0}$-run in $G^S$: 
    \begin{equation}
        \pi^S = p_0^{\Vec{v}_0}(\Vec{w}_0) \xrightarrow{t_1^S} p_1^{\Vec{v}_1}(\Vec{w}_1) \xrightarrow{t_2^S} \cdots \xrightarrow{t_n^S} p_n^{\Vec{v}_n}(\Vec{w}_n).
    \end{equation}
    By \autoref{prop:sp-preservation}, $\CycleSpace(G^S)$ has full support. So we can apply
    \autoref{thm:exp-bound-full-supp} to exhibit a run $\tau^S$ in $G^S$ with the same source and target as $\pi^S$ of
    the form 
    \begin{equation}
        \tau^S = q_0^{\Vec{v'}_0}(\Vec{w'}_0) \xrightarrow{\ell_1^S} q_1^{\Vec{v'}_1}(\Vec{w'}_1) \xrightarrow{\ell_2^S} \cdots \xrightarrow{\ell_m^S} q_m^{\Vec{v'}_m}(\Vec{w'}_n)
    \end{equation}
    such that $q_0^{\Vec{v'}_0}(\Vec{w'}_0) = p_0^{\Vec{v}_0}(\Vec{w}_0)$, $q_m^{\Vec{v'}_m}(\Vec{w'}_m) =
    p_n^{\Vec{v}_n}(\Vec{w}_n)$, and
    \begin{equation}
        m \le \chi(G^S)^{O(|S|^4 \cdot \varsigma(G^S))} \le \chi(G)^{O(d^4 \cdot \varsigma(G))}.
    \end{equation}
    For each $i = 1, \ldots, m$, define $\ell_i = (q_{i-1}, (\Vec{w'}_i - \Vec{w'}_{i-1}) \circ (\Vec{v'}_i -
    \Vec{v'}_{i-1}) , q_i)$, one can easily verify that $\ell_i$ is a transition in $T$. Consider the following run
    $\tau$
    \begin{equation}
        \tau = q_0(\Vec{w'}_0 \circ {\Vec{v'}_0}) \xrightarrow{\ell_1} q_1(\Vec{w'}_1 \circ {\Vec{v'}_1}) \xrightarrow{\ell_2} \cdots \xrightarrow{\ell_m} q_m(\Vec{w'}_n \circ {\Vec{v'}_m}).
    \end{equation}
    Clearly $\tau$ is a run in $G$, with the same source and target as $\pi$, whose length is bounded by
    $\chi(G)^{O(\varsigma(G)\cdot d^4)}$.
\end{proof}

\subsection{Degenerate VASS}

\begin{proof}[Proof of \autoref{lem:gen-beam-eq-beam}]
    The vectors $\Vec{v}^+$ and $\Vec{v}^-$ can be defined by 
    \begin{equation}
        \Vec{v}^+(i) = \begin{cases}
            \Vec{v}(i) & \text{ if } \Vec{v}(i) \ge 0,\\
            0 & \text{otherwise},
        \end{cases}
        \quad 
        \Vec{v}^-(i) = \begin{cases}
            -\Vec{v}(i) & \text{ if } \Vec{v}(i) \le 0,\\
            0 & \text{otherwise},
        \end{cases}
        \quad 
        \text{for all } i \in [d].
    \end{equation}
    Then it is routine to verify that $\mathcal{B}^{\mathbb{Z}}_{\Vec{v}, W} \subseteq \mathcal{B}_{\Vec{v}^+, W} \cup
    \mathcal{B}_{\Vec{v}^-, W}$.
\end{proof}

\begin{lemma}
    \label{lem:cyc-cap-nd-zero-dim-thin}
    Let $G = (Q, T)$ be $d$-VASS such that $\gdim(G) = 2$ and $\CycleSpace(G) \cap \mathbb{Q}_{\ge0}^d = \{\Vec{0}\}$.
    Let $p \in Q$ be any state. Then for any configuration $q(\Vec{v}) \in Q\times \mathbb{N}^d$ reachable from
    $p(\Vec{0})$, we have $\norm{\Vec{v}} \le B$ where $B = 2((d + 6)\chi(G) + 1)^d \chi(G) + \chi(G)$. In particular,
    every $\Vec{0}$-run in $G$ is $B$-thin.
\end{lemma}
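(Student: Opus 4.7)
The plan is to combine \autoref{thm:reach-eq-union-affine-cycspace} with the Pottier-type integer programming bound of \autoref{lem:pottier}. Starting from any reachable configuration $q(\Vec{v})$ from $p(\Vec{0})$, I would write $\Vec{v} = \Vec{c} + \Vec{z}$ where $\Vec{c} \in \CycleSpace(G)$ and $\Vec{z} \in \mathbb{Z}^d$ with $\norm{\Vec{z}} \le \chi(G)$, as guaranteed by \autoref{thm:reach-eq-union-affine-cycspace}. Since $\Vec{v} \ge \Vec{0}$, this forces $\Vec{c} \ge -\chi(G)\cdot\Vec{1}$, so the task reduces to bounding $\norm{\Vec{c}}$ for integer $\Vec{c} \in \CycleSpace(G)$ subject to this lower bound.

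Since $\gdim(G) = 2$, by \autoref{lem:cyc-space-spanned-by-simple-cycles} I would pick a basis $\Vec{b}_1, \Vec{b}_2$ of $\CycleSpace(G)$ among effects of simple cycles, so $\norm{\Vec{b}_j} \le \chi(G)$. Expressing $\Vec{c} = (a_1 - a_3)\Vec{b}_1 + (a_2 - a_4)\Vec{b}_2$ with $a_1, a_2, a_3, a_4 \in \mathbb{N}$, the constraint $\Vec{c} \ge -\chi(G)\cdot\Vec{1}$ rewrites as $A\Vec{a} \le \chi(G)\cdot\Vec{1}$ with $n = 4$ non-negative integer variables and $m = d$ rows, where $A$ has columns $-\Vec{b}_1, -\Vec{b}_2, \Vec{b}_1, \Vec{b}_2$, so $\norm{A} \le \chi(G)$. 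Applying \autoref{lem:pottier}, every solution decomposes as $\Vec{a} = \Vec{a}' + \sum_k \Vec{a}_k^0$ with $\norm{\Vec{a}'} \le ((4 + d + 1)\chi(G) + \chi(G) + 1)^d = ((d+6)\chi(G) + 1)^d$. The crucial point is that each homogeneous solution $\Vec{a}_k^0$ represents a non-negative vector in $\CycleSpace(G)$, which must be $\Vec{0}$ by the hypothesis $\CycleSpace(G) \cap \mathbb{Q}^d_{\ge 0} = \{\Vec{0}\}$. Hence $\Vec{c}$ is reconstructed from $\Vec{a}'$ alone, giving $\norm{\Vec{c}} \le 2\norm{\Vec{a}'}\chi(G) \le 2((d+6)\chi(G)+1)^d\chi(G)$, and combining with $\norm{\Vec{z}} \le \chi(G)$ yields $\norm{\Vec{v}} \le B$.

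For the \emph{in particular} clause, every reachable $\Vec{v}$ satisfies $\norm{\Vec{v}} \le B$, hence lies in the trivial beam $\mathcal{B}_{\Vec{0}, B}$ (taking $\alpha = 0$), which is a $B$-beam, so every $\Vec{0}$-run is $B$-thin. The main obstacle is the integrality in the IP setup: for an arbitrary $\mathbb{Q}$-basis $\Vec{b}_1, \Vec{b}_2$, the integer vector $\Vec{c}$ may fail to lie in $\Span_{\mathbb{Z}}\{\Vec{b}_1, \Vec{b}_2\}$, so the desired decomposition with non-negative integer $a_i$ need not exist. This can be handled either by choosing $\Vec{b}_1, \Vec{b}_2$ as a $\mathbb{Z}$-basis of the lattice $\CycleSpace(G) \cap \mathbb{Z}^d$ whose vectors remain of norm $O(\chi(G))$ (extractable from the simple cycle effects via Hermite normal form), or by setting up the IP directly on the non-negative integer cycle decomposition of $\Vec{c}$ and reducing to $4$ effective generators via an integer Carath\'eodory argument inside the $2$-dimensional cycle space.
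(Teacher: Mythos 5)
Your overall strategy is the same as the paper's: decompose $\Vec{v} = \Vec{c} + \Vec{z}$ via \autoref{thm:reach-eq-union-affine-cycspace}, use $\Vec{v}\ge\Vec{0}$ to get a linear inequality on the coefficients of $\Vec{c}$ in a simple-cycle basis, apply \autoref{lem:pottier}, and kill the homogeneous part using $\CycleSpace(G)\cap\mathbb{Q}^d_{\ge0}=\{\Vec{0}\}$. The integrality obstacle you flag at the end, however, is a genuine gap and not merely a technicality: as written, the variables $a_1,\dots,a_4\in\mathbb{N}$ with $\Vec{c}=(a_1-a_3)\Vec{b}_1+(a_2-a_4)\Vec{b}_2$ need not exist, because $\Vec{c}\in\CycleSpace(G)\cap\mathbb{Z}^d$ generally lies outside $\Span_{\mathbb{Z}}\{\Vec{b}_1,\Vec{b}_2\}$. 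Your two proposed repairs do not obviously preserve the stated value of $B$: an HNF basis of the lattice $\CycleSpace(G)\cap\mathbb{Z}^d$ has no a priori norm bound of $\chi(G)$, and an integer Carath\'eodory argument over the set of all simple cycle effects would replace $n=4$ by a quantity depending on the number of simple cycles (or require an extra reduction) and thus changes the exponent/constants. Neither is carried through, so the proof as it stands does not establish the lemma with the claimed $B$.

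The paper sidesteps this more economically: it chooses signs $\overline{\Vec{c}_i}\in\{\pm\Vec{c}_i\}$ so that $\Vec{c}=\mu_1\overline{\Vec{c}_1}+\mu_2\overline{\Vec{c}_2}$ has \emph{rational} $\mu_i\ge0$, then rounds up by minimal $\lambda_i\in[0,1)$ to make $\mu_i+\lambda_i$ integral, absorbing $\Vec{c}'=\lambda_1\overline{\Vec{c}_1}+\lambda_2\overline{\Vec{c}_2}$ (norm $\le2\chi(G)$) into the right-hand side. This keeps $n=2$, costs only $+2\chi(G)$ in $\norm{\Vec{b}}$, and yields the stated exponent $((d+6)\chi(G)+1)^d$. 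If you want to keep your $n=4$ formulation you would need the analogous rounding of all four coordinates, not a change of basis. One further small imprecision in your write-up: a homogeneous solution $\Vec{a}^0$ of your system need not be the zero vector; what vanishes is its image $(a^0_1-a^0_3)\Vec{b}_1+(a^0_2-a^0_4)\Vec{b}_2\in\CycleSpace(G)\cap\mathbb{Q}^d_{\ge0}$, which by linear independence forces $a^0_1=a^0_3$ and $a^0_2=a^0_4$; this is enough, since only the differences $a_1-a_3$ and $a_2-a_4$ determine $\Vec{c}$, but you should state it that way.
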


\begin{proof}
    Let $\CycleSpace(G)$ be spanned by two vectors $\Vec{c_1}, \Vec{c}_2 \in \mathbb{Z}^d$ where $\Vec{c}_1$ and
    $\Vec{c}_2$ are effects of two simple cycles in $G$ that are linearly independent. In particular, we have
    $\norm{\Vec{c}_1}, \norm{\Vec{c}_2} \le \chi(G)$. Let $q(\Vec{v})\in Q\times \mathbb{N}^d$ be a configuration
    reachable from $p(\Vec{0})$. By \autoref{thm:reach-eq-union-affine-cycspace} we have $\Vec{v} = \Vec{z} + \Vec{c}$
    for some vector $\Vec{z} \in \mathbb{Z}^d$ with $\norm{\Vec{z}} \le \chi(G)$ and $\Vec{c} \in \CycleSpace(G)$. Let
    $\mu_1, \mu_2 \in \mathbb{Q}_{\ge0}$ such that $\Vec{c} = \mu_1 \overline{\Vec{c}_1} + \mu_2 \overline{\Vec{c}_2}$
    where $\overline{\Vec{c}_i}$ is either $\Vec{c}_i$ or $-\Vec{c}_i$. We shall derive a bound on $\mu_1$ and $\mu_2$.
    Towards this goal, define $\lambda_1, \lambda_2 \ge 0$ be minimal numbers such that $\mu_1 + \lambda_1, \mu_2 +
    \lambda_2 \in \mathbb{Z}$. Let $\Vec{c}' := \lambda_1 \overline{\Vec{c}_1} + \lambda_2 \overline{\Vec{c}_2}$. Note
    that $\norm{\Vec{c}'} \le 2|Q|\norm{T}$. As $\Vec{v} \ge 0$ we have $\Vec{c} \ge -\Vec{z}$ and thus $\Vec{c} +
    \Vec{c'} \ge -\Vec{z} + \Vec{c}'$, that is, 
    \begin{equation}
        \begin{pmatrix}
            \overline{\Vec{c}_1} & \overline{\Vec{c}_2}
        \end{pmatrix}
        \begin{pmatrix}
            \mu_1 + \lambda_1\\
            \mu_2 + \lambda_2
        \end{pmatrix}
        \ge -\Vec{z} + \Vec{c}'.
    \end{equation}
    Now by \autoref{lem:pottier}, there exists numbers $x, x_0, y, y_0 \in \mathbb{N}$ such that 
    \begin{itemize}
        \item $\mu_1 + \lambda_1 = x + x_0$, $\mu_2 + \lambda_2 = y + y_0$, 
        \item $x, y \le ((d + 3)\chi(G) + \chi(G) + 2\chi(G) + 1)^d \le ((d + 6)\chi(G) + 1)^d$,
        \item $x_0 \overline{\Vec{c}_1} + y_0 \overline{\Vec{c}_2} \ge \Vec{0}$.
    \end{itemize}
    Recall that $\CycleSpace(G) \cap \mathbb{N}^d = \{\Vec{0}\}$, we must have $x_0 \overline{\Vec{c}_1} + y_0
    \overline{\Vec{c}_2} = \Vec{0}$. Since $\Vec{c}_1$ and $\Vec{c}_2$ are linearly independent, we deduce that $x_0 =
    y_0 = 0$. Therefore, 
    \begin{align}
        \mu_1 &= x - \lambda_1 \le ((d + 6)\chi(G) + 1)^d, \\
        \mu_2 &= y - \lambda_2 \le ((d + 6)\chi(G) + 1)^d.
    \end{align}
    With these bounds, we immediately have $\norm{\Vec{c}} \le 2((d + 6)\chi(G) + 1)^d \chi(G)$. And thus
    $\norm{\Vec{v}}\le \norm{\Vec{v}} + \norm{\Vec{c}} \le 2((d + 6)\chi(G) + 1)^d \chi(G) + \chi(G)$.
\end{proof}

\begin{lemma}
    \label{lem:cyc-cap-nd-one-dim-thin}
    Let $G = (Q, T)$ be a $d$-VASS such that $\gdim(G) = 2$ and $\CycleSpace(G) \cap \mathbb{Q}_{\ge0}^d =
    \Span\{\Vec{u}\}$ for some $\Vec{u} \in \mathbb{N}^d\setminus\{\Vec{0}\}$. Let $p \in Q$ be any state. Then any run
    from $p(\Vec{0})$ is $A$-thin for $A = 2((d + 3)|Q|\norm{T} + 2)^d |Q|\norm{T}$.
\end{lemma}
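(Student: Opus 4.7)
I plan to follow the proof of \autoref{lem:cyc-cap-nd-zero-dim-thin} almost verbatim, inserting one extra observation to handle the newly present non-negative ray in $\CycleSpace(G)$. For an arbitrary reachable configuration $q(\Vec{v})$, \autoref{thm:reach-eq-union-affine-cycspace} lets me write $\Vec{v} = \Vec{z} + \Vec{c}$ with $\norm{\Vec{z}} \le \chi(G) \le |Q|\norm{T}$ and $\Vec{c} \in \CycleSpace(G)$. Since $\gdim(G) = 2$, I can span $\CycleSpace(G)$ by two simple-cycle effects $\Vec{c}_1, \Vec{c}_2 \in \mathbb{Z}^d$ of norm $\le \chi(G)$, decompose $\Vec{c} = \mu_1 \overline{\Vec{c}_1} + \mu_2 \overline{\Vec{c}_2}$ with $\mu_i \ge 0$ after orienting $\overline{\Vec{c}_i} = \pm \Vec{c}_i$ appropriately, and introduce integral corrections $\lambda_i \in [0, 1)$ with $\mu_i + \lambda_i \in \mathbb{N}$, setting $\Vec{c}' := \lambda_1 \overline{\Vec{c}_1} + \lambda_2 \overline{\Vec{c}_2}$ so that $\norm{\Vec{c}'} \le 2\chi(G)$. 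The non-negativity of $\Vec{v}$ then becomes an integer inequality on $(\mu_i + \lambda_i)_{i=1,2}$ whose right-hand side $\Vec{z} - \Vec{c}'$ has norm bounded by $3|Q|\norm{T}$.

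Applying \autoref{lem:pottier} with $n = 2$, $m = d$ to this inequality, I obtain a decomposition $(\mu_1 + \lambda_1, \mu_2 + \lambda_2) = (x, y) + \sum_j (x_0^{(j)}, y_0^{(j)})$, where $\max(x, y)$ is bounded by roughly $((d + |Q| + 5)\norm{T} + 1)^d$ (comfortably within the stated $A$) and each summand $(x_0^{(j)}, y_0^{(j)}) \in \mathbb{N}^2$ satisfies $\Vec{w}^{(j)} := x_0^{(j)} \overline{\Vec{c}_1} + y_0^{(j)} \overline{\Vec{c}_2} \ge \Vec{0}$ with norm bounded by $2\norm{T}((d + 2)\norm{T} + 1)^d \le A$. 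This is where the argument departs from the $\{\Vec{0}\}$ case of \autoref{lem:cyc-cap-nd-zero-dim-thin}: instead of forcing each $\Vec{w}^{(j)}$ to be zero, the hypothesis only yields $\Vec{w}^{(j)} \in \CycleSpace(G) \cap \mathbb{N}^d \subseteq \Span\{\Vec{u}\}$, so every $\Vec{w}^{(j)}$ is a non-negative multiple $\beta_j \Vec{u}'$ of a single fixed non-negative integer vector $\Vec{u}' \in \Span\{\Vec{u}\}$ of norm $\le A$ (take $\Vec{u}'$ to be any non-zero $\Vec{w}^{(k)}$ arising from a minimal generator of the Pottier homogeneous cone, and declare the bound trivial in the degenerate case where every $\Vec{w}^{(j)}$ vanishes).

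Combining gives $\Vec{v} = \Vec{r} + \alpha \Vec{u}'$ where $\Vec{r} := \Vec{z} + x \overline{\Vec{c}_1} + y \overline{\Vec{c}_2} - \Vec{c}'$ has norm $\le A$ and $\alpha := \sum_j \beta_j \ge 0$, so $\Vec{v} \in \mathcal{B}_{\Vec{u}', A}$, proving $A$-thinness as required. The main obstacle lies exactly in the structural observation above: the hypothesis $\CycleSpace(G) \cap \mathbb{Q}_{\ge0}^d = \Span\{\Vec{u}\}$ must be used to funnel the potentially many Pottier homogeneous generators into a common one-dimensional direction, so that their collective contribution affects only the beam's direction and not its width. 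Once this funnelling is set up, the rest is routine bookkeeping against the stated bound $A$.
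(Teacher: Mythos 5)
Your approach is conceptually sound but is genuinely different from the paper's, and the resulting bound does not meet the specific $A$ claimed in the lemma.

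\paragraph*{How the paper proceeds.} The paper does \emph{not} reuse the Pottier decomposition of \autoref{lem:cyc-cap-nd-zero-dim-thin}. Instead it first replaces $\Vec{u}$ by a small semi-positive vector $\Vec{u}'\in\Span\{\Vec{u}\}$ of norm $\le 2((d+3)\chi(G)+2)^d\chi(G)$ (here \autoref{lem:pottier} is applied only to a tiny system with $\norm{\Vec{b}}\le 1$), and then spans $\CycleSpace(G)=\Span\{\Vec{u},\Vec{v}\}$ with $\Vec{v}$ a simple-cycle effect. It observes that, by the hypothesis $\CycleSpace(G)\cap\mathbb{Q}_{\ge0}^d=\Span\{\Vec{u}\}$, the vector $\Vec{v}$ must have both a strictly positive and a strictly negative entry outside $\Supp(\Vec{u})$. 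Writing $\Vec{w}=\Vec{z}+\alpha\Vec{u}+\beta\Vec{v}$, non-negativity of $\Vec{w}$ on those two coordinates (where the $\Vec{u}$-term vanishes) forces $|\beta|\le\chi(G)$ directly. The thinness estimate then comes out without another Pottier application to the full system.

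\paragraph*{What you do differently, and where it falls short.} You instead re-run the Pottier analysis of \autoref{lem:cyc-cap-nd-zero-dim-thin} against the unchanged right-hand side $\Vec{z}-\Vec{c}'$ and funnel all non-trivial homogeneous contributions into the $\Vec{u}$-direction. The structural observation — every homogeneous generator $\Vec{w}^{(j)}=x_0^{(j)}\overline{\Vec{c}_1}+y_0^{(j)}\overline{\Vec{c}_2}\in\CycleSpace(G)\cap\mathbb{N}^d$ is a non-negative multiple of a single $\Vec{u}'$, so the sum $\sum_j\Vec{w}^{(j)}$ contributes only to the beam's direction — is correct and is the right way to exploit the hypothesis in this framework. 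The problem is the numerology: applying \autoref{lem:pottier} with $n=2$, $m=d$, $\norm{A}\le\chi(G)$ and $\norm{\Vec{b}}\le 3\chi(G)$ yields $|x|,|y|\le ((d+3)\chi(G)+3\chi(G)+1)^d=((d+6)\chi(G)+1)^d$, not $((d+|Q|+5)\norm{T}+1)^d$ as you write. Consequently $\norm{\Vec{r}}\le 3\chi(G)+2\chi(G)((d+6)\chi(G)+1)^d$, whose base $(d+6)\chi(G)+1$ strictly exceeds the base $(d+3)|Q|\norm{T}+2$ in the stated $A$ once $\chi(G)$ is close to $|Q|\norm{T}$. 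So the claim that this is ``comfortably within the stated $A$'' is false: your route proves $\chi(G)^{O(d)}$-thinness (which suffices for \autoref{lem:degenerate-thin}), but not the specific polynomial $A$ of this lemma. The paper's extra idea — using the opposite-sign coordinates of $\Vec{v}$ outside $\Supp(\Vec{u})$ to bound one component coefficient by $\chi(G)$ rather than by a Pottier estimate — is precisely what buys the tighter $(d+3)$ base, and it is missing from your proposal.
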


\begin{proof}
    Let $\CycleSpace(G) = \Span\{\Vec{c}_1, \Vec{c}_2\}$ for $\Vec{c}_1, \Vec{c}_2 \in \mathbb{Z}^d$ be the effects of
    two simple cycles in $G$. Suppose $\Vec{u} = \mu_1 \Vec{c}_1 + \mu_2 \Vec{c}_2$. Define $\overline{\Vec{c}_i}$ to be
    $\Vec{c}_i$ if $\mu_i \ge 0$ and $-\Vec{c}_i$ otherwise for $i = 1, 2$. We first claim that we can assume
    $\norm{\Vec{u}} \le 2((d + 3)\chi(G) + 2)^d \chi(G)$. To see this, consider the following inequality:
    \begin{equation}
        \begin{pmatrix}
            \overline{\Vec{c}_1} & \overline{\Vec{c}_2}
        \end{pmatrix}
        \begin{pmatrix}
            \lambda_1 \\ \lambda_2
        \end{pmatrix}
        \ge \Vec{\varepsilon}
    \end{equation}
    where $\Vec{\varepsilon}(i) = 0$ if $\Vec{u}(i) = 0$ and $\Vec{\varepsilon}(i) = 1$ if $\Vec{u}(i) > 0$. Clearly
    there exists $\lambda_1, \lambda_2 \in \mathbb{N}$ satisfying this inequality. By \autoref{lem:pottier} there are
    such $\lambda_1, \lambda_2$ with $|\lambda_1|, |\lambda_2| \le ((d + 3)\chi(G) + 2)^d$. Let $\Vec{u}' := \lambda_1
    \overline{\Vec{c}_1} + \lambda_2\overline{\Vec{c}_2}$, then $\Vec{u}' \in \Span(\Vec{u})$ and $\norm{\Vec{u}'} \le
    2((d + 3)\chi(G) + 2)^d \chi(G)$. Denote $B := 2((d + 3)\chi(G) + 2)^d \chi(G)$.

    Now let $\CycleSpace(G)$ be spanned by $\Vec{u}$ and some $\Vec{v} \in \mathbb{Z}^d$. Observe that we can assume
    $\Vec{v}$ is the effect of some simple cycle in $G$, in particular $\norm{\Vec{v}} \le \chi(G) \le B$. We claim that
    there exists $i, j \in [d] \setminus \Supp(\Vec{u})$ such that $\Vec{v}(i) > 0$ and $\Vec{v}(j) < 0$. Otherwise, all
    entries indexed by $[d]\setminus \Supp(\Vec{u})$ are of the same sign, say they are non-negative. Then for some
    sufficiently large number $t$, $t\Vec{u} + \Vec{v} \in \CycleSpace(G) \cap \mathbb{N}^d$ but not belong to
    $\Span\{\Vec{u}\}$, a contradiction.

    Let $q(\Vec{w}) \in Q\times \mathbb{N}^d$ be any configuration reachable from $p(\Vec{0})$. By
    \autoref{thm:reach-eq-union-affine-cycspace} there exists numbers $\alpha, \beta \in \mathbb{Q}$ such that 
    \begin{equation}
        \Vec{w} = \Vec{z} + \alpha\Vec{u} + \beta\Vec{v}
    \end{equation}
    for some $\Vec{z} \in \mathbb{Z}^d$ with $\norm{\Vec{z}} \le \chi(G)$. Consider the $i$th and $j$th components, from
    $\Vec{w}\ge 0$ we deduce that $|\beta| \le \chi(G)$ and thus $\norm{\beta\Vec{v}} \le \chi(G)^2$. If $\alpha \ge 0$,
    we have $\Vec{w} \in \mathcal{B}_{\Vec{u}, B}$. Otherwise, $\alpha < 0$, then we must have $\alpha \ge
    -\norm{\Vec{z} + \beta \Vec{v}} \ge -(\chi(G)^2 + \chi(G))$. Thus, $\norm{W} \le B$ and again $\Vec{w} \in
    \mathcal{B}_{\Vec{u}, B}$. Since the choice of $q(\Vec{w})$ is arbitrary, any run from $p(\Vec{0})$ must be
    $B$-thin.
\end{proof}

\subsection{Sequential Cones}

\begin{proof}[Proof of \autoref{lem:seq-cone-of-cycles-eq-fin-gen-cone}]
    We first claim that 
    \begin{equation}
        \label{eq:seq-cone-eq-sp-seq-cone}
        \SeqCone(\Vec{v}_1, \ldots, \Vec{v}_k) = \left\{
            \Vec{v} \in \CycleSpace(G) : \Vec{v}|_I \in \SeqCone(\Vec{v}_1|_I, \ldots, \Vec{v}_k|_I)
        \right\}.
    \end{equation}
    To see this, just note that by definition of sign-reflecting projections, 
    \begin{equation}
        \sum_{j = 0}^{i} a_j \Vec{v}_j \ge \Vec{0} \quad \iff \quad 
        \sum_{j = 0}^{i} a_j \Vec{v}_j|_I \ge \Vec{0}.
    \end{equation}
    Now by \cite[Lemma 2]{DBLP:conf/mfcs/CzerwinskiLLP19}, $\SeqCone(\Vec{v}_1|_I, \ldots, \Vec{v}_k|_I) =
    \Cone\{\Vec{x}', \Vec{y}'\}$ for non-negative vectors $\Vec{x}', \Vec{y}' \in \mathbb{N}^d$ where each of them
    either belongs to $\{\Vec{v}_1|_I, \ldots, \Vec{v}_k|_I\}$, or is $(0, 1)$ or $(1, 0)$. In case $\Vec{x}' =
    \Vec{v}_j|_I$ for some $j$, we let $\Vec{x} := \Vec{v}_j$ (note that $\Vec{v}_j \ge \Vec{0}$ as $\Vec{x}' =
    \Vec{v}_j|_I \ge \Vec{0}$), otherwise we let $\Vec{x}$ to be the corresponding canonical vector. Define $\Vec{y}$ in
    the similar way. Note that $\Vec{x}|_I = \Vec{x}'$ and $\Vec{y}|_I = \Vec{y}'$. We claim that $\SeqCone(\Vec{v}_1,
    \ldots, \Vec{v}_k) = \Cone\{\Vec{x}, \Vec{y}\}$. 

    Indeed, for any $\Vec{v} \in \SeqCone(\Vec{v}_1, \ldots, \Vec{v}_k)$, we have $\Vec{v}|_I \in \SeqCone(\Vec{v}_1|_I,
    \ldots, \Vec{v}_k|_I) = \Cone\{\Vec{x}|_I, \Vec{y}|_I\}$. So there exists $\alpha, \beta \ge 0$ such that
    $\Vec{v}|_I = \alpha\Vec{x}|_I + \beta\Vec{y}|_I$. By \autoref{lem:spp-inject}, this implies $\Vec{v} =
    \alpha\Vec{x} + \beta \Vec{y} \in \Cone\{\Vec{x}, \Vec{y}\}$. On the other hand, let $\Vec{v} \in \Cone\{\Vec{x},
    \Vec{y}\}$, then
    \begin{equation}
        \Vec{v}|_I \in \Cone\{\Vec{x}', \Vec{y}'\} = \SeqCone(\Vec{v}_1|_I, \ldots, \Vec{v}_k|_I).
    \end{equation}
    By (\ref{eq:seq-cone-eq-sp-seq-cone}), we have $\Vec{v} \in \SeqCone(\Vec{v}_1, \ldots, \Vec{v}_k)$.
\end{proof}

\subsection{Thin-thick Classification}

This section contains proofs omitted from \autoref{sec:proper-vass}. Recall that we have fixed a geometrically
$2$-dimensional $d$-VASS $G = (Q, T)$ that is proper, assuming that $\Supp(\CycleSpace(G)) = [d]$. So by
\autoref{thm:2d-proj}, there exists $i_1 \ne i_2 \in [d]$ such that $I := \{i_1, i_2\}$ is a sign-reflecting projection
of $\CycleSpace(G)$ with respect to $\mathbb{Q}_{\ge0}^d$. Moreover, we have $\Vec{u}_1, \Vec{u}_2 \in \mathbb{N}^d$
being the canonical horizontal and vertical vectors given by \autoref{lem:canonical-vecs}, such that $\norm{\Vec{u}_1},
\norm{\Vec{u}_2} \le 2\chi(G)^2$.

\begin{proposition}
    \label{prop:config-in-run-determined-by-srp}
    Let $\rho$ be any run in $G$. Let $q(\Vec{v})$ be a configuration in $\rho$. Then $\Vec{v}$ is uniquely determined
    by $\Vec{v}|_I$. In other words, for any configuration $q(\Vec{v}')$ in $\rho$, $\Vec{v}|_I = \Vec{v}'|_I$ implies
    $\Vec{v} = \Vec{v}'$.
\end{proposition}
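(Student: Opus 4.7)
The plan is to combine the geometric description of runs given by \autoref{thm:geometry-of-runs} with the injectivity of a sign-reflecting projection stated in \autoref{lem:spp-inject}. Let $p(\Vec{u})$ be the source configuration of $\rho$ and let $f_\rho : Q \to \mathbb{Z}^d$ be the function supplied by \autoref{thm:geometry-of-runs}, so that every configuration $r(\Vec{w})$ occurring on $\rho$ satisfies $\Vec{w} \in \Vec{u} + \CycleSpace(G) + f_\rho(r)$.

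First, I would apply this to both occurrences of state $q$: since both $q(\Vec{v})$ and $q(\Vec{v}')$ lie on $\rho$, there exist $\Vec{c}, \Vec{c}' \in \CycleSpace(G)$ with $\Vec{v} = \Vec{u} + \Vec{c} + f_\rho(q)$ and $\Vec{v}' = \Vec{u} + \Vec{c}' + f_\rho(q)$. Subtracting gives $\Vec{v} - \Vec{v}' = \Vec{c} - \Vec{c}' \in \CycleSpace(G)$. Second, from the hypothesis $\Vec{v}|_I = \Vec{v}'|_I$ we get $(\Vec{v} - \Vec{v}')|_I = \Vec{0} = \Vec{0}|_I$. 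Third, since $I$ is a sign-reflecting projection for $\CycleSpace(G)$ with respect to $\mathbb{Q}_{\ge 0}^d$, \autoref{lem:spp-inject} (applied to the two elements $\Vec{v} - \Vec{v}'$ and $\Vec{0}$ of $\CycleSpace(G)$, which agree on $I$) yields $\Vec{v} - \Vec{v}' = \Vec{0}$, i.e.\ $\Vec{v} = \Vec{v}'$.

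There is essentially no obstacle here; the proposition is a direct consequence of the two cited results, and the role of the fixed run $\rho$ is only to ensure that both configurations share the same ``offset'' $f_\rho(q)$ so that their difference is genuinely an element of $\CycleSpace(G)$ rather than a coset of it. The properness and full-support assumptions on $G$ are not needed for the statement itself, but are implicitly used to guarantee that $I$ indeed exists and is of size $2$ via \autoref{thm:2d-proj}.
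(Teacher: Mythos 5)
Your proof is correct and follows essentially the same route as the paper's: both invoke \autoref{thm:geometry-of-runs} to write $\Vec{v}$ and $\Vec{v}'$ with the common offset $f_\rho(q)$, reducing the question to injectivity of the projection on $\CycleSpace(G)$, which is \autoref{lem:spp-inject}. The only cosmetic difference is that you apply \autoref{lem:spp-inject} to the pair $(\Vec{v}-\Vec{v}',\Vec{0})$ whereas the paper applies it to $(\Vec{c},\Vec{c}')$; these are equivalent.
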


\begin{proof}
    Let $f_\rho : Q \to \mathbb{Z}^d$ be the function given by \autoref{thm:geometry-of-runs}. Let $\Source(\rho) =:
    p(\Vec{u})$. Let $q(\Vec{v}')$ be a configuration on $\rho$ with $\Vec{v}'|_I = \Vec{v}|_I$. Then we can write
    $\Vec{v} = \Vec{u} + \Vec{c} + f_\rho(q)$ and $\Vec{v}' = \Vec{u} + \Vec{c}' + f_\rho(q)$ for some $\Vec{c},
    \Vec{c}' \in \CycleSpace(G)$. Now $\Vec{v}'|_I = \Vec{v}|_I$ implies that $\Vec{c}|_I = \Vec{c}'|_I$. Since $I$ is
    sign-reflecting, by \autoref{lem:spp-inject} we know that $\Vec{c} = \Vec{c}'$, thus $\Vec{v} = \Vec{v}'$.
\end{proof}

\subsubsection{Non-negative Cycle Lemma}

\nonNegCycleProperGeoTwoD*

The proof follows that in \cite{DBLP:conf/mfcs/CzerwinskiLLP19}. 

\begin{lemma}
    \label{lem:semi-pos-cycle-one-bounded-coord}
    Let $\rho$ be a run in $G$ and $K \in \mathbb{N}$ such that $\norm{\Target(\rho)|_I} > \norm{\Source(\rho)|_I} +
    K\cdot\chi(G)$. If for some $b = 1, 2$ all configuration $q(\Vec{v})$ in $\rho$ satisfy $\Vec{v}(i_b) < K$ then
    $\rho$ contains a configuration enabling a cycle $\theta$ of length bounded by $P_1(K\cdot\chi(G))$ for some
    polynomial $P_1$ such that $\Delta(\theta)(i_b) = 0$ and $\Delta(\theta)(i_{3-b}) > 0$. Moreover, {every infix of
    $\theta$ is enabled at some configuration in $\rho$}.
\end{lemma}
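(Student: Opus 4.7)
The plan is to adapt the 2-VASS ``non-negative cycle lemma'' of Czerwi\'nski et al.\ to the proper geometrically 2D setting by pigeonhole on (state, $i_b$-value) pairs. First I would derive a strong growth condition on the $i_{3-b}$-coordinate: since $\Vec{v}(i_b) < K$ at every configuration of $\rho$, the max-norm $\norm{\Target(\rho)|_I}$ cannot be attained at the $i_b$-coordinate (otherwise $K > K\chi(G)$, impossible for $\chi(G) \ge 1$), so $\norm{\Target(\rho)|_I} = \Target(\rho)(i_{3-b})$, and the hypothesis yields $\Target(\rho)(i_{3-b}) - \Source(\rho)(i_{3-b}) > K\chi(G)$.

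Next, viewing $\rho$ as a walk in the lifted graph $H$ whose vertices are pairs $(q, v) \in Q \times [0, K)$, the walk visits at most $\varsigma(\rho) \cdot K \le \varsigma(G) \cdot K$ distinct pairs. A ``strict-maximum'' argument (track the indices where the $i_{3-b}$-value reaches a fresh maximum; each such index must land on a genuinely new pair, else the contiguous sub-cycle between the old and new visits would itself have positive $i_{3-b}$-effect) shows that if every contiguous sub-run of $\rho$ returning to the same pair has non-positive $i_{3-b}$-effect, then the total $i_{3-b}$-growth is at most $(\varsigma(G) \cdot K - 1) \cdot \norm{T} < K \chi(G)$, contradicting the first step. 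Hence some contiguous sub-run $\theta$ of $\rho$ is a cycle in $G$ with $\Delta(\theta)(i_b) = 0$ and $\Delta(\theta)(i_{3-b}) > 0$, and because it is contiguous in $\rho$ it is enabled at its starting configuration with every infix enabled at the corresponding intermediate configuration of $\rho$.

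To obtain the polynomial length bound, I would take $\theta$ to be minimal among positive $H$-cycles whose every infix is enabled on $\rho$. If $\theta$ had two interior visits to the same pair at equal $i_{3-b}$-value, the intervening sub-cycle would have zero $I$-projection and hence zero effect in all coordinates by \autoref{lem:spp-inject} -- a genuine no-op. Excising this no-op shortens $\theta$ while preserving every-infix-enabled, because by \autoref{prop:config-in-run-determined-by-srp} the no-op's endpoints are identical configurations on $\rho$, so infixes straddling the cut resume from the very same configuration and remain enabled. Minimality is thereby contradicted, forcing consecutive visits to any interior pair to strictly decrease in $i_{3-b}$; combined with non-negativity and the $O(K\chi(G))$ range of $i_{3-b}$-values on $\rho$, this polynomially bounds the per-pair visit count and hence $|\theta| \le P_1(K\chi(G))$ for a suitable polynomial $P_1$.

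The hard part will be this last step. Minimality among merely \emph{contiguous} sub-cycles is too weak -- repeated interior pairs with strictly decreasing $i_{3-b}$ are still allowed in large numbers -- and what genuinely rescues the bound is the joint use of the sign-reflecting projection machinery: \autoref{lem:spp-inject} collapses every sub-cycle of zero $I$-projection to a true no-op, while \autoref{prop:config-in-run-determined-by-srp} guarantees that such a no-op has identical endpoint configurations on $\rho$. This is what keeps the shortening operation safely within the class of cycles whose every infix is enabled on $\rho$, a property genuinely special to the proper geometrically 2D setting and not available for arbitrary 2D projections.
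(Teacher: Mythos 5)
Your overall plan matches the paper's proof: (i) observe that the target's $I$-norm must sit in the $i_{3-b}$ coordinate, (ii) use a pigeonhole over $(\text{state}, i_b\text{-value})$ pairs along the sequence of fresh $i_{3-b}$-maxima to extract a contiguous sub-cycle with zero $i_b$-effect and positive $i_{3-b}$-effect, (iii) shorten it by excising repeated configurations, recognized as genuine no-ops via \autoref{lem:spp-inject} and \autoref{prop:config-in-run-determined-by-srp}, while preserving the every-infix-enabled invariant. Steps (i) and (ii) are sound and are essentially what the paper does.

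The gap is in step (iii), specifically in how you derive the polynomial length bound. You redefine $\theta$ to be a \emph{minimal} positive $H$-cycle whose every infix is enabled on $\rho$, derive that consecutive visits to a pair must have strictly decreasing $i_{3-b}$-value, and then bound the per-pair visit count by ``the $O(K\chi(G))$ range of $i_{3-b}$-values on $\rho$.'' But nothing in the lemma's hypotheses bounds the $i_{3-b}$-values along $\rho$; the hypotheses only bound $i_b$ and the endpoint $I$-norms, and $\rho$ can freely excurse to arbitrarily large $i_{3-b}$-values. Moreover, once you replace the contiguous sub-cycle $\bar\theta$ by an abstract minimal $\theta$, you lose the handle you had on its configurations: a positive $H$-cycle with every infix enabled on $\rho$ need not have all its configurations lie on $\rho$, and its $i_{3-b}$-range can scale with $|\theta|\cdot\norm{T}$, which is exactly the quantity you are trying to bound, making the argument circular. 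Non-negativity of $i_{3-b}$ alone does not save it either, since the first visit to a pair could already occur at an unboundedly large $i_{3-b}$-value.

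The fix, which is what the paper does, is to stay with the explicit contiguous sub-cycle $\bar\theta$ found in step (ii). Because $\bar\theta$ runs between two entries $k_j < k_\ell \le K\varsigma(G)$ of the fresh-maximum sequence (each of which increases $i_{3-b}$ by at most $\norm{T}$), and because $k_0$ is the global $i_{3-b}$-minimizer of $\rho$, every configuration on $\bar\theta$ has $i_{3-b}$-value within an explicit window of width at most $K\varsigma(G)\norm{T} = K\chi(G)$ above the global minimum. Removing no-ops from $\bar\theta$ leaves a sub-multiset of these same configurations, so the window is preserved. Then \autoref{prop:config-in-run-determined-by-srp} pins each configuration down by its state and its two $I$-coordinates, and the number of admissible triples is at most $\varsigma(G)\cdot K\cdot(K\chi(G)+1) \le (K\chi(G))^2$, giving the bound directly without any minimality or strict-decrease detour. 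Your minimality argument is an unnecessary complication; once repeated configurations are excised, distinctness and the counting bound already finish the proof.
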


\begin{proof}
    First we show that $\rho$ contains a cycle $\bar{\theta}$ with $\Delta(\bar\theta)(i_b) = 0$ and
    $\Delta(\bar\theta)(i_{3-b}) > 0$. Suppose 
    \begin{equation}
        \rho = p_0(\Vec{u}_0) \xrightarrow{t_1} p_1(\Vec{u}_1) \xrightarrow{t_2} \ldots \xrightarrow{t_n} p_n(\Vec{u}_n)
    \end{equation}
    where $t_1, \ldots, t_n \in T$. We define a sequence of indices $k_0, k_1, \ldots, k_m$ as follows: let $k_0$ be the
    least index in $[0, n]$ such that $\Vec{u}_{k_0}(i_{3-b})$ is minimized; for $j \ge 0$, let $k_{j + 1}$ be the least
    index in $k_j + 1, \ldots, n$ such that $\Vec{u}_{k_{j+1}}(i_{3-b}) > \Vec{u}_{k_{j}}(i_{3-b})$. The sequence ends
    at $k_m$ such that $\Vec{u}_{k_m}(i_{3-b})$ is maximized. Observe that $\Vec{u}_{k_0}(i_{3-b}) \le
    \Vec{u}_0(i_{3-b})$ and $\Vec{u}_{k_m}(i_{3-b}) \ge \Vec{u}_n(i_{3-b}) > \Vec{u}_{k_0}(i_{3-b}) + K\cdot\chi(G)$.
    Note that $\Vec{u}_{k_{j + 1}}(i_{3-b}) \le \Vec{u}_{k_j}(i_{3-b}) + \norm{T}$. Then we must have $m \ge
    K\cdot(\chi(G)/\norm{T}) = K\cdot\varsigma(G)$. Since $\rho$ visits at most $\varsigma(G)$ states, and
    $\Vec{u}_k(i_b) < K$ for all $k \in [0, n]$, by Pigeonhole principle there must be two indices $0 \le j < \ell \le
    K\cdot\varsigma(G)$ so that $p_{k_j} = p_{k_\ell}$, $\Vec{u}_{k_j}(i_{b}) = \Vec{u}_{k_\ell}(i_{b})$ and
    $\Vec{u}_{k_j}(i_{3-b}) < \Vec{u}_{k_\ell}(i_{3-b})$. The cycle $\bar{\theta}$ from $p_{k_j}(\Vec{u}_{k_j})$ to
    $p_{k_\ell}(\Vec{u}_{k_\ell})$ satisfies the claim.

    Next we show that $\bar{\theta}$ can be shortened to a cycle $\theta$ with the same effect such that the length of
    $\theta$ is bounded. Let $q(\Vec{v})$ be any configuration on the cycle from $p_{k_j}(\Vec{u}_{k_j})$ to
    $p_{k_\ell}(\Vec{u}_{k_\ell})$. Then we have $0\le \Vec{v}(i_b) < K$ and $\Vec{u}_{k_0}(i_{3-b}) < \Vec{v}(i_{3-b})
    \le \Vec{u}_{k_0}(i_{3-b}) + K\cdot\chi(G)$. By \autoref{prop:config-in-run-determined-by-srp} $\Vec{v}$ is uniquely
    determined by $\Vec{v}(i_b)$ and $\Vec{v}(i_{3-b})$. So we deduce that there can be at most $L =
    (K\cdot\varsigma(G))\cdot (K\cdot\chi(G))$ distinct configurations occurring on the cycle. By removing cycles
    connecting repeated configuration (with $\Vec{0}$ effect), we obtain a cycle $\theta$ whose length is bounded
    polynomially by $L \le (K\chi(G))^2$. Clearly $\theta$ is still enabled at $p_{k_j}(\Vec{u}_{k_j})$.
\end{proof}

\begin{remark}
    We require in \autoref{lem:semi-pos-cycle-one-bounded-coord} that every infix of the cycle $\theta$ is enabled at
    some configuration on $\rho$, so that after removing some sub-cycle from $\theta$ whose effect is non-positive, this
    invariant still holds on the remaining cycle.
\end{remark}

Recall that a VASS can be viewed as a directed graph. So a \emph{strongly connected component} (\textsc{scc}) is a set
of states among which each two is connected by a path.

\begin{lemma}
    \label{lem:scc-in-gs-dichotomy}
    Every {\rm\textsc{scc}} $S$ of $G$ satisfies one of the following conditions:
    \begin{enumerate}
        \item \label{enum:scc-gs-case-a} every state in $S$ belong to some cycle $\theta$ whose length is bounded by
        $P_2(\chi(G))$ for some polynomial $P_2$, and such that $\Delta(\theta)|_I$ is positive;
        \item \label{enum:scc-gs-case-b} the cone generated by effects of simple cycles in $S$ contains no vector
        $\Vec{c}$ with $\Vec{c}|_I$ being positive.
    \end{enumerate}
\end{lemma}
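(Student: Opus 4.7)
The approach is to prove the contrapositive: assuming the cone $C$ generated by effects of simple cycles in $S$ contains some vector $\Vec{c}$ with $\Vec{c}|_I$ strictly positive, I shall exhibit, for every $q \in S$, a cycle $\theta_q$ at $q$ with $\Delta(\theta_q)|_I$ strictly positive and length polynomial in $\chi(G)$. First, I extract a pumping pair of simple cycles. Since $\CycleSpace(G)$ is $2$-dimensional and $I$ is a sign-reflecting projection, the image $C|_I$ lives in $\mathbb{Q}^2$. Applying Carath\'eodory's theorem in the plane to $\Vec{c}|_I$, I obtain (at most) two simple cycles $\theta_1, \theta_2$ lying inside $S$ together with non-negative rationals $a_1, a_2$ making $(a_1\Delta(\theta_1) + a_2\Delta(\theta_2))|_I$ strictly positive. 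Because the generators $\Delta(\theta_j)|_I$ are integer vectors of norm at most $\chi(G)$, a Cramer-rule argument lets me rescale to non-negative integers $a_1, a_2 \le \chi(G)^{O(1)}$ satisfying $(a_1\Delta(\theta_1) + a_2\Delta(\theta_2))|_I \ge \Vec{1}$ (the degenerate sub-case where a single simple cycle already has positive $I$-projection is covered by $a_2 = 0$).

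Next, I construct the required cycle at each $q \in S$. Using strong connectivity of $S$, pick simple paths $\alpha : q \to \Source(\theta_1)$, $\beta : \Source(\theta_1) \to \Source(\theta_2)$ and $\gamma : \Source(\theta_2) \to q$ inside $S$, each of length at most $\varsigma(G)$ and effect norm at most $\chi(G)$. Setting $N := 3\chi(G) + 1$, I would define
\begin{equation*}
    \theta_q \;:=\; \alpha \cdot \theta_1^{Na_1} \cdot \beta \cdot \theta_2^{Na_2} \cdot \gamma.
\end{equation*}
Its effect decomposes as $\Vec{f} + N(a_1\Delta(\theta_1) + a_2\Delta(\theta_2))$ where $\Vec{f} := \Delta(\alpha\beta\gamma)$ satisfies $\norm{\Vec{f}} \le 3\chi(G)$; projecting onto $I$ yields $\Delta(\theta_q)|_I \ge N\Vec{1} - 3\chi(G)\Vec{1} = \Vec{1} > \Vec{0}$. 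The length of $\theta_q$ is bounded by $3\varsigma(G) + N(a_1 + a_2)\varsigma(G)$, which is polynomial in $\chi(G)$ with a universal exponent, providing the desired $P_2$ independent of $G$.

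The main obstacle is the quantitative Carath\'eodory step: to keep $|\theta_q|$ polynomial, both the multipliers $a_1, a_2$ and the pumping count $N$ must stay polynomially bounded in $\chi(G)$. This relies crucially on working in the $2$-dimensional projection $C|_I$, where Cramer's rule together with the norm bound $\norm{\Delta(\theta_j)|_I} \le \chi(G)$ on simple-cycle effects delivers such coefficients; in higher-dimensional analogues one would have to fight off exponentially large coefficients. A minor but necessary technicality is that $\alpha, \beta, \gamma$ must remain inside $S$, which is automatic since $S$ is a single strongly connected component.
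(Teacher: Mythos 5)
Your proposal is correct and follows essentially the same route as the paper: prove the contrapositive, apply Carath\'eodory in the two-dimensional projected cone to extract two simple cycles, bound the rational multipliers by a small-determinant (Cramer/Pottier-style) argument, and then splice the pumped cycles together with three simple paths inside the SCC, using the multiplier $3\chi(G)+1$ to dominate the path contributions. The only cosmetic difference is that the paper invokes a cited integer-programming bound (Lemma~\ref{lem:pottier}) where you gesture at a Cramer-rule rescaling, but both deliver the same polynomial coefficient bound.
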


\begin{proof}
    Let $U := \{\Delta(\theta)|_I : \theta \text{ is a simple cycle in }S\}$. If case \ref{enum:scc-gs-case-b} fails and
    $\Cone(U)$ contains a positive vector, say $\Vec{v}$. Then from Caratheodory's Theorem we know that $\Vec{v} =
    a_1\Vec{u}_1 + a_2 \Vec{u}_2$ for some $\Vec{u}_1, \Vec{u}_2 \in U$ and $a_1, a_2 \in \mathbb{N}$. As
    $\norm{\Vec{u}_i} \le \chi(G)$, by \autoref{lem:pottier} there are $\alpha_1, \alpha_2 \in \mathbb{N}$ such that
    $\alpha_1 \Vec{u}_1 + \alpha_2\Vec{u}_2$ is positive and that $\alpha_1, \alpha_2 \le (5\chi(G) + 2)^2$. Let
    $\theta_1, \theta_2$ be the simple cycles on states $q_1$ and $q_2$ respectively in $S$ such that $\Vec{u}_i =
    \Delta(\theta_i)|_I$ for $i = 1, 2$. For any state $q$, let $\pi_0, \pi_1, \pi_2$ be simple paths from $q$ to $q_1$,
    from $q_1$ to $q_2$ and from $q_2$ to $q$ (so their lengths are bounded by $\varsigma(G)$). We define the cycle 
    \begin{equation}
        \theta := \pi_0 \theta_1^{c\alpha_1} \pi_1 \theta_2^{c\alpha_2} \pi_2
    \end{equation}
    where $c := 3\chi(G) + 1$. Clearly the length of $\theta$ is bounded by $3\chi(G) + 2\varsigma(G)\cdot(5\chi(G) +
    2)^2\cdot(3\chi(G) + 1)$. One can easily verify that $\Delta(\theta)|_I$ is positive. So case
    \ref{enum:scc-gs-case-a} holds.
\end{proof}

\begin{lemma}
    \label{lem:non-neg-cycle-in-scc-gs}
    There is a polynomial $P$ such that for every run $\rho$ in $G$ within one {\rm\textsc{scc}}, if
    $\norm{\Target(\rho)} > P(\chi(G)) \cdot (\norm{\Source(\rho)} + 1)$ and $\Source(\rho)$ is reachable from some
    $\Vec{0}$-configuration, then $\rho$ contains a configuration enabling a cycle $\theta$ of length bounded by
    $P(\chi(G))$ such that $\Delta(\theta)|_I$ is semi-positive.
\end{lemma}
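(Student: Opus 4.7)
The approach is to apply the \textsc{scc} dichotomy (\autoref{lem:scc-in-gs-dichotomy}) to the strongly connected component $S$ containing $\rho$, and argue separately in the two cases, using throughout the drift decomposition $\Delta(\rho) = \Vec{c} + \Vec{z}$ from \autoref{thm:reach-eq-union-affine-cycspace}, where $\Vec{c}$ lies in the $\mathbb{N}$-cone of simple-cycle effects in $S$ and $\norm{\Vec{z}} \le \chi(G)$. Writing $\Vec{c}$ in the canonical basis $\{\Vec{u}_1, \Vec{u}_2\}$ from \autoref{lem:canonical-vecs} yields $\norm{\Vec{c}} \le 2\chi(G)^2 \cdot (|\Vec{c}(i_1)| + |\Vec{c}(i_2)|)$, so the hypothesis $\norm{\Target(\rho)} > P(\chi(G))(\norm{\Source(\rho)} + 1)$ forces $|\Vec{c}(i_1)| + |\Vec{c}(i_2)|$ to be large once $P$ is chosen with enough slack; since $\Target(\rho) \ge \Vec{0}$ and $\Source(\rho)(i_b) \le \norm{\Source(\rho)}$, this promotes at least one of $\Vec{c}(i_1), \Vec{c}(i_2)$ to be strictly positive. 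Say WLOG it is $\Vec{c}(i_1)$.

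In case (\ref{enum:scc-gs-case-a}), the state of $\Target(\rho)$ lies on a graph cycle $\theta$ of length at most $P_2(\chi(G))$ with $\Delta(\theta)|_I$ positive; it only remains to exhibit a configuration on $\rho$ at which $\theta$ is enabled. Since $\rho$ stays in one \textsc{scc} and reaches a configuration of large norm, one can use \autoref{thm:geometry-of-runs} together with \autoref{lem:spp-component-bound} to show that the configuration on $\rho$ at which both $|_I$-coordinates are simultaneously maximal has all its coordinates at least $P_2(\chi(G))\cdot\chi(G)$, which is enough to enable $\theta$ there. A positive $|_I$-effect is \emph{a fortiori} semi-positive, so this case is settled.

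In case (\ref{enum:scc-gs-case-b}), the cone of simple-cycle projections $\{\Delta(\theta)|_I : \theta \text{ simple in } S\}$ contains no positive vector, so since $\Vec{c}|_I$ lies in this cone it is not positive, forcing $\Vec{c}(i_2) \le 0$. The concluding step is to extract a subrun $\rho'$ of $\rho$ along which $\Vec{v}(i_2) < K$ for some polynomial threshold $K$ in $\chi(G)$, while the $i_1$-coordinate grows by at least $K\cdot\chi(G)$; then \autoref{lem:semi-pos-cycle-one-bounded-coord} applied with $b = 2$ produces a cycle $\theta$ of length polynomial in $\chi(G)$, enabled on $\rho$, with $\Delta(\theta)(i_2) = 0$ and $\Delta(\theta)(i_1) > 0$, so $\Delta(\theta)|_I$ is semi-positive as required. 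I expect the extraction of this subrun to be the main obstacle: the $i_2$-coordinate has non-positive net drift but can spike arbitrarily high along $\rho$, so one needs an extremal-valley or pigeonhole argument on the $i_2$-trajectory to cut $\rho$ into pieces on which $i_2$ is controlled while $i_1$ retains sufficient growth, in the spirit of Lemma~3 of \cite{DBLP:conf/mfcs/CzerwinskiLLP19}. The polynomial $P$ in the statement absorbs the blow-ups from $\norm{\Vec{u}_b} \le 2\chi(G)^2$, the cycle-length bound $P_2(\chi(G))$, and the threshold in \autoref{lem:semi-pos-cycle-one-bounded-coord}.
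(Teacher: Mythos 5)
Your skeleton matches the paper's: apply \autoref{lem:scc-in-gs-dichotomy}, decompose the drift via \autoref{thm:reach-eq-union-affine-cycspace} and the canonical basis of \autoref{lem:canonical-vecs}, and invoke \autoref{lem:semi-pos-cycle-one-bounded-coord}. But both cases have gaps.

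In case~(\ref{enum:scc-gs-case-a}) you assert that ``the configuration on $\rho$ at which both $|_I$-coordinates are simultaneously maximal has all its coordinates at least $P_2(\chi(G))\cdot\chi(G)$.'' This does not follow: the run can have a large target while never occupying a configuration where \emph{both} $i_1$- and $i_2$-coordinates are large (think of a trajectory that slides up one axis, crosses low, and slides up the other). The paper therefore splits case~(\ref{enum:scc-gs-case-a}) into two further sub-cases: when such a ``doubly high'' configuration exists, your argument goes through; when it does not (i.e.\ every configuration has at least one $I$-coordinate below a threshold $K$), one takes the longest suffix of $\rho$ on which the target's small coordinate (say $i_1$) remains bounded by $K$, argues the source of that suffix has small $\norm{\cdot|_I}$, and applies \autoref{lem:semi-pos-cycle-one-bounded-coord} to produce the desired short cycle. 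You have no counterpart for this second sub-case, and the configuration you pick cannot be guaranteed to enable the short positive cycle $\theta$.

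In case~(\ref{enum:scc-gs-case-b}) you correctly identify what is needed --- a subrun with $i_2$ bounded and $i_1$ growing --- but you flag its extraction as an ``extremal-valley or pigeonhole'' argument and an open obstacle. The paper's resolution is structural, not combinatorial: it first proves that in case~(\ref{enum:scc-gs-case-b}) the set $U$ of $I$-projections of simple-cycle effects \emph{must} contain a horizontal or vertical vector (Claims~\ref{claim:ratio-of-cone-U} and~\ref{claim:simple-cycles-vert-or-horizon}); from, say, a horizontal vector $(a,0)$ in $U$ and the no-positive-vector property of $\Cone(U)$ it follows that no cycle in the \textsc{scc} has positive $i_2$-effect, hence the $i_2$-coordinate \emph{cannot} spike along $\rho$: it is uniformly bounded by $\Source(\rho)(i_2) + \chi(G)$. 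Your worry that $i_2$ ``can spike arbitrarily high'' is exactly what this claim rules out. The pigeonhole-style cutting you have in mind is used at the level of \autoref{lem:non-negative-cycle-for-proper-geo-2d} (splitting $\rho$ across \textsc{scc}s), not inside this lemma. A second, smaller issue: even once \autoref{lem:semi-pos-cycle-one-bounded-coord} gives a cycle, its length bound depends on the threshold $K$, which in turn depends on $\norm{\Source(\rho)}$; the paper shows how to shorten the cycle by repeatedly removing non-positive sub-cycles to get a bound depending only on $\chi(G)$, a cleanup step your sketch omits.
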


\begin{proof}
    Fix a run $\rho: p(\Vec{u}) \xrightarrow{*} q(\Vec{v})$ in $G$ within the \textsc{SCC} $S$ such that $\norm{\Vec{v}}
    > P(\chi(G)) \cdot (\norm{\Vec{u}} + 1)$ where $P$ is to be determined. We first show the following claim.

    \begin{claim}
        \label{claim:norm-to-proj-norm}
        For any number $\alpha > 0$, if $\norm{\Vec{v}} > (4\chi(G)^3 (\alpha + 1) + \chi(G))(\norm{\Vec{u}} + 1)$, then
        we have $\norm{\Vec{v}|_I} > \alpha (\norm{\Vec{u}|_I} + 1)$.
    \end{claim}

    \begin{claimproof}
        Suppose on the contrary that $\norm{\Vec{v}|_I} \le \alpha (\norm{\Vec{u}|_I} + 1)$. Recall that by
        \autoref{thm:reach-eq-union-affine-cycspace} there exists $\Vec{c} \in \CycleSpace(G)$ and $\Vec{z}\in
        \mathbb{Z}^d$ with $\norm{\Vec{z}} \le \chi(G)$ such that $\Vec{v} = \Vec{u} + \Vec{c} + \Vec{z}$. Now we have 
        \begin{equation}
            \norm{\Vec{c}|_I}
                \le \norm{\Vec{v}|_I} + \norm{\Vec{u}|_I} + \norm{\Vec{z}|_I} 
                \le (\alpha + 1)\norm{\Vec{u}} + \alpha + \chi(G).
        \end{equation}
        Recall that $\CycleSpace(G)$ can be spanned by effects of two simple cycles in $G$ whose norm is bounded by
        $\chi(G)$. By \autoref{lem:spp-component-bound}, as $\Supp(\CycleSpace(G)) = [d]$, we have 
        \begin{equation}
            \norm{\Vec{c}} \le 4\chi(G)^2 ((\alpha + 1)\norm{\Vec{u}} + \alpha + \chi(G)).
        \end{equation}
        Therefore, 
        \begin{align}
        \begin{split}
            \norm{\Vec{v}} &\le \norm{\Vec{u}} + \norm{\Vec{c}} + \norm{\Vec{z}}\\
                &\le (4\chi(G)^2 (\alpha + 1) + 1)\norm{\Vec{u}} 
                    + 4\chi(G)^2\alpha + 4\chi(G)^3 + \chi(G)\\
                &\le (4\chi(G)^3 (\alpha + 1) + \chi(G))(\norm{\Vec{u}} + 1),
        \end{split}
        \end{align}
        which contradicts the assumption.
    \end{claimproof}

    Let $P_1$ and $P_2$ be polynomials as in \autoref{lem:semi-pos-cycle-one-bounded-coord} and
    \autoref{lem:scc-in-gs-dichotomy} respectively. Let $\overline{P}(x)$ be a polynomial satisfying $\overline{P}(x)
    \ge P_1((2x^3P_2(x) + 2x^3 + x)(2x+1))$ and $\overline{P}(x) \ge x^2$. In the following we show the lemma holds for
    $P(x) = 4x^3(\overline{P}(x) + 1) + x$. By the above claim, $\norm{\Vec{v}} > P(\chi(G))(\norm{\Vec{u}} + 1)$
    implies $\norm{\Vec{v}|_I} > \overline{P}(\chi(G))(\norm{\Vec{u}|_I} + 1)$. We divide into two cases according to
    which case the \textsc{scc} $S$ satisfy in \autoref{lem:scc-in-gs-dichotomy}.

    \begin{itemize}
        \item \textbf{Case 1. $S$ satisfies \autoref{lem:scc-in-gs-dichotomy}, case \ref{enum:scc-gs-case-a}}. We
        further consider 2 cases:
        \begin{description}
            \item[Case 1(a).] $\rho$ visits a configuration $s(\Vec{w})$ with both components $\Vec{w}(i_1)$ and
            $\Vec{w}(i_2)$ greater than $K := 2\chi(G)^2 \cdot (P_2(\chi(G))\norm{T} + \chi(G)) + \chi(G)$.
            
            Recall that $s(\Vec{w})$ is reachable from some $\Vec{0}$-configuration. This allows us to write $\Vec{w} =
            \Vec{c} + \Vec{z}$ for some $\Vec{c} \in \CycleSpace(G)$ and $\Vec{z} \in \mathbb{Z}^d$ with $\norm{\Vec{z}}
            \le \chi(G)$. Now we must have $\Vec{c}(i_1), \Vec{c}(i_2) \ge 2\chi(G)^2 \cdot (P_2(\chi(G))\norm{T} +
            \chi(G))$. In particular, we have $\Vec{c} \ge \Vec{0}$ as $I$ is sign-reflecting. Now by
            \autoref{lem:spp-component-bound} we have $\Vec{c}(i) \ge P_2(\chi(G))\norm{T} + \chi(G)$ for all $i$. This
            implies that $\Vec{w}(i) \ge P_2(\chi(G))\norm{T}$ for all $i$, so $s(\Vec{w})$ enables any cycle of length
            bounded by $P_2(\chi(G))$, especially the cycle $\theta$ given by \autoref{lem:scc-in-gs-dichotomy}, case
            \ref{enum:scc-gs-case-a}.

            \item[Case 1(b).] Every configuration $s(\Vec{w})$ on $\rho$ satisfy $\Vec{w}(i_1) \le K$ or $\Vec{w}(i_2)
            \le K$.
            
            Assume w.l.o.g.\ $\Vec{v}(i_1) \le K$. Let $\rho'$ be the longest suffix of $\rho$ such that every
            configuration has its $i_1$-coordinate bounded by $K$. Suppose $\Source(\rho') = p'(\Vec{u}')$, then we
            claim $\norm{\Vec{u'}|_I} \le \norm{\Vec{u}|_I} + K + \norm{T}$. Indeed, if $\Vec{u}' \ne \Vec{u}$, let
            $q'(\Vec{v}')$ be the configuration preceding $p'(\Vec{u}')$. Then $\Vec{v}'(i_1) > K$ and $\Vec{v}'(i_2)
            \le K$. So we must have $\Vec{u'}(i_1) \le K$ and $\Vec{u'}(i_2) \le K + \norm{T}$. Then $\norm{\Vec{v}|_I}
            \ge \overline{P}(\chi(G))(\norm{\Vec{u}|_I} + 1) \ge \norm{\Vec{u}|_I} + \overline{P}(\chi(G)) \ge
            \norm{\Vec{u}'|_I} + K\chi(G)$. We are done by \autoref{lem:non-negative-cycle-for-proper-geo-2d}.
        \end{description}

    \item \textbf{Case 2. $S$ satisfies \autoref{lem:scc-in-gs-dichotomy}, case \ref{enum:scc-gs-case-b}.} Denote $U :=
        \{\Delta(\theta)|_I : \theta \text{ is a simple cycle in }S\}$. Then $\Cone(U)$ contains no positive vector.

        \begin{claim}
            \label{claim:ratio-of-cone-U}
            If $U$ contains no vertical or horizontal vector, then every vector $\Vec{c} \in \Cone_{\mathbb{N}}(U)$ with
            $\Vec{c}(i_1) > 0$ satisfies $\Vec{c}(i_2) \le -\Vec{c}(i_1)/\chi(G)$. Symmetrically, if $\Vec{c}(i_2) > 0$
            then $\Vec{c}(i_1) \le -\Vec{c}(i_2)/\chi(G)$.
        \end{claim}
    
        \begin{claimproof}
            Otherwise, let $\Vec{c} \in \Cone_{\mathbb{N}}(U)$ satisfy $\Vec{c}(i_1) > 0$ and $\Vec{c}(i_2) >
            -\Vec{c}(i_1)/\chi(G)$ such that $\Vec{c}(i_1)$ is minimal. Denote $\delta := \Vec{c}(i_2) +
            \Vec{c}(i_1)/\chi(G)$. As $\Vec{c}(i_1) > 0$, there must be $\Vec{u} \in U$ such that $\Vec{u}(i_1) > 0$ and
            $\Vec{u}' := \Vec{c} - \Vec{u} \in \Cone_{\mathbb{N}}(U)$. Note that $\norm{\Vec{u}} \le \chi(G)$. By our
            assumptions we must have $\Vec{u}(i_2) < 0$, thus $-\Vec{u}(i_2) / \Vec{u}(i_1) \ge 1 / \chi(G)$. If
            $\Vec{u}'(i_1) > 0$, then 
            \begin{equation}
                \Vec{u}'(i_2) = \Vec{c}(i_2) - \Vec{u}(i_2) > -\Vec{c}(i_1)/\chi(G) + \Vec{u}(i_1) / \chi(G) = -\Vec{u}'(i_1) / \chi(G).
            \end{equation}
            As $\Vec{u}(i_1) > 0$ and $\Vec{u}'(i_1) > 0$, we have $\Vec{u}'(i_1) < \Vec{c}(i_1)$, which contradicts the
            minimality of $\Vec{c}(i_1)$. Thus we must have $\Vec{u}'(i_1) \le 0$. Note that we cannot have
            $\Vec{u}'(i_2) \le 0$, otherwise 
            \begin{align}
            \begin{split}
                \Vec{c}(i_2) &= \Vec{u}(i_2) + \Vec{u}'(i_2) \le \Vec{u}(i_2) \le -\Vec{u}(i_1) / \chi(G)\\
                    &\le (-\Vec{u}(i_1) - \Vec{u'}(i_1)) / \chi(G) \le -\Vec{c}(i_1)/\chi(G),
            \end{split}
            \end{align}
            which contradicts the assumption. So now we know that $\Vec{u}'(i_1) \le 0$ and $\Vec{u}'(i_2) > 0$. If
            $\Vec{u}'(i_1) = 0$, then for some large $\alpha > 0$, $\Vec{c} + \alpha\Vec{u}' > \Vec{0}$, contradicting
            to the fact that $\Cone(U)$ contains no positive vector. Then $\Vec{u}'(i_1) < 0$ and $\Vec{u}'(i_2) > 0$.
            Note that 
            \begin{equation}
                \frac{\Vec{u}'(i_2)}{-\Vec{u'}(i_1)} = 
                \frac{\Vec{u}'(i_2)}{\Vec{u}(i_1) - \Vec{c}(i_1)} \ge \frac{1}{\Vec{u}(i_1)} \ge
                \frac{1}{\chi(G)}.
            \end{equation}
            Let
            \begin{equation}
                \epsilon := \frac{\delta / \beta}{\Vec{u'}(i_2) / (-\Vec{u}'(i_1))} > 0,
                \qquad
                \alpha := \frac{\Vec{c}(i_1) - \epsilon}{-\Vec{u}'(i_1)},
            \end{equation}
            where $\beta > 1$ is chosen large enough to make sure $\alpha > 0$. Consider $\Vec{c}' := \Vec{c} +
            \alpha\Vec{u}'$,
            \begin{align}
                \Vec{c}'(i_1) &= \Vec{c}(i_1) + \frac{\Vec{c}(i_1) - \epsilon}{-\Vec{u}'(i_1)} \Vec{u}'(i_1) = \epsilon > 0,\\
            \begin{split}
                \Vec{c}'(i_2) &= \Vec{c}(i_2) + \frac{\Vec{c}(i_1) - \epsilon}{-\Vec{u}'(i_1)} \Vec{u}'(i_2)\\
                    &= \Vec{c}(i_2) + \Vec{c}(i_1)\frac{\Vec{u}'(i_2)}{\Vec{u}'(i_1)} - \epsilon\frac{\Vec{u}'(i_2)}{\Vec{u}'(i_1)}\\
                    &\ge \Vec{c}(i_2) + \Vec{c}(i_1)/(\chi(G)) - \delta / \beta = \delta (1 - 1/\beta) > 0.
            \end{split}
            \end{align}
            This shows that $\Vec{c}' \in \Cone(U)$ is positive, a contradiction. 
        \end{claimproof}
        \begin{claim}
            \label{claim:simple-cycles-vert-or-horizon}
            $U$ contains a vertical or horizontal vector.
        \end{claim}
    
        \begin{claimproof}
            Assume that $U$ contains no vertical nor horizontal vector. Note that we can always write $\Vec{v} = \Vec{u}
            + \Vec{c} + \Vec{z}$ for $\Vec{c} \in \Cone_{\mathbb{N}}(U)$ and $\Vec{z} \in \mathbb{Z}^d$ with
            $\norm{\Vec{z}} \le \chi(G)$. And we have assumed that $\norm{\Vec{v}|_I} \ge
            \overline{P}(\chi(G))(\norm{\Vec{u}|_I} + 1)$. Observe that we cannot have $\Vec{c}|_I \le \Vec{0}$ as
            $\overline{P}(\chi(G)) \ge \chi(G)$. So assume w.l.o.g.\ that $\Vec{c}(i_1) > 0$. Then $\Vec{c}(i_2) \le 0$
            in this case. Moreover, \autoref{claim:ratio-of-cone-U} shows that $\Vec{c}(i_2) \le -\Vec{c}(i_1)/
            \chi(G)$. Since $\Vec{v}(i_2) \ge 0$, we have $\Vec{c}(i_2) \ge -\Vec{u}(i_2) - \Vec{z}(i_2) \ge
            -\Vec{u}(i_2) - \chi(G)$. Combining these inequalities, we deduce
            \begin{align}
                \Vec{c}(i_1) \le \chi(G) \cdot \Vec{u}(i_2) + \chi(G)^2.
            \end{align}
            Then $\Vec{v}(i_1) \le \Vec{u}(i_1) + \chi(G) \cdot \Vec{u}(i_2) + \chi(G)^2 + \chi(G) \le
            \overline{P}(\chi(G))(\norm{\Vec{u}|_I} + 1)$, and $\Vec{v}(i_2) \le \Vec{u}(i_2) + \chi(G) \le
            \overline{P}(\chi(G))(\norm{\Vec{u}|_I} + 1)$, a contradiction.
        \end{claimproof}

        Suppose $U$ contains a vector $\Vec{c}$ with $\Vec{c}(i_1) > 0$ and $\Vec{c}(i_2) = 0$, then any cycle in $S$
        cannot have positive effect in $i_2$-th coordinate. One easily observe that for any configuration $s(\Vec{w})$
        on $\rho$, we have $\Vec{w}(i_2) \le \Vec{u}(i_2) + \chi(G) =: K$. By
        \autoref{lem:semi-pos-cycle-one-bounded-coord} $\rho$ contains a configuration enabling a cycle
        $\overline{\theta}$ with $\Delta(\overline{\theta})|_I = (a, 0)$ for some $a > 0$, such that every configuration
        in this cyclic run occur in $\rho$. We are only left to bound the length of $\overline{\theta}$ (as $K$ depends
        on $\Vec{u}(i_2)$ and the bound by \autoref{lem:semi-pos-cycle-one-bounded-coord} is too large). If
        $|\overline{\theta}| \le |Q|$ then we are done. Otherwise, find the first simple cycle $\theta'$ that is an
        infix of $\overline{\theta}$ with $\Delta(\theta') = (a', 0)$ for some $a \in \mathbb{Z}$. If $a' > 0$ then we
        are done since $\rho$ contains a configuration enabling $\theta'$. Otherwise we just remove $\theta'$ from
        $\overline{\theta}$ to reduce the length of $\overline{\theta}$. Note that each infix on the reduced cycle is
        still enabled by some configuration on the run $\rho$, so we can continue this fashion until the length of
        $\overline{\theta}$ is less than $|Q|$. 
        \qedhere
    \end{itemize}
\end{proof}

Finally, we are at the position to prove \autoref{lem:non-negative-cycle-for-proper-geo-2d}.

\begin{proof}[Proof of \autoref{lem:non-negative-cycle-for-proper-geo-2d}]
    Let $P_3$ be the polynomial in \autoref{lem:non-neg-cycle-in-scc-gs}. Define $P(x) = P_3(x) \cdot (x + 1)$. Let
    $\rho$ be a run in $G$ with $\norm{\Target(\rho)} > P(\chi(G))^{\varsigma(G)}$. Note that we can break $\rho$ into
    $k \le \varsigma(G)$ segments $\rho = \rho_1\rho_2\ldots\rho_k$ where each $\rho_j$ stays within one \textsc{scc} in
    $G$. By averaging there must be some $j \in [k]$ satisfying $\norm{\Target(\rho_j)} >
    P_3(\chi(G))(\norm{\Source(\rho_j)} + 1)$. So by \autoref{lem:non-neg-cycle-in-scc-gs}, $\rho_j$ contains a
    configuration enabling a cycle $\theta$ with length at most $P_3(\chi(G))$ and $\Delta(\theta)|_I$ is semi-positive.
\end{proof}

\subsubsection{Thin-thick Classification}

\begin{proof}[Proof of \autoref{lem:convex-cone-full-or-null}]
    If $\Vec{u}$ and $\Vec{v}$ are linearly dependent, then the result holds trivially (it must be the case $X \cap
    \Cone\{\Vec{u}, \Vec{v}\} = \emptyset$). So assume $\Vec{u}$ and $\Vec{v}$ are linearly independent. Towards a
    contradiction, assume that there exist $\Vec{s} \in X \cap \Cone\{\Vec{u}, \Vec{v}\}$ and $\Vec{t} \in X \setminus
    \Cone\{\Vec{u}, \Vec{v}\}$. Note that $\Vec{s}$ and $\Vec{t}$ must be non-zero. As $X \in \Span\{\Vec{u},
    \Vec{v}\}$, there must be numbers $a, b \in \mathbb{Q}_{\ge0}$ and $c, d \in \mathbb{Q}$ such that 
    \begin{align}
        \Vec{s} = a \Vec{u} + b \Vec{v},\\
        \Vec{t} = c \Vec{u} + d \Vec{v}.
    \end{align}
    Observe that $\Vec{s}$ and $\Vec{t}$ cannot be linearly dependent, so $ad - bc \ne 0$. Then we can write 
    \begin{align}
        \Vec{u} = \frac{bd}{ad - bc} \left( \frac{1}{b} \Vec{s} - \frac{1}{d} \Vec{t} \right),\\
        \Vec{v} = \frac{ac}{bc - ad} \left( \frac{1}{a} \Vec{s} - \frac{1}{c} \Vec{t} \right).
    \end{align}
    Considering the signs of $c$ and $d$, we do the following case analysis:
    \begin{equation}
        \begin{cases}
            c \ge 0, d \ge 0 & \implies \Vec{t} \in \Cone\{\Vec{u}, \Vec{v}\}, \text{ contradiction}\\
            c \ge 0, d < 0 & \implies \Vec{u} \in \Cone\{\Vec{s}, \Vec{t}\}\\
            c < 0, d \ge 0 & \implies \Vec{v} \in \Cone\{\Vec{s}, \Vec{t}\}\\
            c < 0, d < 0 & \implies \begin{cases}
                ad - bc < 0 & \implies \Vec{u} \in \Cone\{\Vec{s}, \Vec{t}\}\\
                ad - bc > 0 & \implies \Vec{v} \in \Cone\{\Vec{s}, \Vec{t}\}
            \end{cases}
        \end{cases}
    \end{equation}
    So either $\Vec{u} \in \Cone\{\Vec{s}, \Vec{t}\}$ or $\Vec{v} \in \Cone\{\Vec{s}, \Vec{t}\}$. Let's assume it is the
    former case. So $\Vec{u} = \alpha \Vec{s} + \beta \Vec{t}$ for some $\alpha, \beta \in \mathbb{Q}_{\ge 0}$. Since
    $X$ is convex, we have $\Vec{u} / (\alpha + \beta) \in X \cap \mathbb{Q}_{\ge0} \cdot \Vec{u}$, a contradiction.
\end{proof}

\begin{proof}[Proof of \autoref{prop:cone-as-rot-rot}]
    As $I$ is a sign-reflecting projection, by \autoref{lem:spp-inject}, $\Vec{w} \in \Cone\{\Vec{u}, \Vec{v}\}$ is
    equivalent to say that $\Vec{w}|_I \in \Cone\{\Vec{u}|_I, \Vec{v}|_I\}$. So we shall only prove the lemma for
    2-vectors, i.e. assume $\Vec{u}, \Vec{v}$ are arbitrary vectors in $\mathbb{Q}^2$ such that $\Vec{u} \RotToEq
    \Vec{v}$, and fix $\Vec{w} \in \mathbb{Q}^2$.

    Necessity is easy: note that $\Vec{u} \RotToEq \Vec{u} \RotToEq \Vec{v}$ and $\Vec{u} \RotToEq \Vec{v} \RotToEq
    \Vec{v}$, and this property is preserved by any non-negative linear combination of $\Vec{u}$ and $\Vec{v}$.

    For sufficiency, we prove the contrapositive: assuming $\Vec{w} \notin \Cone\{\Vec{u}, \Vec{v}\}$, we show either
    $\Vec{u} \NotRotToEq \Vec{w}$ or $\Vec{w} \NotRotToEq \Vec{v}$. Note that $\Vec{u} \RotTo \Vec{v}$ implies that
    $\Vec{u}$ and $\Vec{v}$ must be linearly independent. So $\Vec{w} = \alpha \Vec{u} + \beta \Vec{v}$ for some
    $\alpha, \beta \in \mathbb{Q}$. As $\Vec{w} \notin \Cone\{\Vec{u}, \Vec{v}\}$, either $\alpha < 0$ or $\beta < 0$.
    If it's the former, then $\Inner{\Vec{w}, \Vec{v}^R} = \alpha\Inner{\Vec{u}, \Vec{v}^R} = -\alpha\Inner{\Vec{v},
    \Vec{u}^R} < 0$, so $\Vec{w} \NotRotToEq \Vec{v}$. If it is the latter, then $\Inner{\Vec{w}, \Vec{u}^R} =
    \beta\Inner{\Vec{v}, \Vec{u}^R} < 0$, so $\Vec{u} \NotRotToEq \Vec{w}$.
\end{proof}

\begin{proof}[Proof of \autoref{lem:high-conf-if-not-thin}]
    Since $\rho$ is not $p(\chi(G))^{d\cdot \varsigma(G)}$-thin, it must contain a configuration $s(\Vec{w})$ where
    $\Vec{w}$ is outside all $p(\chi(G))^{d\cdot \varsigma(G)}$-beams. In particular, we have $\norm{\Vec{w}} >
    p(\chi(G))^{d\cdot \varsigma(G)}$. By \autoref{thm:reach-eq-union-affine-cycspace} we can write $\Vec{w} = \Vec{c} +
    \Vec{z}$ where $\Vec{c} \in \CycleSpace(G)$ and $\norm{\Vec{z}} \le \chi(G)$. Then $\norm{\Vec{c}} > \overline{B}$.
    Note we can further write 
    \begin{equation}
        \Vec{c} = \frac{\Vec{c}(i_1)}{\Vec{u}_1(i_1)}\Vec{u}_1 + \frac{\Vec{c}(i_2)}{\Vec{u}_2(i_2)}\Vec{u}_2
    \end{equation}
    where recall that $\Vec{u}_1$ and $\Vec{u}_2$ are canonical horizontal and vertical vectors. If it is the case that
    $\Vec{c}(i_1) \le (\overline{B} - \chi(G)) / 2\chi(G)^2$. As $\Vec{w}(i_1) \ge 0$, we must have $\Vec{c}({i_1}) \ge
    -\chi(G) \ge - (\overline{B} - \chi(G)) / 2\chi(G)^2$. So $|\Vec{c}(i_1)| \le (\overline{B} - \chi(G)) /
    2\chi(G)^2$. Then $\norm{\frac{\Vec{c}(i_1)}{\Vec{u}_1(i_1)}\Vec{u}_1}\le \overline{B} - \chi(G)$, which implies
    that $\Vec{w} \in \mathcal{B}^{\mathbb{Z}}_{\Vec{u}_2, \overline{B}}$, a contradiction. Similarly, we cannot have
    $\Vec{c}(i_2) \le (\overline{B} - \chi(G)) / 2\chi(G)^2$. Now by \autoref{lem:spp-component-bound}, $\Vec{c}(i) \ge
    (\overline{B} - \chi(G)) / 4\chi(G)^4 = B + \chi(G)$ for all $i \in [d]$. Then $\Vec{w}(i) = \Vec{c}(i) + \Vec{z}(i)
    \ge B$ for all $i \in [d]$.
\end{proof}

\begin{proof}[Proof of \autoref{lem:dichotomy-main-claim}]
    By symmetry, we only prove the first item. In the following we fix a $\Vec{0}$-run $\tau$ that is not
    $p(\chi(G))^{d\varsigma(G)}$-thin, which factors into $\tau = \rho\rho'$ where $\Target(\rho) = s(\Vec{w}) =
    \Source(\rho')$ with $\Vec{w}$ lies out of all $p(\chi(G))^{d\varsigma(G)}$-beams and $\Vec{w}(i) > B$ for all $i
    \in [d]$.

    \begin{enumerate}
        \item \textbf{First cycle.} As $\Vec{w} > B \ge P(\chi(G))^{\varsigma(G)}$, by
        \autoref{lem:non-negative-cycle-for-proper-geo-2d}, $\rho$ must contain a configuration $c_1$ enabling a cycle
        $\pi_1$ where $\Delta(\pi_1)|_I$ is semi-positive. Note that we may further assume that $c_1$ is the first
        configuration on $\rho$ with $\norm{c_1} > P(\chi(G))^{\varsigma(G)}$. So $\norm{c_1} \le
        P(\chi(G))^{\varsigma(G)} + \chi(G)$. Let $\rho =: \rho_1\sigma$ with $\Target(\rho_1) = c_1$, then the norm of
        all configurations along $\rho_1$ is bounded by $P(\chi(G))^{\varsigma(G)} + \chi(G)$.

        \item \textbf{Second cycle.} If $\Delta(\pi_1)|_I$ is positive, then we simply let $\pi_2 := \pi_1$. Otherwise,
        say $\Delta(\pi_1)(i_1) = 0$. Let's extract a sequence of configurations $d_1, d_2, \ldots$ from $\sigma$ as
        follows. Let $d_1 := c_1$. If $d_j$ has been defined, let $d_{j+1}$ be the first configurations on $\sigma$
        after $d_j$ such that $d_{j+1}(i_1) > d_j(i_1)$. Then $d_{j+1}(i_1) \le d_j(i_1) + \norm{T}$. As $\Vec{w}(i_1) >
        B \ge P(\chi(G))^{\varsigma(G)} + \chi(G) + \varsigma(G) \cdot \norm{T} \ge d_1(i_1) + \varsigma(G) \cdot
        \norm{T}$, the length of this sequence must be greater than $\varsigma(G)$. So there exists $j_1 < j_2$ such
        that $d_{j_1}(i_1) < d_{j_2}(i_1)$ and the control state of $d_{j_1}$ and $d_{j_2}$ coincidence. Let $j_1$ be
        minimal, so the $i_1$-th coordinate of all configuration on $\sigma$ before $d_{j_1}$ is bounded by
        $P(\chi(G))^{\varsigma(G)} + 2\chi(G) \le B$. Let $\overline{\pi_2}$ be the cycle from $d_{j_1}$ to $d_{j_2}$.
        For bounding the length, let $\pi_2$ be obtained from $\overline{\pi_2}$ by recursively removing all sub-cycles
        in it whose effect $\Vec{c}$ satisfies $\Vec{c}(i_1) = 0$ (so $\Vec{c}$ is in parallel with $\Delta(\pi_1)$, and
        $\Supp(\Vec{c}) \subseteq \Supp(\Delta(\pi_1))$). Observe that for any value $x \in \mathbb{N}$ there can be at
        most $\varsigma(G)$ distinct configurations $d$ on $\pi_2$ with $d(i_1) = x$. As the $i_1$-th coordinate is
        bounded by $P(\chi(G))^{\varsigma(G)} + 2\chi(G)$, the length of $\pi_2$ cannot exceed
        $(P(\chi(G))^{\varsigma(G)} + 2\chi(G))\cdot \varsigma(G) \le B$. Also, since we only removed from
        $\overline{\pi_2}$ the cycles with effect parallel with $\Delta(\pi_1)$, the path $\pi_2$ must be $S$-enablled
        for $S = [d] \setminus \Supp(\Delta(\pi_1))$. Let $\sigma =: \rho_2\sigma'$ with $\Target(\rho_2) = d_{j_1} =:
        c_2$.
    \end{enumerate}

    For the remaining part, recall that we can write $\Vec{w} = \Vec{c} + \Vec{z}$ with $\Vec{c} \in \CycleSpace(G)$ and
    $\norm{\Vec{z}} \le \chi(G)$. As $\Vec{w}(i) \ge B \ge \chi(G)$ for all $i \in [d]$, we have $\Vec{c} = \Vec{w} -
    \Vec{z} \ge \Vec{0}$. In the following we will exhibit cycles $\pi_3$ and $\pi_4$ enabled on $\sigma'$ such that
    $\SeqCone(\Delta(\pi_1), \ldots, \Delta(\pi_4)) \ni \Vec{c}$, which will then complete the proof.

    If we already have $\Vec{c} \in \Cone\{\Delta(\pi_1), \Delta(\pi_2)\}$, which means $\Vec{c} \in
    \SeqCone(\Delta(\pi_1), \Delta(\pi_2))$, then we are done by taking $\pi_3$ and $\pi_4$ to be the empty cycle. So
    assume next that $\Vec{c} \notin \Cone\{\Delta(\pi_1), \Delta(\pi_2)\}$. We first show that we may well assume
    $\Delta(\pi_2) \RotTo \Vec{c}$. In case $\Delta(\pi_1)$ is positive, so $\pi_2 = \pi_1$, then we can assume this by
    properly switching the $i_1$- and $i_2$-th coordinates. Otherwise, we may first assume $\Delta(\pi_1)(i_1) = 0$.
    Then observe that $\Delta(\pi_1) \RotTo \Vec{c}$ and $\Delta(\pi_1) \RotTo \Delta(\pi_2)$ must hold. Now by
    \autoref{prop:cone-as-rot-rot}, $\Vec{c} \notin \Cone\{\Delta(\pi_1), \Delta(\pi_2)\}$ implies that $\Delta(\pi_2)
    \RotTo \Vec{c}$. Now we define a sequence of configuration $c_3, c_4, \ldots, c_m$ on $\sigma'$ and simple cycles
    $\pi_3, \pi_4, \ldots, \pi_m$ (we are abusing the notation $\pi_3$ and $\pi_4$) as follows: let $c_{i+1}$ be the
    first configuration on $\sigma'$ after or equal to $c_i$ that $\emptyset$-enables a simple cycle $\pi_{i+1}$
    satisfying $\Delta(\pi_i) \RotTo \Delta(\pi_{i+1})$, where $i = 2, 3, \ldots$. As there can be at most $(\chi(G) +
    1)^2$ distinct effects of simple cycles, we have $m \le (\chi(G) + 1)^2 + 1$. 
    
    If it happens that for some minimal $i$, $\Delta(\pi_i) \RotToEq \Vec{c} \RotToEq \Delta(\pi_{i+1})$, then $\Vec{c}
    \in \Cone\{\Delta(\pi_i), \Delta(\pi_{i+1})\}$ by \autoref{prop:cone-as-rot-rot}. As $i$ is minimal, for all $j \in
    [2, i]$ we have $\Delta(\pi_j) \RotTo \Vec{c}$. By induction one can easily get $\Delta(\pi_2) \RotTo \Delta(\pi_i)
    \RotTo \Vec{c}$. In particular $\Delta(\pi_i) \ge \Vec{0}$ as $\Delta(\pi_2) \ge \Vec{0}$ and $\Vec{c} \ge \Vec{0}$.
    This implies $\Vec{c} \in \SeqCone( \Delta(\pi_i), \Delta(\pi_{i+1}) )$ and we are done.

    Suppose on the other hand, for all $i \in [2, m - 1]$, $\Delta(\pi_i) \RotTo \Delta(\pi_{i+1}) \RotTo \Vec{c}$, then
    indeed $\Vec{u}_2 \RotTo \Delta(\pi_i) \RotTo \Vec{c}$ for all $i \in [2, m]$, where $\Vec{u}_2$ is the canonical
    vertical vector. In this case we will derive a contradiction. For $i \in [1, m]$, let $\Vec{v}_i := \Delta(\pi_i)$.
    In addition, let $\Vec{v}_0 := \Vec{u}_2$. Denote $c_{m + 1} := s(\Vec{w})$, which is the target of $\rho$. 

    \begin{claim}
        \label{claim:approx-rotating-vector}
        For every $i \in [1, m]$, there exists a point $\Vec{u} \in \Cone\{\Vec{v}_0, \Vec{v}_i\}$ such that
        $\norm{\Vec{u}|_I - c_{i + 1}|_I} \le P(\chi(G))^{\varsigma(G)} + (i + 3)\chi(G)$.
    \end{claim}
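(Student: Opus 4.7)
I will prove the claim by induction on $i$, extending the witness across each step from $c_{i+1}$ to $c_{i+2}$ by accumulating the effects of the simple cycles traversed in between. For the base case $i = 1$, split on whether $\Delta(\pi_1)|_I$ is positive: if so, then $\pi_2 = \pi_1$ and $c_2$ coincides with $c_1$, so $\Vec{u} := \Vec{0} \in \Cone\{\Vec{v}_0, \Vec{v}_1\}$ already gives $\norm{\Vec{u}|_I - c_1|_I} \le \norm{c_1} \le P(\chi(G))^{\varsigma(G)} + \chi(G)$; otherwise $\Delta(\pi_1)$ is parallel to $\Vec{u}_2 = \Vec{v}_0$ and the earlier construction of $c_2 = d_{j_1}$ bounds $c_2(i_1) \le P(\chi(G))^{\varsigma(G)} + 2\chi(G)$, so taking $\Vec{u} := (c_2(i_2)/\Vec{u}_2(i_2))\, \Vec{u}_2 \in \Cone\{\Vec{v}_0\} \subseteq \Cone\{\Vec{v}_0, \Vec{v}_1\}$ yields $\Vec{u}|_I = (0, c_2(i_2))$ with distance exactly $c_2(i_1)$ from $c_2|_I$.

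For the inductive step, let $\Vec{u}_{\mathrm{old}} \in \Cone\{\Vec{v}_0, \Vec{v}_i\}$ be given by the induction hypothesis and let $\eta$ be the sub-run of $\sigma'$ from $c_{i+1}$ to $c_{i+2}$. Greedily decompose $\eta$ into a simple path $\xi$ (with $\norm{\Delta(\xi)} \le \chi(G)$) plus simple cycles $\theta_1, \ldots, \theta_k$, each actually enabled---hence $\emptyset$-enabled---at a configuration strictly preceding $c_{i+2}$. By the minimality in the definition of $c_{i+2}$, every $\theta_j$ must satisfy $\Delta(\theta_j) \RotToEq \Vec{v}_{i+1}$. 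Using $\{\Vec{v}_0, \Vec{v}_{i+1}\}$ as a basis of the cycle space (linearly independent since $\Vec{v}_0 \RotTo \Vec{v}_{i+1}$), write $\Delta(\theta_j) = \alpha_j \Vec{v}_0 + \beta_j \Vec{v}_{i+1}$; a short computation using $\Inner{\Vec{v}_{i+1}|_I, (\Vec{v}_0|_I)^R} > 0$ gives $\alpha_j \ge 0$. Similarly, \autoref{prop:cone-as-rot-rot} applied to $\Vec{v}_0 \RotToEq \Vec{v}_i \RotToEq \Vec{v}_{i+1}$ yields $\Vec{u}_{\mathrm{old}} = A\Vec{v}_0 + B\Vec{v}_{i+1}$ with $A, B \ge 0$.

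Now set $\Vec{u}_{\mathrm{new}} := \Vec{u}_{\mathrm{old}} + \sum_j \Delta(\theta_j)$, so $\Vec{u}_{\mathrm{new}}|_I - c_{i+2}|_I = \Vec{u}_{\mathrm{old}}|_I - c_{i+1}|_I - \Delta(\xi)|_I$ has norm at most $P(\chi(G))^{\varsigma(G)} + (i+4)\chi(G)$. If $\Vec{u}_{\mathrm{new}} \in \Cone\{\Vec{v}_0, \Vec{v}_{i+1}\}$, take it as the witness. Otherwise, the $\Vec{v}_0$-coefficient $A + \sum_j \alpha_j$ is non-negative while the $\Vec{v}_{i+1}$-coefficient $B + \sum_j \beta_j$ is negative; combined with $\Vec{v}_0(i_1) = 0$ and $\Vec{v}_{i+1}(i_1) > 0$, this forces $\Vec{u}_{\mathrm{new}}(i_1) < 0$, so $c_{i+2}(i_1) \ge 0$ implies $c_{i+2}(i_1) \le |\Vec{u}_{\mathrm{new}}(i_1) - c_{i+2}(i_1)| \le P(\chi(G))^{\varsigma(G)} + (i+4)\chi(G)$. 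In this overshoot case fall back to $\Vec{u}_{\mathrm{new}}' := (c_{i+2}(i_2)/\Vec{u}_2(i_2))\, \Vec{u}_2 \in \Cone\{\Vec{v}_0\} \subseteq \Cone\{\Vec{v}_0, \Vec{v}_{i+1}\}$, whose $|_I$-projection $(0, c_{i+2}(i_2))$ differs from $c_{i+2}|_I$ only in the $i_1$-slot by $c_{i+2}(i_1)$, inheriting the required bound.

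The main technical obstacle is the overshoot case, where the accumulated cycle effect swings $\Vec{u}_{\mathrm{new}}$ across the $\Vec{v}_0$-ray out of $\Cone\{\Vec{v}_0, \Vec{v}_{i+1}\}$; fortunately this can occur only when $c_{i+2}(i_1)$ is already within the error budget, so the fall-back projection onto $\Cone\{\Vec{v}_0\}$ supplies a valid witness.
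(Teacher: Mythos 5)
Your proof is correct and follows essentially the same inductive strategy as the paper: decompose the segment of $\sigma'$ into a simple path plus simple cycles, observe that each cycle's effect satisfies $\Delta(\theta_j) \RotToEq \Vec{v}_{i+1}$ by minimality, add these effects to the old witness, and when the result overshoots the $\Vec{v}_0$-ray fall back to a projection onto $\Cone\{\Vec{v}_0\}$. Two small remarks: your indexing (the path from $c_{i+1}$ to $c_{i+2}$) is in fact what the argument requires, whereas the paper's inductive step nominally works with the path from $c_{i-1}$ to $c_i$ while invoking the IH at $c_i$, an off-by-one slip that your version silently corrects; and your single unified fall-back (projecting $c_{i+2}$ onto the $\Vec{v}_0$-ray, justified by the explicit $\alpha_j \ge 0$ coordinate computation showing overshoot can only make the $\Vec{v}_{i+1}$-coefficient negative) replaces the paper's three-way case split on the signs of $\overline{\Vec{u}}(i_1)$, $\overline{\Vec{u}}(i_2)$, giving the same bound a bit more cleanly.
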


    \begin{claimproof}
        First consider the case $i = 1$. If $c_2 = c_1$ then $\norm{c}_2 \le P(\chi(G))^{\varsigma(G)} + \chi(G)$. So we
        can simply take $\Vec{u} = \Vec{0}$. Otherwise, by our construction $c_2(i_1) \le P(\chi(G))^{\varsigma(G)} +
        2\chi(G)$. Write $c_2 = \Vec{y} + \Vec{z}$ where $\norm{\Vec{z}} \le \chi(G)$ and $\Vec{y} \in \CycleSpace(G)$.
        So $\Vec{y}(i_1) \le P(\chi(G))^{\varsigma(G)} + 3\chi(G)$. If $\Vec{y}(i_2) \le 0$ then $\Vec{y}(i_2) \ge
        -\chi(G)$. In this case we still take $\Vec{u} = \Vec{0}$. Then $\norm{\Vec{u}|_I - c_2|_I} = \norm{c_2|_I} \le
        \norm{\Vec{y}|_I} + \norm{\Vec{z}} \le P(\chi(G))^{\varsigma(G)} + 4\chi(G)$. If on the other hand,
        $\Vec{y}(i_2) \ge 0$, we take $\Vec{u} = \Vec{u}_2 \cdot \Vec{y}(i_2) / \Vec{u}_2(i_2)$. Then
        \begin{equation}
            \norm{\Vec{u}|_I - c_2|_I} \le \norm{\Vec{u}|_I - \Vec{y}|_I} + \norm{\Vec{z}} = |\Vec{y}(i_1)| + \norm{z}
            \le P(\chi(G))^{\varsigma(G)} + 4\chi(G).
        \end{equation}

        Now we consider the case $i > 1$, assuming as induction hypothesis that there exists $\Vec{u}' \in
        \Cone\{\Vec{v}_0, \Vec{v}_{i-1}\}$ such that $\norm{\Vec{u}'|_I - c_i|_I} \le P(\chi(G))^{\varsigma(G)} + (i +
        2)\chi(G)$. Consider the path from $c_{i-1}$ to $c_i$, we can write $c_i - c_{i-1} = \Vec{y} + \Vec{z}$ where
        $\Vec{z}$ is the effect of a simple path and $\Vec{y}$ is the sum of effects of several simple cycles that are
        enabled on the path. By our construction, every simple cycle $\theta$ enabled on the path from $c_{i-1}$ to
        $c_i$ satisfy $\Delta(\theta) \RotToEq \Vec{v}_{i-1}$. So we have $\Vec{y} \RotToEq \Vec{v}_{i-1}$. Let
        $\overline{\Vec{u}} := \Vec{u}' + \Vec{y}$. As $\Vec{u}' \in \Cone\{\Vec{v}_0, \Vec{v}_{i-1}\}$, we also have
        $\Vec{u}' \RotToEq \Vec{v}_{i-1}$ by \autoref{prop:cone-as-rot-rot}. Then $\overline{\Vec{u}} \RotToEq
        \Vec{v}_{i-1}$ as well. Note that 
        \begin{align}
        \begin{split}
            \norm{\overline{\Vec{u}}|_I - c_i|_I} &= \norm{(\Vec{u}' + \Vec{y} - (c_{i-1} + \Vec{y} + \Vec{z}))|_I} \le \norm{\Vec{u}'|_I - c_{i-1}|_I} + \norm{\Vec{z}|_I} \\ & \le P(\chi(G))^{\varsigma(G)} + (i + 3)\chi(G).
        \end{split}
        \end{align}
        If $\overline{\Vec{u}} \ge \Vec{0}$, then $\Vec{\overline{u}} \in \Cone\{\Vec{v}_0, \Vec{v}_{i-1}\} \subseteq
        \Cone\{\Vec{v}_0, \Vec{v}_i\}$. We can just take $\Vec{u} = \overline{\Vec{u}}$. Otherwise, we must have
        $\overline{\Vec{u}}(i_1) < 0$ or $\overline{\Vec{u}}(i_2) < 0$. We divede into cases.
        \begin{itemize}
            \item $\overline{\Vec{u}}(i_1) < 0$ and $\overline{\Vec{u}}(i_2) < 0$. Then we take $\Vec{u} := \Vec{0}$. As
            $c_i \ge 0$, clearly we have $\norm{\Vec{u}|_I - c_i|_I} \le \norm{\overline{\Vec{u}}|_I - c_i|_I}$.
            \item $\overline{\Vec{u}}(i_1) < 0$ and $\overline{\Vec{u}}(i_2) \ge 0$. We take $\Vec{u} :=
            \overline{\Vec{u}} - \Vec{u}_1 \cdot \overline{\Vec{u}}(i_1) / \Vec{u}_1(i_1)$. So $\Vec{u}(i_1) = 0$ and
            $\Vec{u}(i_2) = \overline{\Vec{u}}(i_2)$. We also have  $\norm{\Vec{u}|_I - c_i|_I} \le
            \norm{\overline{\Vec{u}}|_I - c_i|_I}$. And note that $\Vec{u} \in \Cone\{\Vec{v}_0\} \subseteq
            \Cone\{\Vec{v}_0, \Vec{v}_i\}$.
            \item $\overline{\Vec{u}}(i_1) \ge 0$ and $\overline{\Vec{u}}(i_2) < 0$. As $\Vec{v}_{i-1} \ge 0$, this
            contradicts to $\overline{\Vec{u}} \RotToEq \Vec{v}_{i-1}$. \claimqedhere
        \end{itemize}
    \end{claimproof}

    \autoref{claim:approx-rotating-vector} shows that for some $\Vec{u} \in \Cone\{\Vec{v}_0, \Vec{v}_m\}$, we have
    $\norm{\Vec{u}|_I - \Vec{w}|_I} \le P(\chi(G))^{\varsigma(G)} + ((\chi(G)+1)^2 + 4)\chi(G)$. So $\norm{\Vec{u}|_I -
    \Vec{c}|_I} \le P(\chi(G))^{\varsigma(G)} + ((\chi(G)+1)^2 + 5)\chi(G)$. Note that $\Vec{u} - \Vec{c} \in
    \CycleSpace(G)$, so by \autoref{lem:spp-component-bound}, 
    \begin{equation}
        \norm{\Vec{u} - \Vec{c}} \le 4\chi(G)^2 \cdot (P(\chi(G))^{\varsigma(G)} + ((\chi(G)+1)^2 + 5)\chi(G))
    \end{equation}
    and 
    \begin{equation}
        \norm{\Vec{u} - \Vec{w}} \le 4\chi(G)^2 \cdot (P(\chi(G))^{\varsigma(G)} + ((\chi(G)+1)^2 + 5)\chi(G)) + \chi(G) \le B.
    \end{equation}
    Consider the set $X = \{\Vec{x} \in \CycleSpace(G) : \norm{\Vec{x} - \Vec{w}} \le B\}$. As $\norm{\Vec{v}_0},
    \norm{\Vec{v}_m} \le B$, we must have $X \cap \mathbb{Q}_{\ge0}\cdot \Vec{v}_0 = \emptyset$ and $X \cap
    \mathbb{Q}_{\ge0}\cdot \Vec{v}_m = \emptyset$. On the other hand, $\Vec{u} \in \Cone\{\Vec{v}_0, \Vec{v}_m\} \cap
    X$. So by \autoref{lem:convex-cone-full-or-null}, $X \subseteq \Cone\{\Vec{v}_0, \Vec{v}_m\}$. In particular, we
    have $\Vec{w} \in \Cone\{\Vec{v}_0, \Vec{v}_m\}$, contradicting to $\Vec{v}_0 \RotTo \Vec{v}_m \RotTo \Vec{c}$.
\end{proof}

\subsection{Exponential Bounds for Thin Runs}

\expBoundThinRuns*

In the following we fix a $d$-VASS $G$ and a $\Vec{0}$-run $\tau$ in $G$ that is $A$-thin for some $A \in \mathbb{N}$.
First we show an easy fact in geometry.

\begin{proposition}
    \label{prop:min-dist-2d}
    Let $\Vec{u}, \Vec{v} \in \mathbb{N}^2 \setminus\{\Vec{0}\}$. Then
    \begin{equation}
        \min_{\alpha \in \mathbb{Q}} \norm{\Vec{u} - \alpha\Vec{v}} 
        = \frac{|\Vec{v}(1)\Vec{u}(2) - \Vec{u}(1)\Vec{v}(2)|}{\Vec{v}(1) + \Vec{v}(2)}.
    \end{equation}
\end{proposition}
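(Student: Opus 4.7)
Recall that $\norm{\cdot}$ denotes the maximum norm, so $\norm{\Vec{u} - \alpha\Vec{v}} = \max\bigl(f_1(\alpha), f_2(\alpha)\bigr)$ where $f_i(\alpha) := |\Vec{u}(i) - \alpha\Vec{v}(i)|$ for $i = 1, 2$. Each $f_i$ is convex and piecewise linear in $\alpha$, hence so is their pointwise maximum, and the plan is to minimize this maximum by a simple case analysis on the components of $\Vec{v}$.

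First I would dispatch the easy case where $\Vec{v}(i) = 0$ for some $i$. Since $\Vec{v} \neq \Vec{0}$, the other component is positive; then $f_i$ is the constant $\Vec{u}(i)$ while the other $f_j$ can be driven to $0$ by choosing $\alpha = \Vec{u}(j)/\Vec{v}(j)$. Thus the minimum equals $\Vec{u}(i)$, and one checks directly that the right-hand side of the formula simplifies to the same value (the numerator reduces to $\Vec{u}(i)\Vec{v}(j)$, and $\Vec{v}(1)+\Vec{v}(2) = \Vec{v}(j)$).

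The main case is $\Vec{v}(1), \Vec{v}(2) > 0$. Here each $f_i$ has a unique zero at $\alpha_i := \Vec{u}(i)/\Vec{v}(i)$. If $\alpha_1 = \alpha_2$, both functions vanish simultaneously, the minimum is $0$, and the cross-term $\Vec{v}(1)\Vec{u}(2) - \Vec{u}(1)\Vec{v}(2)$ also vanishes. Otherwise, assume by symmetry $\alpha_1 < \alpha_2$. On $(-\infty, \alpha_1]$ both $f_i$ are non-increasing and on $[\alpha_2, +\infty)$ both are non-decreasing, so the minimum of $\max(f_1, f_2)$ must lie in $[\alpha_1, \alpha_2]$. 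On this interval $f_1(\alpha) = \alpha\Vec{v}(1) - \Vec{u}(1)$ is increasing while $f_2(\alpha) = \Vec{u}(2) - \alpha\Vec{v}(2)$ is decreasing, so the minimum of their maximum occurs at the unique crossing point. Solving $f_1(\alpha) = f_2(\alpha)$ yields
\begin{equation}
    \alpha^* = \frac{\Vec{u}(1) + \Vec{u}(2)}{\Vec{v}(1) + \Vec{v}(2)},
\end{equation}
and substituting back gives
\begin{equation}
    f_1(\alpha^*) = \frac{\Vec{v}(1)\Vec{u}(2) - \Vec{u}(1)\Vec{v}(2)}{\Vec{v}(1) + \Vec{v}(2)},
\end{equation}
which is non-negative exactly because $\alpha_1 < \alpha_2$ means $\Vec{v}(2)\Vec{u}(1) < \Vec{v}(1)\Vec{u}(2)$. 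This matches the right-hand side of the claimed identity, and the symmetric case $\alpha_2 < \alpha_1$ is handled by swapping coordinates, producing the same expression under the absolute value.

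There is no real obstacle here; the only thing to watch is not to divide by zero, which is precisely why the degenerate sub-case $\Vec{v}(i) = 0$ is treated separately before the crossing-point computation.
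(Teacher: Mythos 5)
Your proof is correct and takes essentially the same approach as the paper: both locate the unique point on the line $\mathbb{Q}\cdot\Vec{v}$ where the two coordinate deviations have equal magnitude (the paper states that point $\Vec{w}$ directly and asserts optimality as "not hard to observe"), while you derive it more explicitly via the monotonicity of the piecewise-linear functions $f_1, f_2$ and the mediant crossing point. Your version simply fills in the verification the paper leaves to the reader.
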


\begin{proof}
    The result should be easy when $\Vec{v}(1) = 0$ or $\Vec{v}(2) = 0$. So assume $\Vec{v}(1) \ne 0$ and $\Vec{v}(2)
    \ne 0$. Consider the following point:
    \begin{equation}
        \Vec{w} := \left( 
            \Vec{u}(1) + \frac{\Vec{v}(1)\Vec{u}(2) - \Vec{u}(1)\Vec{v}(2)}{\Vec{v}(1) + \Vec{v}(2)},
            \Vec{u}(2) - \frac{\Vec{v}(1)\Vec{u}(2) - \Vec{u}(1)\Vec{v}(2)}{\Vec{v}(1) + \Vec{v}(2)}
        \right).
    \end{equation}
    Note that $\Vec{w}(2) = \frac{\Vec{v}(2)}{\Vec{v}(1)}\Vec{w}(1)$ so $\Vec{w} \in \ell_{\Vec{v}}$. It is not hard to
    observe that $\Vec{w}$ minimizes $\norm{\Vec{u} - \Vec{w}}$ among all points $\Vec{w} \in \ell_v$.
\end{proof}

This will be used in proving the following bound on vectors in the intersection of two beams.

\begin{lemma}
    \label{lem:bound-beam-cap}
    Let $\mathcal{B}_{\Vec{u}, A}$ and $\mathcal{B}_{\Vec{v}, A}$ be two distinct $A$-beams. Let $\Vec{w} \in
    \mathcal{B}_{\Vec{u}, A} \cap \mathcal{B}_{\Vec{v}, A}$. Then $\norm{\Vec{w}} \le 4A^3 + A$.
\end{lemma}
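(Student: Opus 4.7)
The plan is to unpack the definition of $\Vec{w}$ belonging to both beams, rewrite the two representations as a small linear system in $\alpha$ and $\beta$, and then exploit integrality of the direction vectors to bound its solutions via Cramer's rule.

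First I would use the beam memberships to write $\Vec{w} = \alpha \Vec{u} + \Vec{z}_1 = \beta \Vec{v} + \Vec{z}_2$ for some $\alpha, \beta \in \mathbb{Q}_{\ge 0}$ and $\Vec{z}_1, \Vec{z}_2 \in \mathbb{Z}^d$ with $\norm{\Vec{z}_1}, \norm{\Vec{z}_2} \le A$. Subtracting gives $\alpha \Vec{u} - \beta \Vec{v} = \Vec{z}_2 - \Vec{z}_1$, so each coordinate satisfies $|\alpha \Vec{u}(i) - \beta \Vec{v}(i)| \le 2A$. Before the main step, I would dispose of the degenerate cases: if $\Vec{u} = \Vec{0}$ (or $\Vec{v} = \Vec{0}$), then the corresponding beam is just $\{\Vec{x} \in \mathbb{N}^d : \norm{\Vec{x}} \le A\}$ and $\norm{\Vec{w}} \le A$ trivially; if $\Vec{u}, \Vec{v}$ are both non-zero and linearly dependent, then since both lie in $\mathbb{N}^d$ they are positive rational multiples of each other, so the rays $\mathbb{Q}_{\ge 0}\Vec{u}$ and $\mathbb{Q}_{\ge 0}\Vec{v}$ coincide and the two beams are equal, contradicting distinctness. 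Hence in the remaining case $\Vec{u}$ and $\Vec{v}$ are linearly independent.

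The main step is a Cramer's rule bound. Linear independence of the integer vectors $\Vec{u}, \Vec{v}$ lets me pick indices $i_1, i_2$ with $D := \Vec{u}(i_1)\Vec{v}(i_2) - \Vec{u}(i_2)\Vec{v}(i_1)$ a non-zero integer, so $|D| \ge 1$. Restricting $\alpha \Vec{u} - \beta \Vec{v} = \Vec{z}_2 - \Vec{z}_1$ to $\{i_1, i_2\}$ yields a $2 \times 2$ linear system in $(\alpha, \beta)$ whose coefficient matrix has determinant $\pm D$; by Cramer's rule each of $\alpha$ and $\beta$ is a ratio whose numerator has magnitude at most $2 \cdot 2A \cdot A$ and whose denominator is at least $1$, giving $\alpha, \beta \le 4A^2$. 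Then
\[
    \norm{\Vec{w}} \le \alpha \norm{\Vec{u}} + \norm{\Vec{z}_1} \le 4A^2 \cdot A + A = 4A^3 + A,
\]
as required. The only point needing any care is the case analysis on degeneracy of $\Vec{u}$ and $\Vec{v}$; the Cramer computation itself is routine, and if one prefers it can be packaged via \autoref{prop:min-dist-2d} after projecting onto the two chosen coordinates.
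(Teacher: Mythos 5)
Your proposal is correct, and the key idea — project onto two coordinates $\{i_1, i_2\}$ where the $2\times 2$ minor $D = \Vec{u}(i_1)\Vec{v}(i_2) - \Vec{u}(i_2)\Vec{v}(i_1)$ is a non-zero integer, then exploit $|D| \ge 1$ — is the same one the paper uses. The difference is organizational: you bound $\alpha$ directly via Cramer's rule on the $2\times 2$ system $\alpha\Vec{u}|_{\{i_1,i_2\}} - \beta\Vec{v}|_{\{i_1,i_2\}} = (\Vec{z}_2 - \Vec{z}_1)|_{\{i_1,i_2\}}$ and then read off $\norm{\Vec{w}} \le \alpha\norm{\Vec{u}} + \norm{\Vec{z}_1}$, whereas the paper argues by contradiction, assuming $\norm{\Vec{w}} > 4A^3 + A$ to force $\alpha > 4A^2$, and then invokes the point-to-line distance formula (\autoref{prop:min-dist-2d}) to conclude $\norm{\alpha\Vec{u} - \beta\Vec{v}} > 2A$, contradicting $\norm{\Vec{z}_2 - \Vec{z}_1} \le 2A$. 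The bounds come out identically. Your version is arguably slightly cleaner because it is direct rather than contrapositive and dispenses with \autoref{prop:min-dist-2d}, and your explicit handling of the degenerate cases ($\Vec{u} = \Vec{0}$, or $\Vec{u}, \Vec{v}$ parallel) justifies the existence of the non-vanishing minor more carefully than the paper's terse ``distinct beams implies some $\Vec{u}(i)\Vec{v}(j) \ne \Vec{v}(i)\Vec{u}(j)$.''
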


\begin{proof}
    Suppose on the contrary, $\norm{\Vec{w}} > 4A^3 + A$. Let $\Vec{w} = \alpha \Vec{u} + \Vec{x} = \beta \Vec{v} +
    \Vec{y}$ where $\alpha, \beta \ge 0$ and $\norm{\Vec{x}}, \norm{\Vec{y}} \le A$. Then $\norm{\alpha \Vec{u}} \ge
    \norm{\Vec{w}} - \norm{\Vec{x}} > 4A^3$. So $\alpha > 4A^2$ since $\norm{\Vec{u}} \le A$. As $\mathcal{B}_{\Vec{u},
    A}$ and $\mathcal{B}_{\Vec{v}, A}$ are distinct, we know there exists $i, j \in [d]$ such that $\Vec{u}(i)\Vec{v}(j)
    \ne \Vec{v}(i)\Vec{u}(j)$. Let $I = \{i, j\}$. Clearly $\Vec{u}|_I \ne \Vec{0}$ and $\Vec{v}|_I \ne \Vec{0}$. Note
    that $\norm{\alpha \Vec{u} - \beta\Vec{v}} \ge \norm{\alpha \Vec{u}|_I - \beta\Vec{v}|_I}$. And by
    \autoref{prop:min-dist-2d}, 
    \begin{align}
    \begin{split}
        \norm{\alpha \Vec{u}|_I - \beta\Vec{v}|_I} & \ge \frac{|\alpha \Vec{u}(i)\Vec{v}(j) - \alpha\Vec{u}(j)\Vec{v(i)}|}{\Vec{v}(i) + \Vec{v}(j)}\\
        &   = \alpha \cdot \frac{| \Vec{u}(i)\Vec{v}(j) - \Vec{u}(j)\Vec{v(i)}|}{\Vec{v}(i) + \Vec{v}(j)}
            >4A^2 \cdot \frac{1}{2A}
            = 2A.
    \end{split}
    \end{align}
    However, as $\alpha\Vec{u} - \beta\Vec{v} = \Vec{y} - \Vec{x}$, we have 
    \begin{equation}
        \norm{\alpha\Vec{u} - \beta\Vec{v}} \le \norm{\Vec{y}} + \norm{\Vec{x}} \le 2A,
    \end{equation}
    a contradiction.
\end{proof}

\begin{corollary}
    \label{cor:far-thin-run-unique-beam}
    Let $\tau$ be a $\Vec{0}$-run that is $A$-thin. Let $B := 4(A + \norm{T})^3 + A + 2\norm{T}$. For any two
    consecutive configurations $q(\Vec{u})$ and $q'(\Vec{u})'$ in $\tau$ such that $\norm{\Vec{u}} > B$, there is a
    unique $A$-beam with width $A$ containing both $\Vec{u}$ and $\Vec{u}'$.
\end{corollary}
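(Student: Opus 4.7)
The plan is to combine $A$-thinness of $\tau$ with \autoref{lem:bound-beam-cap} to show that both $\Vec{u}$ and $\Vec{u}'$ must lie in a common width-$A$ beam, uniquely determined. First I would unpack $A$-thinness to obtain an $A$-beam $\mathcal{B}_{\Vec{v}_1, W_1}$ containing $\Vec{u}$ and an $A$-beam $\mathcal{B}_{\Vec{v}_2, W_2}$ containing $\Vec{u}'$, with $\norm{\Vec{v}_j}, W_j \le A$ for $j \in \{1, 2\}$. Since $\Vec{u}' = \Vec{u} + \Vec{a}$ for a transition effect $\Vec{a}$ with $\norm{\Vec{a}} \le \norm{T}$, a direct translation gives $\Vec{u} \in \mathcal{B}_{\Vec{v}_2, W_2 + \norm{T}} \subseteq \mathcal{B}_{\Vec{v}_2, A + \norm{T}}$. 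Hence $\Vec{u}$ lies in two $(A + \norm{T})$-beams, namely $\mathcal{B}_{\Vec{v}_1, W_1}$ and $\mathcal{B}_{\Vec{v}_2, A + \norm{T}}$.

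For existence, I would invoke \autoref{lem:bound-beam-cap} at parameter $A + \norm{T}$: if those two beams had distinct direction rays (i.e., $\Vec{v}_1$ and $\Vec{v}_2$ were not positive scalar multiples of each other), the lemma would yield $\norm{\Vec{u}} \le 4(A + \norm{T})^3 + (A + \norm{T})$, which is strictly less than $B = 4(A + \norm{T})^3 + A + 2\norm{T}$, contradicting $\norm{\Vec{u}} > B$. Thus $\Vec{v}_1$ and $\Vec{v}_2$ share a direction ray, so $\mathcal{B}_{\Vec{v}_1, A} = \mathcal{B}_{\Vec{v}_2, A}$ as sets, and this common width-$A$ beam contains $\Vec{u}$ (since $W_1 \le A$) and $\Vec{u}'$ (since $W_2 \le A$). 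For uniqueness, if $\mathcal{B}_{\Vec{w}_1, A}$ and $\mathcal{B}_{\Vec{w}_2, A}$ were two such width-$A$ beams with non-parallel directions, both containing $\Vec{u}$, then \autoref{lem:bound-beam-cap} applied directly at parameter $A$ would yield $\norm{\Vec{u}} \le 4A^3 + A < B$, again a contradiction; hence $\Vec{w}_1$ and $\Vec{w}_2$ define the same beam.

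The main obstacle is a small amount of arithmetic bookkeeping: translating $\Vec{u}'$ back to $\Vec{u}$ inflates the second beam's width by at most $\norm{T}$, so one must invoke \autoref{lem:bound-beam-cap} at the enlarged radius $A + \norm{T}$ rather than at $A$. The additive term $A + 2\norm{T}$ in the definition of $B$ (rather than just $A + \norm{T}$) supplies exactly the extra $\norm{T}$ of slack needed to turn the lemma's bound into a strict inequality and hence a contradiction.
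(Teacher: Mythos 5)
Your proposal is correct and takes essentially the same approach as the paper: both use $A$-thinness to put $\Vec{u}$ and $\Vec{u}'$ in $A$-beams, widen by $\norm{T}$ to get a common point in two $(A+\norm{T})$-beams, and invoke \autoref{lem:bound-beam-cap} to force those beams to coincide, with $B$ chosen so the resulting norm bound contradicts $\norm{\Vec{u}} > B$. The only superficial difference is that you translate $\Vec{u}$ into the beam around $\Vec{u}'$ while the paper translates $\Vec{u}'$ into the beam around $\Vec{u}$, which lands the contradiction at $\norm{\Vec{u}}$ directly rather than at $\norm{\Vec{u}'}$; the arithmetic works out either way.
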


\begin{proof}
    As $\tau$ is $A$-thin, there exists a beam $\mathcal{B}_{\Vec{v}, A}$ containing $\Vec{u}$. Note that $\norm{\Vec{u}
    - \Vec{u}'} \le \norm{T}$. So $\Vec{u}' \in \mathcal{B}_{\Vec{v}, W}$ where $W := A + \norm{T}$. Let $\Vec{v}'$ be
    such that $\norm{\Vec{v}'}\le A$ and $\Vec{u'} \in \mathcal{B}_{\Vec{v}', A} \subseteq \mathcal{B}_{\Vec{v}', W}$.
    Suppose $\mathcal{B}_{\Vec{v}, W} \ne \mathcal{B}_{\Vec{v}', W}$. Then by \autoref{lem:bound-beam-cap} we must have
    $\norm{\Vec{u}'} \le 4 W^3 + W = 4(A + \norm{T})^3 + A + \norm{T}$. But as $\norm{\Vec{u}} > 4(A + \norm{T})^3 + A +
    2\norm{T}$, we must have $\norm{\Vec{u}'} > 4(A + \norm{T})^3 + A + \norm{T}$, a contradiction. So
    $\mathcal{B}_{\Vec{v}, W} = \mathcal{B}_{\Vec{v}', W}$, which implies $\mathcal{B}_{\Vec{v}, A} =
    \mathcal{B}_{\Vec{v}', A}$.
\end{proof}

\begin{lemma}
    \label{thm:thin-run-compress}
    Let $\tau$ be a $\Vec{0}$-run that is $A$-thin where $A \ge 2 \norm{T}$. If $\tau$ visits a configuration $c$ with
    $\norm{c} > B + (\varsigma(G) \cdot (4A^2 + 1)^d)^2\cdot A$ where $B$ comes from
    \autoref{cor:far-thin-run-unique-beam}, then $\tau$ factors into $\tau = \tau_0\alpha_1\tau_1\alpha_2\tau_2$ where
    $\alpha_1, \alpha_2$ are two non-empty cycles with opposite effects $\Delta(\alpha_1) = -\Delta(\alpha_2)$, such
    that $\tau_0\tau_1\tau_2$ is also a $\Vec{0}$-run.
\end{lemma}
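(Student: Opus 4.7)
The approach exploits the beam structure of a thin run via a double pigeonhole on a (type, level) classification, to extract two infix cycles of exactly opposite effect situated on opposite sides of the peak configuration $c$. Since $\norm{c} > B$, \autoref{cor:far-thin-run-unique-beam} identifies a unique $A$-beam $\mathcal{B}_{\Vec{v}, A}$ with direction $\Vec{v} \in \mathbb{N}^d$ and $\norm{\Vec{v}} \le A$ containing $c$; applying the same corollary step by step along $\tau$ extends this to the entire ``high zone'' of $\tau$ (configurations of norm exceeding $B$), which must all lie inside this single beam. I would then fix a canonical integer ``beam coordinate'' $n(\Vec{u}) \in \mathbb{N}$ for each such configuration (for instance the largest integer whose multiple of $\Vec{v}$ is component-wise below $\Vec{u}$), and define the residue $\Vec{r}(\Vec{u}) := \Vec{u} - n(\Vec{u})\Vec{v}$, which is an integer vector of bounded norm. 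The \emph{type} of a configuration $q(\Vec{u})$ is the pair $(q, \Vec{r}(\Vec{u}))$, and the total number of types is bounded by $N := \varsigma(G) \cdot (4A^2 + 1)^d$.

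Next I would split $\tau$ at $c$ into an ascending half $\tau_{\text{asc}}$ and a descending half $\tau_{\text{desc}}$. Because $\norm{c} > B + N^2 A$ and $\norm{\Vec{v}} \le A$, the beam coordinate $n(c)$ exceeds $L + N^2$ for some threshold $L$ corresponding to norm $B$. As the beam coordinate varies by a bounded amount per transition, each integer level $\ell \in [L, n(c)]$ is visited at least once on $\tau_{\text{asc}}$ and again on $\tau_{\text{desc}}$. For each level $\ell$, I record the pair $(T_\ell^{\text{asc}}, T_\ell^{\text{desc}})$ formed by the type of the first ascending visit and the type of the last descending visit; this produces a mapping from at least $N^2$ levels into at most $N^2$ possible type pairs, so by pigeonhole two distinct levels $\ell_1 < \ell_2$ share the same pair $(T_1, T_2) = ((p, \Vec{r}_p), (q, \Vec{r}_q))$.

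The ascending witnesses $c_1^{\text{asc}}, c_2^{\text{asc}}$ at levels $\ell_1$ and $\ell_2$ share state $p$ and residue $\Vec{r}_p$, hence differ by exactly $(\ell_2 - \ell_1)\Vec{v}$; the infix of $\tau_{\text{asc}}$ joining them is a cycle $\alpha_1$ at $p$ with $\Delta(\alpha_1) = (\ell_2 - \ell_1)\Vec{v}$. A symmetric argument on $\tau_{\text{desc}}$ yields a cycle $\alpha_2$ at $q$, appearing after $\alpha_1$ in $\tau$, with $\Delta(\alpha_2) = -(\ell_2 - \ell_1)\Vec{v} = -\Delta(\alpha_1)$. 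Placing $\alpha_1$ before the peak and $\alpha_2$ after gives the factorization $\tau = \tau_0 \alpha_1 \tau_1 \alpha_2 \tau_2$. I would finish by verifying that $\tau_0 \tau_1 \tau_2$ is non-negative: every configuration on $\tau_1$ has beam coordinate at least $\ell_2 - O(A)$, so subtracting $(\ell_2 - \ell_1)\Vec{v}$ leaves the beam coordinate at least $\ell_1 - O(A)$, which remains coordinate-wise non-negative because $L$ exceeds the beam width (ensured by the choice of $B$ in \autoref{cor:far-thin-run-unique-beam}); coordinates outside $\Supp(\Vec{v})$ are untouched by the subtraction and therefore stay non-negative for free.

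The main anticipated obstacle is arranging the residue decomposition so that the precise counting bound $(4A^2 + 1)^d$ is met: the naive construction using $n(\Vec{u}) = \lfloor \alpha \rfloor$ gives residues of norm only $\le 2A$ and thus merely $(4A+1)^d$ types, so achieving the quoted bound may require a coarser canonical choice of $n(\Vec{u})$ (or, conversely, the stated bound may be a deliberately loose upper estimate that one simply inherits). A secondary subtlety is that the chosen ``first ascending visit at level $\ell$'' must genuinely have $n(\Vec{u}) = \ell$ so that the two witnesses differ by exactly $(\ell_2 - \ell_1)\Vec{v}$, which requires the canonical coordinate $n(\cdot)$ to hit every integer level on $\tau_{\text{asc}}$; this follows from its boundedness per step combined with monotone progress through the high zone. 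The rest of the argument — one-dimensional level counting along the beam, product pigeonhole across the two halves, and non-negativity of the compressed run — is in close analogy with the 2-VASS pumping argument of \cite{DBLP:conf/mfcs/CzerwinskiLLP19}.
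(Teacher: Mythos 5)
Your overall strategy mirrors the paper's proof closely: restrict attention to the ``high zone'' of the run (norm $>B$), show it lies in a single $A$-beam via \autoref{cor:far-thin-run-unique-beam}, slice the beam into levels, pigeonhole on the pair (state, residue within the slice) across the ascending and descending halves to find two matching levels $\ell_1 < \ell_2$, and extract two cycles with opposite effects $\pm(\ell_2-\ell_1)\Vec{v}$. Your remarks about the slack in the $(4A^2+1)^d$ count and the fact that every integer level is actually hit are both fine, and the basic decomposition is the same as the paper's (the paper's $S_i, B_i$ shells play the role of your level/residue classification).

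The gap is in your choice of representative visits, and it breaks the non-negativity argument. You take the \emph{first} ascending visit at each level and the \emph{last} descending visit; the paper takes the \emph{last} ascending visit in $\gamma$ and the \emph{first} descending visit in $\gamma'$. This is not cosmetic. Your $\tau_1$ runs from $c_2^{\mathrm{asc}}$ (first ascending visit at level $\ell_2$) through the peak to $c_2^{\mathrm{desc}}$ (last descending visit at level $\ell_2$). Nothing prevents $\tau_1$ from dipping well below level $\ell_2$ after $c_2^{\mathrm{asc}}$ --- after the \emph{first} time the run reaches level $\ell_2$, it can drop back toward the bottom of the high zone and climb again, and symmetrically before the \emph{last} descending visit. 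So the claim ``every configuration on $\tau_1$ has beam coordinate at least $\ell_2 - O(A)$'' is false as stated, and after subtracting $(\ell_2-\ell_1)\Vec{v}$ (with $\ell_2 - \ell_1$ potentially as large as $\approx N^2$) the middle portion can go negative. If instead you take the \emph{last} visit at each level in the ascending half and the \emph{first} visit in the descending half, then by definition the entire middle segment from $c_2^{\mathrm{asc}}$ to $c_2^{\mathrm{desc}}$ stays at level $\ge \ell_2$ (it cannot re-enter the slice at level $\ell_2$), the shift keeps it inside $\bigcup_{k\ge \ell_1} B_k \subseteq \mathbb{N}^d$, and the remainder of your argument goes through; this is precisely what the paper does, together with a translation-congruence claim for the slices to make the non-negativity formal.
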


\begin{proof}
    Decompose $\tau$ as $\tau = \rho\gamma\gamma'\rho'$ with $\Target(\gamma) = c = \Source(\gamma')$ and such that
    $\gamma\gamma'$ is the longest infix where each configuration $s$ satisfies $\norm{s} > B$. By
    \autoref{cor:far-thin-run-unique-beam} there is an $A$-beam $\mathcal{B}_{\Vec{u}, A}$ containing all configurations
    in $\gamma\gamma'$. By scaling we may assume $\norm{T} < \norm{\Vec{u}} \le A$. Define the following regions for all
    $i \in \mathbb{N}$:
    \begin{align}
        S_i &:= \{ \Vec{x} \in \mathbb{N}^d, \Vec{x} \le B \cdot \Vec{1} + i \cdot \Vec{u} \},\\
        B_i &:= \mathcal{B}_{\Vec{u}, A} \cap (S_{i + 1} \setminus S_i).
    \end{align}
    Denote $C := (\varsigma(G) \cdot (4A^2 + 1)^d)^2$. As $\norm{c} > B + C \cdot A$, the runs $\gamma$ and $\gamma'$
    must visit at least one point in each of $B_0, B_1, \ldots, B_C$. We show some facts about these regions:
    \begin{claim}
        For all $i \in \mathbb{N}$, $|B_i| \le (4A^2 + 1)^d$.
    \end{claim}

    \begin{claimproof}
        Let $\iota_{\max} := \Argmax_{\iota \in [d]} \Vec{u}(\iota)$. Define two vectors $\Vec{u}_L := \frac{B +
        i\Vec{u}(\iota_{\max})}{\Vec{u}(\iota_{\max})}\cdot \Vec{u}$ and $\Vec{u}_R := \frac{B + (i +
        1)\Vec{u}(\iota_{\max})}{\Vec{u}(\iota_{\max})}\cdot \Vec{u} = \Vec{u}_L + \Vec{u}$. We will show that each
        $\Vec{x} \in B_i$ satisfies
        \begin{equation}
            \forall \iota \in [d], \quad \Vec{u}_L(\iota) - A^2 - A \le \Vec{x}(\iota) \le \Vec{u}_R(\iota) + A^2 + A.
        \end{equation}
        This implies that each component of $\Vec{x}$ takes value among at most $2A^2 + 2A +1 \le 4A^2 + 1$
        possibilities. Then the claim holds.

        First we prove the lower bound. Assume on the contrary that $\Vec{x}(\iota) < \Vec{u}_L(\iota) - A^2 - A$ for
        some $\iota \in [d]$. Then as $\Vec{x} \in \mathcal{B}_{\Vec{u}, A}$, there exists $\Vec{u}' \in
        \Cone\{\Vec{u}\}$ such that $\norm{\Vec{x} - \Vec{u}'} \le A$. In particular, $\Vec{u'}(\iota) <
        \Vec{u}_L(\iota) - A^2$. Let $\Vec{u}_L - \Vec{u}' =: \alpha \Vec{u}$, then $\alpha > A$. So for all $\kappa \in
        \Supp(\Vec{u})$, $\Vec{u'}(\kappa) < \Vec{u}_L(\kappa) - A$. Then
        \begin{equation}
            \Vec{x}(\kappa) \le \Vec{u'}(\kappa) + A < \Vec{u}_L(\kappa) = \frac{B + i\Vec{u}(\iota_{\max})}{\Vec{u}(\iota_{\max})}\cdot \Vec{u} (\kappa) \le B + i\Vec{u}(\kappa).
        \end{equation}
        And for $\kappa \notin \Supp(\Vec{u})$, we have $\Vec{x}(\kappa) \le A < B$. This implies that $\Vec{x} \in
        S_i$, a contradiction.

        The upper bound is similar. Assume on the contrary that $\Vec{x}(\iota) > \Vec{u}_R(\iota) + A^2 + A$ for some
        $\iota \in [d]$. Then there exists $\Vec{u}' \in \Cone\{\Vec{u}\}$ such that $\norm{\Vec{x} - \Vec{u}'} \le A$.
        In particular, $\Vec{u'}(\iota) > \Vec{u}_R(\iota) + A^2$. Let $\Vec{u}' - \Vec{u}_R =: \alpha \Vec{u}$, then
        $\alpha > A$. So we have 
        \begin{equation}
            \Vec{x}(\iota_{\max}) \ge \Vec{u}'(\iota_{\max}) - A > \Vec{u}_R(\iota_{\max}) + A - A = B + (i + 1)\Vec{u}(\iota_{\max}).
        \end{equation}
        This implies $\Vec{x} \notin S_{i+1}$, a contradiction.
    \end{claimproof}

    \begin{claim}
        \label{claim:block-congr-by-translation}
        For all $i, j \in \mathbb{N}$, the function $\Vec{v} \mapsto \Vec{v} + (j - i)\cdot \Vec{u}$ is a bijection from
        $B_i$ to $B_j$.
    \end{claim}

    \begin{claimproof}
        Assume w.l.o.g.\ that $i \le j$. It is routine to see that the given function maps $B_i$ into $B_j$. On the
        other hand, given $\Vec{x} \in B_j$, we need to show that $\Vec{x} - (j - i)\cdot \Vec{u} \in B_i$. This amounts
        to show that $\Vec{y} := \Vec{x} - (j - i)\cdot \Vec{u} \ge \Vec{0}$. Note that for some $\iota \in [d]$, we
        must have $\Vec{y}(\iota) > B + i\Vec{u}(\iota) \ge B > A^2 + A$. Let $\Vec{u}' \in \Cone\{\Vec{u}\}$ be such
        that $\norm{\Vec{y} - \Vec{u}'} \le A$. Then $\Vec{u}(\iota) \ge A^2$. Let $\Vec{u}' =: \alpha \Vec{u}$, then
        $\alpha \ge A$. So for all $\kappa \in \Supp(\Vec{u})$, $\Vec{u}'(\kappa) \ge A$, and $\Vec{y}(\kappa) \ge
        \Vec{u}'(\kappa) - A \ge 0$. And for $\kappa \notin \Supp(\Vec{u})$, we have $\Vec{y}(\kappa) = \Vec{x}(\kappa)
        \ge 0$. Thus $\Vec{y} \ge \Vec{0}$.
    \end{claimproof}

    Now for each $i \in [0, C]$, let $q_i(\Vec{v}_i)$ be the \emph{last} configurations in $\gamma$ that belongs to
    $B_i$, and let $q_i'(\Vec{v}_i')$ be the \emph{first} configurations in $\gamma'$ that belongs to $B_i$. By
    Pigeonhole Principle, there exists $0 \le i < j \le C$ satisfying the following conditions:
    \begin{equation}
        q_i = q_j, \quad q_i' = q_j', \quad \Vec{v}_j - \Vec{v}_i = \Vec{v}_j' - \Vec{v}_i' = (j - i) \cdot \Vec{u}.
    \end{equation}
    Take $\alpha_1$ to be the cycle from $q_i(\Vec{v}_i)$ to $q_j(\Vec{v}_j)$, and take $\alpha_2$ to be the cycle from
    $q_j'(\Vec{v}_j')$ to $q_1'(\Vec{v}_1')$. Clearly $\Delta(\alpha_1) = -\Delta(\alpha_2)$. To see that we can safely
    remove these two cycles from $\tau$, let $\sigma$ be the sub-run from $q_j(\Vec{v}_j)$ to $q_{j}'(\Vec{v}_j')$. By
    our construction $\sigma$ lies completely in $\bigcup_{k \ge j}B_k$. As removing $\alpha_1$ and $\alpha_2$ just
    translate every configuration in $\sigma$ by $(i - j) \cdot \Vec{u}$, from
    \autoref{claim:block-congr-by-translation} we know that $\sigma$ will be translated to the region $\bigcup_{k \ge
    i}B_k$, which is still a subset of $\mathbb{N}^d$.
\end{proof}

\begin{proof}[Proof of \autoref{lem:exp-bound-thin-runs}]
    Let $\tau$ be a $\Vec{0}$-run that is $A$-thin where $A \ge 2\chi(G)$. As long as $\tau$ visits a configuration of
    norm greater than $B + (\varsigma(G) \cdot (4A^2 + 1)^d)^2$, we apply \autoref{thm:thin-run-compress} to shorten the
    length of $\tau$. This finally results in a run $\rho$ where all configuration has norm bounded by $B +
    (\varsigma(G) \cdot (4A^2 + 1)^d)^2 \le A^{O(d)}$. Therefore, its length is bounded by $A^{O(d^2)}$.
\end{proof}

\subsection{Exponential Bounds for Thick Runs}

\expBoundThickRuns*

We fix a geometrically $2$-dimensional $d$-VASS $G = (Q, T)$ that is proper, assuming $\Supp(\CycleSpace(G)) = [d]$. Let
$I := \{i_1, i_2\}$ be a sign-reflecting projection of $\CycleSpace(G)$ with respect to $\mathbb{Q}_{\ge0}^d$. Moreover,
let $\Vec{u}_1, \Vec{u}_2 \in \mathbb{N}^d$ be the canonical horizontal and vertical vectors given by
\autoref{lem:canonical-vecs}. We have $\Vec{u}_1(i_2) = 0$ and $\Vec{u}_2(i_1) = 0$. Observe that we can assume
$\norm{\Vec{v}_1}, \norm{\Vec{v}_2} \le 2\chi(G)^2$.

Also fix a $\Vec{0}$-run $\tau$ in $G$ that is $A$-thick where $A \ge \chi(G)$. So we can decompose $\tau$ as $\tau =
\rho\rho'$ such that $\rho = \rho_1\rho_2\rho_3\rho_4\rho_5$ and $\rho' = \rho_5'\rho_4'\rho_3'\rho_2'\rho_1'$. Let
$\pi_1, \pi_2, \pi_3, \pi_4$ be the four cycles sequentially enabled in $\rho$ and $\pi_1', \pi_2', \pi_3', \pi_4'$ be
the four cycles whose reverses are sequentially enabled in $\mbox{rev}(\rho')$ given by the definition of thick runs.
Let $c_i := \Target(\rho_i)$ and $c_i' = \Source(\rho_i')$ for $i \in [4]$. Let $\Vec{v}_i := \Delta(\pi_i)$ and
$\Vec{v}_i' := \Delta(\pi_i')$ for $i \in [4]$.

\paragraph*{Pumping system}

For $i = 2, 3, 4$, we define the vector $\Vec{e}_i \in \mathbb{N}^d$ to be the minimal vector such that $c_i +
\Vec{e}_i$ enables the cycle $\pi_i$. Note that there is no need to introduce a similar vector $\Vec{e}_1$ as $\pi_1$ is
already enabled at $c_1$ by definition. The vectors $\Vec{e}_i'$ for $i = 2, 3, 4$ are defined symmetrically. Note that
$\norm{\Vec{e}_i}, \norm{\Vec{e}_i'} \le A \cdot \norm{T} \le A^2$.

\begin{definition}
    Numbers $a_1, \ldots, a_4 \in \mathbb{N}$ are said to satisfy the \emph{forward pumping system} $\mathcal{U}$ of
    $\tau$, written $(a_1, \ldots, a_4) \models \mathcal{U}$, if 
    \begin{align}
        a_1 \Vec{v}_1 &\ge \Vec{e}_2, \\
        a_1 \Vec{v}_1 + a_2 \Vec{v}_2 &\ge \max(\Vec{e}_2, \Vec{e}_3), \\
        a_1 \Vec{v}_1 + a_2 \Vec{v}_2 + a_3 \Vec{v}_3 &\ge \max(\Vec{e}_3, \Vec{e}_4), \\
        a_1 \Vec{v}_1 + a_2 \Vec{v}_2 + a_3 \Vec{v}_3 + a_4 \Vec{v}_4 &\ge \Vec{e}_4.
    \end{align}
    Symmetrically, we define the \emph{backward pumping system} $\mathcal{U}'$ of $\tau$ with $\Vec{v}_i'$ instead of
    $\Vec{v}_i$ and $\Vec{e}_i'$ instead of $\Vec{e}_i$ in the above inequalities.

    Finally, we say numbers $a_1, \ldots, a_4, a_1', \ldots, a_4' \in \mathbb{N}$ and $x, y \in \mathbb{Z}$ satisfy the
    \emph{pumping system} $\mathcal{C}$ of $\tau$, if 
    \begin{align}
        &(a_1, \ldots, a_4) \models \mathcal{U}, \quad (a_1', \ldots, a_4') \models \mathcal{U}', \\
        &a_1 \Vec{v}_1 + a_2 \Vec{v}_2 + a_3 \Vec{v}_3 + a_4 \Vec{v}_4 + (x \Vec{u}_1 + y \Vec{u}_2) = a_1' \Vec{v}_1' + a_2' \Vec{v}_2' + a_3' \Vec{v}_3' + a_4' \Vec{v}_4'.
    \end{align}

    A pair of integers $(x, y)$ is called a \emph{shift} if there exists $a_1, \ldots, a_4, a_1', \ldots, a_4' \in
    \mathbb{N}$ such that $(a_1, \ldots, a_4, a_1', \ldots, a_4', x, y) \models \mathcal{C}$.
\end{definition}

Recall that by definition, if $\Vec{v}_1|_I$ is not positive, then $\pi_2$ should be $([d] \setminus
\Supp(\Vec{v}_1))$-enabled at $c_2$. So $\Supp(\Vec{e}_2) \subseteq \Supp(\Vec{v}_1)$, and the inequality $a_1 \Vec{v}_1
\ge \Vec{e}_2$ holds for any sufficiently large $a_1$. Also, as $\SeqCone(\Vec{v}_1, \Vec{v}_2)$ contains a positive
vector, by scaling it can be arbitrarily large. So $\mathcal{U}$ is satisfied by some $(a_1, a_2, 0, 0)$. Besides
satisfiability, the following proposition shows that we can have shifts at any direction.

\begin{proposition}
    \label{prop:pump-sys-shift-bound}
    There is a natural number $m > 0$ bounded by $A^{O(d)}$ such that all four pairs $(0, m), (0, -m), (m, 0), (-m, 0)$
    are shifts of $\mathcal{C}$.
\end{proposition}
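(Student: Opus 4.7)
The plan is to reduce the statement to an application of Pottier's lemma (\autoref{lem:pottier}) on the integer linear system that defines $\mathcal{C}$. All coefficients appearing in that system---components of the $\Vec{v}_i$, $\Vec{v}_i'$, $\Vec{u}_1$, $\Vec{u}_2$, and the vectors $\Vec{e}_i, \Vec{e}_i'$---have norm at most $A^{O(1)}$, so Pottier bounds any minimal integer solution by $A^{O(d)}$. The task thus reduces to exhibiting, in each of the four directions $\pm\Vec{u}_1, \pm\Vec{u}_2$, \emph{some} positive shift of $\mathcal{C}$.

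To produce such shifts I would exploit the $A$-thickness of $\tau$: the intersection $K := \SeqCone(\Vec{v}_1,\ldots,\Vec{v}_4) \cap \SeqCone(\Vec{v}_1',\ldots,\Vec{v}_4')$ is two-dimensional, hence spans $\CycleSpace(G)$. By \autoref{lem:seq-cone-of-cycles-eq-fin-gen-cone} each sequential cone equals $\Cone\{\Vec{x}_1,\Vec{x}_2\}$ for generators of norm $\le A^{O(1)}$ (each either a cycle effect or a canonical vector), so intersecting two such 2-generated cones inside the plane $\CycleSpace(G)$ yields $K = \Cone\{\Vec{s},\Vec{t}\}$ with integer $\Vec{s},\Vec{t} \in \mathbb{N}^d$ of norm $\le A^{O(1)}$. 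I would then write $\Vec{u}_1 = \alpha\Vec{s} + \beta\Vec{t}$ in $\CycleSpace(G)$ and clear denominators via Cramer's rule on the $2\times 2$ system obtained after projection to $I$, producing $D \le A^{O(1)}$ and integers $p := D\alpha$, $q := D\beta$ with $|p|,|q| \le A^{O(1)}$. A case analysis on the signs of $p,q$ then yields explicit $\Vec{r},\Vec{r}' \in K$ with $\Vec{r}' - \Vec{r} = D\Vec{u}_1$: either $\pm\Vec{u}_1$ itself lies in $K$ (and one of $\Vec{r},\Vec{r}'$ can be $\Vec{0}$), or $p\Vec{s}$ and $-q\Vec{t}$ furnish the two sides with appropriate signs. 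The argument for $\Vec{u}_2$ is symmetric.

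What remains is to lift $\Vec{r}$ and $\Vec{r}'$ to tuples $(a_i), (a_i')$ satisfying the full enabling inequalities of $\mathcal{U}, \mathcal{U}'$, not merely the prefix-positivity of $\SeqCone$. Here I would use that $\Supp(\Vec{e}_2) \subseteq \Supp(\Vec{v}_1)$ and that $\SeqCone(\Vec{v}_1,\Vec{v}_2)$ contains a strictly positive vector, both coming from the definition of $A$-sequential enablement. A bounded tuple $(b_1,b_2)$ with $b_1\Vec{v}_1 + b_2\Vec{v}_2 > \Vec{0}$ exists by Pottier, and scaling it by some $\lambda \le A^{O(1)}$ produces a baseline padding dominating $\Vec{e}_2, \Vec{e}_3, \Vec{e}_4$; the primed side is handled symmetrically. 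Since both paddings live in $\CycleSpace(G) = \Span\{\Vec{s},\Vec{t}\}$, their net contribution to $\Vec{r}' - \Vec{r}$ can be neutralised by adding a matching element of $K$ (expressible in either sequential cone) to the opposite side, keeping the shift equal to $D\Vec{u}_1$.

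The main obstacle will be this last step: interleaving the baseline padding with the $\SeqCone$ representations of $\Vec{s},\Vec{t}$ so that all the prefix-sum inequalities of $\mathcal{U}, \mathcal{U}'$ hold simultaneously with integer coefficients of polynomial magnitude. Once shift existence is established in each of the four directions with witnesses of magnitude $m_i \le A^{O(d)}$, Pottier's lemma bounds the minimal witnesses, and taking $m$ to be a common positive multiple of $m_1, m_2, m_3, m_4$ (for instance their product, which is still $A^{O(d)}$) yields a single $m \le A^{O(d)}$ such that $(\pm m, 0)$ and $(0, \pm m)$ are all shifts of $\mathcal{C}$.
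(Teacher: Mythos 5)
Your plan shares the skeleton of the paper's proof---exploit the non-trivial intersection of the two sequential cones, use Pottier's bound on the pumping system, and take an LCM of the four directional witnesses---but the middle of the argument diverges and, as you yourself flag, leaves a genuine gap. Writing $\Vec{u}_1 = \alpha\Vec{s} + \beta\Vec{t}$ with explicit generators of the intersection cone $K$ and then case-analysing the signs of $p := D\alpha, q := D\beta$ is workable, but it forces you to build the forward witness $\Vec{r}$ and backward witness $\Vec{r}'$ separately and then pad each one independently to clear the inhomogeneous thresholds $\Vec{e}_i, \Vec{e}_i'$. The problem is that the two paddings $\Vec{p}$ (from $\epsilon_1\Vec{v}_1 + \epsilon_2\Vec{v}_2$) and $\Vec{p}'$ (from the primed side) are in general different vectors of $\CycleSpace(G)$, so after padding the balance equation reads $\text{backward} - \text{forward} = D\Vec{u}_1 + (\Vec{p}'-\Vec{p})$, and $\Vec{p}'-\Vec{p}$ is not a multiple of $\Vec{u}_1$; your remark about ``neutralising by adding a matching element of $K$ to the opposite side'' is exactly where this breaks, because the compensating vector would itself have to be reproduced in both sequential cones simultaneously while respecting all the prefix-sum inequalities, and nothing so far guarantees that with polynomially bounded coefficients.

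The paper avoids this entanglement with a single scaling move you did not take: choose a center $\Vec{c} \in S := \SeqCone(\Vec{v}_1,\ldots,\Vec{v}_4) \cap \SeqCone(\Vec{v}_1',\ldots,\Vec{v}_4')$ far out along the interior of that 2-dimensional cone, so that a ball of radius exceeding $\norm{\Vec{p}} + \norm{\Vec{p}'} + \norm{\Vec{u}_1}$ (in $\CycleSpace(G)$) around $\Vec{c}$ is contained in $S$. Then \emph{both} $\Vec{c} - \Vec{p}$ and $\Vec{c} + \Vec{u}_1 - \Vec{p}'$ lie in $S$; decomposing them via the homogeneous versions of $\mathcal{U}, \mathcal{U}'$ and adding the fixed padding $(\epsilon_1, \epsilon_2)$, $(\epsilon_1', \epsilon_2')$ makes the forward sum exactly $\Vec{c}$ and the backward sum exactly $\Vec{c} + \Vec{u}_1$, so the shift is $(m_1, 0)$ with no leftover term to neutralise. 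The sign case analysis, the computation of explicit generators $\Vec{s}, \Vec{t}$ of $K$, and the Cramer-rule denominator clearing are then all unnecessary. I'd recommend replacing your step of expressing $\Vec{u}_1$ over $\{\Vec{s}, \Vec{t}\}$ by this ``pick a large enough center'' argument; the Pottier bound and LCM step at the end of your sketch then close as planned.
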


\begin{proof}
    Since $\SeqCone(\Vec{v}_1, \Vec{v}_2)$ contains a positive vector, there are integers $\epsilon_1, \epsilon_2 > 0$
    such that $\Vec{p} := \epsilon_1 \Vec{v}_1 + \epsilon_2 \Vec{v}_2 > \Vec{0}$. By multiplying with a large
    coefficient, we may further assume that $(\epsilon_1, \epsilon_2, 0, 0) \models \mathcal{U}$. Similarly, there are
    integers $\epsilon_1', \epsilon_2' > 0$ such that $(\epsilon_1', \epsilon_2', 0, 0) \models \mathcal{U}'$. Let
    $\Vec{p}' := \epsilon_1' \Vec{v}_1' + \epsilon_2'\Vec{v}_2'$.

    Now recall that the intersection $S := \SeqCone(\Vec{v}_1, \Vec{v}_2, \Vec{v}_3, \Vec{v}_4) \cap
    \SeqCone(\Vec{v}_1', \Vec{v}_2', \Vec{v}_3', \Vec{v}_4')$ is non-trivial. So there must exist a vector $\Vec{c} \in
    S$ such that for some $\epsilon > 0$, the region $\{\Vec{v} \in \CycleSpace(G) : \norm{\Vec{v} - \Vec{c}} <
    \epsilon\}$ is contained in $S$. We can make $\epsilon$ be arbitrarily large by scaling the center $\Vec{c}$. Thus
    assume the vectors $\Vec{c} - \Vec{p}$ and $\Vec{c} + \Vec{u}_1 - \Vec{p}'$ belongs to $S$. Therefore, for some
    rational numbers $\lambda_1, \lambda_2, \lambda_3, \lambda_4 \ge 0$ and $\lambda_1', \lambda_2', \lambda_3',
    \lambda_4' \ge 0$, we have $\Vec{c} - \Vec{p} = \sum_{i = 1}^{4}\lambda_i \Vec{v}_i$, and $\Vec{c} + \Vec{u}_1 -
    \Vec{p}' = \sum_{i = 1}^{4}\lambda_i' \Vec{v}_i'$. From the definition of sequential cones, we know that
    $(\lambda_1, \lambda_2, \lambda_3, \lambda_4)$ satisfy the homogeneous version of $\mathcal{U}$, and $(\lambda_1,
    \lambda_2, \lambda_3, \lambda_4)$ satisfy the homogeneous version of $\mathcal{U}$. Define 
    \begin{align}
        a_1 := m_1(\lambda_1 + \epsilon_1), \quad a_2 := m_1(\lambda_2 + \epsilon_2), \quad a_3 := m_1 \lambda_3, \quad a_4 := m_1\lambda_4,\\
        a_1' := m_1(\lambda_1' + \epsilon_1'), \quad a_2' := m_1(\lambda_2' + \epsilon_2'), \quad a_3' := m_1 \lambda_3', \quad a_4 := m_1\lambda_4',
    \end{align}
    where the coefficient $m_1$ makes each number an integer. Verify that 
    \begin{equation}
        (a_1, \ldots, a_4, a_1', \ldots, a_4', m_1, 0) \models \mathcal{C},
    \end{equation}
    witnessing that $(m_1, 0)$ is a shift. Similarly, we can show that $(-m_2, 0), (0, m_3), (0, -m_4)$ are shifts for
    some $m_2, m_3, m_4$. Applying \autoref{lem:pottier}, each number $m_i$ can be bounded by $((10d + 11)A^2 + A^2 +
    1)^{10d}$. Now we just take $m$ as the least common multiple of $m_1, m_2, m_3, m_4$.
\end{proof}

\paragraph*{Run compression}

Let $m$ be the number given by \autoref{prop:pump-sys-shift-bound}. Define $M := m \cdot \Vec{u}_1(i_1) = m \cdot
\Vec{u}_2(i_2)$.

\begin{proposition}
    \label{prop:run-compress}
    The runs $\rho$ and $\rho'$ can be transformed into the following paths 
    \begin{equation}
        \tilde{\rho} = \tilde{\rho_1} \tilde{\rho_2} \tilde{\rho_3} \tilde{\rho_4} \tilde{\rho_5},
        \quad 
        \tilde{\rho}' = \tilde{\rho_5}' \tilde{\rho_4}' \tilde{\rho_3}' \tilde{\rho_2}' \tilde{\rho_1}',
    \end{equation}
    such that $|\tilde{\rho}|, |\tilde{\rho}| \le 5 \varsigma(G) \cdot M^2$, and $\Delta(\tilde{\rho}) -
    \Delta(\tilde{\rho}') = a m \Vec{u}_1 + b m\Vec{u}_2$ for some integers $a, b$ bounded by $A^{O(d)}$. Moreover,
    $\tilde{\rho}_2$ is enabled at $\Target(\tilde{\rho_1}) + \lambda \Delta(\pi_1)$ for sufficiently large $\lambda$,
    and symmetrically, $\mbox{rev}(\tilde{\rho}_2')$ is enabled at $\Target(\mbox{rev}(\tilde{\rho_1})) + \lambda'
    \Delta(\pi_1')$ for sufficiently large $\lambda'$.
\end{proposition}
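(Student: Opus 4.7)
The plan is to compress each of the ten segments $\tilde{\rho}_1, \ldots, \tilde{\rho}_5, \tilde{\rho}_1', \ldots, \tilde{\rho}_5'$ independently by repeatedly cutting out cycles whose effects lie in the sub-lattice $\Lambda := m\Vec{u}_1\mathbb{Z} + m\Vec{u}_2\mathbb{Z}$. The guiding intuition is that such cycles are exactly the ones whose removal can be perfectly compensated by shifting the pumping system through the pair of shifts provided by \autoref{prop:pump-sys-shift-bound}.

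The combinatorial core is a pigeonhole argument. Within a single \textsc{scc} of $G$ visited by a given segment, every configuration $s(\Vec{w})$ is classified by the pair $(s, (\Vec{w}(i_1) \bmod M, \Vec{w}(i_2) \bmod M))$, yielding at most $\varsigma(G) \cdot M^2$ classes. A segment visiting more than $\varsigma(G) \cdot M^2$ configurations must revisit a class, and the enclosed sub-run is a cycle $\theta$ whose $I$-projected effect is divisible by $M$ in both coordinates. Since $\CycleSpace(G) = \Span\{\Vec{u}_1, \Vec{u}_2\}$ and the projection to $I$ is injective on it (\autoref{lem:spp-inject}), the unique decomposition $\Delta(\theta) = \frac{\Delta(\theta)(i_1)}{\Vec{u}_1(i_1)}\Vec{u}_1 + \frac{\Delta(\theta)(i_2)}{\Vec{u}_2(i_2)}\Vec{u}_2$ places $\Delta(\theta) \in \Lambda$. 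Iteratively removing such cycles until length falls below $\varsigma(G) \cdot M^2$ in each of the five segments gives the claimed total bound $5\varsigma(G) \cdot M^2$.

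For the effect identity, the total removed effect from $\rho$ lies in $\Lambda$, and likewise from $\rho'$, so $(\Delta(\rho) - \Delta(\tilde{\rho})) - (\Delta(\rho') - \Delta(\tilde{\rho}')) \in \Lambda$; since $\tau = \rho\rho'$ is a $\Vec{0}$-run the relation $\Delta(\tilde{\rho}) - \Delta(\tilde{\rho}') = am\Vec{u}_1 + bm\Vec{u}_2$ drops out immediately. The bounds $|a|, |b| \le A^{O(d)}$ follow from $\norm{\Delta(\tilde{\rho})}, \norm{\Delta(\tilde{\rho}')} \le 5\varsigma(G) \cdot M^2 \cdot \norm{T}$ together with $(m\Vec{u}_1)(i_1) = (m\Vec{u}_2)(i_2) = M$ and $(m\Vec{u}_1)(i_2) = (m\Vec{u}_2)(i_1) = 0$: projecting to $i_1$ yields $|a| \cdot M \le 10\varsigma(G) M^2 \norm{T}$, and symmetrically for $b$, so $|a|, |b| \le 10 \varsigma(G) M \norm{T} \le A^{O(d)}$ using $M \le A^{O(d)}$.

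The main obstacle is preserving the ``enabled-after-pumping'' properties of the compressed segments. When $\Delta(\pi_1)|_I$ is strictly positive, it forces $\Delta(\pi_1) = \alpha\Vec{u}_1 + \beta\Vec{u}_2$ with $\alpha, \beta > 0$, hence $\Delta(\pi_1)$ has full support (since $\Supp(\Vec{u}_1)\cup\Supp(\Vec{u}_2) = [d]$), so any shortfall between $\Target(\tilde{\rho}_1)$ and $\Target(\rho_1) = c_1$ is absorbed by sufficient $\pi_1$-pumping. In the harder case where $\Delta(\pi_1)(i_b) = 0$ for some $b$, the $i_b$-coordinate is bounded by $A$ along $\rho_2$ by $A$-sequential enabledness, and pumping $\pi_1$ contributes nothing at that coordinate. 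The remedy is to restrict the compression inside $\rho_2$ to cycles whose effect also vanishes at $i_b$ (that is, cycles in $m\Vec{u}_{3-b}\mathbb{Z}$), and to adapt the pigeonhole accordingly—classifying configurations only by state and by the $i_{3-b}$-residue modulo $M$, with the $i_b$-coordinate taking at most $A + 1 \le M$ distinct values—so that the same $\varsigma(G) \cdot M^2$ bound continues to hold. A symmetric treatment handles the backward segments, using the reversibility stated in \autoref{lem:dichotomy-main-claim}.
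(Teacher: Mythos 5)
Your proposal is correct and matches the paper's proof in all essentials: the same pigeonhole on $(\text{state}, \Vec{w}(i_1)\bmod M, \Vec{w}(i_2)\bmod M)$ classes, the same use of \autoref{lem:spp-inject} to lift the $I$-projected effect $(aM,bM)$ to $am\Vec{u}_1 + bm\Vec{u}_2$, and the same two-case analysis of enablement. The only minor divergence is in the degenerate-$\pi_1$ case: you propose to \emph{restrict} the cycle-removal in $\rho_2$ to cycles with vanishing $i_b$-coordinate and then re-verify the pigeonhole count (via $A+1\le M$ possibilities for the $i_b$-coordinate), whereas the paper runs the generic pigeonhole and then \emph{observes} that, since the $i_b$-coordinate is bounded by $A<M$ along $\rho_2$, any removed cycle automatically has zero $i_b$-effect; the two formulations are equivalent. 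The mention of a single \textsc{scc} is unnecessary — $\varsigma(G)$ already bounds the number of distinct states visited by the whole segment, regardless of how it is split across \textsc{scc}s.
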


\begin{proof}
    We only describe the transformation performed on $\rho$. The run $\rho'$ is compressed similarly. First note that
    every configuration on $\rho_1$ has norm bounded by $A \le M$. By \autoref{prop:config-in-run-determined-by-srp}
    there can be at most $\varsigma(G) \cdot M^2$ distinct configuration on $\rho_1$. So we just remove cycles between
    repeated configurations. For sub-runs $\rho_i$ where $i = 2, 3, 4, 5$, if its length exceeds $\varsigma(G)\cdot
    M^2$, then there are two configurations $c_1, c_2$ on $\rho_i$ with the same control state and such that $(c_2 -
    c_1)|_I = (aM, bM)$ for some integers $a, b$. Let $\theta$ be the cycle from $c_1$ to $c_2$, then $\Delta(\theta)|_I
    = (aM, bM)$, so we must have $\Delta(\theta) = am\Vec{u}_1 + bm\Vec{u}_2$ by \autoref{lem:spp-inject}. The
    compressed path $\tilde{\rho_i}$ is obtained from $\rho_i$ by repeatedly removing such cycles until its length is
    bounded by $\varsigma(G) \cdot M^2$.

    Now it should be clear that $\Delta(\tilde{\rho}) - \Delta(\tilde{\rho}') = a m \Vec{u}_1 + b m\Vec{u}_2$ for some
    integers $a, b$. The bound of $a, b$ can be observed from the bounded length of $\tilde{\rho}$ and $\tilde{\rho}'$.

    To see that $\tilde{\rho}_2$ is enabled after pumping $\pi_1$ for sufficiently many times after $\tilde{\rho_1}$, we
    divide into two cases. If $\Delta(\pi_1)|_I$ is positive, then $\Supp(\Delta(\pi_1)) = [d]$, so pumping up $\pi_1$
    enables any path. If $\Delta(\pi_1)|_I$ is not positive, say $\Delta(\pi_1)(i_1) = 0$, then by definition, the
    $i_1$-th coordinate is bounded by $A < M$ along $\rho_2$, so by our construction $\tilde{\rho_2}$ is obtained from
    $\rho_2$ by removing cycles with $0$ effect at $i_1$-th coordinate. These cycles have effects parallel to
    $\Vec{u}_2$, so we only need to pump up coordinates in $([d] \setminus \Supp(\Vec{u}_2)) \subseteq
    \Supp(\Delta(\pi_1))$ in order to enable $\tilde{\rho_2}$.
\end{proof}

\paragraph*{Run recovery}

Let $\tilde{\rho} = \tilde{\rho_1} \tilde{\rho_2} \tilde{\rho_3} \tilde{\rho_4} \tilde{\rho_5}$ and $\tilde{\rho}' =
\tilde{\rho_5}' \tilde{\rho_4}' \tilde{\rho_3}' \tilde{\rho_2}' \tilde{\rho_1}'$ be given by
\autoref{prop:run-compress}. For natural numbers $a_1, \ldots, a_4$ and $a_1', \ldots, a_4'$, we define the following
paths:
\begin{align}
    \overline{\rho}_{[a_1, a_2, a_3, a_4]} &:= \tilde{\rho_1} ~ \pi_1^{a_1} ~ \tilde{\rho_2} ~ \pi_2^{a_2} ~ \tilde{\rho_3} ~ \pi_3^{a_3} ~ \tilde{\rho_4} ~ \pi_4^{a_4} ~ \tilde{\rho_5},\\
    \overline{\rho'}_{[a_1', a_2', a_3', a_4']} &:= \tilde{\rho_5}' ~ (\pi_4')^{a_4'} ~ \tilde{\rho_4}' ~ (\pi_3')^{a_3'} ~ \tilde{\rho_3}' ~ (\pi_2')^{a_2'} ~ \tilde{\rho_4}' ~ (\pi_1')^{a_1'} ~ \tilde{\rho_1}'.
\end{align}

\begin{proposition}
    \label{prop:thick-run-recovery}
    There exist numbers $a_1, \ldots, a_4$ and $a_1', \ldots, a_4'$ so that the concatenated path
    $\overline{\rho}_{[a_1, a_2, a_3, a_4]}~\overline{\rho'}_{[a_1', a_2', a_3', a_4']}$ is a $\Vec{0}$-run whose length
    is bounded by $A^{O(d^3)}$.
\end{proposition}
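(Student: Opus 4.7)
The plan is to pick pumping parameters $(a_1, \ldots, a_4, a_1', \ldots, a_4')$ satisfying the pumping system $\mathcal{C}$ whose shift $(x, y)$ cancels the effect discrepancy $\Delta(\tilde{\rho}) + \Delta(\tilde{\rho}')$ of the compressed halves. Interpreting $(\pi_i')^{a_i'}$ inside $\overline{\rho'}$ as the forward $G$-path $\text{rev}(\pi_i')^{a_i'}$ of effect $-a_i'\Vec{v}_i'$, the total effect of the concatenation expands to $\Delta(\tilde{\rho}) + \Delta(\tilde{\rho}') + \sum_i a_i\Vec{v}_i - \sum_i a_i'\Vec{v}_i'$, and requiring it to vanish amounts to solving $\mathcal{C}$ with $x\Vec{u}_1 + y\Vec{u}_2 = \Delta(\tilde{\rho}) + \Delta(\tilde{\rho}')$. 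By \autoref{prop:run-compress} this right-hand side equals $(am)\Vec{u}_1 + (bm)\Vec{u}_2$ for integers $|a|, |b| \le A^{O(d)}$.

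To realize this shift, I combine the four unit-shift witnesses supplied by \autoref{prop:pump-sys-shift-bound}: summing $|a|$ copies of the horizontal unit-shift witness (with the sign of $a$) and $|b|$ copies of the vertical one (with the sign of $b$) yields a witness of $\mathcal{C}$ with shift $(am, bm)$. Since $\mathcal{U}$ and $\mathcal{U}'$ are systems of $\ge$-inequalities preserved under coordinate-wise addition of non-negative tuples, the sum is again a valid witness, with all coordinates bounded by $(|a|+|b|)\cdot A^{O(d)} \le A^{O(d^2)}$.

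The concatenated path is a legal $\Vec{0}$-run precisely because $\mathcal{U}$ and $\mathcal{U}'$ were engineered for this. The inequality $a_1\Vec{v}_1 \ge \Vec{e}_2$ ensures that after $\tilde{\rho}_1\pi_1^{a_1}$ the counter exceeds $\Vec{e}_2$ on $\Supp(\Vec{e}_2) \subseteq \Supp(\Vec{v}_1)$, so $\pi_2$ fires legally; the cascading inequalities enable $\pi_3,\pi_4$ analogously. Inside each compressed segment $\tilde{\rho}_i$, the sequential-cone condition built into $\mathcal{U}$ guarantees $\sum_{j \le i} a_j\Vec{v}_j \ge \Vec{0}$, and the coordinates outside the support of this accumulated buffer stay $\ge 0$ thanks to the explicit $A$-bound on the relevant coordinates along $\tilde{\rho}_i$ baked into the sequentially-enabled definition. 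The backward half is handled symmetrically by applying $\mathcal{U}'$ to $\text{rev}(\overline{\rho'})$.

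The principal obstacle I anticipate is tightening this non-negativity bookkeeping on the interior of each $\tilde{\rho}_i$ and $\tilde{\rho}_i'$: compression may have removed cycles with locally negative contributions on some coordinate, and one must carefully verify that the pumped buffer together with the $A$-bound on the unaffected coordinates suffices to cover every transient dip. Once legality is established, the length bound follows routinely: $|\tilde{\rho}| + |\tilde{\rho}'| \le A^{O(d)}$ by \autoref{prop:run-compress}, and each of the eight inserted cycle-pumps contributes length at most $A^{O(d^2)} \cdot A$, summing to $A^{O(d^3)}$ as claimed.
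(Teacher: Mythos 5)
Your proposal identifies the right target and correctly works out the effect-cancellation bookkeeping, but it stops short at exactly the point the paper's proof handles with a new device. You acknowledge this yourself: ``the principal obstacle I anticipate is tightening this non-negativity bookkeeping on the interior of each $\tilde{\rho}_i$.'' That obstacle is not a tidying-up detail---it is the heart of the proof, and your argument as stated does not resolve it.

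Concretely, the system $\mathcal{C}$ (via $\mathcal{U}$ and $\mathcal{U}'$) only constrains the accumulated buffer at the eight insertion points of the $\pi_i, \pi_i'$ to exceed the $\Vec{e}_i, \Vec{e}_i'$ vectors, which are enablers for those cycles at the configurations $c_i$ on the \emph{original} run. After the compression of \autoref{prop:run-compress}, the base configurations have moved to $\Vec{c}_i := \Delta(\tilde{\rho}_1\cdots\tilde{\rho}_i)$, and---more importantly---the compressed interior segments $\tilde{\rho}_i$ themselves may no longer be enabled from their new starting configurations without an explicit additive buffer. \autoref{prop:run-compress} only guarantees enablement ``for sufficiently large $\lambda$,'' with no quantitative bound. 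Selecting a $\mathcal{C}$-witness (even one built from shift witnesses of \autoref{prop:pump-sys-shift-bound}) gives no control over how large the $a_i$ are relative to the unknown buffers needed for the $\tilde{\rho}_i$. The paper closes this gap by introducing the \emph{modified} pumping systems $\tilde{\mathcal{U}}, \tilde{\mathcal{U}}'$, whose right-hand sides include the vectors $\Vec{f}_i$ (minimal enablers of $\tilde{\rho}_i$ from $\Vec{c}_{i-1}$) alongside freshly-defined $\Vec{e}_i$'s relative to $\Vec{c}_i$, and whose left-hand sides carry a factor $M$ so that the resulting effect discrepancy is a multiple of $M$ and hence a shift. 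It then first solves $\tilde{\mathcal{U}}, \tilde{\mathcal{U}}'$ (guaranteeing legality of the whole pumped-compressed path on each half), and only afterwards corrects the remaining discrepancy by adding a $\mathcal{C}$-witness with the appropriate shift. Your one-stage scheme---solving $\mathcal{C}$ alone with the right shift and hoping non-negativity falls out---leaves the legality of each $\tilde{\rho}_i$ unverified, and without the $\Vec{f}_i$ bounds you also cannot justify the polynomial length claim for the $a_i$. This is a genuine missing ingredient, not a verification to be left to the reader.
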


First we shall introduce the modified pumping systems. Let $\Vec{c}_i := \Delta(\tilde{\rho_1}\ldots\tilde{\rho_i})$ for
$i = 1, 2, 3, 4$. For $i = 2, 3, 4 ,5$, we define the vector $\Vec{f}_i \in \mathbb{N}^d$ to be the minimal vector such
that $\Vec{c}_{i-1} + \Vec{f}_i$ enables $\tilde{\rho}_i$. And for $i = 2, 3, 4$, we define the vector $\Vec{e}_i \in
\mathbb{N}^d$ to be the minimal vector such that $\Vec{c}_{i} + \Vec{e}_i$ enables $\pi_i$. The modified forward pumping
system $\tilde{\mathcal{U}}$ is defined, similar to $\mathcal{U}$, as the following system of inequalities:
\begin{align}
    a_1 M \Vec{v}_1 &\ge \max(\Vec{e}_2, \Vec{f}_2), \\
    a_1 M \Vec{v}_1 + a_2 M \Vec{v}_2 &\ge \max(\Vec{e}_2, \Vec{e}_3, \Vec{f}_3), \\
    a_1 M \Vec{v}_1 + a_2 M \Vec{v}_2 + a_3 M \Vec{v}_3 &\ge \max(\Vec{e}_3, \Vec{e}_4, \Vec{f}_4), \\
    a_1 M \Vec{v}_1 + a_2 M \Vec{v}_2 + a_3 M \Vec{v}_3 + a_4 M \Vec{v}_4 &\ge \max(\Vec{e}_4, \Vec{f}_5).
\end{align}
The modified backward pumping system $\tilde{\mathcal{U}}'$ is defined in the same fashion for $\tilde{\rho}'$.

\begin{proof}[Proof of \autoref{prop:thick-run-recovery}]
    Note that by \autoref{prop:run-compress} and the definition of sequentially enabled cycles, both
    $\tilde{\mathcal{U}}$ and $\tilde{\mathcal{U}}'$ are satisfiable. Since the norms of $\Vec{e}_i$'s and $\Vec{f}_i's$
    are bounded by $A^{O(d)}$, there exists a solutions $(\alpha_1, \ldots, \alpha_4)$ to $\tilde{\mathcal{U}}$ and
    $(\alpha_1', \ldots, \alpha_4')$ to $\tilde{\mathcal{U}}'$ where each number is bounded by $A^{O(d^2)}$. Now
    consider the paths $\overline{\rho}_{[\alpha_1M, \alpha_2M, \alpha_3M, \alpha_4M]}$ and
    $\overline{\rho'}_{[\alpha_1'M, \alpha_2'M, \alpha_3'M, \alpha_4'M]}$. First observe that they are legal runs. To
    see this, just note that the cycle $\pi_i$ is enabled at the beginning and the end of its first and last iteration,
    thus it must be enabled at the beginning of each of its $\alpha_i$ iterations. Also, $\tilde{\rho_{i+1}}$ is enabled
    at the end of $\alpha_{i-1}$ iterations of $\pi_{i-1}$. Thus the whole path is indeed a run in $\mathbb{N}^d$.
    Moreover, observe that we still have that
    \begin{equation}
        (\Delta(\overline{\rho}_{[\alpha_1M, \alpha_2M, \alpha_3M, \alpha_4M]}) - \Delta(\overline{\rho'}_{[\alpha_1'M, \alpha_2'M, \alpha_3'M, \alpha_4'M]}))|_I = (aM, bM)
    \end{equation}
    for some integers $a, b$ bounded by $A^{O(d^2)}$. This implies that 
    \begin{equation}
        \Delta(\overline{\rho}_{[\alpha_1M, \alpha_2M, \alpha_3M, \alpha_4M]}) - \Delta(\overline{\rho'}_{[\alpha_1'M, \alpha_2'M, \alpha_3'M, \alpha_4'M]}) = am \Vec{u}_1 + bm \Vec{u}_2.
    \end{equation}
    By \autoref{prop:pump-sys-shift-bound}, $(am, bm)$ must be a shift. Thus there exist numbers $(\beta_1, \ldots,
    \beta_4)$ and $(\beta_1, \ldots, \beta_4)$ such that $(\beta_1, \ldots, \beta_4, \beta_1', \ldots, \beta_4', am,
    bm)$ is a solution to $\mathcal{C}$. Moreover, by \autoref{lem:pottier} the values $\beta_i$ and $\beta_i'$ can be
    bounded by $A^{O(d^3)}$. Now observe that
    \begin{equation}
        (\beta_1 + \alpha_1, \ldots, \beta_4 + \alpha_4, \beta_1' + \alpha_1', \ldots, \beta_4' + \alpha_4', 0, 0) \models \mathcal{C}.
    \end{equation}
    Let $a_i := \beta_i + \alpha_i$ and $a_i' := \beta_i' + \alpha_i'$, then the run 
    \begin{equation}
        \overline{\rho}_{[a_1, a_2, a_3, a_4]}~\overline{\rho'}_{[a_1', a_2', a_3', a_4']}
    \end{equation}
    is a $\Vec{0}$-run with the same source and target as $\tau$, whose length is bounded by $A^{O(d^3)}$. This proves
    the proposition.
\end{proof}

\begin{proof}[Proof of \autoref{lem:thick-run-compress}]
    The thick run $\tau$ can be compressed by \autoref{prop:run-compress} and then recovered by
    \autoref{prop:thick-run-recovery}. This transfers $\tau$ into a run $\rho$ with the same source and target whose
    length is bounded by $A^{O(d^3)}$.
\end{proof}

\end{document}